\documentclass[12pt]{article}
\usepackage{amssymb,amsmath,amsthm,amsxtra,amsfonts}
\usepackage[margin=1in]{geometry}
\usepackage[onehalfspacing]{setspace}
\usepackage{aliascnt}
\usepackage{color}
\usepackage[usenames,dvipsnames,table]{xcolor}
\usepackage{subcaption}
\usepackage{float}
\usepackage{dsfont}
\usepackage{graphicx}
\usepackage{mathrsfs}
\usepackage{tikz}
\usetikzlibrary{plotmarks,arrows,calc,patterns}
\usepackage[round]{natbib}
\usepackage[colorlinks=true,citecolor=Blue,urlcolor=Blue,linkcolor=Blue,filecolor=Blue]{hyperref}
\usepackage{algorithm,algpseudocode}
\usepackage{threeparttable}
\usepackage{booktabs}
\usepackage{rotating}
\usepackage{lscape}
\usepackage{authblk}
	{\rule[-4mm]{0mm}{6mm}\noindent\vspace{3cm}\end{table}}

	{Table continued on next page ... \end{table}}

\allowdisplaybreaks
\theoremstyle{plain}
\newtheorem{theorem}{Theorem}[section]
\newtheorem{proposition}{Proposition}[section]
\newtheorem{lemma}{Lemma}[section]

\newtheorem{assumption}{Assumption}[section]

\theoremstyle{remark}
\newtheorem{example}{Example}[section]

\newcommand{\R}{\ensuremath{\mathbf{R}}}
\newcommand{\N}{\ensuremath{\mathbf{N}}}

\newcommand{\E}{\ensuremath{\mathbb{E}}}

\newcommand{\one}{\ensuremath{\mathds{1}}}
\newcommand{\Var}{\ensuremath{\mathrm{Var}}}
\newcommand{\dx}{\ensuremath{\mathrm{d}}}
\newcommand{\sign}{\ensuremath{\mathrm{sign}}}
\newcommand{\tablexplain}[1]{\rule[0mm]{0mm}{5mm}\noindent\footnotesize{#1}}

\newenvironment{ctabular}{\begin{center}\begin{tabular}}{\end{tabular}\end{center}}

\providecommand{\keywords}[1]
{\small	
\textbf{\textit{Keywords:}} #1
}

\usepackage{xr}
\externaldocument[OA-]{MS-2385_binary_choice_OnlineApp_R3}

\begin{document}
\title{Binary Choice under Asymmetric Loss in a Data-Rich Environment:
Theory and an Application  to Algorithmic Fairness\footnote{We are grateful to seminar and conference participants at the University of Pennsylvania, Brown University, Universit\'{e} de Montr\'{e}al, National University of Singapore, Singapore Management University, University of Notre Dame, Boston University, HU Berlin, UNC Wilmington, 2022 CEME Conference for Young Econometricians, 2022 AEA/ASSA Annual Meeting, IAAE webinar series, 2021 North American/Asian/China/Australasia/European Meetings of the Econometric Society, Bristol ESG 2021, and the 41st International Symposium on Forecasting for helpful comments.}
}

\author[1]{Andrii Babii\thanks{Department of Economics. Email: \href{mailto:babii.andrii@gmail.com}{babii.andrii@gmail.com}.}}
\author[1]{Xi Chen\thanks{Department of Statistics. Email: \href{mailto:xich@live.unc.edu}{xich@live.unc.edu}.}}
\author[1]{Eric Ghysels\thanks{Department of Economics and Department of Finance, Kenan-Flagler Business School. Email: \href{mailto:eghysels@unc.edu}{eghysels@unc.edu}.}}
\author[2]{Rohit Kumar\thanks{HSS, IITD. Email: \href{mailto:sirohi.rohit@gmail.com}{sirohi.rohit@gmail.com}.}}
\affil[1]{UNC Chapel Hill}
\affil[2]{IIT Delhi}

\maketitle

\begin{abstract}
We study the binary choice problem in a data-rich environment with asymmetric loss functions. The econometrics literature covers nonparametric binary choice problems but does not offer computationally attractive solutions in data-rich environments. The machine learning literature has many algorithms but is focused mostly on loss functions that are independent of covariates. We show that theoretically valid decisions on binary outcomes with general loss functions can be achieved via a very simple loss-based reweighting of logistic regression or state-of-the-art machine learning techniques. We apply our analysis to algorithmic fairness in pretrial detentions.
\end{abstract}

\keywords{binary outcomes, asymmetric losses, machine learning, cost-sensitive classification, deep learning, boosting, LASSO, pretrial detention.}

\thispagestyle{empty}
\setcounter{page}{0}

\newpage

\section{Introduction}
Many economic decisions are inherently binary. This includes life-changing decisions such as pretrial release, college admission, job hiring, medical testing and treatment---areas increasingly influenced by automated algorithmic processes based on vast data inputs. It also extends to routine tasks such as loan or credit card approvals, fraud detection, or spam filters. The topic has gained interest from a diverse set of fields, ranging from economics and computer science, to machine learning (ML), among others, and depending on the discipline is also known as classification or screening problems.

In this paper, we focus on data-driven binary decision problems characterized as a solution to the following risk-minimization problem:
\begin{equation}\label{eq:decision_problem}
	\min_{f:\mathcal{X}\to\{-1,1\}}\E_{X,Y}[\ell(f(X),Y,X)],
\end{equation}
where $Y\in\{-1,1\}$ is a binary outcome, $X\in\mathcal{X}\subset\R^d$ is a vector of covariates, $f:\mathcal{X}\to\{-1,1\}$ is a binary decision (or prediction) function, $\ell:\{-1,1\}^2\times\mathcal{X}\to\R_+$ is a loss function, and the expected value $\E_{X,Y}$ is taken with respect to the distribution of $(X,Y)$. The problem in (\ref{eq:decision_problem}) is minimized over all measurable functions of covariates and the loss function $\ell$ describes the asymmetric consequences in different states of the world. For example, pretrial detention decisions may reflect the asymmetric cost of (a) keeping someone wrongly in jail before his/her trial, (b) releasing a recidivist, versus the social benefits of (a) keeping potential recidivists in jail, and (b) releasing non-recidivists who could be productive members of society. In the loan approval application, the losses may depend on loan size and other considerations such as equal credit opportunities.

The importance of asymmetries in economic decision problems has been recognized for a long time; see \cite{granger1969prediction}, \cite{manski1989estimation}, \cite{granger2000economic}, \cite{diebold1996optimal,diebold1997optimal}, \cite{elliott2016economic}, among others. The combination of high-dimensional data and general loss functions in binary decision problems is challenging and not well understood.\footnote{In contrast, the asymmetric decisions in the regression setting can be achieved via quantile regression, see \cite{koenker1978regression}, the asymmetric least-squares, see \cite{newey1987asymmetric}, or more generally with M-estimators based on a suitable asymmetric loss function; see \cite{elliott2016economic}.} Econometricians have studied nonparametric binary choice problems for a long time, but the literature does not offer computationally attractive solutions for high-dimensional datasets. For example, \cite{elliott2013predicting} consider the binary utility maximization problem similar to (\ref{eq:decision_problem}) and propose to solve its empirical counterpart by extending the maximum score estimator of \cite{manski1975maximum}. These examples typically involve \textit{non-smooth} and \textit{non-convex} combinatorial optimization which is known to be NP-hard and, hence, challenging to apply in modern data-rich environments.\footnote{Several efforts have been made to make these problems computationally feasible. For example, \cite{horowitz1992smoothed} proposes to use kernel smoothing which addresses the lack of smoothness, but not the lack of convexity. \cite{florios2008exact} considers the mixed-integer optimization that requires special solvers without guarantees of finding the global optimum.}

In this paper, we formally show that the valid binary decisions corresponding to the problem (\ref{eq:decision_problem}) can be obtained by solving simple weighted logistic regression or more generally weighted machine learning problems including deep learning, boosting, and LASSO, where the weights are computed directly from the loss function. Our approach can be understood as solving a convexified counterpart to the empirical utility maximization problem of \cite{elliott2013predicting} similarly to modern classification methods solving the convexified counterparts to the maximum score estimator of \cite{manski1975maximum}. 

The practical implementation of our method is remarkably simple, which is why it can be discussed here. The first ingredient comes from a policy/decision-maker who has to provide a quartet of loss functions pertaining to (a) true positives, denoted by $\ell_{1,1}(x)$, (b) true negatives, $\ell_{-1,-1}(x)$, (c) false positives, $\ell_{-1,1}(x)$, and (d) false negatives, $\ell_{1,-1}(x)$, where we define $\ell_{f,y}(x):=\ell(f,y,x)$. This requires a careful selection of relevant covariates, $x\in\mathcal{X}$, and functional forms describing the economic decision problem. These are usually determined by either a cost/utility function of the social planner (policy/decision-maker), or are obtained through some type of cost-benefit analysis. Given these economic inputs, the task of the econometrician is to compute weights from $\omega(y,x)$ with $y\in\{-1,1\}$ and $x\in\mathcal{X}$ that will be applied to the appropriate classification procedures. We show that this very simple loss-based reweighting leads to valid binary decisions without strong distributional assumptions. In some cases, e.g., for carefully crafted deep learning architectures, these decisions are optimal from the minimax point of view.

Our methodology can incorporate various preferences through the loss function including the treatment of protected groups which is related to the ongoing debate about algorithmic biases. As more and more decisions affecting our daily lives have become digitized and automated by data-driven algorithms, concerns have been raised regarding such biases. The potential for ML algorithmic outcomes to reproduce and reinforce existing discrimination against legally protected groups has been of great concern. Gender and race are two leading examples. \cite{cowgill2019economics} discuss the case of computer scientists at Amazon who developed powerful new technology to screen resumes and discovered the algorithm placed a negative coefficient on terms associated with women and as a result appeared to be amplifying and entrenching male dominance in the technology industry. Another example of gender discrimination is discussed by \cite{datta2015automated} who studied AdFisher, an automated tool that explores how user behaviors, Google's ads, and Ad Settings interact. They found that setting the gender to female resulted in fewer instances of an ad related to high-paying jobs than setting it to male.

Related to our empirical illustration, journalists at the news website ProPublica reported on a commercial software used by judges in Broward County, Florida, that helps to decide whether a person charged with a crime should be released from jail prior to the trial. They have found that the software tool called Correctional Offender Management Profiling for Alternative Sanctions (COMPAS) resulted in a disproportionate number of false positives for African American defendants who were classified as high risk but subsequently not charged with another crime. We apply our covariate-driven classification methods to pretrial decisions with an emphasis on algorithmic fairness and show how these biases can be eliminated naturally when the decision-maker incorporates preferences for false positive mistakes for protected groups.\footnote{It is also worth mentioning that human decision-makers can also be biased. For example, \cite{posner2025judge} reports that large language models tend to rely more on legal arguments than human judges who may be influenced by sympathy and emotions.}

Therefore, we make contributions to different strands of the literature. First, our paper is related to the general social planner setting studied by \cite{rambachan2020economic} as well as the special cases studied by \cite{corbett2017algorithmic} and \cite{kleinberg2018algorithmic}, among others. \cite{rambachan2020economic} argue that the algorithmic fairness constraints can be addressed using the tools of welfare economics, where a social planner has group-specific welfare weights. Our paper adopts the same point of view, but we take a step further showing how to map these welfare functions to the specific ML methods. Note that \cite{corbett2017algorithmic} and \cite{menon2018cost} suggest incorporating fairness constraints in ML algorithms which is equivalent to using the plug-in classifiers, where the fitted probability is compared to race-specific thresholds. \cite{agarwal2018reductions} and \cite{cotter2019two} consider the symmetric classifiers with non-convex fairness constraints which are challenging to use in practice. In contrast, our work focuses on convexified asymmetric classifiers that do not require fitting the conditional probabilities. \cite{kitagawa2023constrained} use convexification techniques similar to ours and show that the standard symmetric binary classification can incorporate various constraints with a Hinge convexifying function. In contrast, our work focuses on incorporating fairness through the loss function similarly to \cite{rambachan2020economic} which can be also done with the logistic and exponential convexifying functions.

Second, our paper is related to the literature on empirical utility and welfare maximization; see \cite{manski2004statistical}, \cite{hirano2009asymptotics}, \cite{bhattacharya2012inferring}, \cite{zhao2012estimating}, \cite{kitagawa2018should}, \cite{athey2021policy}, \cite{mbakop2021model}, \cite{kallus2021more}, \cite{viviano2023fair}, \cite{adjaho2022externally} among others. This literature usually relies on the potential outcomes framework and deals with continuous outcomes. In contrast to this work, we consider the unconstrained asymmetric binary classification. Our paper is also related to \cite{su2021model,su2021utility} who highlights connections between \cite{elliott2013predicting} and the cost-sensitive classification literature. \cite{su2021model} studies the model selection problem for the non-smooth non-convex empirical utility maximization of \cite{elliott2013predicting} while \cite{su2021utility} considers the utility-maximizing nearest neighborhood method. In contrast, our work proposes to solve a convexified problem and establishes the performance guarantee for asymmetric Logit, deep learning, LASSO, and boosting which has not been done in the aforementioned papers.

Regarding deep learning for economic applications, our paper is related to the recent work of \cite{farrell2018deep} who show that under weak assumptions the deep learning estimators of the regression function can achieve sufficiently fast convergence rates for semiparametric inference and establish supporting non-asymptotic results; see also \cite{chen2007sieves} for an earlier review of shallow neural networks and \cite{dell2024deep} for economic applications of deep learning.

Lastly, we also contribute to classification literature in statistics and computer science; see \cite{zhang2004statistical}, \cite{bartlett2006convexity}, \cite{boucheron2005theory}, \cite{koltchinskii2011oracle}, for the theory of symmetric binary classification. The classification with asymmetric losses is also known as cost-sensitive or example-dependent classification and has been developing largely independently from the related econometric literature, e.g.\ see \cite{elkan2001foundations}. While this literature is quite large, few formal results exist justifying proposed methods. For example,  \cite{bahnsen2014example} introduces weights in logistic regression heuristically without providing formal justifications that the weights actually correspond to the asymmetric decision problem. Several papers were written subsequently, see \cite{bahnsen2015example}, \cite{Mehta2020deepcatch} and \cite{xia2017cost} among others, mostly with practical computer science applications such as fraud detection, e-commerce, or marketing. There is also a body of literature on classification with imbalanced or corrupted classes, where weighting is also done heuristically to improve the accuracy of a classifier, which can be highly sensitive to over-represented and/or noisy classes; see \cite{ting1998inducing} and \cite{sun2007cost} among others. This literature considers weights that are different from ours, e.g., based on the empirical number of instances in a given class, or proportional to the inverse of expected costs that are simulated with the importance or rejection sampling. \cite{scott2011calibrated,scott2012calibrated}, and \cite{bao2020calibrated} are examples that attempt to deal rigorously with several instances of asymmetric classification problems. Neither of these papers address the problem (\ref{eq:decision_problem}) in full generality studied in our paper. For example, \cite{scott2012calibrated} covers a special case of our problem with \textit{covariate-independent} false positive/negative losses while \cite{scott2011calibrated} focuses on predicting a sign of a real-valued random variable. Moreover, these papers do not provide specific estimators and do not establish the supporting theoretical results. In contrast, our paper shows how the weights coming from the asymmetric loss function can be incorporated in the cost-sensitive asymmetric Logit, deep learning, boosting or LASSO.

The paper is organized as follows. Section \ref{sec:predicting} introduces notation, describes the binary decision problem in terms of risk/payoff, and illustrates a convenient characterization of the optimal binary decision. Examples of several asymmetric binary decision problems are also provided. Section \ref{sec:convexification} covers the main convexification theorem and provides examples of convexifying functions. In Section \ref{sec:riskbounds}, we provide the excess risk bounds for the binary decision rules constructed from the data and discuss several cases, including logistic regression, LASSO, and deep learning. Section~\ref{secapp:MCresults} presents Monte Carlo simulation results. Finally, we provide an empirical application to pretrial detention in Section \ref{sec:empirical} followed by conclusions. The proofs of key results appear in the Appendix, while the Online Appendix provides additional results for boosting, LASSO, and shallow learning as well as the detailed data description for the empirical application.

\section{Binary decisions \label{sec:predicting}}
Let $Y\in\{-1,1\}$ be the target variable and let $X\in\mathcal{X}\subset\R^d$ be covariates.\footnote{We use capitals for random variables and lowercase letters for realizations.} A measurable function $f:\mathcal{X}\to\{-1,1\}$ is called the binary decision/choice/prediction. The decision-making process amounts to minimizing a risk function that describes its consequences in different states of the world
\begin{equation*}
	\mathcal{R}(f) = \E_{X,Y}[\ell(f(X), Y,X)],
\end{equation*}
where $\ell:\{-1,1\}^2\times\mathcal{X}\to\R$ is a loss function specified by the decision-maker. Note that the decision $f$ can be random and the expectation is taken with respect to the distribution of $(Y,X)$ only. The loss function can be asymmetric and may also depend on the covariates $X$, which is economically a more realistic scenario faced by the decision-maker than the one provided by the standard classification setting.

\subsection{Some examples}

In the Introduction, we alluded to many examples in economic decision-making. To motivate the first example, we can think of a credit risk application, where with the false negative mistakes ($f(X)=-1$ and $Y=1$) the bank suffers a loss from the borrower's default, while with the false positive mistakes ($f(X)=1$ and $Y=-1$), the bank simply foregoes its potential earnings. Moreover, the size of the loan and other economic factors may determine the loss, see also \cite{lieli2010construction}:
\begin{example}[Lending decisions]\label{ex:lending}
	Let $y\in\{-1,1\}$ be the default status ($=1$ if defaulted) and $f\in\{-1,1\}$ be the lending decision ($=-1$ if approved). A lender's loss function is \begin{equation*}
		\ell(f,y,s,z) = L(s,z)\one_{f=-1,y=1} - \Pi(s,z)\one_{f=-1,y=-1},
	\end{equation*}
	where $L(s,z)$ and $\Pi(s,z)$ be the loss and the profit functions, $s\geq 0$ is the size of the loan, and $z\in\R^{d-1}$ some other economic factors. 
\end{example}

As noted in the Introduction, an important policy debate pertains to the fairness and the discrimination bias of machine learning algorithms towards, e.g., low income groups, gender, or race. The following example suggests that the unfair treatment of individuals can be eliminated if the social planner assigns group-specific weights in the loss function, see also \cite{rambachan2020economic}:
\begin{example}[Social planner with a disadvantaged group]\label{ex:fairness}
	Let $y\in\{-1,1\}$ be the recidivism status ($=1$ if recidivist) and $f\in\{-1,1\}$ be a binary bail decision ($=-1$ if kept in prison). The social planner's loss function is
	\begin{equation*}
		\ell(f,y,g) = \psi_g\one_{f=-1,y=1} + \varphi_g\one_{f=1,y=-1},
	\end{equation*}
	where $g\in\{0,1\}$ is a binary indicator of a disadvantaged group, $\psi_g,\varphi_g>0$ are group-specific weights for false negative and false positive mistakes.\footnote{In criminal justice, the algorithm may be perceived as unfair when the false positive rates for African American and white defendants are different. For instance, \cite{larson2016surya} report that the widely used COMPAS software tends to make false positive predictions of the recidivism for African American individuals more frequently compared to white individuals.}
\end{example}
Computer scientists have focused on defining fairness-aware algorithms by imposing restrictions on $f$.\footnote{Alternatively, one could fit two different classifiers for two groups and use the plug-in approach with group-specific cut-off. This is not the route taken in our paper. We provide some comparisons to plug-in classifiers in simulations.} Often a formal criterion of fairness is defined, and a decision rule is developed to satisfy the criterion, e.g., the statistical parity across groups. We follow the economist's arguments that fairness is naturally defined through welfare/losses. 

In the remainder of the paper, we will continue to work with settings involving a generic vector of covariates $X\in\R^d$ which may contain the binary group membership variable $G\in\{0,1\}$ and some other covariates $Z\in\R^{d-1}$. The point worth keeping in mind, however, is that our framework covers much discussed preference-based notions of fairness characterized by general covariate-driven loss functions.

\subsection{Optimal binary decision}
Following the decision-theoretic perspective, we define the optimal binary decision $f^*:\mathcal{X}\to\{-1,1\}$ as a solution to
\begin{equation*}
	\inf_{f:\mathcal{X}\to\{-1,1\}}\E[\ell(f(X), Y,X)],
\end{equation*}
where the minimization is done over all measurable functions $f:\mathcal{X}\to\{-1,1\}$. We also use $\mathcal{R}^*=\mathcal{R}(f^*)$ to denote the corresponding smallest achievable risk. Note that this framework also covers the utility/welfare maximization problems, in which case, we can define $\ell(f(X), Y,X)=-U(f(X),Y,X)$, where $U(f(x),y,x)$ is a utility/welfare function of a binary decision $f(x)$ when the outcome is $Y=y$ and covariates are $X=x$.

The following proposition provides an alternative convenient for us characterization of the optimal binary decision, also known as the Bayes' decision; see Appendix for a proof.

\begin{proposition}\label{prop:risk_representation}
	The optimal binary decision $f^*$ solves
	\begin{equation*}
		\inf_{f:\mathcal{X}\to \{-1,1\}}\E\left[\omega(Y,X)\one_{-Yf(X)\geq 0}\right]
	\end{equation*}
	with $\omega(Y,X)\triangleq Ya(X)+b(X)$, $a(x) = \ell_{-1,1}(x)-\ell_{1,1}(x) + \ell_{-1,-1}(x) - \ell_{1,-1}(x)$, $b(x) = \ell_{-1,1}(x) - \ell_{1,1}(x) + \ell_{1,-1}(x) - \ell_{-1,-1}(x)$, and $\ell_{f,y}(x)\triangleq\ell(f,y,x)$.
\end{proposition}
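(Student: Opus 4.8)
The plan is to reduce both minimization problems to pointwise minimization in the value $f(x)$ by conditioning on $X$, and then to show that the two conditional objective functions differ only by a strictly positive multiplicative constant and by an additive term that does not depend on $f$; this forces them to share the same minimizers.

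First I would set $\eta(x)\triangleq P(Y=1\mid X=x)$ and $\ell_{j,k}(x)\triangleq\ell(j,k,x)$, and (assuming the losses are integrable, so that $\mathcal{R}$ is well defined) use the law of iterated expectations to write, for any measurable $f:\mathcal{X}\to\{-1,1\}$,
\[
\E[\ell(f(X),Y,X)\mid X=x]=\one_{f(x)=1}A_1(x)+\one_{f(x)=-1}A_{-1}(x),
\]
with $A_1(x)=\eta(x)\ell_{1,1}(x)+(1-\eta(x))\ell_{1,-1}(x)$ and $A_{-1}(x)=\eta(x)\ell_{-1,1}(x)+(1-\eta(x))\ell_{-1,-1}(x)$. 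Since the conditional objective at each $x$ is minimized by the smaller of $A_1(x),A_{-1}(x)$, the rule $f^*$ with $f^*(x)=1$ on $\{A_1\le A_{-1}\}$ and $f^*(x)=-1$ otherwise is measurable and attains $\mathcal{R}^*$; so $f^*$ exists and solves the original problem.

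Next I would run the same computation for the reweighted objective. Because $Y,f(X)\in\{-1,1\}$, the event $\{-Yf(X)\ge0\}$ coincides with the mistake event $\{f(X)\ne Y\}$, so conditioning on $X=x$ gives
\[
\E[\omega(Y,X)\one_{-Yf(X)\ge0}\mid X=x]=\one_{f(x)=1}(1-\eta(x))\,\omega(-1,x)+\one_{f(x)=-1}\eta(x)\,\omega(1,x).
\]
Substituting $\omega(y,x)=ya(x)+b(x)$ and the formulas for $a,b$ yields $\omega(1,x)=a(x)+b(x)=2\bigl(\ell_{-1,1}(x)-\ell_{1,1}(x)\bigr)$ and $\omega(-1,x)=b(x)-a(x)=2\bigl(\ell_{1,-1}(x)-\ell_{-1,-1}(x)\bigr)$; call the coefficients of $\one_{f(x)=1}$ and $\one_{f(x)=-1}$ here $B_1(x)$ and $B_{-1}(x)$.

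The crux is then a one-line algebraic check that $A_j(x)-\tfrac{1}{2}B_j(x)$ takes the same value for $j=1$ and $j=-1$, namely $c(x)\triangleq\eta(x)\ell_{1,1}(x)+(1-\eta(x))\ell_{-1,-1}(x)$, which is free of $f$. Hence $\E[\ell(f(X),Y,X)\mid X]=\tfrac{1}{2}\E[\omega(Y,X)\one_{-Yf(X)\ge0}\mid X]+c(X)$ almost surely, and integrating over $X$ gives $\mathcal{R}(f)=\tfrac{1}{2}\E[\omega(Y,X)\one_{-Yf(X)\ge0}]+\E[c(X)]$. Since $\E[c(X)]$ does not depend on $f$ and $\tfrac{1}{2}>0$, the two problems have the same set of minimizers, so the $f^*$ above also solves the stated problem. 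I expect the only genuine work to be the bookkeeping of signs and of the factor $2$ in that last step; the measurability and existence remarks are routine. An essentially equivalent route would be to compute the Bayes rule in each formulation directly --- a threshold rule comparing $A_1$ with $A_{-1}$, respectively $B_1$ with $B_{-1}$ --- and to verify that the two comparisons cut $\mathcal{X}$ into the same regions; the ``objectives differ by an affine map'' argument simply packages this more neatly.
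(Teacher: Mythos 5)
Your proof is correct: the algebra checks out ($\omega(1,x)=2(\ell_{-1,1}-\ell_{1,1})$, $\omega(-1,x)=2(\ell_{1,-1}-\ell_{-1,-1})$, and $A_j(x)-\tfrac12 B_j(x)=\eta(x)\ell_{1,1}(x)+(1-\eta(x))\ell_{-1,-1}(x)$ for both $j$), so the two objectives differ by a factor $\tfrac12$ and an $f$-free term, which gives the claim. The route is mildly different from the paper's. The paper never conditions on $X$ and never introduces $\eta$: it proves a pointwise identity in $(f,y,x)$, namely $\ell(f,y,x)=-0.25(a(x)+yb(x))f+d_0(y,x)$ via the decomposition of $\ell$ in terms of $(1\pm f)(1\pm y)/4$, then substitutes $f=1-2\one_{-yf\geq 0}$ to get $\mathcal{R}(f)=0.5\,\E[\omega(Y,X)\one_{-Yf(X)\geq 0}]+\E[d(Y,X)]$ directly. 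Your version establishes the same affine relation after conditioning, comparing the two conditional risks as functions of the binary value $f(x)$. What your approach buys is the Bayes-rule picture (the threshold comparison of $A_1$ with $A_{-1}$) and, as a by-product, an explicit measurable minimizer, so existence of $f^*$ comes for free; what the paper's approach buys is compactness and a pointwise identity that is reused verbatim when the indicator is replaced by a convexifying $\phi$ in Section 3, which is why the paper phrases it that way. Two small remarks: your reuse of the symbol $c(x)$ for the $f$-free term clashes with the paper's threshold function $c(x)$ defined later, so rename it; and the claim that $\{-Yf(X)\geq 0\}$ equals the mistake event uses that $f$ takes values in $\{-1,1\}$, which is exactly the setting of this proposition, so that step is fine here (but would need the $\sign$ convention once soft decisions $f:\mathcal{X}\to\R$ appear).
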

Proposition~\ref{prop:risk_representation} is related to \cite{elliott2013predicting}, but in contrast to their work, it recasts the asymmetric classification problem in a way that is amenable to convexifiction. Note that $a$ can be interpreted as a difference of net losses $\ell_{-1,1}-\ell_{1,1}$ and $\ell_{1,-1}-\ell_{-1,-1}$ from wrong decisions when $Y=1$ and $Y=-1$ respectively. Similarly, $b$ can be interpreted as a sum of two net losses. Proposition~\ref{prop:risk_representation} shows that the optimal binary decision minimizes the objective function involving the indicator function $z\mapsto \one_{z\geq 0}$, which is discontinuous and not convex. This leads to a difficult NP-hard empirical risk minimization problem
\begin{equation*}
	\inf_{f:\mathcal{X}\to\{-1,1\}}\frac{1}{n}\sum_{i=1}^n\omega(Y_i,X_i)\one_{-Y_if(X_i)\geq 0 }.
\end{equation*}

\section{Convexification \label{sec:convexification}}
\subsection{Convexified excess risk}
The purpose of this section is to convexify the binary decision problem with a generic loss function making it amenable to modern ML algorithms. It is easy to see that the risk function of a binary decision rule $f:\mathcal{X}\to\{-1,1\}$ can be written as
\begin{equation*}
	\mathcal{R}(f) = 0.5\E\left[\omega(Y,X)\one_{-Yf(X)\geq 0}\right] + \E[d(Y,X)]
\end{equation*}
with $d(y,x) = 0.25(\ell_{1,1}(x) + \ell_{-1,1}(x))(1+y) + 0.25(\ell_{1,-1}(x) + \ell_{-1,-1}(x))(1-y) - 0.25\omega(y,x)$; see the proof of Proposition 2.1 in the Appendix. Replacing the indicator function with a convex function $\phi:\R\to\R$, we obtain the convexified risk
\begin{equation*}
	\begin{aligned}
		\mathcal{R}_\phi(f) & = 0.5\E[\omega(Y,X)\phi(-Yf(X))] + \E[d(Y,X)].
	\end{aligned}
\end{equation*}
Therefore, minimizing the convexified risk amounts to solving
\begin{equation*}
	\inf_{f:\mathcal{X}\to\R}\E[\omega(Y,X)\phi(-Yf(X))].
\end{equation*}
Let $f^*_\phi$ be a solution to the convexified problem and let $\mathcal{R}^*_\phi = \inf_{f:\mathcal{X}\to\R}\mathcal{R}_\phi(f)$ be the optimal convexified risk. Next, we can define the excess convexified risk of a decision $f:\mathcal{X}\to\{-1,1\}$ as
\begin{equation}\label{eq:excess_convex}
	\mathcal{R}_\phi(f) - \mathcal{R}_\phi^* = 0.5\E\left[\omega(Y,X)\left(\phi(-Yf(X)) - \phi(-Yf^*_\phi(X))\right)\right].
\end{equation}

The excess convexified risk measures the deviation of the convexified risk of a given decision rule $f:\mathcal{X}\to\{-1,1\}$ from the optimal convexified risk and can be controlled. Unfortunately, the convexified excess risk tells us little about the performance of the binary decision in terms of the actual excess risk that the decision-maker cares about, namely:
\begin{equation*}
	\mathcal{R}(f) - \mathcal{R}^* = 0.5\E\left[\omega(Y,X)\left(\one_{-Yf(X)\geq 0} - \one_{-Yf^*(X)\geq 0}\right)\right].
\end{equation*}
In the following subsection, we show that the excess risk is bounded from above by the convexified excess risk, a result that we refer to as the main convexification theorem.

\subsection{Assumptions and main convexification theorem}
Let $\eta(x)=\Pr(Y=1|X=x)$ be the conditional probability. We impose the following assumption on the conditional probability $\eta$ and the loss function:
\begin{assumption}\label{as:losses}
	(i) $\ell_{-1,1}(x)-\ell_{1,1}(x)\geq c_b$ and $\ell_{1,-1}(x)- \ell_{-1,-1}(x)\geq c_b$ a.s.\ over $x\in\mathcal{X}$ for some $c_b>0$; (ii) there exist $\epsilon>0$ such that $\epsilon\leq \eta(x)\leq 1-\epsilon$  a.s.\ over $x\in\mathcal{X}$; (iii) there exists $M<\infty$ such that $|\ell_{f,y}(x)|\leq M$ a.s.\ over $x\in\mathcal{X}$ for all $f,y\in\{-1,1\}$, where $\ell_{f,y}(x)=\ell(f,y,x)$.
\end{assumption}
Assumption~\ref{as:losses} (i) requires that the losses from wrong decisions exceed those of correct decisions. It ensures that weights are $\omega(Y,X)\geq 0$ a.s., so that the convexification is possible (ii) requires that $\eta(x)$ is non-degenerate for almost all states of the world $x\in\mathcal{X}$ and is often imposed in econometrics literature. (iii) requires that the loss function is bounded and is satisfied in our empirical application.

\noindent Next, we restrict the class of convexifying function $\phi$.
\begin{assumption}\label{as:phi}
	(i) $\phi:\R\to[0,\infty)$ is a convex and nondecreasing function with $\phi(0)=1$; (ii) there exists some $L<\infty$ such that $|\phi(z)-\phi(z')|\leq L|z-z'|$ for all $z,z'$; (iii) there exist $C>0$ and $\gamma\in(0,1]$ such that for all $x,c\in(0,1)$, $\left|x - c\right|\leq C\left(x+c-2xc-\inf_{y\in\R}Q_c(x,y)\right)^\gamma$, where $Q_c(x,y) = x(1-c)\phi(-y)+ (1-x)c\phi(y),x,y\in\R$.	
\end{assumption}
Assumption~\ref{as:phi} (i) and (ii) are relatively mild restrictions imposed on the convexifying function. (iii) is a technical condition which we verify for the logistic, exponential, and hinge convexifications; see Lemmas~\ref{lemma:logistic}, \ref{lemma:hinge}, and \ref{lemma:exponential} in the Online Appendix. We focus on these three convexifications since they cover the majority of ML algorithms used in practice, including the (penalized) logistic regression, boosting, deep learning, and support vector machines. Note also that Assumption~\ref{as:phi} does not impose any restrictions on the loss function $\ell$.

\noindent We will also use the following covariate-driven threshold rule defined by the asymmetry of the loss function $\ell:$
\begin{equation*}
	\label{eq:c-function}
	c(x)= \frac{\ell_{1,-1}(x) - \ell_{-1,-1}(x)}{\ell_{-1,1}(x) - \ell_{1,1}(x) +\ell_{1,-1}(x) - \ell_{-1,-1}(x)},
\end{equation*}
where $\ell_{f,y}(x)=\ell(f,y,x)$. 

For $z\in\R$, put $\sign(z)=\one_{z\geq 0} - \one_{z<0}$. Our first result establishes the link between the optimal decision $f^*$ and the solution to the convexified risk minimization problem $f^*_\phi.$ 
\begin{proposition}\label{prop:bayes}
	Suppose that Assumption~\ref{as:losses} (i) is satisfied. Then the optimal decision is $f^*(x)=\sign(\eta(x)-c(x))$. Moreover, under Assumptions~\ref{as:losses} (ii) and \ref{as:phi} (i) if $\phi$ is differentiable,\footnote{The result also holds for the nondifferentiable hinge function, $\phi(z)=\max\{1+z,0\}$.} then $\sign(f^*_\phi(x))=f^*(x)$.
\end{proposition}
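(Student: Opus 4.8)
The plan is to reduce both optimization problems to pointwise (in $x$) minimizations: the objective in Proposition~\ref{prop:risk_representation} and the convexified objective are conditional expectations given $X=x$ in which the value of the decision at $x$ is free, so it suffices to minimize the inner conditional expectation for each fixed $x$ and then patch the minimizers together into a measurable function.

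\textbf{Optimal decision.} Conditioning on $X=x$ in $\E[\omega(Y,X)\one_{-Yf(X)\ge 0}]$, a decision with $f(x)=1$ incurs conditional cost $\omega(-1,x)(1-\eta(x))$ (then $-Yf(X)\ge 0$ iff $Y=-1$) and one with $f(x)=-1$ incurs $\omega(1,x)\eta(x)$. Substituting $\omega(1,x)=a(x)+b(x)=2(\ell_{-1,1}(x)-\ell_{1,1}(x))$ and $\omega(-1,x)=b(x)-a(x)=2(\ell_{1,-1}(x)-\ell_{-1,-1}(x))$ --- both strictly positive by Assumption~\ref{as:losses}(i) --- the choice $f(x)=1$ is weakly better precisely when $(\ell_{1,-1}(x)-\ell_{-1,-1}(x))(1-\eta(x))\le(\ell_{-1,1}(x)-\ell_{1,1}(x))\eta(x)$, i.e.\ $\eta(x)\ge c(x)$. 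With the convention $\sign(0)=1$ the tie is consistent, so $f^*(x)=\sign(\eta(x)-c(x))$; measurability of this selection follows from that of $\eta$ and the $\ell_{j,k}$.

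\textbf{Sign of the convexified minimizer.} Conditioning on $X=x$ in $\E[\omega(Y,X)\phi(-Yf(X))]$ yields the scalar problem of minimizing $g_x(t)=\omega(1,x)\eta(x)\phi(-t)+\omega(-1,x)(1-\eta(x))\phi(t)$ over $t\in\R$; up to the positive factor $2(\ell_{-1,1}(x)-\ell_{1,1}(x)+\ell_{1,-1}(x)-\ell_{-1,-1}(x))$ this equals $Q_{c(x)}(\eta(x),t)$ from Assumption~\ref{as:phi}(iii). By Assumption~\ref{as:losses}(ii) both coefficients are strictly positive, so $g_x$ is convex (a positive combination of the convex maps $t\mapsto\phi(-t)$ and $t\mapsto\phi(t)$) and differentiable by hypothesis, with $g_x'(0)=\phi'(0)\bigl(\omega(-1,x)(1-\eta(x))-\omega(1,x)\eta(x)\bigr)$, whose sign equals that of $c(x)-\eta(x)$ as soon as $\phi'(0)>0$. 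Since $g_x'$ is nondecreasing, $g_x'(0)<0$ forces every minimizer of $g_x$ to be strictly positive, $g_x'(0)>0$ forces every minimizer to be strictly negative, and $g_x'(0)=0$ (i.e.\ $\eta(x)=c(x)$) makes $0$ a minimizer; since $f_\phi^*$ minimizes $g_x$ pointwise, in each case $\sign(f_\phi^*(x))=\sign(\eta(x)-c(x))=f^*(x)$.

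The one genuinely delicate point is the value of $g_x'(0)$: everything hinges on $\phi'(0)>0$, i.e.\ on $\phi$ not being flat at the origin, which is exactly what lets $g_x'(0)$ record the sign of $\eta(x)-c(x)$. This is precisely the ``classification calibration'' condition, and it holds for the logistic and exponential convexifications, so I would either fold it into the standing hypotheses on $\phi$ invoked here or verify it alongside Lemmas~\ref{lemma:logistic}--\ref{lemma:exponential}. The remaining items --- coercivity and hence existence (and, for logistic/exponential, strict convexity and uniqueness) of the minimizer of $g_x$, and the measurable selection of $f_\phi^*$ --- are routine and do not affect the sign conclusion.
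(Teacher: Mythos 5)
Your proposal is correct and follows essentially the same route as the paper's proof: both statements are reduced to pointwise conditional minimization, the first settled by comparing the two weighted conditional costs using positivity of $\omega$ under Assumption~\ref{as:losses}~(i), and the second by a convexity/monotonicity argument for the scalar objective, where the paper writes the first-order condition at the minimizer and uses monotonicity of $\phi'$ while you read off the sign of the derivative at zero --- the same mechanism evaluated at a different point. Your caveat that the sign conclusion requires $\phi'(0)>0$ is a fair observation rather than a defect relative to the paper: the paper's own argument implicitly needs the analogous nondegeneracy when it divides by $\phi'(-f^*_\phi(x))$, and it holds for the differentiable convexifications actually used (Lemmas~\ref{lemma:logistic} and~\ref{lemma:exponential}).
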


\noindent The optimal decision rule $f^*(x)=\sign(\eta(x)-c(x))$ is well known, see \cite{elliott2013predicting} and references therein.\footnote{For completeness of presentation, we provide a concise proof in the Online Appendix.} More importantly, Proposition~\ref{prop:bayes} shows that the optimal decision rule for the convexified problem that can be easily solved in practice coincides with $f^*$. The threshold $c(x)$ corresponds to the fraction of net losses from false positives in the total net losses, hence, the decision $f(x)=1$ is made whenever the choice probability $\Pr(Y=1|X=x)$ exceeds the fraction of false positive losses. Note that in the symmetric binary classification case $\ell_{1,-1}(x)=\ell_{-1,1}(x)=1$ and $\ell_{1,1}(x)=\ell_{-1,-1}(x)=0$, so that $c(x)=1/2$ and the optimal decision is $f^*(x)=1$ if $\Pr(Y=1|X=x)$ exceeds $\Pr(Y=0|X=x)$ and $f^*(x)=-1$ otherwise. Interestingly, the optimal decision rules depend on $\eta(x)=\Pr(Y=1|X=x)$, which is related to the heteroskedasticity $\Var(Y|X=x)=4\eta(x)(1-\eta(x))$ and all other moments; cf. \cite{diebold1996optimal,diebold1997optimal}.

It is worth mentioning that our starting point is a fixed decision problem described by the loss function $\ell$, which in conjunction with $\eta$ determines the boundary separating the two decisions $\{x\in\mathcal{X}:\; \eta(x)-c(x) = 0 \}$. The following condition generalizes the so-called Tsybakov noise or margin condition, see \cite{boucheron2005theory}, and quantifies the complexity of the decision problem.
\begin{assumption}\label{as:tsybakov}
	Suppose that there exist $\alpha\geq 0$ and $C_m>0$ such that
	\begin{equation*}
		P_X\left(\{x:\; |\eta(x)-c(x)|\leq u\}\right)\leq C_mu^\alpha,\qquad \forall u>0,
	\end{equation*} 
	where $P_X$ is the distribution of $X$.
\end{assumption}
Assumption~\ref{as:tsybakov} describes the complexity of the classification problem around the decision boundary $\{x:\eta(x)=c(x) \}$. If $\alpha=0$, then it does not impose any restrictions as we can always take $C_m=1$, while larger values of $\alpha$ are more advantageous for classification. In the special case of the symmetric binary classification, $c(x)=1/2$, and the extreme case of $\alpha=\infty$ corresponds to $\eta(x)=\Pr(Y=1|X=x)$ being bounded away from $1/2$ which roughly means that there are no observations close to the decision boundary. In this case, we can interpret this as a perfect separation between recidivists and non-recidivists in the space of covariates. We refer to \cite{boucheron2005theory}, Section 5.2, for additional discussion of the margin condition as well as for equivalent formulations in the special case of symmetric binary classification; see also \cite{ponomarev2024lower}.

Our next result relates the convexified excess risk to the excess risk of the binary decision problem a generic loss function under the margin condition.

\begin{theorem}\label{thm:convexified_risk_bound_margin}
	Suppose that Assumptions~\ref{as:losses}, \ref{as:phi} (i) and (iii), and \ref{as:tsybakov} are satisfied. Then there exists $C_\phi<\infty$ such that for every measurable function $f:\mathcal{X}\to\R$
	\begin{equation*}
		\mathcal{R}(\sign(f)) - \mathcal{R}^* \leq C_\phi\left[\mathcal{R}_\phi(f) - \mathcal{R}_\phi^*\right]^\frac{\gamma(\alpha + 1)}{\gamma\alpha + 1}.
	\end{equation*}
\end{theorem}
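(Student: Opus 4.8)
The plan is to follow the classical Zhang/Bartlett--Jordan--McAuliffe strategy for comparing excess risk to excess convexified risk, but carried out pointwise in $x$ with the covariate-driven threshold $c(x)$ in place of the constant $1/2$, and then to upgrade the resulting pointwise inequality to a rate improvement using the margin condition (Assumption~\ref{as:tsybakov}) via a standard ``peeling'' or Hölder argument. First I would write both excess risks as integrals over $\mathcal{X}$ of pointwise conditional quantities. Using Proposition~\ref{prop:risk_representation} and the formula for $\mathcal{R}_\phi$, the excess convexified risk equals $\tfrac12\E[\,\omega(Y,X)(\phi(-Yf(X))-\phi(-Yf^*_\phi(X)))\,]$, and conditioning on $X=x$ the inner expectation over $Y$ becomes, up to the positive factor $a(x)=\ell_{-1,1}(x)-\ell_{1,1}(x)+\ell_{1,-1}(x)-\ell_{-1,-1}(x)$ (which is bounded below by $2c_b$ and above by $4M$ under Assumption~\ref{as:losses}), an expression of the form $Q_{c(x)}(\eta(x),f(x))-\inf_y Q_{c(x)}(\eta(x),y)$, where $Q_c(x,y)$ is exactly the quantity appearing in Assumption~\ref{as:phi}(iii). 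Likewise the excess risk $\mathcal{R}(\sign f)-\mathcal{R}^*$ equals $\tfrac12\E[\omega(Y,X)(\one_{-Y\sign f(X)\ge0}-\one_{-Yf^*(X)\ge0})]$, whose conditional version is, again up to the factor $a(x)$, bounded by $|\eta(x)-c(x)|\,\one_{\sign f(x)\ne f^*(x)}$ (here one uses Proposition~\ref{prop:bayes}, $f^*(x)=\sign(\eta(x)-c(x))$).

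The core pointwise step is then: on the set where $\sign f(x)\ne f^*(x)$, Assumption~\ref{as:phi}(iii) applied with $x=\eta(x)$ and $c=c(x)$ gives
\[
|\eta(x)-c(x)|\;\le\;C\bigl(Q_{c(x)}(\eta(x),f(x))-\inf_y Q_{c(x)}(\eta(x),y)\bigr)^{\gamma},
\]
provided $f(x)$ has the ``wrong sign'' — i.e. $-\inf_y Q_{c(x)}$ can be replaced by $-Q_{c(x)}(\eta(x), f^*_\phi(x))$ or the bound still holds because $\phi$ is nondecreasing with $\phi(0)=1$, so that switching from $f(x)$ to any point of the correct sign only decreases $Q_{c(x)}$. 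This yields a pointwise bound of the shape $|\eta(x)-c(x)|\one_{\sign f\ne f^*}\le C\,(\Delta_\phi(x))^{\gamma}$ where $\Delta_\phi(x)\ge0$ is the pointwise conditional excess convexified risk (divided by $a(x)$). Multiplying through, integrating, and absorbing the bounds $2c_b\le a(x)\le 4M$ into the constant already gives the $\alpha=0$ case: $\mathcal{R}(\sign f)-\mathcal{R}^*\le C_\phi[\mathcal{R}_\phi(f)-\mathcal{R}_\phi^*]^{\gamma}$.

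To get the improved exponent $\gamma(\alpha+1)/(\gamma\alpha+1)$ under the margin condition, I would introduce a threshold $t>0$ and split $\mathcal{X}$ into $\{|\eta(x)-c(x)|\le t\}$ and its complement. On the first set the integrand of the excess risk is at most $t$ times $\one_{\sign f\ne f^*}$, and $P_X$ of that set is at most $C_m t^\alpha$; on the second set, $|\eta(x)-c(x)|>t$ forces (via the pointwise inequality above) $\Delta_\phi(x)>(t/C)^{1/\gamma}$, so the contribution there is controlled by $(\mathcal{R}_\phi(f)-\mathcal{R}_\phi^*)/(t/C)^{1/\gamma}$ times $t$. Optimizing the sum $\;C_m t^{\alpha+1} + \text{const}\cdot t^{1-1/\gamma}(\mathcal{R}_\phi(f)-\mathcal{R}_\phi^*)\;$ over $t$ — more precisely, one first applies Hölder/Jensen to the quantity $\E[|\eta-c|\one_{\sign f\ne f^*}]$ against $\E[(\Delta_\phi)^\gamma \one_{\sign f\ne f^*}]$ using the margin condition, exactly as in \cite{boucheron2005theory}, Section 5.2 — produces the stated exponent. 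The main obstacle, and the place where genuine care is needed rather than routine calculation, is the pointwise comparison: one must verify that Assumption~\ref{as:phi}(iii) can legitimately be invoked with $\inf_y Q_{c(x)}(\eta(x),y)$ replaced by $Q_{c(x)}(\eta(x),f(x))$ precisely when $f(x)$ sits on the wrong side of zero, which requires checking (using convexity, monotonicity, $\phi(0)=1$, and differentiability of $\phi$ from Proposition~\ref{prop:bayes}) that $Q_{c(x)}(\eta(x),\cdot)$ is minimized at a point whose sign equals $f^*(x)$ and is monotone away from it — so that a wrong-signed $f(x)$ genuinely incurs at least the cost of evaluating at $0$. Handling the boundary cases $\eta(x)\in\{0,1\}$ or $c(x)\in\{0,1\}$ is excluded by Assumption~\ref{as:losses}(i)--(ii), which is why those hypotheses are needed here.
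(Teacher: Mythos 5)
Your proposal is correct, and its first half coincides with the paper's own argument: the pointwise ``wrong-sign'' comparison you describe --- that $\sign(f(x))\ne f^*(x)$ forces $Q_{c(x)}(\eta(x),f(x))\geq \eta(x)+c(x)-2\eta(x)c(x)$, so that Assumption~\ref{as:phi}(iii) converts $|\eta(x)-c(x)|$ into (a power of) the conditional excess convexified risk --- is exactly the paper's inequality~(\ref{eq:convex_inequality}), established in Lemma~\ref{lemma:convexified_risk_bound} together with the representation of Lemma~\ref{lemma:risk_eta_c}. Where you genuinely diverge is in how the margin condition is brought in. You split $\mathcal{X}$ on $\{|\eta-c|\leq t\}$ versus its complement, bound the low-margin piece by $4MC_m t^{1+\alpha}$ directly from Assumption~\ref{as:tsybakov}, bound the high-margin piece by $C^{1/\gamma}t^{1-1/\gamma}[\mathcal{R}_\phi(f)-\mathcal{R}_\phi^*]$ (using that $|\eta-c|>t$ forces the pointwise convexified excess above $(t/C)^{1/\gamma}$), and optimize explicitly in $t$; balancing the two terms indeed yields the exponent $\gamma(\alpha+1)/(\gamma\alpha+1)$. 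The paper instead splits on $|\eta-c|\geq\epsilon$, uses the convexity inequality of Bartlett--Jordan--McAuliffe to write $|\eta-c|\one_{|\eta-c|\geq\epsilon}\leq\epsilon^{1-1/\gamma}|\eta-c|^{1/\gamma}$, bounds the low-margin piece by $\epsilon$ times the disagreement probability, invokes the lower bound of Lemma~\ref{lemma:margin} to control that probability by a power of the excess risk itself, and then resolves the resulting self-referential inequality by choosing $\epsilon\propto[\mathcal{R}(\sign(f))-\mathcal{R}^*]^{1/(1+\alpha)}$. Your route is the more direct one (no Lemma~\ref{lemma:margin}, no implicit inequality) and gives the same rate; the paper's route yields an explicit constant $C_\phi$ in terms of $c_b,C_m,M,C,\gamma,\alpha$. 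Two small points to tidy up: the sum of net losses you call $a(x)$ is the paper's $b(x)$ (the paper's $a(x)$ is the difference), and your justification of the wrong-sign step via the location of the minimizer of $Q_{c(x)}(\eta(x),\cdot)$ relies on Proposition~\ref{prop:bayes}, which assumes $\phi$ differentiable; for the hinge case you should either invoke Lemma~\ref{lemma:hinge} separately or, better, argue as the paper does purely from convexity, monotonicity and $\phi(0)=1$ (e.g.\ $\phi(-f)-1\geq 1-\phi(f)$ by convexity at $0$, so $Q_{c}(\eta,f)-Q_c(\eta,0)\geq[\eta(1-c)-(1-\eta)c][1-\phi(f)]\geq 0$ on the wrong-sign set), which needs no statement about the minimizer at all.
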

\noindent The formal proof of this result with the explicit expression for the constant $C_\phi$ appears in the Appendix. It is worth noting that our result nests the symmetric binary classification as a special case; see e.g.\ \cite{zhang2004statistical}, Theorem 2.1. Theorem~\ref{thm:convexified_risk_bound_margin} relates the excess risk of the sign of $f$ to the convexified risk of $f$ (note that the convexified risk is well-defined for arbitrary $f:\mathcal{X}\to\R$). Therefore, instead of solving the NP-hard empirical risk minimization problem, we can solve the following weighted classification problem
\begin{equation*}
	\inf_{f\in\mathscr{F}}\frac{1}{n}\sum_{i=1}^n\omega(Y_i,X_i)\phi(-Y_if(X_i)),
\end{equation*}
where $\mathscr{F}$ is a class of measurable functions $f:\mathcal{X}\to[-1,1]$, called a soft decision. Note that we restrict the range of the soft decision to $[-1,1]$ since the risk is determined by the sign of $f$ only.

\subsection{Examples}\label{sec:examples}
We consider three examples of convexifying functions widely used in empirical applications. They are the logistic, exponential, and hinge functions. These convexifying functions appear in various ML algorithms including boosting, support vector machines, neural networks, and simple logistic regression; see Figure~\ref{fig:convexifications}. Applying Theorem \ref{thm:convexified_risk_bound_margin} allows us to implement suitably reweighted versions.

\begin{figure}
	\centering
	\includegraphics[width=0.5\textwidth]{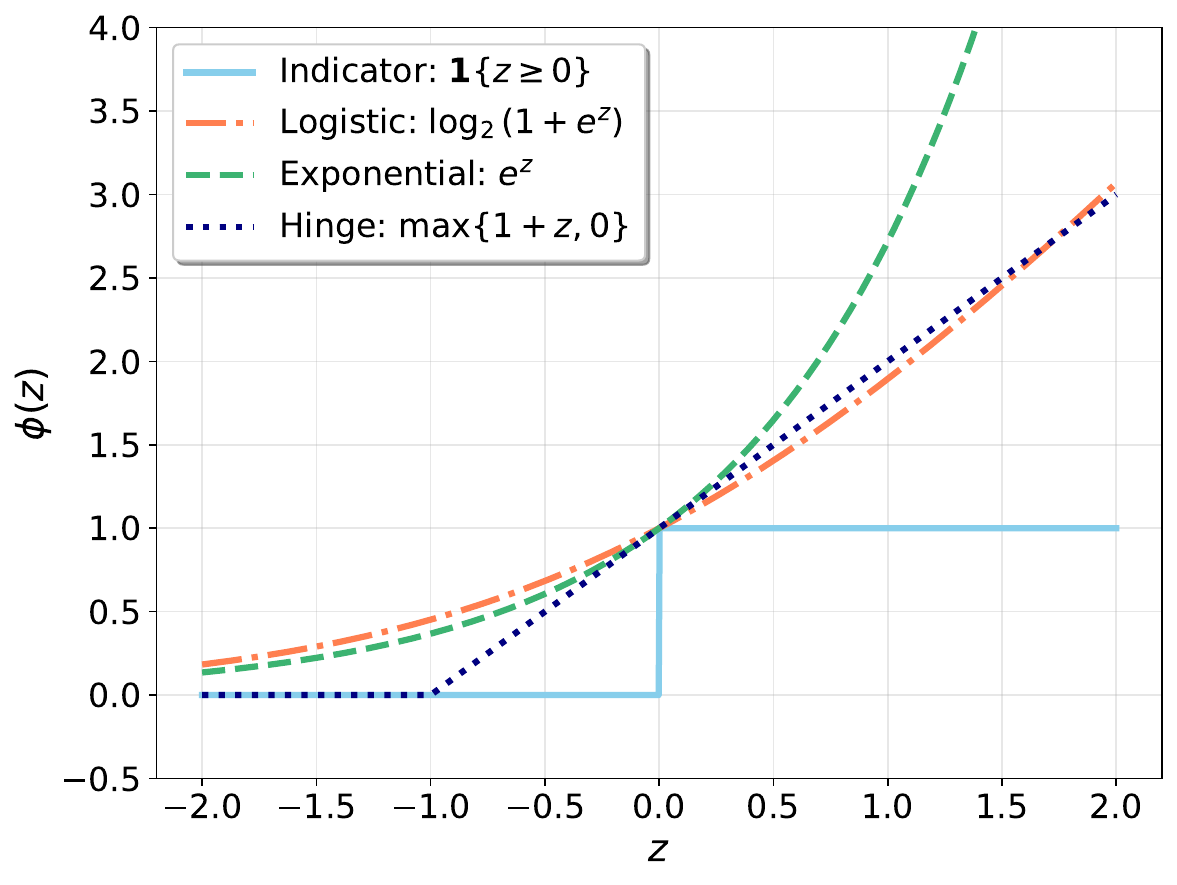}
	\caption{Indicator function with convexifications corresponding to logistic regression (logistic), boosting (exponential), and support vector machines (hinge).}
	\label{fig:convexifications}
\end{figure}

\begin{example}[Logistic convexification]
	Lemma~\ref{lemma:logistic} in the Online Appendix shows that the function $\phi(z)$ = $\log_2(1 + e^{z})$ satisfies Assumption~\ref{as:phi} with $\gamma=1/2$ and $C=\sqrt{2\log 2}$. The convexified objective is to minimize
	\begin{equation}\label{eq:logit}
		f\mapsto \frac{1}{n}\sum_{i=1}^n\omega(Y_i,X_i)\log_2\left(1 + e^{-Y_if(X_i)}\right).
	\end{equation}
\end{example}

\noindent For the symmetric loss function and linear function class, $f(X_i) = X_i^\top\theta$, the problem in equation~(\ref{eq:logit}) becomes the standard logistic regression. The deep learning consists of minimizing the problem in equation (\ref{eq:logit}) over the class of deep neural networks. Lastly, the state-of-the-art Extreme Gradient Boosting (XGBoost) algorithm of \cite{chen2016xgboost} can be understood as a functional gradient descent algorithm minimizing the problem in equation~(\ref{eq:logit}) with some additional regularization; see also \cite{friedman2001elements}, Ch. 10. 

\begin{example}[Exponential convexification]
	Lemma~\ref{lemma:exponential} in the Online Appendix shows that $\phi(z)$ =  $\exp(z),$ the exponential convexifying function,  satisfies Assumption~\ref{as:phi} on compact intervals with $\gamma = 1/2$ and $C = {2}$. The objective is to minimize 
	$$f\mapsto \frac{1}{n}\sum_{i=1}^n\omega(Y_i,X_i)e^{-Y_if(X_i)}.$$
\end{example}
For the symmetric loss function, this objective function is usually minimized iteratively in the adaptive boosting (AdaBoost); see also \cite{friedman2001elements}, Ch. 10.

\begin{example}[Hinge convexification]
	Lemma~\ref{lemma:hinge} in the Online Appendix shows that $\phi(z)$ = $(1+z)_+$,\footnote{For $a\in\mathbb{R}$, $(a)_+ = \max\{a,0\}$.} the hinge convexifying function, satisfies Assumption~\ref{as:phi} with $\gamma=1$ and $C=1$. The objective is to minimize $$f\mapsto \frac{1}{n}\sum_{i=1}^n\omega(Y_i,X_i)(1-Y_if(X_i))_+.$$
\end{example}
For the symmetric loss function, this objective function is usually minimized subject to a reproducing kernel Hilbert space constraint within support vector machines; see \cite{friedman2001elements}, Ch. 12.

\section{Excess risk bounds \label{sec:riskbounds}}
The convexified empirical risk minimization problem consists of minimizing the empirical risk
\begin{equation*}
	\widehat{\mathcal{R}}_{\phi}(f) = \frac{1}{n}\sum_{i=1}^n\omega(Y_i,X_i)\phi(-Y_if(X_i))
\end{equation*}
over some class of functions $f:\mathcal{X}\to[-1,1]$, denoted $\mathscr{F}_n$. Let $\hat f_n$ be a solution to $\inf_{f\in\mathscr{F}_n}\widehat{\mathcal{R}}_{\phi}(f)$, and let $f_n^*$ be a solution to $\inf_{f\in\mathscr{F}_n}{\mathcal{R}}_{\phi}(f)$. Put also $\|f\|_q = (\E|f(X)|^q)^{1/q}$ with $q\geq 1$ and $\|.\|_2=\|.\|$.
The following assumption requires that the risk function is sufficiently curved around the minimizer.
\begin{assumption}\label{as:loss_modulus}
	There exist some $c_\phi>0$ and $\kappa\geq 1$ such that for every $f\in\mathscr{F}_n$
	\begin{equation*}
		\mathcal{R}_\phi(f) - \mathcal{R}_\phi^* \geq c_\phi\|f-f^*_\phi\|^{2\kappa}.
	\end{equation*}
\end{assumption}
\noindent It is worth mentioning that in light of Assumption~\ref{as:losses} imposed on the loss function $\ell$, Assumption~\ref{as:loss_modulus} is mainly about the curvature of the convexifying function $\phi$. In particular, it is typically satisfied with $\kappa=1$ when $\phi''$ is strictly positive, which is the case for the logistic and exponential convexifications; see Lemma~\ref{lemma:curvature_exp_log} in the Online Appendix. On the other hand, it holds with $\kappa=1 + 1/\alpha$ in the hinge case, where $\alpha$ is the margin parameter from the Assumption~\ref{as:tsybakov}; see Lemma~\ref{lemma:curvature_hinge}. It is worth mentioning that the curvature condition in Assumption~\ref{as:loss_modulus} is typically used to establish bounds on $\|f-f_\phi^*\|$ and may not be needed to establish the \textit{slow} convergence rates of the excess risk. However, this condition is typically used to obtain the \textit{fast} convergence rates for the excess risk in the symmetric classification and regression cases; see \cite{koltchinskii2011oracle}.

We first state the oracle inequality for the excess risk in terms of a fixed point of the \textit{local Rademacher complexity} of the class $\mathscr{F}_n$ defined as
\begin{equation*}
	\psi_n(\delta;\mathscr{F}_n) \triangleq \E\left[\sup_{f\in\mathscr{F}_n:\|f - f^*_n\|^2\leq\delta}|R_n(f - f^*_n)|\right],
\end{equation*}
where $R_n(f-f_n^*) = \frac{1}{n}\sum_{i=1}^n\varepsilon_i(f(X_i)-f_n^*(X_i))$ is a Rademacher process, i.e., $(\varepsilon_i)_{i=1}^n$ are i.i.d. in $\{-1,1\}$ with probabilities $1/2$. An attractive feature of this complexity measure is that it only depends on the local complexity of the parameter space in the neighborhood of the minimizer and provides a sharp description of the learning problem; see \cite{koltchinskii2011oracle}. At the same time, the local Rademacher complexities are general enough to provide a unified theoretical treatment for different methods and can be used to deduce oracle inequalities in many interesting examples. For a function $\psi:\R_+\to\R_+$, put $\psi^\flat(\sigma) = \sup_{\delta\geq\sigma} [\psi(\delta)/\delta]$ and for a constant $\kappa\geq 1$, put $\psi_\kappa^{\sharp}(\epsilon) = \inf\left\{\sigma>0:\;\sigma^{1/\kappa-1}\psi^\flat(\sigma^{1/\kappa})\leq \epsilon \right\}$. The transform $\psi^\sharp_{\kappa}$ is a generalization of the $\sharp$-transform considered in \cite{koltchinskii2011oracle} and describes the fixed point of the local Rademacher complexity in our setting. The following result holds:\footnote{Proofs for all results in the Appendix.}

\begin{theorem}\label{thm:oracle inequality}
	Suppose that Assumptions~\ref{as:losses}, \ref{as:phi}, \ref{as:tsybakov}, and \ref{as:loss_modulus} are satisfied and  $(Y_i,X_i)_{i=1}^n$ is an i.i.d. sample. Then there exist constants $c>0,\epsilon>0$ such that for every $t>0$ with probability at least $1-ce^{-t}$
	\begin{equation*}
		\mathcal{R}(\sign(\hat f_n)) - \mathcal{R}^* \lesssim \left[\psi_{n,\kappa}^{\sharp}(\epsilon) + \left(\frac{t}{n}\right)^\frac{\kappa}{2\kappa - 1} + \frac{t}{n} + \inf_{f\in\mathscr{F}_n}{\mathcal{R}}_{\phi}(f) - \mathcal{R}_\phi^*\right]^\frac{\gamma(\alpha + 1)}{\gamma\alpha+1}.
	\end{equation*}
\end{theorem}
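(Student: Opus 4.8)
The plan is to combine the main convexification theorem (Theorem~\ref{thm:convexified_risk_bound_margin}) with a standard oracle inequality for the convexified excess risk $\mathcal{R}_\phi(\hat f_n) - \mathcal{R}_\phi^*$ obtained via the local Rademacher complexity machinery. Since Theorem~\ref{thm:convexified_risk_bound_margin} gives $\mathcal{R}(\sign(f)) - \mathcal{R}^* \leq C_\phi[\mathcal{R}_\phi(f) - \mathcal{R}_\phi^*]^{\gamma(\alpha+1)/(\gamma\alpha+1)}$ for every measurable $f$, it suffices to show that with probability at least $1 - ce^{-t}$,
\begin{equation*}
	\mathcal{R}_\phi(\hat f_n) - \mathcal{R}_\phi^* \lesssim \psi_{n,\kappa}^\sharp(\epsilon) + (t/n)^{\kappa/(2\kappa-1)} + t/n + \inf_{f\in\mathscr{F}_n}\mathcal{R}_\phi(f) - \mathcal{R}_\phi^*,
\end{equation*}
and then apply the map $u\mapsto u^{\gamma(\alpha+1)/(\gamma\alpha+1)}$ (which is monotone, so it passes through the bound, up to absorbing constants into $\lesssim$).

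The key steps for the convexified oracle inequality are as follows. First, I decompose $\mathcal{R}_\phi(\hat f_n) - \mathcal{R}_\phi^* = [\mathcal{R}_\phi(\hat f_n) - \mathcal{R}_\phi(f_n^*)] + [\mathcal{R}_\phi(f_n^*) - \mathcal{R}_\phi^*]$, where the second (approximation) term is exactly $\inf_{f\in\mathscr{F}_n}\mathcal{R}_\phi(f) - \mathcal{R}_\phi^*$. For the first (estimation) term, by definition of $\hat f_n$ as the empirical minimizer we have $\widehat{\mathcal{R}}_\phi(\hat f_n) \leq \widehat{\mathcal{R}}_\phi(f_n^*)$, so $\mathcal{R}_\phi(\hat f_n) - \mathcal{R}_\phi(f_n^*) \leq (\mathcal{R}_\phi - \widehat{\mathcal{R}}_\phi)(\hat f_n) - (\mathcal{R}_\phi - \widehat{\mathcal{R}}_\phi)(f_n^*)$, i.e. it is bounded by the empirical process $(\widehat{\mathcal{R}}_\phi - \mathcal{R}_\phi)$ indexed by the excess-loss class $\{g_f := \omega(\cdot,\cdot)[\phi(-Yf(X)) - \phi(-Yf_n^*(X))] : f\in\mathscr{F}_n\}$. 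Second, I invoke the curvature/variance link: Assumption~\ref{as:loss_modulus} gives $\mathcal{R}_\phi(f) - \mathcal{R}_\phi^* \geq c_\phi\|f - f_\phi^*\|^{2\kappa}$, and combining with $\|f - f_n^*\| \leq \|f - f_\phi^*\| + \|f_n^* - f_\phi^*\|$ and the Lipschitz bound $|g_f| \leq 2ML|f - f_n^*|$ (from Assumptions~\ref{as:losses}(iii) and \ref{as:phi}(ii)), one controls $\E g_f^2 \lesssim \|f - f_n^*\|^2$, which in turn is bounded by a power of the convexified excess risk — this is what produces the exponent $\kappa/(2\kappa-1)$ on the $t/n$ term. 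Third, I apply Talagrand's concentration inequality together with a peeling/slicing argument over shells $\{f : 2^{j-1}r < \|f-f_n^*\| \leq 2^j r\}$ and the fixed-point definition of $\psi^\sharp_{n,\kappa}$ to bound the supremum of the centered empirical process by $\psi_{n,\kappa}^\sharp(\epsilon)$ plus the $(t/n)^{\kappa/(2\kappa-1)}$ and $t/n$ deviation terms; this is the generalization of the argument in Koltchinskii's book to the exponent $\kappa$, which is exactly what the $\psi^\sharp_\kappa$ transform was defined to encode. Symmetrization replaces the empirical process by the Rademacher process $R_n$, yielding $\psi_n(\delta;\mathscr{F}_n)$, and contraction (Ledoux--Talagrand, using $\phi$ Lipschitz) strips off $\phi$ and $\omega$ up to constants.

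The main obstacle I anticipate is handling the interplay between the two exponents $\kappa$ (curvature of $\phi$) and $\gamma(\alpha+1)/(\gamma\alpha+1)$ (from convexification under the margin condition) cleanly, and in particular making sure the variance bound $\E g_f^2 \lesssim \|f-f_n^*\|^2$ does not already require the margin condition — it should not, since the margin condition is applied only at the very end through Theorem~\ref{thm:convexified_risk_bound_margin}, and the curvature condition alone drives the fast rate for the convexified excess risk. A secondary technical point is that $f_n^*$ (the population minimizer over $\mathscr{F}_n$) rather than $f_\phi^*$ is the natural center for the local complexity, so the triangle inequality between $\|f - f_n^*\|$ and $\|f - f_\phi^*\|$ must be managed; one either assumes $f_\phi^* \in \mathscr{F}_n$ up to the approximation term or absorbs $\|f_n^* - f_\phi^*\|$ into the approximation error $\inf_{f\in\mathscr{F}_n}\mathcal{R}_\phi(f) - \mathcal{R}_\phi^*$ via the reverse inequality $\mathcal{R}_\phi(f) - \mathcal{R}_\phi^* \lesssim \|f - f_\phi^*\|^2$ (which holds because $\phi$, hence $\mathcal{R}_\phi$, is Lipschitz and smooth enough on the bounded range $[-1,1]$). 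Once these are in place, substituting the convexified oracle bound into Theorem~\ref{thm:convexified_risk_bound_margin} and using subadditivity of $u \mapsto u^{\gamma(\alpha+1)/(\gamma\alpha+1)}$ on $\R_+$ (valid since the exponent lies in $(0,1]$) to distribute the power across the four terms completes the argument.
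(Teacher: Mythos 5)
Your proposal follows essentially the same route as the paper: reduce via Theorem~\ref{thm:convexified_risk_bound_margin} to an oracle inequality for the convexified excess risk, decompose it into an estimation term and the approximation error $\mathcal{R}_\phi(f_n^*)-\mathcal{R}_\phi^*$, link the localization radius to the excess risk through the curvature Assumption~\ref{as:loss_modulus} and the Lipschitz bound $|g_f|\leq 4LM|f-f_n^*|$, and control the localized empirical process by symmetrization, contraction, Talagrand's inequality and the fixed point $\psi_{n,\kappa}^{\sharp}$; the paper simply packages the Talagrand-plus-peeling step by invoking Koltchinskii (2011), Theorem 4.3, applied to the $\delta$-minimal sets $\mathscr{F}(\delta)$, rather than redoing the peeling by hand. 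One caution on your ``secondary technical point'': the fallback device you suggest, a reverse inequality $\mathcal{R}_\phi(f)-\mathcal{R}_\phi^*\lesssim\|f-f_\phi^*\|^2$, is not valid in general for the convexifications covered by the theorem --- for the hinge case Lemma~\ref{lemma:equation_hinge} gives $\mathcal{R}_\phi(f)-\mathcal{R}_\phi^*=\int_\mathcal{X} b\,|f-f_\phi^*|\,|\eta-c|\,\dx P_X$, which scales like an $L_1$ distance (take $f=(1-\varepsilon)f_\phi^*$, so the left side is of order $\varepsilon$ while $\|f-f_\phi^*\|^2=\varepsilon^2$), so no quadratic upper bound holds. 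It is also unnecessary: since $f_n^*\in\mathscr{F}_n$, Assumption~\ref{as:loss_modulus} applied to $f_n^*$ itself yields $\|f_n^*-f_\phi^*\|\leq \left(c_\phi^{-1}[\mathcal{R}_\phi(f_n^*)-\mathcal{R}_\phi^*]\right)^{1/2\kappa}$, which is exactly how the paper absorbs the centering term, turning the radius of the localized class into $2^{1-1/2\kappa}[c_\phi^{-1}(\delta+2(\mathcal{R}_\phi(f_n^*)-\mathcal{R}_\phi^*))]^{1/2\kappa}$ and letting the approximation error re-emerge additively inside the final bracket. With that (forward) use of the curvature condition in place of the reverse inequality, your argument matches the paper's proof.
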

\noindent The proof of this result appears in the Appendix and provides the explicit expression for all constants. Theorem~\ref{thm:oracle inequality} tells us that the accuracy of the binary decision $\sign(\hat f_n)$ depends on the fixed point of the local Rademacher complexity of the class, $\psi^\sharp_{n,\kappa}$, and the approximation error to the convexified risk of the optimal decision. In the remaining part of this section, we will specialize this result to parametric decisions and deep neural networks. In each case it can be seen that for a fixed $\epsilon$, we have $\psi_n^\sharp(\varepsilon)\to 0$ as long as $n\to\infty$ and the complexity of the class $\mathcal{F}_n$ does not increase ``too fast". Therefore, Theorem~\ref{thm:oracle inequality} illustrates the variance/bias trade-off, i.e. more complex classes of decision rules reduce the approximation error (bias) at costs of increasing the Rademacher complexity (variance) and vice versa. Note that in the special case when $\kappa=1$, Theorem~\ref{thm:oracle inequality} recovers \cite{koltchinskii2011oracle}, Proposition 4.1. In the following subsections, we illustrate this result for parametric and nonparametric binary decision rules. Additional results on boosting and LASSO are available in the Online Appendix.

\subsection{Parametric decisions and logistic regression \label{sec:param}}
We start with illustrating our risk bounds for parametric binary decision rules. The decision rule is defined $\sign(f_{\hat\theta})$ with $\hat\theta$ solving
\begin{equation*}
	\inf_{\theta\in\Theta}\frac{1}{n}\sum_{i=1}^n\omega(Y_i,X_i)\phi(-Y_if_\theta(X_i)),
\end{equation*}
where $f_\theta(x)=\sum_{j=1}^p\theta_j\varphi_j(x)$, $\theta\in\Theta\subset\R^p$, and $(\varphi_j)_{j\geq 1}$ is a collection of functions in $L_2(P_X)$, called the dictionary. This covers the linear functions $f_\theta(x)=x^\top\theta$ as well as nonlinear functions of covariates provided that the dictionary $(\varphi_j)_{j\geq 1}$ contains nonlinear transformations. The most popular choice of convexifying function is the logistic function $\phi(z)=\log_2(1+e^z)$. More generally, we have the following result for any convexifying function satisfying Assumption~\ref{as:phi}.

\begin{theorem}\label{thm:parametric_predictions}
	Under assumptions of Theorem~\ref{thm:oracle inequality}
	\begin{equation*}
		\E\left[\mathcal{R}(\sign(\hat f_n)) - \mathcal{R}^*\right] \lesssim \left[\left(\frac{p}{n}\right)^\frac{\kappa}{2\kappa - 1} + \inf_{f\in\mathscr{F}_n}{\mathcal{R}}_{\phi}(f) - \mathcal{R}_\phi^*\right]^\frac{\gamma(\alpha+1)}{\gamma\alpha+1}.
	\end{equation*}
\end{theorem}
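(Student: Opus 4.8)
The plan is to specialize Theorem~\ref{thm:oracle inequality} to the parametric dictionary class $\mathscr{F}_n = \{f_\theta = \sum_{j=1}^p \theta_j\varphi_j : \theta\in\Theta\}$ and control the fixed point $\psi_{n,\kappa}^\sharp(\epsilon)$ of the local Rademacher complexity explicitly in terms of $p/n$. First I would recall that, because $\phi$ is $L$-Lipschitz (Assumption~\ref{as:phi}(ii)) and the weights $\omega(Y,X)$ are bounded (Assumption~\ref{as:losses}(iii) gives $|\omega|\le 3M$), the contraction principle for Rademacher processes reduces the relevant complexity to the plain Rademacher complexity of the localized class $\{f_\theta - f_n^* : \|f_\theta - f_n^*\|^2\le\delta\}$. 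Since this is a subset of a $p$-dimensional linear space, a standard computation (Cauchy--Schwarz after writing $R_n(f_\theta - f_n^*)$ as an inner product of the coefficient vector with $\frac1n\sum_i\varepsilon_i\varphi(X_i)$, together with $\E\|\frac1n\sum_i\varepsilon_i\varphi(X_i)\|^2 \lesssim p/n$ under a normalization of the dictionary) yields $\psi_n(\delta;\mathscr{F}_n)\lesssim \sqrt{\delta p/n}$. This is the key quantitative input.

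Next I would push this bound through the two transforms. With $\psi_n(\delta)\lesssim\sqrt{\delta p/n}$ one gets $\psi_n^\flat(\sigma) = \sup_{\delta\ge\sigma}\psi_n(\delta)/\delta \lesssim \sqrt{p/n}\,\sigma^{-1/2}$, and then solving $\sigma^{1/\kappa-1}\psi_n^\flat(\sigma^{1/\kappa}) \le\epsilon$ — i.e.\ $\sigma^{1/\kappa - 1}\sigma^{-1/(2\kappa)}\sqrt{p/n}\le\epsilon$, which simplifies to $\sigma^{(1-2\kappa)/(2\kappa)}\lesssim \epsilon\sqrt{n/p}$ — gives $\psi_{n,\kappa}^\sharp(\epsilon)\lesssim (p/n)^{\kappa/(2\kappa-1)}$ (for fixed $\epsilon$). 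Substituting this into the high-probability bound of Theorem~\ref{thm:oracle inequality}, the term $(p/n)^{\kappa/(2\kappa-1)}$ dominates $(t/n)^{\kappa/(2\kappa-1)}$ and $t/n$ for the relevant range of $t$, so on the event of probability $\ge 1-ce^{-t}$ we have $\mathcal{R}(\sign(\hat f_n)) - \mathcal{R}^* \lesssim [(p/n)^{\kappa/(2\kappa-1)} + \inf_{f\in\mathscr{F}_n}\mathcal{R}_\phi(f) - \mathcal{R}_\phi^*]^{\gamma(\alpha+1)/(\gamma\alpha+1)}$. Finally I would integrate the tail: writing $Z = \mathcal{R}(\sign(\hat f_n)) - \mathcal{R}^*$, the bound $\Pr(Z > A + B(t/n)^{\kappa/(2\kappa-1)} + B t/n)\le ce^{-t}$ (raised to the power $\gamma(\alpha+1)/(\gamma\alpha+1)\le 1$, which only helps) integrates to $\E[Z]\lesssim A^{\gamma(\alpha+1)/(\gamma\alpha+1)}$ with $A = (p/n)^{\kappa/(2\kappa-1)} + \inf_{f\in\mathscr{F}_n}\mathcal{R}_\phi(f)-\mathcal{R}_\phi^*$, using that $\int_0^\infty e^{-t}\,dt<\infty$ and that the $\gamma(\alpha+1)/(\gamma\alpha+1)$-th power of a sum is bounded by the sum of the powers (since the exponent lies in $(0,1]$ when $\gamma\in(0,1]$), which absorbs the extra $t$-dependent terms into constants. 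This produces exactly the claimed inequality.

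The main obstacle I anticipate is the local Rademacher complexity bound $\psi_n(\delta;\mathscr{F}_n)\lesssim\sqrt{\delta p/n}$ and, in particular, making the reduction from the weighted, $\phi$-composed empirical process to the plain Rademacher average of the linear class fully rigorous: the contraction inequality must be applied to the map $u\mapsto \omega(Y_i,X_i)\phi(-Y_i u)$ evaluated at $u = f(X_i)$, whose Lipschitz constant in $u$ is $L|\omega(Y_i,X_i)|\le 3LM$, uniformly in $i$, and one must be careful that the localization is in the $L_2(P_X)$ norm centered at $f_n^*$ (not $f_\phi^*$), which is exactly how $\psi_n$ is defined, so this is consistent. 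A secondary point requiring care is whether any normalization of the dictionary $(\varphi_j)$ is needed for $\E\|\frac1n\sum_i\varepsilon_i\varphi(X_i)\|^2\lesssim p/n$ to hold with a dimension-free constant; this holds cleanly when $\E[\varphi(X)\varphi(X)^\top]$ has bounded operator norm, which I would either assume implicitly (it is natural for a dictionary in $L_2(P_X)$) or note is subsumed in the $\lesssim$ constant. Everything else — the two transforms, the tail integration, and the final algebra with the exponent $\gamma(\alpha+1)/(\gamma\alpha+1)$ — is routine given Theorem~\ref{thm:oracle inequality}.
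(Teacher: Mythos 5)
Your proposal follows essentially the same route as the paper: bound the local Rademacher complexity of the linear class by $\sqrt{\delta p/n}$, pass through the $\flat$- and $\sharp$-transforms to get $\psi^{\sharp}_{n,\kappa}(\epsilon)\lesssim (p/n)^{\kappa/(2\kappa-1)}$, plug this into Theorem~\ref{thm:oracle inequality}, and integrate the exponential tail (the paper integrates the tail of the $\frac{\gamma\alpha+1}{\gamma(\alpha+1)}$-power and then applies Jensen's inequality; your subadditivity-then-integrate variant is equivalent). Two remarks. First, the contraction step you flag as the main obstacle is not needed at this stage: $\psi_n(\delta;\mathscr{F}_n)$ is already defined as the plain Rademacher complexity of the localized class $\{f-f_n^*:\|f-f_n^*\|^2\le\delta\}$, and the reduction from the weighted, $\phi$-composed empirical process is carried out once and for all inside the proof of Theorem~\ref{thm:oracle inequality}. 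Second, and more substantively, your coefficient-space Cauchy--Schwarz derivation of $\psi_n(\delta;\mathscr{F}_n)\lesssim\sqrt{\delta p/n}$ is not quite right as stated: the localization is in the $L_2(P_X)$ norm, $\|f_\theta-f_n^*\|^2\le\delta$, so bounding the Euclidean norm of the coefficient difference requires the smallest eigenvalue of $\E[\varphi(X)\varphi(X)^\top]$ to be bounded away from zero (a bounded operator norm only controls the factor $\E\big|\frac1n\sum_i\varepsilon_i\varphi(X_i)\big|^2\lesssim p/n$, not the passage from the $L_2(P_X)$ ball to a coefficient ball). The paper avoids any condition on the dictionary by citing Koltchinskii (2011), Proposition 3.2, whose argument works with an $L_2(P_X)$-orthonormal basis of the $p$-dimensional span: in that basis the localization directly bounds the coefficient norm and the bound $\psi_n(\delta;\mathscr{F}_n)\le\sqrt{\delta p/n}$ holds with constant one. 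If you orthonormalize in the same way, your argument goes through with no extra assumptions; everything else in your plan matches the paper's proof.
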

\noindent It follows from Lemmas~\ref{lemma:logistic}, \ref{lemma:hinge}, \ref{lemma:exponential}, \ref{lemma:curvature_exp_log}, and \ref{lemma:curvature_hinge} that for the logistic and the exponential functions $\gamma=1/2$ and $\kappa=1$ while for the hinge function $\gamma=1$ and $\kappa=1+1/\alpha$. Therefore, in all three cases, for parametric decisions we obtain
\begin{equation*}
	\E[\mathcal{R}(\sign(\hat f_n)) - \mathcal{R}^*] \lesssim \left(\frac{p}{n}\right)^\frac{1+\alpha}{2+\alpha} + \left[\inf_{f\in\mathscr{F}_n}{\mathcal{R}}_{\phi}(f) - \mathcal{R}_\phi^*\right]^\frac{\gamma(\alpha+1)}{\gamma\alpha+1},
\end{equation*}
uniformly over a set of distributions restricted in Theorem~\ref{thm:parametric_predictions}. For a fixed $p$ the convergence rate of the first term can be anywhere between $O(n^{-1/2})$ and $O(n^{-1})$ depending on margin parameter $\alpha$. Since this rate is independent of the convexification, one can use logistic regression reweighted for the asymmetries of the loss function when the signal-to-noise ratio is low and the approximation error is dominated by the first term. Besides simplicity, this choice is also attractive because: 1) the objective function is differentiable; 2) it recovers logistic MLE in the symmetric case; 3) it has a slightly better constant in Theorem~\ref{thm:convexified_risk_bound_margin} than the exponential function. However, from the nonparametric point of view, the choice of the convexifying function is more subtle and the hinge convexification can lead to a better rate of the approximation error as we shall see in the following section.

In the high-dimensional case, when $p$ can be large relative to $n$, we can consider the weighted empirical risk minimization problem with the LASSO penalty
\begin{equation*}
	\min_{\theta\in\Theta}\frac{1}{n}\sum_{i=1}^n\omega(Y_i,X_i)\phi(-Y_if_\theta(X_i)) + \lambda_n|\theta|_1,
\end{equation*}
where $|.|_1$ is the $\ell_1$ norm and $\lambda_n\downarrow 0$ is a tuning parameter. We can deduce from the Online Appendix, Theorem~\ref{thm:lasso} that for the hinge, logistic, and exponential convexifications with probability at least $1-\delta$
\begin{equation*}
	\mathcal{R}(\sign(f_{\hat\theta})) - \mathcal{R}^* \lesssim \left(\frac{s\log(2p)}{n} + \frac{s\log(1/\delta)}{n} \right)^{\frac{\alpha+1}{\alpha+2}} + \left[\mathcal{R}_\phi(f_{\theta^*})-\mathcal{R}_\phi^*
	\right]^\frac{\gamma(\alpha+1)}{\gamma\alpha+1}.
\end{equation*}
where $s$ is the number of non-zero coefficients in a certain oracle vector $\theta^*\in\R^p$. According to this bound: 1) the dimension $p$ may increase exponentially with the sample size if $s$ is small; 2) we can use the logistic convexification in the low signal-to-noise settings, where the approximation error is relatively small.

\subsection{Deep learning \label{sec:deep}}
In this final subsection, we discuss how to construct the asymmetric deep learning architecture and discuss the corresponding risk bounds; see \cite{farrell2018deep} and references therein for more details. The deep learning amounts to fitting a neural network with several hidden layers, also known as a deep neural network. Fitting a deep neural network requires choosing an activation function $\sigma:\R\to\R$ and a network architecture. We focus on the ReLU activation function, $\sigma(z) = \max\{z,0\}$, which is the most popular choice for deep networks.\footnote{Other activation functions used in the deep learning include: leaky ReLU, $\sigma(z)=\max\{\alpha z,0\},\alpha>0$; exponential linear unit (ELU), $\sigma(z) = \alpha(e^z - 1)\one_{z<0} + z\one_{z\geq 0}$; and scaled ELU.}

The architecture consists of $d$ neurons corresponding to covariates $X=(X_1,\dots,X_d)\in\R^d$, one output neuron corresponding to the soft prediction $f\in[-1,1]$, and a number of hidden neurons. The final decision is obtained with $\sign(f)\in\{-1,1\}$. Hidden neurons are grouped in $L$ layers, known as the \textit{depth} of the network. A hidden neuron $j\geq 1$ in a layer $l\geq 1$ operates as $z\mapsto \sigma(z^\top a_j^{(l)} + b_j^{(l)})$, where $z$ is the output of neurons from the layer $l-1$ and $a_j^{(l)},b_j^{(l)}$ are free parameters. The last layer and the output neuron produce together $z\mapsto \sigma(z + b^{(L)}c(x) + 1) - \sigma(z + b^{(L)}c(x) - 1) - 1\in\R$, 
where $b^{(L)}$ is the parameter to be estimated and $c(x)$ is a known decision cut-off function. The network architecture $(L,\mathbf{w})$ is described by the number of hidden layers $L$ and a \textit{width} vector $\mathbf{w}=(w_1,\dots,w_{L})$, where $w_{l}$ denotes the number of hidden neurons at a layer $l=1,2,\dots,L$. For completeness, put also $w_0=d$ and $w_{L+1}=2$. Our final deep learning architecture $(L,\mathbf{w})$ is
\begin{equation*}
	\begin{aligned}
		\mathscr{F}_n^{\rm DNN} & = \left\{x\mapsto \sigma(\theta(x)+c(x)d+1) - \sigma(\theta(x)+c(x)d-1)-1:\; |d|\leq n,\; \theta\in\Theta_n^{\rm DNN}\right\},
	\end{aligned}
\end{equation*}
where $\Theta_n^{\rm DNN} = \left\{\theta(x) = A_{L-1}\sigma_{\mathbf{b}_{L-1}}\circ \dots \circ A_1\sigma_{\mathbf{b}_1}\circ A_0x:\; \|\theta\|_\infty\leq F  \right\}$,
each $A_l$ is $w_{l+1}\times w_{l}$ matrix of network weights and for two vectors $y=(y_1,\dots,y_r)$ and $\mathbf{b}=(b_1,\dots,b_r)$ (a bias vector), we put $\sigma_{\mathbf{b}}\circ y = (\sigma(y_1 + b_1), \dots,\sigma(y_r + b_r))^\top$. Our deep learning architecture can be arranged on a graph presented in Figure~\ref{fig:network}. Note that the asymmetries of the loss function are incorporated in the neural network architecture via the orange neuron which is fed directly to the last layer consisting of 2 ReLU neurons. Interestingly, this additional connection can be obtained using the so-called ``residual connection" which is also often used in Transformer models, such as ChatGPT and AlphaFold, cf. \cite{targ2016resnet}.

\begin{figure}[htp]
	\centering
	\resizebox{0.7\textwidth}{!}{
		\begin{tikzpicture}[x=1.5cm, y=1.5cm]
			\tikzset{%
				neuron/.style={
					circle,
					minimum size=0.9cm},
			}
			\foreach \m [count=\y] in {1,2,3,4}
			\node [neuron, fill=teal!50] (input-\m) at (0,3-\y) {$X_\m$};
			
			\foreach \m [count=\y] in {1,2,3,4}
			\node [neuron, fill=blue!30] (hidden1-\m) at (2,3-\y) {$\sigma$};
			
			\foreach \m [count=\y] in {1,2,3}
			\node [neuron, fill=blue!30] (hidden2-\m) at (4,2.5-\y) {$\sigma$};
			
			\foreach \m [count=\y] in {1,2,3,4,5}
			\node [neuron, fill=blue!30] (hidden3-\m) at (6,3.5-\y) {$\sigma$};
			
			\foreach \m [count=\y] in {1,2}
			\node [neuron, fill=violet!40] (relu-\m) at (8,2-\y) {$\sigma$};
			
			\foreach \m [count=\y] in {1}
			\node [neuron, fill=red!60] (output-\m) at (10,1.5-\y) {$\hat f_n$};		
			
			\node [neuron, fill=orange!50] (c) at (2.2,-2.4) {$c$};
			
			\foreach \i in {1,2,3,4}
			\draw [gray, ->] (input-\i) -- (c);
			
			\foreach \i in {1,2,3,4}
			\foreach \j in {1,...,4}
			\draw [gray, ->] (input-\i) -- (hidden1-\j);
			
			\foreach \i in {1,2,3,4}
			\foreach \j in {1,2,3}
			\draw [gray, ->] (hidden1-\i) -- (hidden2-\j);
			
			\foreach \i in {1,2,3}
			\foreach \j in {1,...,5}
			\draw [gray, ->] (hidden2-\i) -- (hidden3-\j);
			
			\foreach \i in {1,...,5}
			\foreach \j in {1,2}
			\draw [gray, ->] (hidden3-\i) -- (relu-\j);
			
			\foreach \i in {1,2}
			\foreach \j in {1}
			\draw [gray, ->] (relu-\i) -- (output-\j);
			
			\foreach \i in {1,2}
			\draw [gray, ->] (c) -- (relu-\i);
			
			\foreach \l [count=\x from 0] in {Input, Layer 1, Layer 2, Layer 3, 2 ReLU, Output}
			\node [align=center, above] at (\x*2,3) {\l };
	\end{tikzpicture}}
	\caption{Directed graph of our deep learning architecture with $d=4$ covariates, $L=3$ hidden layers of width $\mathbf{w}=(4,3,5)$ neurons, and 2 outer ReLU neurons. The orange neuron takes covariates $X\in\R^d$ as an input and produces $c(X)\in\R$, which is fed directly into 2 ReLU neurons.}
	\label{fig:network}
\end{figure}

The following assumption restricts the smoothness of the conditional probability and imposes some assumptions on how the network architecture should scale with the sample size. For simplicity, we define the width of the network as the maximum width across all layers and denote it as $W_n = \max_{0\leq l\leq L}w_l$.

\begin{assumption}\label{as:deep_learning}
	(i) $\eta\in W^{\beta,\infty}_R[0,1]^d$ for some $R>0$ and $\beta\in\N$; (ii) the neural network architecture is such that the depth is $L_n \leq C_LK_n\log K_n$ and the width is $W_n \leq C_WJ_n\log J_n$, for some $C_L,C_W>0$ and some $J_n\leq n^a$ and $K_n\leq n^b$ with $J_nK_n\leq \left(n/\log^6n\right)^{d/(2\beta(2+\alpha) + 2d)}$ for some $a,b\geq 0$.
\end{assumption}
\noindent It is worth mentioning that we allow for neural networks, where the product of the width and the depth increase at the rate specified in Assumption~\ref{as:deep_learning} (ii). Let $\mathscr{F}_n^{\rm DNN}$ be a set of neural networks with the architecture satisfying Assumption~\ref{as:deep_learning}, where weights and biases $\{A_0,A_l,b_l,l=1,\dots,L \}$ are allowed to take arbitrary real values.

The soft deep learning decision $\hat f_n$ is a solution to the empirical risk minimization problem with the hinge convexification
\begin{equation*}
	\inf_{f \in \mathscr{F}_n^{\rm DNN}} \frac{1}{n}\sum_{i=1}^n\omega(Y_i,X_i)(1-Y_if(X_i))_+.
\end{equation*}
The following result holds for the binary decision estimated with the deep learning.

\begin{theorem}\label{thm:oracle inequality_dnn}
	Suppose that $(Y_i,X_i)_{i=1}^n$ is an i.i.d. sample from a distribution satisfying Assumptions~\ref{as:losses}, \ref{as:phi}, \ref{as:tsybakov}, and \ref{as:deep_learning} with fixed constants, and denoted $\mathcal{P}(\alpha,\beta)$. Then
	\begin{equation*}
		\sup_{P\in\mathcal{P}(\alpha,\beta)}\E_P\left[\mathcal{R}(\sign(\hat f_n)) - \mathcal{R}^*\right] \lesssim \left(\frac{\log^6n}{n}\right)^{\frac{(1+\alpha)\beta}{(2+\alpha)\beta + d}}.
	\end{equation*}
\end{theorem}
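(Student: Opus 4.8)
The plan is to apply the oracle inequality of Theorem~\ref{thm:oracle inequality} to the deep--learning class $\mathscr{F}_n=\mathscr{F}_n^{\rm DNN}$ with the hinge convexifying function $\phi(z)=(1+z)_+$, and then to bound separately its two governing quantities: the fixed point $\psi_{n,\kappa}^{\sharp}$ of the local Rademacher complexity of $\mathscr{F}_n^{\rm DNN}$, and the approximation error $\inf_{f\in\mathscr{F}_n^{\rm DNN}}\mathcal{R}_\phi(f)-\mathcal{R}_\phi^*$. For the hinge we have $\gamma=1$ by the hinge example (and Lemma~\ref{lemma:hinge}), and Lemma~\ref{lemma:curvature_hinge} shows that under Assumption~\ref{as:tsybakov} the curvature Assumption~\ref{as:loss_modulus} holds with $\kappa=1+1/\alpha$, so Theorem~\ref{thm:oracle inequality} applies. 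Since $\gamma=1$, its outer exponent $\tfrac{\gamma(\alpha+1)}{\gamma\alpha+1}$ collapses to $1$, and $\tfrac{\kappa}{2\kappa-1}=\tfrac{\alpha+1}{\alpha+2}$; hence with probability at least $1-ce^{-t}$,
\[
\mathcal{R}(\sign(\hat f_n))-\mathcal{R}^*\ \lesssim\ \psi_{n,\kappa}^{\sharp}(\epsilon)+\Big(\tfrac{t}{n}\Big)^{\frac{\alpha+1}{\alpha+2}}+\tfrac{t}{n}+\inf_{f\in\mathscr{F}_n^{\rm DNN}}\mathcal{R}_\phi(f)-\mathcal{R}_\phi^*.
\]
Integrating this tail bound over $t$ turns the two middle terms into $n^{-(\alpha+1)/(\alpha+2)}$ and $n^{-1}$, which will be of strictly lower order than the target rate, so it suffices to control the first and last terms.

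For the complexity term I would combine standard metric--entropy bounds for ReLU networks with Dudley's entropy integral and the $\sharp_\kappa$--transform. The $\|\cdot\|_\infty$--covering numbers of $\mathscr{F}_n^{\rm DNN}$ obey $\log N(u,\mathscr{F}_n^{\rm DNN},\|\cdot\|_\infty)\lesssim \mathcal{N}_n\log(C/u)$ with $\mathcal{N}_n$ the pseudo--dimension of the architecture, which is of order $(W_nL_n)^2$ up to logarithmic factors, hence $\mathcal{N}_n\lesssim (J_nK_n)^2\,\mathrm{polylog}(n)\le (n/\log^6 n)^{\frac{d}{\beta(2+\alpha)+d}}\,\mathrm{polylog}(n)$ by Assumption~\ref{as:deep_learning}(ii). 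Dudley's bound then gives $\psi_n(\delta;\mathscr{F}_n^{\rm DNN})\lesssim \sqrt{\mathcal{N}_n\delta/n}\,\mathrm{polylog}(n)$, and evaluating the fixed point of its $\sharp_\kappa$--transform for $\kappa=1+1/\alpha$ yields $\psi_{n,\kappa}^{\sharp}(\epsilon)\lesssim (\mathcal{N}_n/n)^{\frac{\kappa}{2\kappa-1}}\,\mathrm{polylog}(n)=(\mathcal{N}_n/n)^{\frac{\alpha+1}{\alpha+2}}\,\mathrm{polylog}(n)$. Inserting the bound on $\mathcal{N}_n$ and absorbing all logarithmic factors into $\log^6 n$ gives $\psi_{n,\kappa}^{\sharp}(\epsilon)\lesssim\big(\log^6 n/n\big)^{\frac{(1+\alpha)\beta}{(2+\alpha)\beta+d}}$.

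For the approximation term I would first make the hinge conditional risk explicit, as in the proof of Proposition~\ref{prop:bayes}: the conditional convexified risk at $x$ equals $b(x)\,Q_{c(x)}(\eta(x),f(x))$ up to an additive term that does not involve $f$, its minimiser over $[-1,1]$ is $f^*_\phi(x)=\sign(\eta(x)-c(x))=f^*(x)$, and $Q_c(x,f)-\inf_yQ_c(x,y)=|\eta(x)-c(x)|\,|f(x)-f^*(x)|$ for $f\in[-1,1]$; bounding $b(x)$ via Assumption~\ref{as:losses}(iii) gives $\mathcal{R}_\phi(f)-\mathcal{R}_\phi^*\lesssim \E\big[\,|\eta(X)-c(X)|\,|f(X)-f^*(X)|\,\big]$. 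Since $\eta\in W^{\beta,\infty}_R[0,1]^d$ by Assumption~\ref{as:deep_learning}(i), a standard deep ReLU approximation theorem supplies a network $\widetilde\eta$ of width $\lesssim J_n\log J_n$ and depth $\lesssim K_n\log K_n$ with $\|\widetilde\eta-\eta\|_\infty\lesssim \delta_n:=(J_nK_n)^{-2\beta/d}\mathrm{polylog}(n)$. Because the output layer of $\mathscr{F}_n^{\rm DNN}$ receives the known cutoff $c(x)$ directly, the candidate $f_n(x)=\mathrm{clip}\big(M(\widetilde\eta(x)-c(x)),-1,1\big)$ --- obtained by feeding $\theta\approx M\widetilde\eta$ and the $c$--coefficient $d=-M$ into the two outer ReLU neurons --- belongs to $\mathscr{F}_n^{\rm DNN}$ for $M=\lceil 1/\delta_n\rceil\le n$, the required weights being of polynomial size. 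On $\{|\eta-c|>2\delta_n\}$ the quantity $M(\widetilde\eta-c)$ has the same sign as $\eta-c$ and absolute value at least $1$, so $f_n=f^*$ there; elsewhere $|f_n-f^*|\le 2$, and Assumption~\ref{as:tsybakov} bounds the measure of the exceptional set, giving $\mathcal{R}_\phi(f_n)-\mathcal{R}_\phi^*\lesssim\delta_n\cdot\delta_n^{\alpha}=\delta_n^{1+\alpha}$. With $J_nK_n\le (n/\log^6 n)^{\frac{d}{2\beta(2+\alpha)+2d}}$ this is $\lesssim\big(\log^6 n/n\big)^{\frac{(1+\alpha)\beta}{(2+\alpha)\beta+d}}$, matching the complexity term.

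Combining the three steps, every contribution is $\lesssim(\log^6 n/n)^{(1+\alpha)\beta/((2+\alpha)\beta+d)}$, and since all constants depend only on the fixed constants in Assumptions~\ref{as:losses}, \ref{as:phi}, \ref{as:tsybakov}, \ref{as:deep_learning}, the bound holds uniformly over $\mathcal{P}(\alpha,\beta)$. I expect the main obstacle to be the approximation step: one must invoke a deep ReLU approximation result whose rate, written in terms of the width/depth parameters $J_n,K_n$, balances \emph{exactly} against the entropy--based estimation rate under the precise scaling of Assumption~\ref{as:deep_learning}, and one must verify that the rescaled clipped network realising $\mathrm{clip}(M(\widetilde\eta-c))$ is genuinely admissible in $\mathscr{F}_n^{\rm DNN}$ (the multiplier $M\le n$ and the built--in $c(x)$ channel are exactly what make this possible) before the Tsybakov margin argument can be applied. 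A secondary technicality is the explicit evaluation of the generalised $\sharp_\kappa$--transform of the local Rademacher complexity for the non--trivial exponent $\kappa=1+1/\alpha$.
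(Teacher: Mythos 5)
Your proposal is correct and follows essentially the same route as the paper: apply Theorem~\ref{thm:oracle inequality} with the hinge convexification ($\gamma=1$, $\kappa=1+1/\alpha$), bound the fixed point of the local Rademacher complexity via the pseudo-dimension of the ReLU architecture (of order $(W_nL_n)^2$ up to logs, as in \cite{bartlett2019nearly}) together with the $\sharp_\kappa$-transform, and bound the approximation error by a clipped, rescaled network built from a Lu-et-al-type deep ReLU approximation of $\eta$ fed through the known $c(x)$ channel, combined with the margin condition to get $\varepsilon_n^{1+\alpha}$. The balancing of width/depth against entropy under Assumption~\ref{as:deep_learning}(ii), which you flag as the main obstacle, is exactly how the paper's proof closes the argument.
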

Note that in the special case of symmetric binary classification this result recovers the convergence rate recently obtained in \cite{kim2021fast}, Theorem 3.3, who in turn assume that the weights are bounded and allow for very slowly diverging depth of order $L_n = O(\log n)$. In particular, this rate matches the minimax lower bound up to the $\log n$ factor; see \cite{audibert2007fast}.

\section{Monte Carlo Simulations \label{secapp:MCresults}}

Anticipating the empirical application to the economic prediction of recidivism, we report on a simulation study pertaining to pretrial detention decisions. As explained in the next section, we present a design pertaining to judges who need to decide whether to release an offender, facing the possibility that the defendant might commit other crimes versus keeping in jail a defendant who would obey the law and the terms of the release. A first subsection describes the design and the second the simulation results.

\subsection{Simulation design \label{subsec:MCdesign}}

There are two groups, $G=0$ and 1, with one group assumed to be a protected segment of the population. We set $Y=-1$  if the person does not commit a crime upon pretrial release, and $Y=1$ otherwise. This is a much-studied topic and we approach it from a social planner point of view using a simplified stylized example with a loss function taken from Example~\ref{ex:fairness}:
\begin{center}
	\begin{tabular}{lcccccc}
		& \multicolumn{2}{c}{G = 0} & & & \multicolumn{2}{c}{G = 1} \\
		& & & & & & \\
		& $f(0,z) = 1$  & $f(0,z) = -1$  & & & $f(1,z) = 1$  &  $f(1,z) = -1$ \\
		& & & & & & \\
		$Y = 1$	&   $0$ &  $\psi_0$ & & &   $0$ &   $\psi_1$ \\
		& & & & & & \\
		$Y=-1$	&   $\varphi_0$ &  $0$ & & &    $\varphi_1$ & $0$ \\
	\end{tabular}	
\end{center}

\bigskip

\noindent where $z$ is a vector of observable characteristics and $\psi_g,\varphi_g>0$ for $g\in\{0,1\}$. From the above, we do not suffer any losses (or gains) when a defendant being released becomes a productive member of society or when a defendant who would commit another crime is kept in jail.

\medskip

Keeping someone in jail who does not intend to commit another crime comes with costs $\varphi_g$ depending on the group membership $g\in \{0,1\}$. If $\varphi_1>\varphi_0$, then the cost of keeping an individual in jail who does not intend to commit another crime is higher if that individual is in group $g=1$. Similarly, releasing a recidivist comes with costs $\psi_g,g\in\{0,1\}$, so that if $\psi_1\ne \psi_0$, the costs of releasing a recidivist are different for the protected group and everyone else.\footnote{It is worth mentioning that in credit risk applications, the bank might care more about the false negative mistakes (failing to predict defaults), while the social planner might be more concerned with the false positive mistakes (failing to predict that the loan will be repaid) and equal credit opportunities regardless of the group membership. Since our general framework can be applied to different binary prediction problems in economics, in this simulation study, we will look at how both mistakes change with $\varphi_g$ and $\psi_g$ for $g\in\{0,1\}$.}

\medskip

\noindent According to equation (\ref{eq:c-function}), the threshold between the two binary decisions is:
\begin{equation*}
	c(g,z) = \frac{\varphi_g}{\varphi_g + \psi_g},\qquad g\in\{0,1\}
\end{equation*}
and according to Proposition~\ref{prop:bayes} the optimal decision rule is
$f^*(g,z)$ = $\sign(\eta(g,z) - c(g,z))$ with $\eta(g,z)=\Pr(Y=1|G=g,Z=z)$.
Note also that $a(g,z)=\psi_g - \varphi_g$ and $b(g,z)=-(\psi_g + \varphi_g)$.
The design of the data generating process is
\begin{equation*}
	Y = 2\one_{G + Z^\top\gamma + \tau\left(\frac{1}{d}\sum_{j=1}^dZ_j^2 + 2Z_1\sum_{j=2}^dZ_j\right)  \geq \sigma\varepsilon } - 1,
\end{equation*}
where $\varepsilon\sim N(0,1)$, $G\sim \mathrm{Bernoulli}(\rho)$, and $Z_1,\dots,Z_d\sim_{i.i.d.} N(0,1)$. Therefore, the protected segment is a fraction $\rho$ of the population (determined by the Bernoulli distribution parameter) and the conditional probability is
\begin{equation}\label{eq:simrisk}
	\begin{aligned}
		\eta(g,z) & = \Pr(Y=1|G=g,Z=z) \\
		& = \Phi_\sigma\left(G + Z^\top\gamma + \tau\left(\frac{1}{d}\sum_{j=1}^dZ_j^2 + 2Z_1\sum_{j=2}^dZ_j\right)\right),
	\end{aligned}
\end{equation}
where $\Phi_\sigma$ is the CDF of $N(0,\sigma^2)$. Note that the DGP may feature a very simple example of nonlinearities with quadratic terms and interactions involving $Z_1.$ In addition, the DGP covers the linear model with $\tau = 0.$ Lastly, we also set $\gamma = (1,0.9,0.8,0,0,\dots,0)^\top$ and the dimension $d$ of covariates $Z$ is set to 15.

\medskip

Let $(Y_i,G_i,Z_i)_{i=1}^n$ be i.i.d.\ draws of $(Y,G,Z)$.  To evaluate the performance of our approach, we split the sample into the training or estimation sample $(Y_i,G_i,Z_i)_{i=1}^{n_e}$ and the test sample $(Y_i,G_i,Z_i)_{i=n_e+1}^n$. For parametric predictions, we estimate the decision rule solving the weighted logistic regression over the class of linear functions $\{(g,z)\mapsto \theta_0 + \theta_1g + z^\top\gamma:\;\theta_0,\theta_1\in\R,\;\gamma\in\R^{d} \}$. Note that according to our theory if $\tau=0$, then the weighted linear logistic regression provides valid binary predictions even if the choice probabilities are not logistic, cf.\ equation (\ref{eq:simrisk}). However, since in practice we typically do not know the parametric class that can capture all the relevant nonlinearities (i.e., that $\tau\ne 0$), we would often estimate the linear prediction rule
\begin{equation*}
	\min_{(c,\gamma)\in\R^{1+d}}\frac{1}{n_e}\sum_{i=1}^{n_e}(Y_i(\psi_{G_i}-\varphi_{G_i}) + (\psi_{G_i}+\varphi_{G_i}))\log\left(1+e^{-Y_i(\theta_0 + \theta_1G + Z_i^\top\gamma)}\right).
\end{equation*}
Then the estimated prediction rule is $(g,z)\mapsto \sign(\hat \theta_0 + \hat\theta_1g + z^\top\hat\gamma)$, where $(\hat \theta_0,\hat\theta_1,\hat\gamma)$ are estimated above, and the binary predictions evaluated on the test sample are
\begin{equation*}
	\sign(\hat \theta_0 + \hat\theta_1G_i + Z_i^\top\hat\gamma),\qquad i=n_e+1,\dots,n.
\end{equation*}
To obtain binary predictions with neural networks, we solve
\begin{equation*}
	\min_{f\in\mathscr{F}_n^{\rm NN}}\frac{1}{n_e}\sum_{i=1}^{n_e}(Y_i(\psi_{G_i}-\varphi_{G_i}) + (\psi_{G_i}+\varphi_{G_i}))(1-Y_if(G_i,Z_i))_+,
\end{equation*}
where $\mathscr{F}_n^{\rm NN}$ is a relevant neural network class, see Figure~\ref{fig:network} and \cite{targ2016resnet}. Then the estimated prediction rule is $z\mapsto \sign(\hat f(z))$, and the binary predictions evaluated on the test data are
\begin{equation*}
	\sign(\hat f(Z_i)),\qquad i=n_e+1,\dots,n.
\end{equation*}

We set the sample sizes $n$ = 1,000 and 10,000 with 30\% set aside as a test sample in the simulations. To benchmark our asymmetric binary choice approach, we focus first on the unweighted approach with logistic regression, gradient-boosted trees, shallow and deep learning, and support vector machines. For each method, we compute the group-specific false positive (FP) and false negative (FN) by mistakes estimating:
\begin{equation*}
	\begin{aligned}
		{\rm FP}_g & = \Pr(\sign(\hat f(X))=1|Y=-1,G=g), \\
		{\rm FN}_g & = \Pr(\sign(\hat f(X))=-1|Y=1,G=g),
	\end{aligned}
\end{equation*}
for $g\in\{0,1\}$ on the test sample. We also compute the total misclassification error by estimating
\begin{equation*}
	{\rm Error} = \Pr(\sign(\hat f(X))\ne Y)
\end{equation*}
on the test sample. We use the TensorFlow, scikit-learn, and XGBoost packages in Python to implement the machine learning methods and consider simple linear Logit, Logit with quadratic polynomial features and LASSO, XGBoost, support vector machines, as well as shallow and deep learning. 

\smallskip

The tuning parameters are computed using $5$-fold cross-validation, including the LASSO penalty, the SVM penalty, and the number of trees. All other parameters are kept at their default values in the XGBoost package. We also use the radial basis function and the default value of the scaling parameter in the scikit-learn package. The neural networks have a width of 15 neurons and all other parameters are set at their default values (batch size, number of epochs, etc.); i.e. we do not use additional regularization ($\ell_1$/$\ell_2$ or dropout). For shallow learning, we use the sigmoid activation function. The deep ReLU neural network has 5 hidden layers. 

\smallskip

The simulation results appear in Table~\ref{tab:MCresults_mistakes}. We find that in terms of the total misclassification error, the ML methods outperform logistic regression in the nonlinear DGP case (except for shallow learning). The best performance is achieved by LASSO with nonlinearities with boosting and support vector machines coming in second. Importantly, we observe a disproportionate number of false positive and false negative mistakes across the two groups in many cases.

\begin{table}[H]
	\caption{\label{tab:MCresults_mistakes} Monte Carlo Simulation Results: symmetric classification}
	\tablexplain
	{\footnotesize
		The Monte Carlo simulation design is presented in Section \ref{subsec:MCdesign}, which represents a stylized social planner facing a disproportionate number of false positive and false negative mistakes across the two groups with the standard ML classification approach. The population consists of two groups, $G$ = 0 and 1. Constituents of a group $G$ = 1 are a fraction $\rho.$ The dimension $d$ of covariates $Z$ is set to $15$ and the scale parameter is $\sigma=0.1$. FP and FN are false positive and false negative mistakes computed as a share in the corresponding group, Total = misclassification rate. Logit = logistic regression, GB = Gradient Boosted trees, SL = shallow learning, DL = deep learning, SVM = support vector machines. The results for $n=1,000$ are based on $5,000$ MC experiments. For $n=10,000$, we reduce the number of MC experiments to $1,000$ due to computational constraints.}
		{\scriptsize
	\begin{ctabular}{lccccccccccccc}
		&	&\multicolumn{6}{c}{Nonlinear DGP: $\tau = 1$}  & \multicolumn{6}{c}{Linear DGP: $\tau=0$}  \\
		&	&\multicolumn{3}{c}{$\rho=0.2$}  & \multicolumn{3}{c}{$\rho = 0.5$}  	& \multicolumn{3}{c}{$\rho=0.2$}	& \multicolumn{3}{c}{$\rho=0.5$}	\\
		&	G& FP & FN	&  Error 	& FP & FN & Error	& FP & FN & Error & FP & FN & Error \\
		&	 &  	&  &  &  &  & &  \\
		& \multicolumn{13}{c}{Sample size $n$ = 10,000} \\
		&	 &  	&  &  &  &  & &  \\
		Logit & 0 & 0.51 & 0.22 & 0.34 & 0.52 & 0.22 & 0.33 & 0.11 & 0.11 & 0.11 & 0.11 & 0.11 & 0.11 \\
		& 1 & 0.71 & 0.11 &  & 0.72 & 0.10 &  & 0.16 & 0.08 &  & 0.16 & 0.08 &  \\
		LASSO & 0 & 0.12 & 0.05 & 0.08 & 0.16 & 0.03 & 0.08 & 0.13 & 0.10 & 0.12 & 0.19 & 0.06 & 0.12 \\
		& 1 & 0.04 & 0.11 &  & 0.06 & 0.08 &  & 0.04 & 0.21 &  & 0.07 & 0.15 &  \\
		XGB & 0 & 0.16 & 0.10 & 0.12 & 0.17 & 0.10 & 0.12 & 0.13 & 0.12 & 0.13 & 0.14 & 0.12 & 0.12 \\
		& 1 & 0.20 & 0.08 &  & 0.20 & 0.07 &  & 0.16 & 0.10 &  & 0.16 & 0.10 &  \\
		SVM & 0 & 0.14 & 0.08 & 0.10 & 0.14 & 0.08 & 0.10 & 0.13 & 0.10 & 0.12 & 0.14 & 0.09 & 0.11 \\
		& 1 & 0.17 & 0.06 &  & 0.17 & 0.05 &  & 0.12 & 0.11 &  & 0.15 & 0.08 &  \\
		SL & 0 & 0.24 & 0.07 & 0.14 & 0.25 & 0.07 & 0.13 & 0.15 & 0.11 & 0.13 & 0.15 & 0.11 & 0.12 \\
		& 1 & 0.25 & 0.05 &  & 0.26 & 0.05 &  & 0.19 & 0.08 &  & 0.19 & 0.07 &  \\
		DL & 0 & 0.18 & 0.07 & 0.12 & 0.20 & 0.07 & 0.12 & 0.15 & 0.12 & 0.13 & 0.15 & 0.12 & 0.13 \\
		& 1 & 0.19 & 0.06 &  & 0.21 & 0.05 &  & 0.19 & 0.09 &  & 0.19 & 0.09 &  \\
		&	 &  	&  &  &  &  & &  \\
		& \multicolumn{13}{c}{Sample size $n$ = 1,000} \\
		&	 & 	& 	&  &  &  &   \\
		Logit & 0 & 0.51 & 0.24 & 0.35 & 0.52 & 0.24 & 0.34 & 0.12 & 0.12 & 0.12 & 0.12 & 0.12 & 0.12 \\
		& 1 & 0.69 & 0.13 &  & 0.70 & 0.13 &  & 0.17 & 0.08 &  & 0.16 & 0.08 &  \\
		LASSO & 0 & 0.17 & 0.07 & 0.11 & 0.22 & 0.05 & 0.12 & 0.14 & 0.10 & 0.13 & 0.19 & 0.07 & 0.13 \\
		& 1 & 0.09 & 0.13 &  & 0.13 & 0.10 &  & 0.05 & 0.21 &  & 0.08 & 0.16 &  \\
		XGB & 0 & 0.30 & 0.16 & 0.22 & 0.32 & 0.15 & 0.22 & 0.16 & 0.13 & 0.14 & 0.18 & 0.12 & 0.14 \\
		& 1 & 0.32 & 0.15 &  & 0.35 & 0.13 &  & 0.14 & 0.15 &  & 0.17 & 0.12 &  \\
		SVM & 0 & 0.26 & 0.15 & 0.20 & 0.27 & 0.15 & 0.19 & 0.16 & 0.11 & 0.14 & 0.19 & 0.09 & 0.13 \\
		& 1 & 0.31 & 0.12 &  & 0.32 & 0.11 &  & 0.12 & 0.14 &  & 0.18 & 0.10 &  \\
		SL & 0 & 0.73 & 0.10 & 0.36 & 0.79 & 0.07 & 0.35 & 0.17 & 0.12 & 0.14 & 0.21 & 0.09 & 0.14 \\
		& 1 & 0.77 & 0.07 &  & 0.82 & 0.05 &  & 0.19 & 0.10 &  & 0.23 & 0.08 &  \\
		DL & 0 & 0.42 & 0.18 & 0.27 & 0.45 & 0.16 & 0.27 & 0.19 & 0.15 & 0.17 & 0.21 & 0.13 & 0.16 \\
		& 1 & 0.45 & 0.15 &  & 0.48 & 0.13 &  & 0.20 & 0.13 &  & 0.22 & 0.12 &  \\
	\end{ctabular}}
\end{table}	

\smallskip

Next, we investigate whether group-specific misclassification rates can be equalized across the two groups with weighted Logit and Boosting. For simplicity, we focus on the setting with $\tau=0$, $\rho=0.2$, and $n=1,000$, as in this case, we observe a disproportionate number of FP and FN across the two groups. Note that in this case Logit has larger FP rates for group $G=1$ while Boosting for group $G=0$. Figure~\ref{fig:fp_fn} shows that the asymmetric weighted logistic regression can equalize the FP probabilities across the two groups at $\varphi_1\approx 1.6$ and FN probabilities across the two groups for $\psi_0\approx 1.6$. Note that equalizing the FP probabilities comes with the increase in FN probabilities in the group $G=1$ and equalizing the FN probabilities comes with the increase in the FP probabilities in the group $G=0$. Therefore, the decision-maker or the social planner has to think carefully about all these trade-offs when calibrating the asymmetric loss function. The results for the gradient boosting are similar, except for the fact that larger weight factors are required to equalize FP/FN probabilities across groups.

\begin{figure}[h]
	\centering
	\begin{subfigure}{0.45\textwidth} 
		\includegraphics[width=\textwidth]{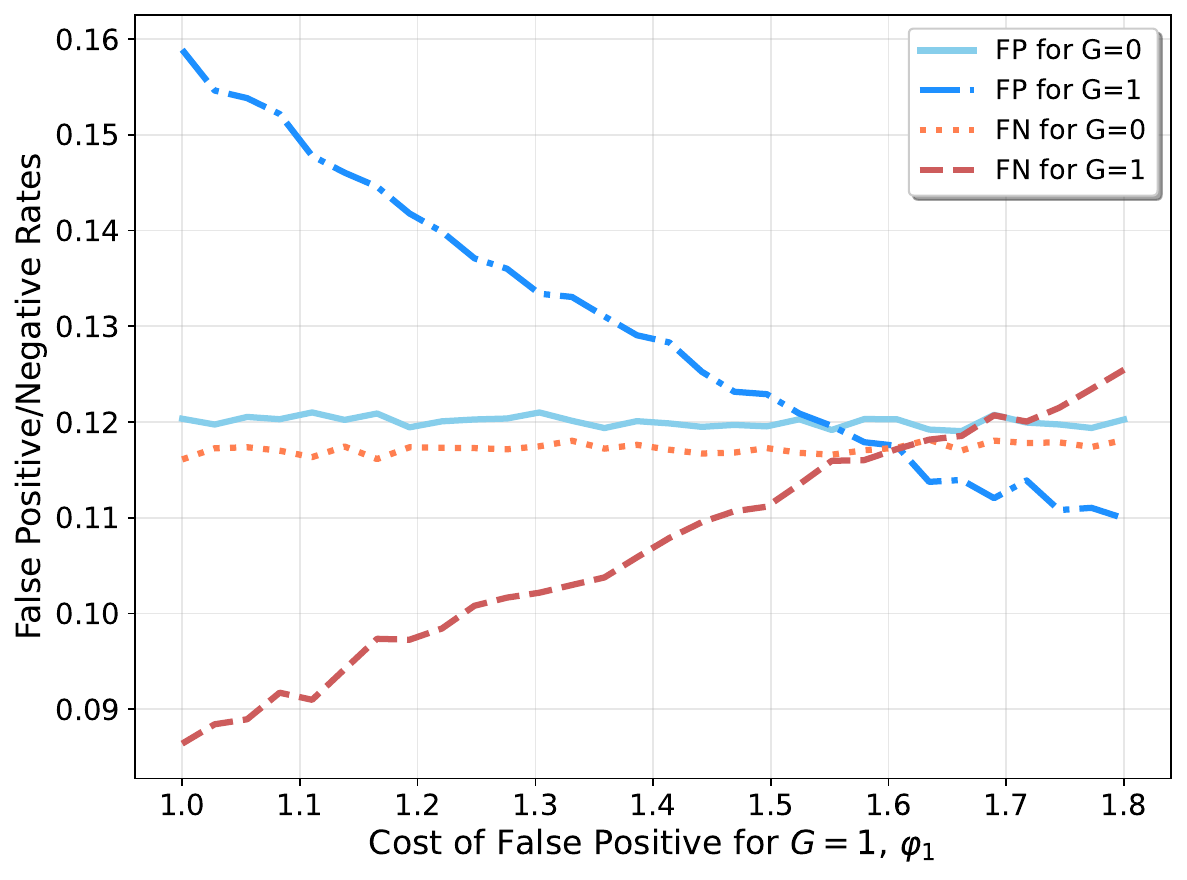}
		\caption{Logit: False Positives (FP)} 
	\end{subfigure}
	\begin{subfigure}{0.45\textwidth} 
		\includegraphics[width=\textwidth]{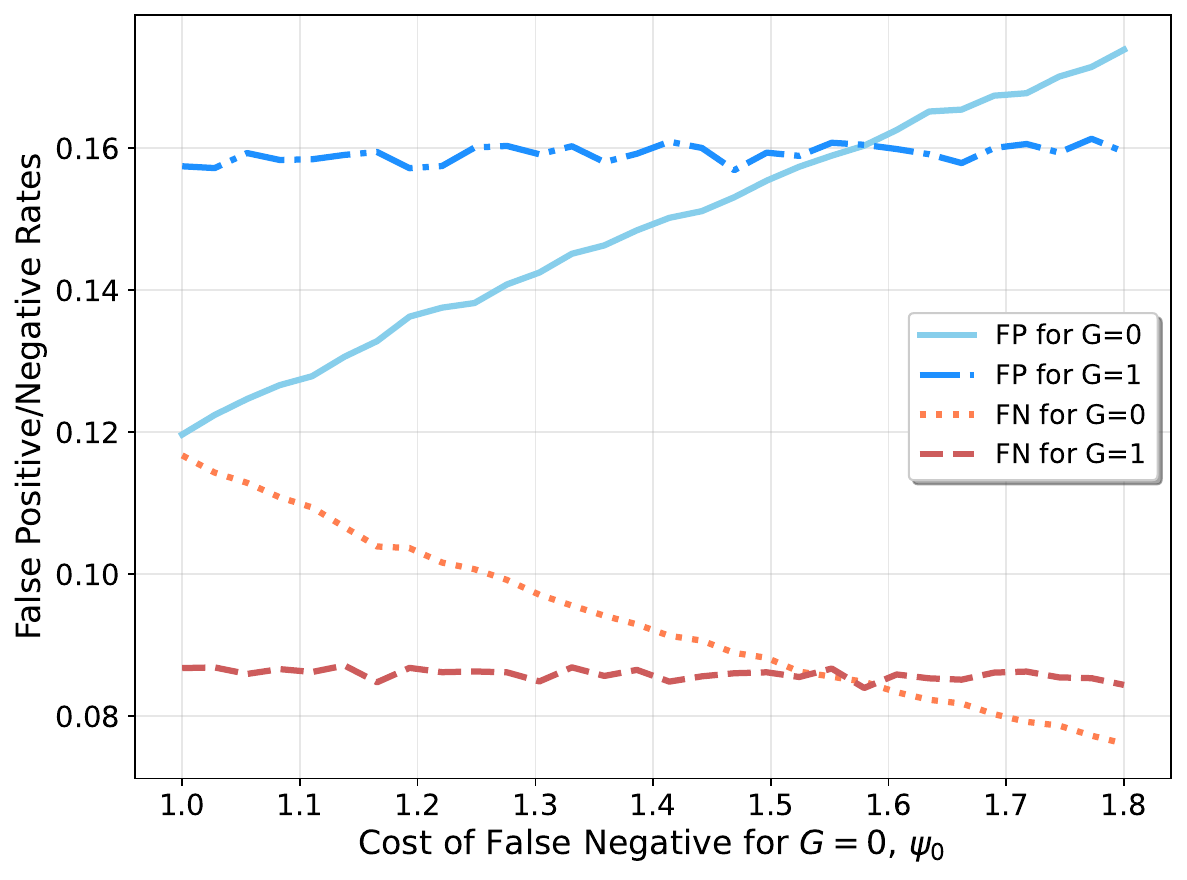}
		\caption{Logit: False Negatives (FN)} 
	\end{subfigure}
	\begin{subfigure}{0.45\textwidth} 
		\includegraphics[width=\textwidth]{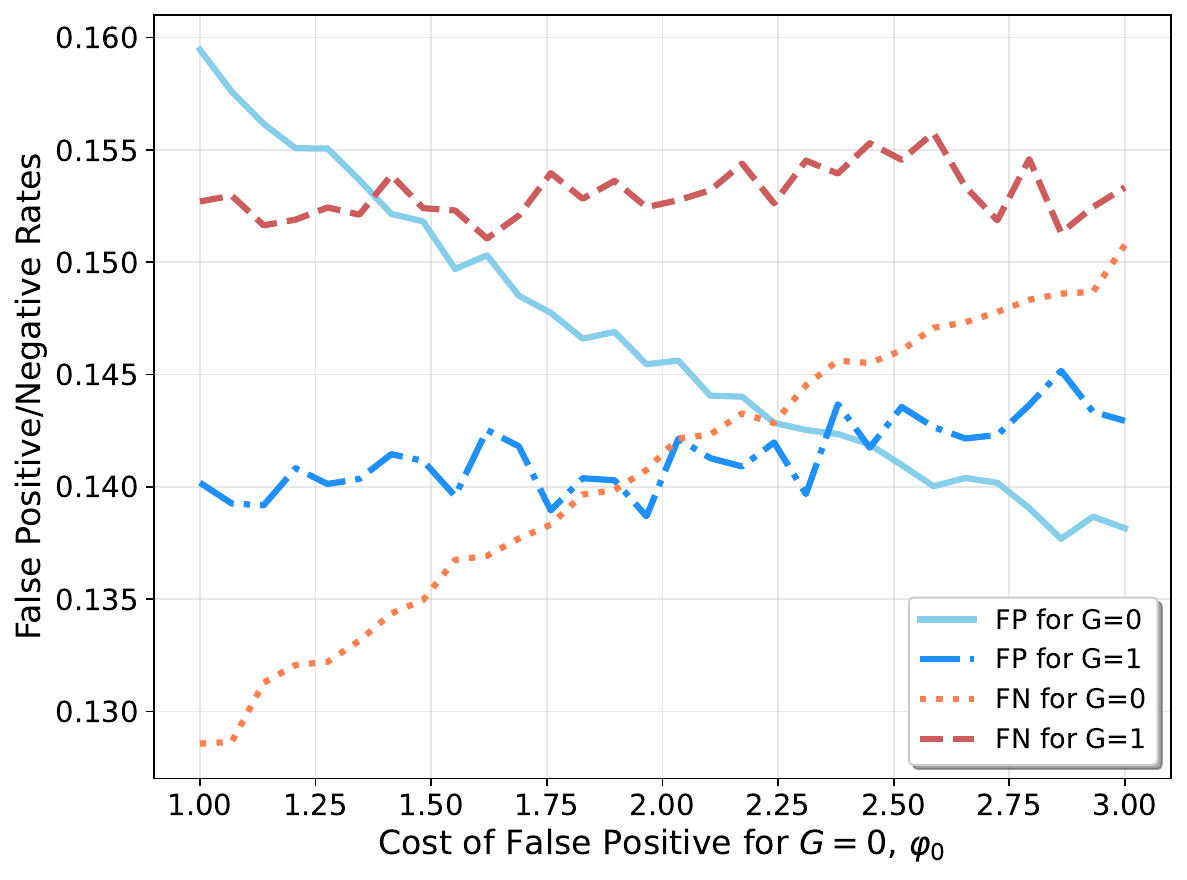}
		\caption{Boosting: False Positives (FP)} 
	\end{subfigure}
	\begin{subfigure}{0.45\textwidth} 
		\includegraphics[width=\textwidth]{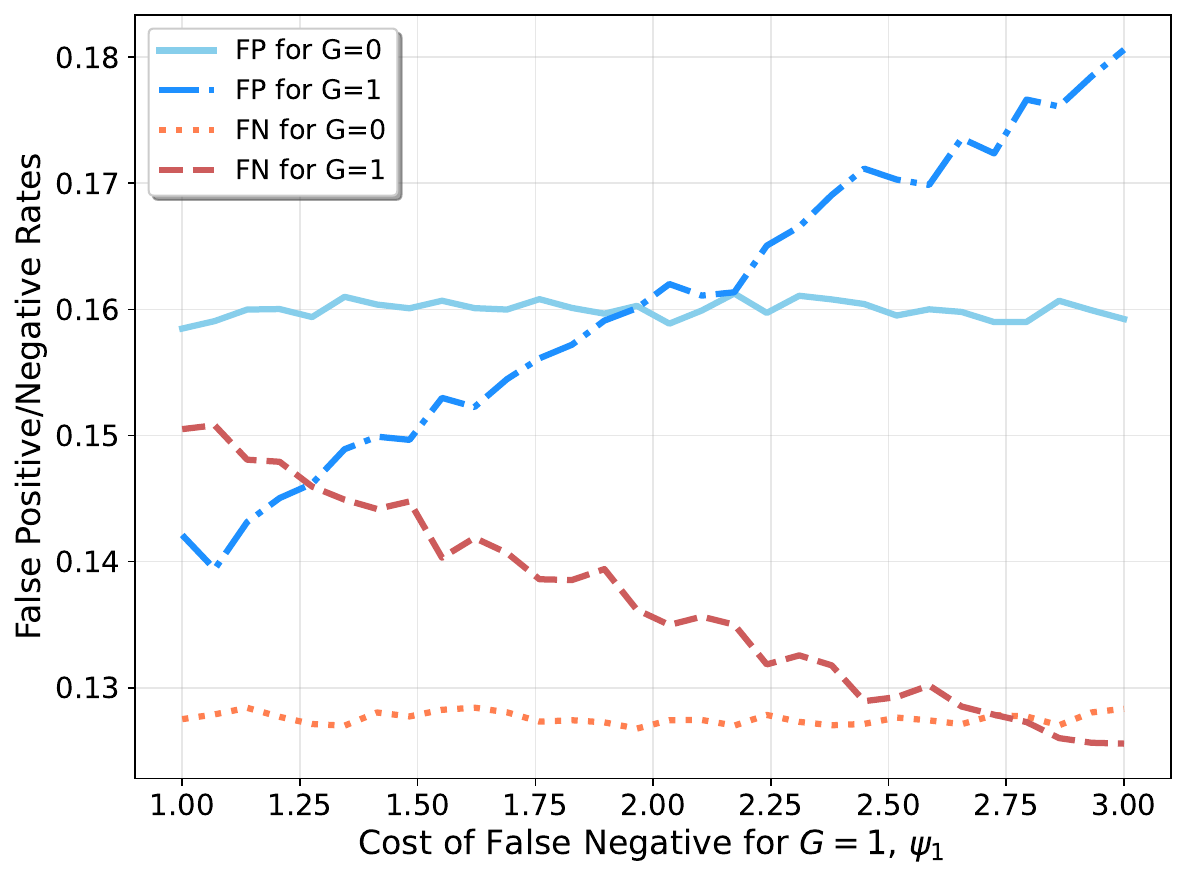}
		\caption{Boosting: False Negatives (FN)} 
	\end{subfigure}
	\caption{Asymmetric Binary Choice. Each subplot corresponds to changes in the FP cost (left) or FN cost (right) for Logit (top) and boosting (bottom). The figure shows that introducing asymmetries in the loss function can equalize the False Positive and the False Negative rates across groups. Setting: $\rho=0.2$, $\sigma=0.1$, $\tau=0$, $n=1,000$. Results based on $5,000$ Monte Carlo experiments.} 
	\label{fig:fp_fn}
\end{figure}

Next, we compare the performance of the standard logistic regression approach, which ignores the asymmetric loss function, with the asymmetric logistic regression in terms of the average loss of a social planner. The social planner loss for the former will be denoted by $\ell_{logit}$ while our new estimator yields $\ell_{w-logit}.$

\smallskip

The simulation results are reported in Table \ref{tab:MCresults}.  We report several measures to appraise the findings. First, we report P($\ell_{logit}$ $>$ $\ell_{w-logit}$) which is the percentage that the standard logistic regressions generate larger social planner costs compared to our weighted regression. Hence, this measure reports how often our estimator outperforms the standard procedure, where the probabilities are computed from the Monte Carlo simulated samples. Next, we report summary statistics for the ratio $\ell_{logit}/\ell_{w-logit}.$ When the ratio is above 1.00 then the weighted logistic approach is better. We report the minimum, maximum, mean, and three quartiles of the simulation distribution. All simulations involve 5000 replications.

\smallskip

We start from a baseline case for the parameter setting, namely: $\psi_0$ = 3, $\psi_1$ = 1, $\varphi_0$ = 1.7, and finally $\varphi_1$ = 1. For the baseline case and $n$ = 1,000,  P($\ell_{logit}$ $>$ $\ell_{w-logit}$) is 0.52, meaning that in more than 50\% of cases, our procedure is superior (lower social costs) to the standard logistic regression. A more detailed summary statistics is reported in the Online Appendix, Table~\ref{tab:MCresults_summary}. We also examine various deviations from the baseline case. For the larger sample size $n$ = 10,000 the gains are larger, which is now 0.62, meaning that the probability that the symmetric logistic regression leads to larger losses increases as we get more data.

\begin{table}[H]
	\caption{\label{tab:MCresults} Monte Carlo Simulation Results: symmetric vs. asymmetric Logit}
	\tablexplain
	{\footnotesize
		The Monte Carlo simulation design is presented in Section \ref{subsec:MCdesign}, which represents a stylized social planner with a loss function from Example~\ref{ex:fairness} featuring asymmetries for false positives and false negatives. The population consists of two groups, $G$ = 0 and 1, with the former assumed to be a protected segment of the population. Constituents of a group $G$ = 1 are a fraction $\rho.$ The baseline case has the loss function with the following setting: $\psi_1=\varphi_1$ = 1, $\varphi_0=1.7$, and $\psi_0=3$. We also set $\rho = 0.2$, $\sigma=0.1$, and $\tau=0$. We compare the performance of a standard logistic regression approach, which ignores the asymmetric loss function, with our convexified weighted logistic model appearing in equation (\ref{eq:simrisk}). The social planner loss for the former will be denoted by $\ell_{logit}$ while our new estimator yields $\ell_{w-logit}.$ We report P($\ell_{logit}$ $>$ $\ell_{w-logit}$) which is the percentage that standard logistic regressions generate larger social planner costs compared to our weighted regression. Hence, this measure reports how often our estimator outperforms the standard procedure, where the probabilities are computed from the Monte Carlo simulated samples. Columns with $\varphi_0,\varphi_1,\psi_0$, or $\psi_1$ as headers represent deviations from the baseline case.}
	\begin{ctabular}{lccccccccc}
		&	Baseline & $\rho$ 	&  $\tau$  	& $\varphi_0$ & $\varphi_1$	& $\varphi_1$ & $\psi_0$	& $\psi_1$ 	& $\psi_1$\\
		&	case & 0.5	&  1 	& 2 & 2 &3  & 4 & 2	& 3 \\
		&	 & 	& 	&  &  &  &  & &  \\
		& \multicolumn{9}{c}{P($\ell_{logit}$ $>$ $\ell_{w-logit}$) } \\
		&	 & 	& 	&  &  &  &   \\
		Sample, $n$  & \\
		1,000 & 0.52 & 0.45 & 0.84 & 0.52 & 0.64 & 0.69 & 0.58 & 0.65 & 0.70 \\
		5,000 & 0.57 & 0.51 & 0.93 & 0.52 & 0.64 & 0.70 & 0.62 & 0.64 & 0.69 \\
		10,000 & 0.62 & 0.53 & 0.97 & 0.54 & 0.69 & 0.75 & 0.71 & 0.68 & 0.74 \\
	\end{ctabular}
\end{table}	

Next, we report two columns in Table \ref{tab:MCresults} where we change the fraction of the protected population $\rho$ from 0.2 to respectively 0.5, introducing nonlinearities with $\tau=1$, and changing the cost structure. We can see that in most of the cases, our approach is superior with a larger margin compared to the baseline case. Introducing more asymmetries in the loss function implies better performance of asymmetric logistic regression, as expected.

\medskip

Lastly, we compare the performance of the weighted ERM decisions to simple plug-in rules motivated by Proposition~\ref{prop:bayes}. The plug-in rules predict $\hat Y=1$ when $\hat\eta(x)\geq c(x)$, where $\hat\eta(x)$ is an estimator of $\eta(x)=\Pr(Y=1|X=x)$. Note that in the special case of the symmetric binary classification with the logistic convexification, the plug-in decisions are actually equivalent to the ERM decisions in population since
\begin{equation*}
	\eta(x) = \frac{1}{1+e^{-f(x)}}>0.5 \iff f(x)>0.
\end{equation*}
However, this equivalence does not hold in our asymmetric setting.\footnote{It is known that the plug-in rules can be inferior or superior relative to ERM in the symmetric case depending on the distribution of the data; see \cite{audibert2007fast} for more detailed comparison.} In addition, some of the popular and successful ML techniques, e.g., support vector machines and neural networks with hinge convexification, are ERM-based and do not estimate $\eta$. For simplicity, we focus on the parametric case and compare the asymmetric Logit to the asymmetric plug-in rules based on the symmetric Logit. These results are presented in Table~\ref{tab:MCresults_plug_in}. In most of the cases, the plug-in rules are inferior to our approach, and the difference is more pronounced in small samples. Lastly, we provide more detailed summary statistics for the ratio $\ell_{plugin}/\ell_{w-logit}$ in the Online Appendix, Table~\ref{tab:MCresults_plug_in_summary}.

\medskip

\begin{table}
	\caption{\label{tab:MCresults_plug_in} Monte Carlo Simulation Results: plug-in vs. asymmetric Logit}
	\tablexplain
	{\footnotesize
		The Monte Carlo simulation design is presented in Section \ref{subsec:MCdesign}, which represents a stylized social planner with a loss function from Example~\ref{ex:fairness} featuring asymmetries for false positives and false negatives. The population consists of two groups, $G$ = 0 and 1, with the former assumed to be a protected segment of the population. Constituents of a group $G$ = 1 are a fraction $\rho.$ The baseline case has the loss function with the following setting: $\psi_1=\varphi_1$ = 1, $\varphi_0=1.7$, and $\psi_0=3$. We also set $\rho = 0.2$, $\sigma=0.1$, and $\tau=0$. We compare the performance of a standard logistic regression approach, which ignores the asymmetric loss function, with our convexified weighted logistic model appearing in equation (\ref{eq:simrisk}). The social planner loss for the former will be denoted by $\ell_{plugin}$ while our new estimator yields $\ell_{w-logit}.$ We report P($\ell_{plugin}$ $>$ $\ell_{w-logit}$) which is the percentage that standard logistic regressions generate larger social planner costs compared to our weighted regression. Hence, this measure reports how often our estimator outperforms the standard procedure, where the probabilities are computed from the Monte Carlo simulated samples. Columns with $\varphi_0,\varphi_1,\psi_0$, or $\psi_1$ as headers represent deviations from the baseline case.}
	\begin{ctabular}{lccccccccc}
		&	Baseline & $\rho$ 	&  $\tau$  	& $\varphi_0$ & $\varphi_1$	& $\varphi_1$ & $\psi_0$	& $\psi_1$ 	& $\psi_1$\\
		&	case & 0.5	&  1 	& 2 & 2 &3  & 4 & 2	& 3 \\
		&	 & 	& 	&  &  &  &  & &  \\
		& \multicolumn{9}{c}{P($\ell_{plugin}$ $>$ $\ell_{w-logit}$) } \\
		&	 & 	& 	&  &  &  &   \\
		Sample, $n$  & \\
		1,000 & 0.76 & 0.64 & 0.00 & 0.55 & 0.80 & 0.84 & 0.90 & 0.81 & 0.83 \\
		5,000 & 0.76 & 0.56 & 0.00 & 0.42 & 0.83 & 0.89 & 0.96 & 0.83 & 0.88 \\
		10,000 & 0.52 & 0.30 & 0.00 & 0.18 & 0.66 & 0.76 & 0.92 & 0.66 & 0.75 \\
	\end{ctabular}
\end{table}

\section{Pretrial Detention Decisions: Algorithmic Fairness and Recidivism Revisited \label{sec:empirical}}
The purpose of this section is to illustrate our novel econometric methods with an application to the problem of algorithmic fairness in pretrial detention. We will show how ML taking into account social planner preferences can be incorporated directly into the digital decision process.\footnote{Among economists, the idea to apply ML to pretrial detention decisions has recently been explored by \cite{kleinberg2018human}. } In particular, we will illustrate that valuing outcomes of protected groups allows us to reduce various forms of algorithmic discrimination.

\smallskip

Across the nation, judges are increasingly using algorithms to assess the risk of whether a defendant, if released, would fail to appear in court or be rearrested for a new crime, a term called recidivism. Journalists at ProPublica have investigated a commercial algorithm, called Correctional Offender Management Profiling for Alternative Sanctions (COMPAS), developed by Northpointe, Inc. The algorithm assigns defendants recidivism risk scores based on more than 100 factors, including age, sex, and criminal history. 

\smallskip

COMPAS does not explicitly use race as an input, hence, there are no disparate treatment issues.\footnote{The legal scholars distinguish between the disparate treatment and the disparate impact and there is often a tension between these two. The \textit{disparate treatment} pertains to using race explicitly in various decisions and is often illegal, including bail, hiring, admission, loans, etc. However, even when the decision is carefully made to avoid the disparate treatment, it may still have unintended \textit{disparate impact} consequences with substantial legal implications. } Nevertheless, the aforementioned ProPublica article revealed that African American defendants are substantially more likely to be classified as high risk. Therefore, the ProPublica article pertains to the disparate impact issues, and remedying it would require using race explicitly. This bring us to the legal issues related to algorithmic affirmative action; see \cite{bent2019algorithmic} for further discussion. As a simple illustration of our methodology, we will focus on equalizing the false positive rates, computed as
\begin{equation}\label{eq:fp_g}
	{\rm FP}_g = \Pr(\sign(\hat f(X))=1|Y=-1,G=g)
\end{equation}
or positive predictive values 
\begin{equation}\label{eq:ppv_g}
	{\rm PPV}_g = \Pr(Y=1|\sign(\hat f(X))=1,G=g)
\end{equation}
for African American versus other individuals, i.e. $g\in\{\mathrm{AF,O}\}$. Equalizing the false positive rates in equation~\eqref{eq:fp_g} achieving the error rate balance fairness while equalizing the positive predictive values in equation~\eqref{eq:ppv_g} achieves the predictive parity fairness; see \cite{mitchell2021algorithmic} for further discussion. Therefore, we will show that our methodology can mitigate the disparate impact problem. Note that it is not the purpose of our study to compare ML outcomes with human decisions (as in \cite{kleinberg2018human}) or to compare ML outcomes with COMPAS.

\smallskip

We use data from Broward County, Florida, originally collected by ProPublica. Following their analysis, we only consider defendants who were assigned COMPAS risk scores within 30 days of their arrest and define the recidivism as a new felony or misdemeanor charge within two years after the initial screening. In addition, we restrict our analysis to only those defendants who spent at least two years (after their COMPAS evaluation) outside a correctional facility without being arrested for a violent crime or were arrested for a violent crime within these two years. Following standard practice, we use this two-year violent recidivism metric and set $Y_i=1$  for those who re-offended within this window, and $Y_i = -1$ for those who did not. In the Online Appendix Section \ref{appsubsec:cleaning}, we provide a detailed description of data cleaning and report the summary statistics for the cleaned dataset.\footnote{As we explain in Online Section \ref{appsubsec:cleaning}, while working with the COMPAS data, we discovered several issues regarding how the dataset was constructed by ProPublica journalists. Ideally, for each defendant we should only have one entry included in the data set with that record containing all the information including current charges and past crimes, to predict recidivism. However, we found multiple records for the same arrest in the COMPAS data set, with only differences in charges. As a result we undertook a thorough cleaning process described in the Online Appendix.} The total number of individuals is 8,227 with 495 variables making the problem high-dimensional, hence, calling for ML techniques as it may not be clear what types of crimes are predictive of recidivism. A large number of these variables are dummies for different kinds of crimes from the criminal code. The rest include variables like age, sex, race, crime history, marriage status, etc. We have a demographic mix dominated by African Americans and Caucasians, respectively 4,109 and 2,797 in numbers. We stratify race along two groups, African American versus all other, which given our sample is roughly African American versus Caucasian. The binary outcome $is\_recid$ indicates that 31.3\% of cases resulted in a recidivism. The minimum age is 16 with a median of 29.

\begin{figure}[h]
	\centering
	\includegraphics[width=0.6\textwidth]{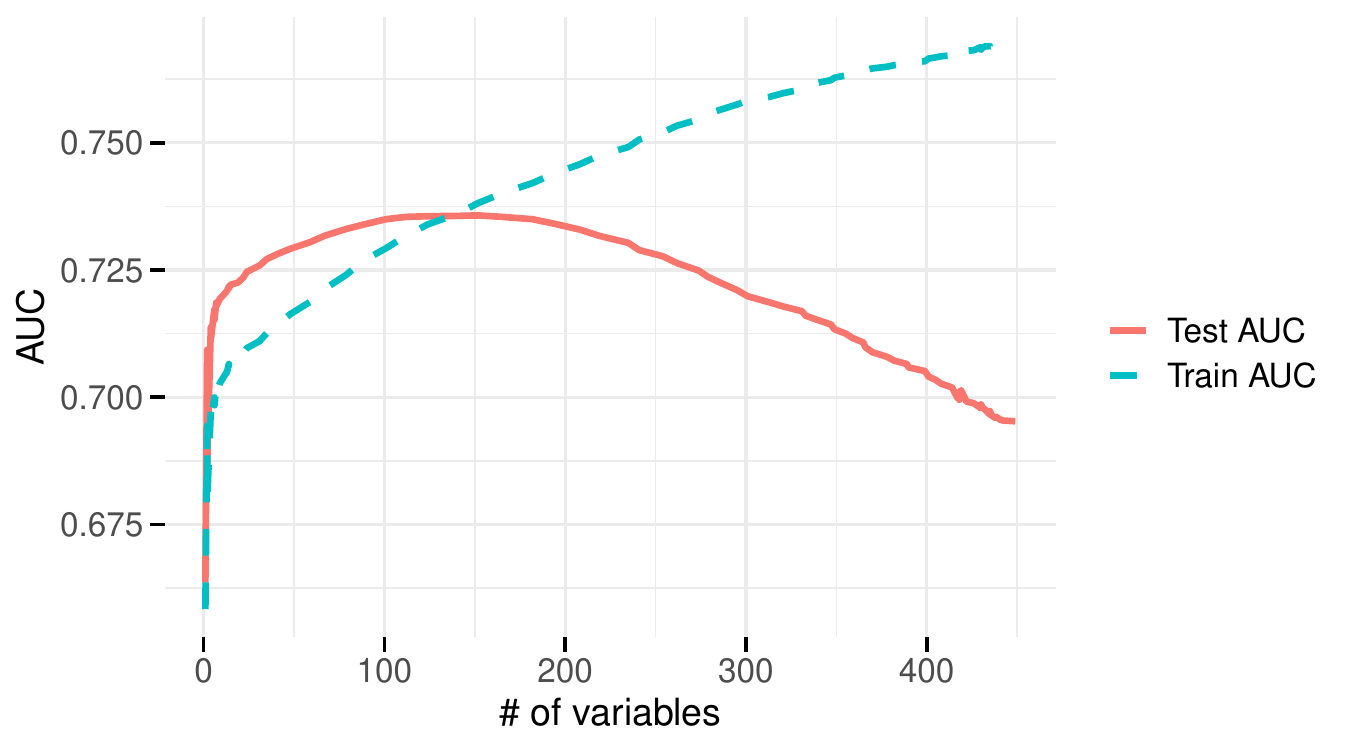}
	\caption{Training and Test AUC of LASSO-Logit path. The figure shows that the highest test AUC is achieved for the model with 152 covariates.} 
	\label{fig:auc_application}
\end{figure}

To understand whether the high-dimensional machine learning methods are needed for this dataset, we fit a LASSO-Logit regression and plot the area under the curve (AUC) for the 20\% of the data left out as the test sample. The results appear in Figure~\ref{fig:auc_application}. We find that the highest test AUC, namely $0.74$, is achieved for the model with 152 covariates.\footnote{It is worth mentioning that simple linear Logit using all 495 covariates is not feasible because the matrix of covariates is close to singular. See also Online Appendix Table~\ref{tab:post_lasso} for the post-LASSO Logit estimates.}

\begin{table}[!htbp] \centering 
	\caption{Symmetric Classifiers: false positive/negative rates and AUC for test sample.} 
	\label{tab:symmetric} 
	\begin{tabular}{@{\extracolsep{5pt}} lccc} 
		\\[-1.8ex]\hline 
		\hline \\[-1.8ex] 
		& FP & FN & AUC \\ 
		& \multicolumn{3}{c}{LASSO-Logit} \\
		All & $0.049$ & $0.772$ & $0.734$ \\ 
		Other & $0.026$ & $0.840$ & $0.733$ \\ 
		African American & $0.075$ & $0.720$ & $0.719$ \\ 
		& & & \\
		& \multicolumn{3}{c}{Boosting} \\
		All & $0.095$ & $0.695$ & $0.725$ \\ 
		Other & $0.055$ & $0.763$ & $0.736$ \\ 
		African American & $0.143$ & $0.644$ & $0.697$ \\ 
		\hline \\[-1.8ex] 
	\end{tabular} 
\end{table} 

Next, we fit the standard symmetric LASSO-Logit and Boosting classifiers using 494 covariates (excluding the race dummy). We tune the penalty of LASSO-Logit and the learning rate of Boosting using 5-fold cross-validation on the training sample. The test errors of both classifiers are reported in Table~\ref{tab:symmetric}. The standard symmetric classifiers have disproportionate biases even though the race variable has not been used to fit these models. African American defendants have almost 3 times higher false positive rates compared to the rest of the population. We also note that false negative rates are enormously large compared to the false positive rates, and calling for additional balancing. The performance of boosting is slightly inferior to LASSO-Logit, suggesting the limited role of additional nonlinearities. This is probably not suprising given that our covariates already include interactions between various crime types and degrees as well as a quadratic age term.

\smallskip

To balance the false positive and false negative rates rates, we fit the asymmetric classifiers with a loss function
\begin{equation*}
	\ell(f,y) = \psi\one_{f=-1,y=1} + \one_{f=1,y=-1},
\end{equation*}
where $f\in\{-1,1\}$ is recidivism prediction and $y\in\{-1,1\}$ recidivism outcome. The cost of false positive mistakes is normalized to $1$ and the costs of false negative mistakes denoted by $\psi$. 

\smallskip

The performance of asymmetric classifiers for various values of $\psi\in[0,10]$ appears in Figure~\ref{fig:fp_fn_application}. We find that changing the cost of false negative mistakes allows us to achieve balanced false positive and false negative rates both for LASSO-Logit and Boosting. The false positive and false negative rates are equalized approximately at $\psi=2.25$ for both classifiers.
\begin{figure}[h]
	\centering
	\begin{subfigure}{0.49\textwidth} 
		\includegraphics[width=\textwidth]{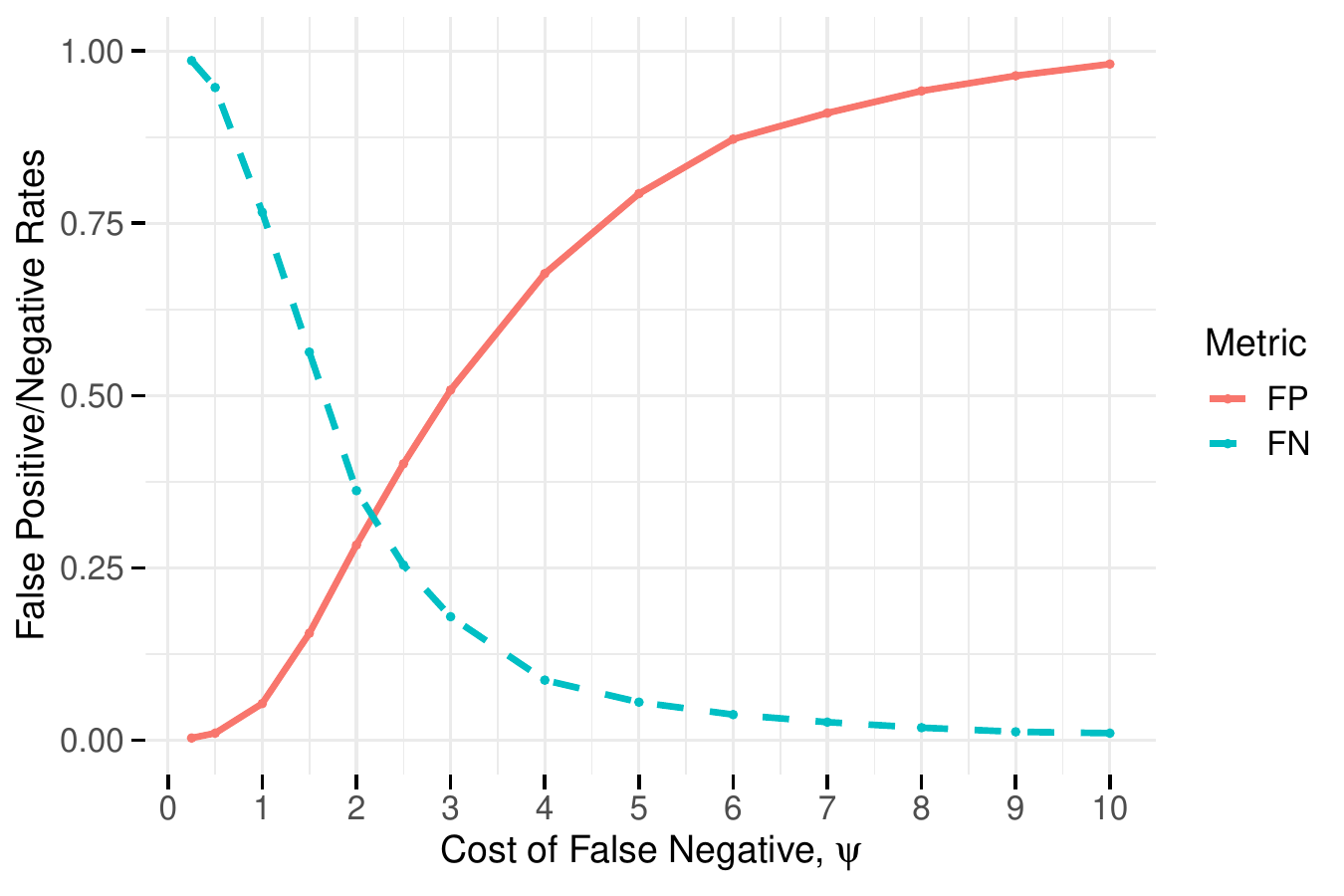}
		\caption{LASSO-Logit: FP and FN rates} 
	\end{subfigure}
	\begin{subfigure}{0.49\textwidth} 
		\includegraphics[width=\textwidth]{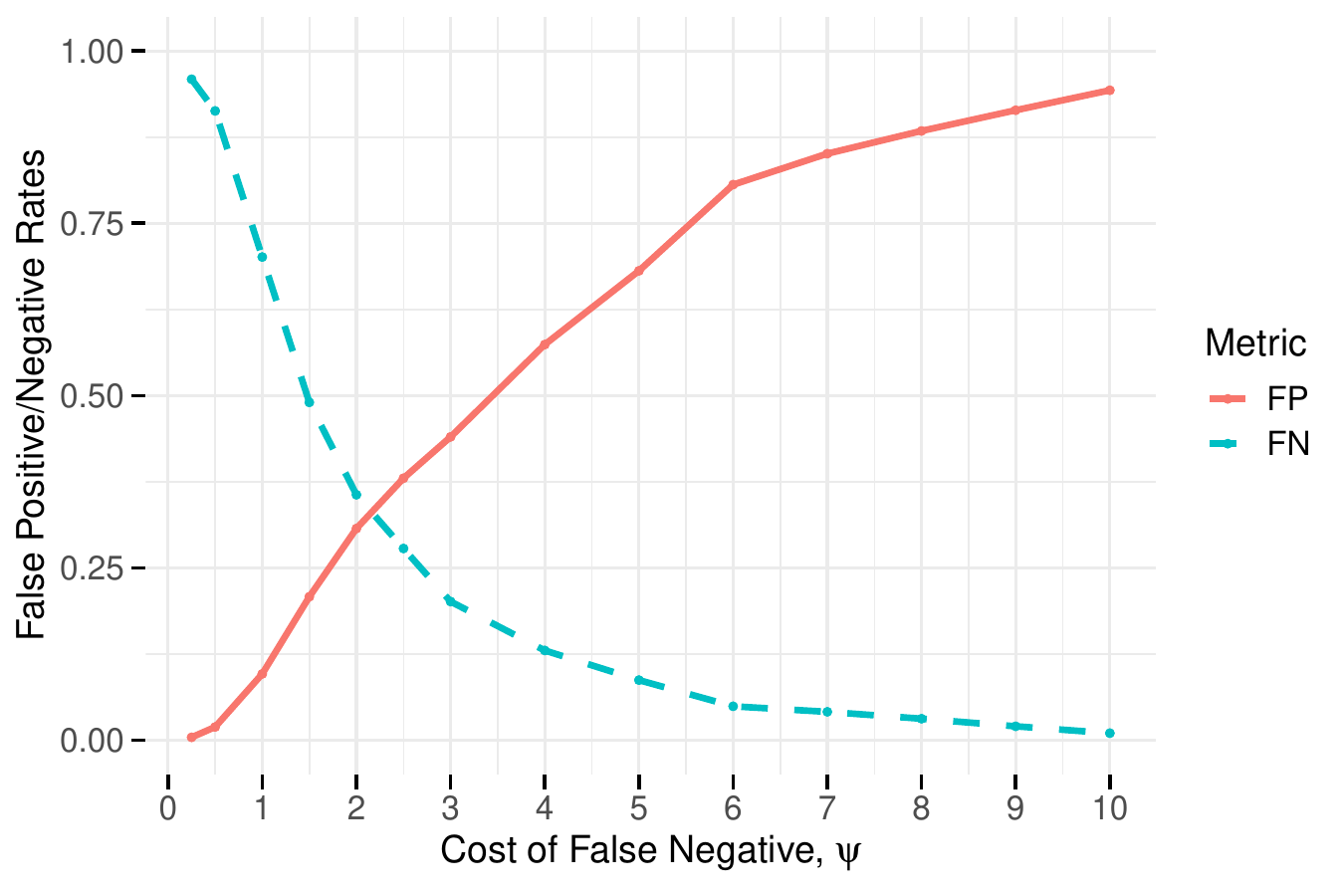}
		\caption{Boosting: FP and FN rates} 
	\end{subfigure}
	\caption{Asymmetric Binary Choice: Penalizing False Negative Mistakes. The figure shows that increasing the cost of false negative mistakes in the loss function is enough to balance FP and FN rates.} 
	\label{fig:fp_fn_application}
\end{figure}
However, changing $\psi$ is not enough to balance the false positive rates across the two groups; see  Online Appendix Figure~\ref{fig:fp_fn_application_race} and Table~\ref{tab:balanced} for more details. 

\smallskip

To address the algorithmic fairness issue, we rely on a two-group setup also used in the previous section where the protected group ($G$ = 1) are African American offenders.  We focus on the loss function from Example~\ref{ex:fairness}:
\begin{equation*}
	\ell(f,y,g) = \psi\one_{f=-1,y=1} + \varphi_g\one_{f=1,y=-1},
\end{equation*}
where $f\in\{-1,1\}$ is a recidivism prediction, $y\in\{-1,1\}$ is a recidivism outcome, and $g\in\{0,1\}$ is the group membership. Therefore, $\varphi_1$ is a false positive cost for African Americans which we will vary while $\psi=2.25$ is a false negative cost calibrated to balance the false positive and false negative rates.

\begin{figure}[H]
	\centering
	\begin{subfigure}{0.45\textwidth} 
		\includegraphics[width=\textwidth]{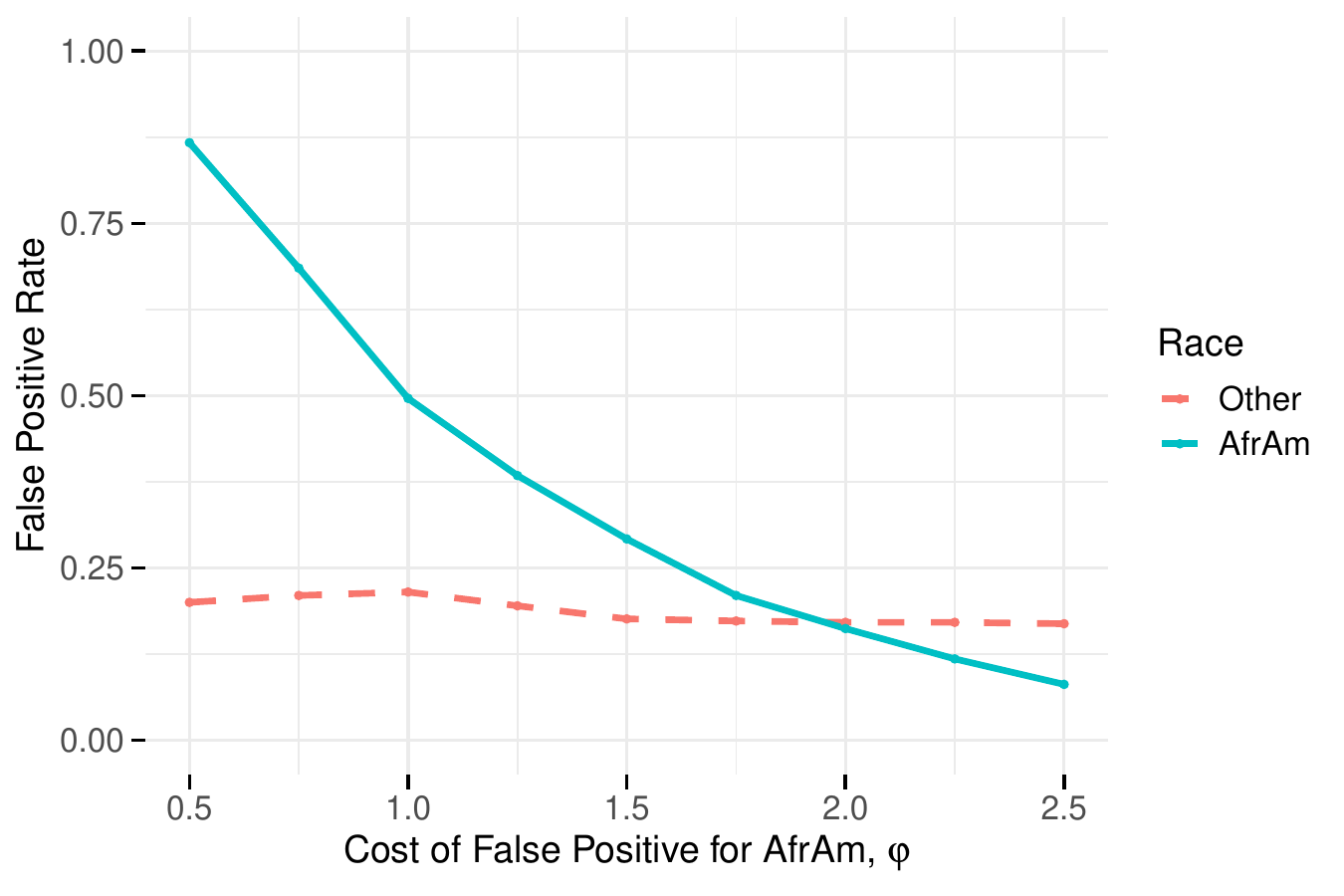}
		\caption{LASSO-Logit: FP rates} 
	\end{subfigure}
	\begin{subfigure}{0.45\textwidth} 
		\includegraphics[width=\textwidth]{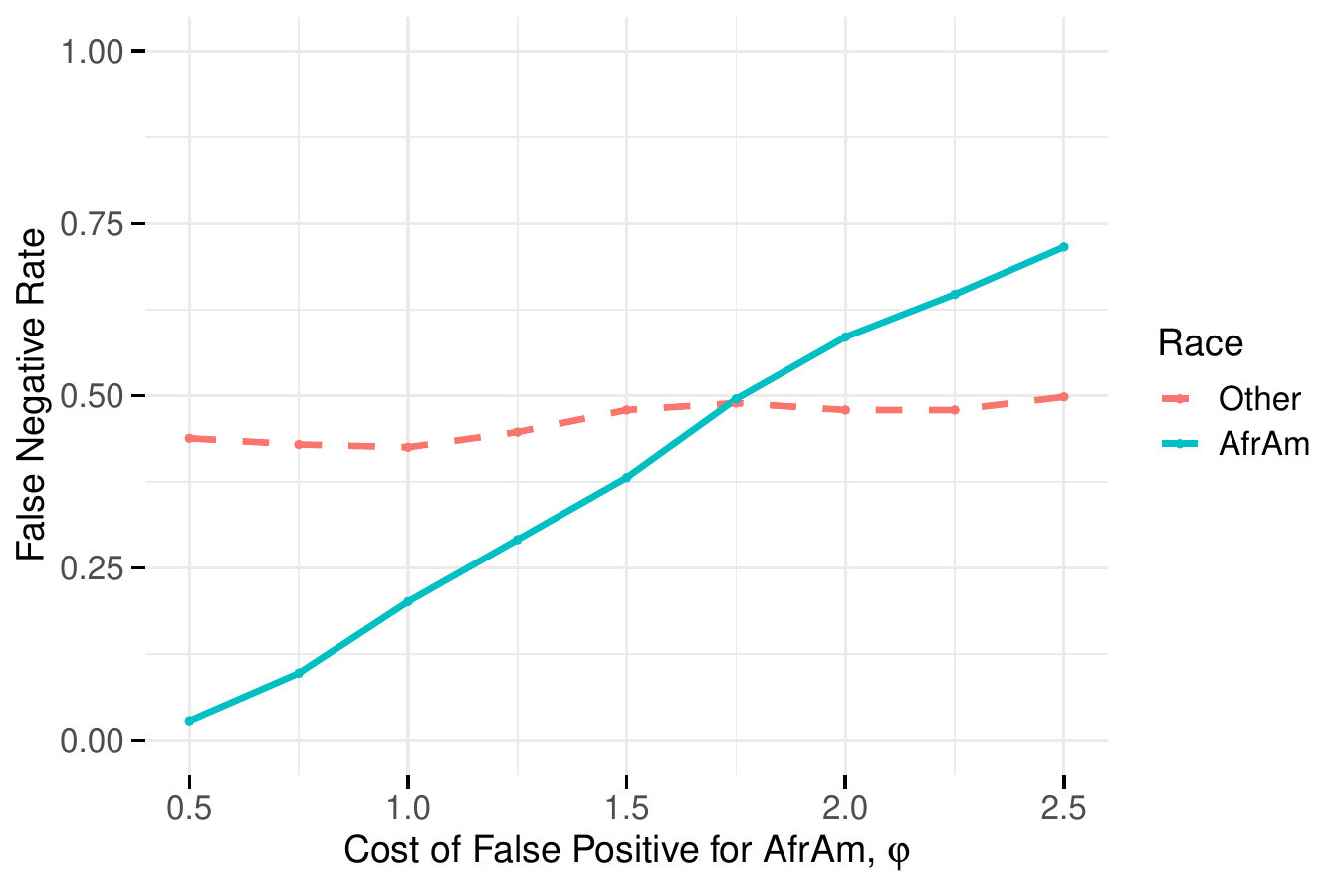}
		\caption{LASSO-Logit: FN rates} 
	\end{subfigure}
	\begin{subfigure}{0.45\textwidth} 
		\includegraphics[width=\textwidth]{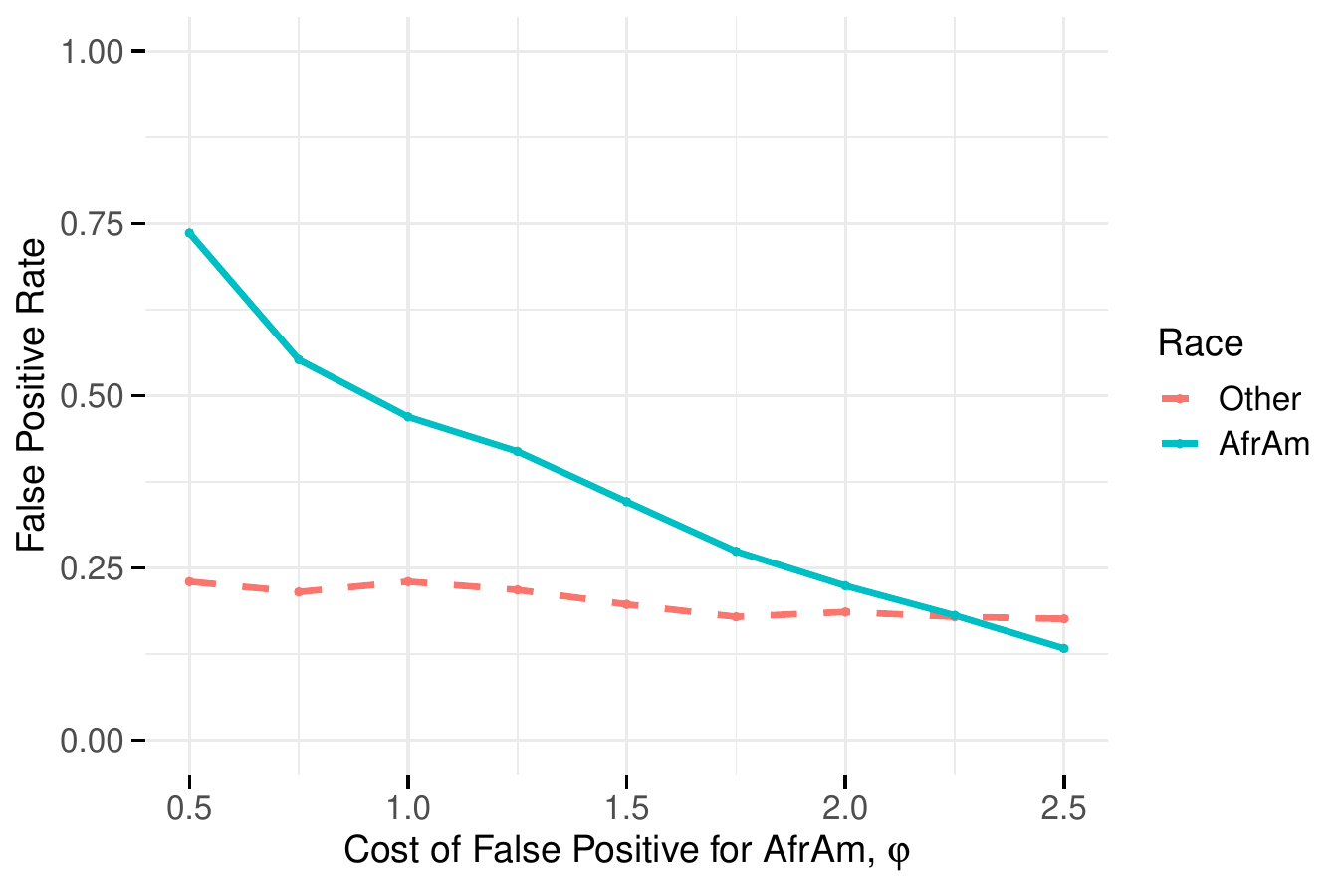}
		\caption{Boosting: FP rates} 
	\end{subfigure}
	\begin{subfigure}{0.45\textwidth} 
		\includegraphics[width=\textwidth]{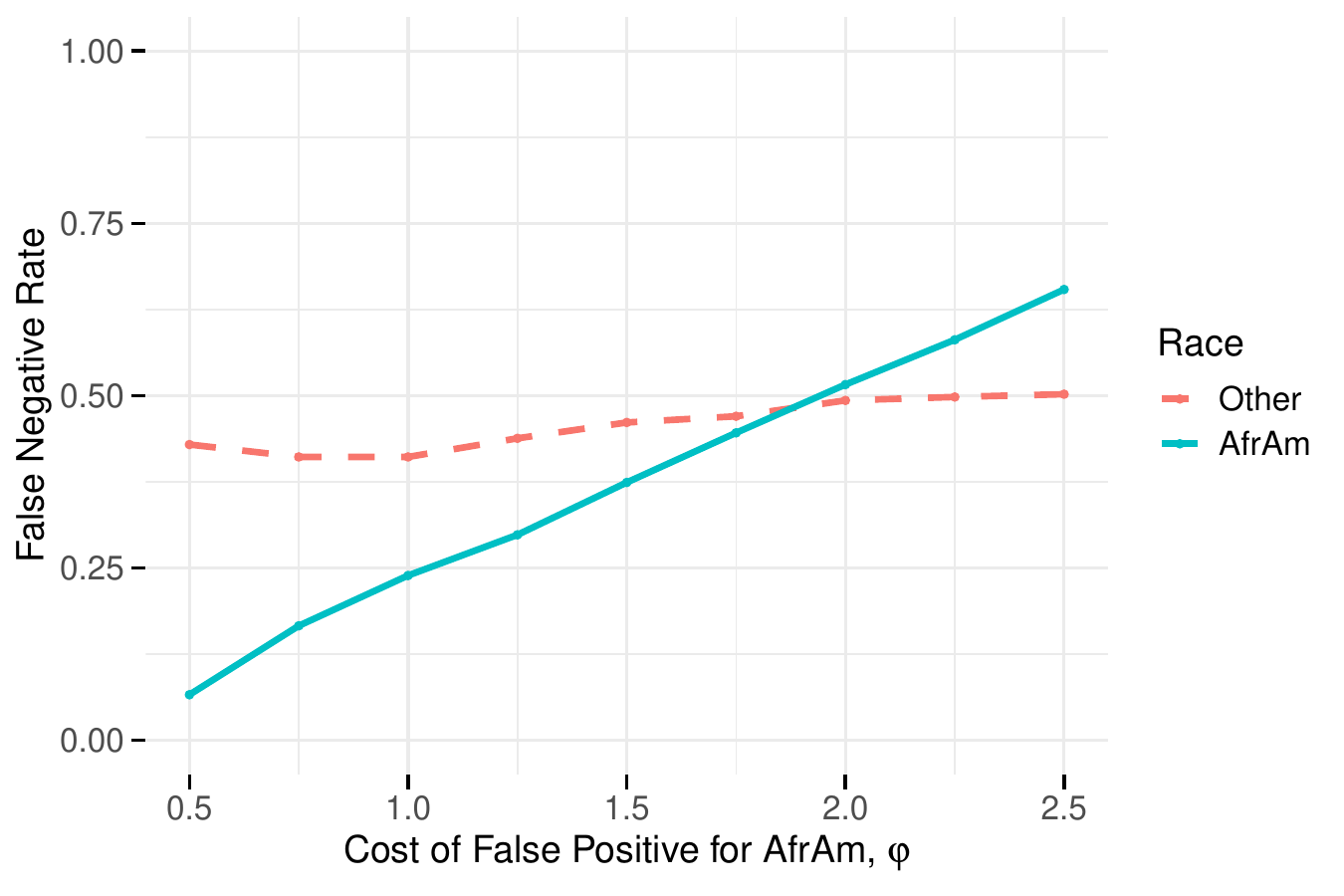}
		\caption{Boosting: FN rates} 
	\end{subfigure}
	\caption{Fairness: Balanced Error Rates. The figure shows that increasing the cost of false positive (FP) mistakes for African Americans allows us to balance FP rates across the two groups. We use the cost of false negatives $\psi=2.25$ and vary the cost of false positives for African American defendants $\varphi_1\in[0.5,2.5]$.} 
	\label{fig:fairness_fp}
\end{figure}

First, we look at balancing the false positive rates across the two groups by varying $\varphi_1$. The results appear in Figure~\ref{fig:fairness_fp}. We find that the false positive rates for African Americans and Others can be equalized if the decision-maker values the outcomes of African American defendants more highly, i.e., the costs of false positive mistakes for African Americans, $\varphi_1$, are sufficiently high. The two rates are equalized when $\varphi_1\approx 1.9$ (LASSO-Logit) and $\varphi_1\approx2.25$ (Boosting). Therefore, varying $\varphi_1$ allows us to mitigate the disparate impact problem highlighted by ProPublica, achieving balanced false positive rates for the two groups. We also note that this type of fairness comes with a cost of increased false negative rates for African Americans. In other words, to reduce the number of wrongly detained African American defendants, we have to accept a larger number of wrongly released African American defendants.

\begin{figure}[H]
	\centering
	\begin{subfigure}{0.45\textwidth} 
		\includegraphics[width=\textwidth]{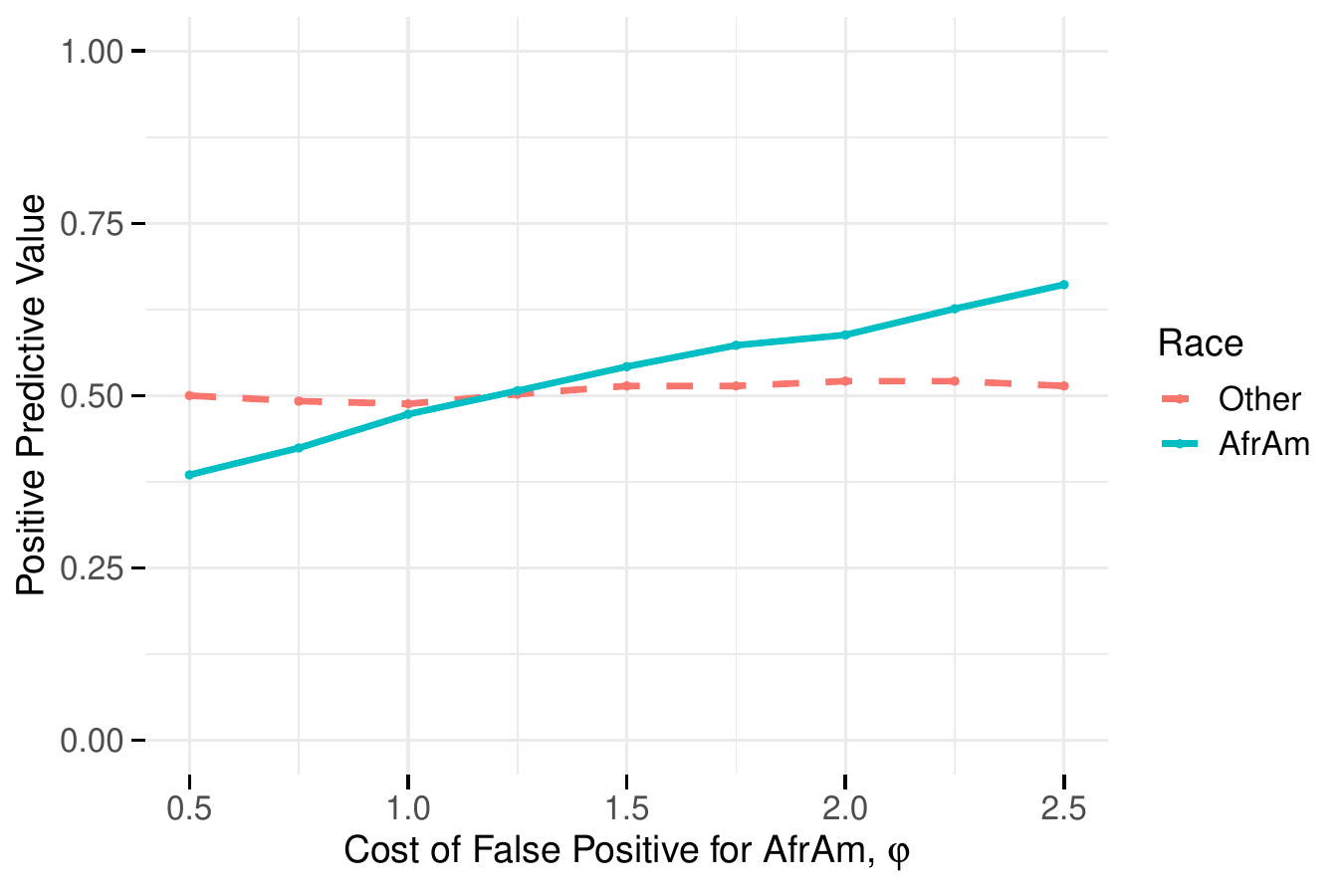}
		\caption{LASSO-Logit: PPV} 
	\end{subfigure}
	\begin{subfigure}{0.45\textwidth} 
		\includegraphics[width=\textwidth]{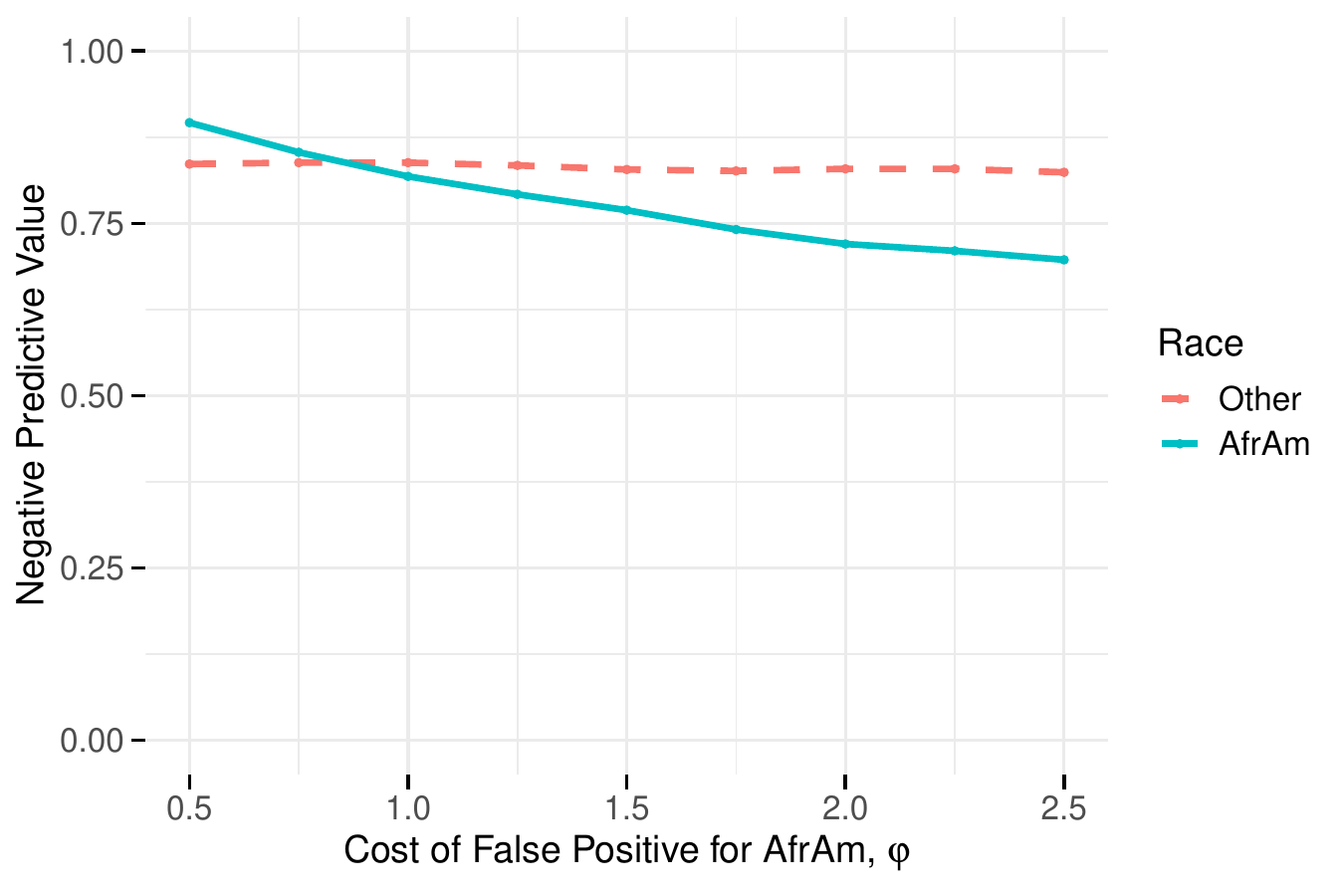}
		\caption{LASSO-Logit: NPV} 
	\end{subfigure}
	\begin{subfigure}{0.45\textwidth} 
		\includegraphics[width=\textwidth]{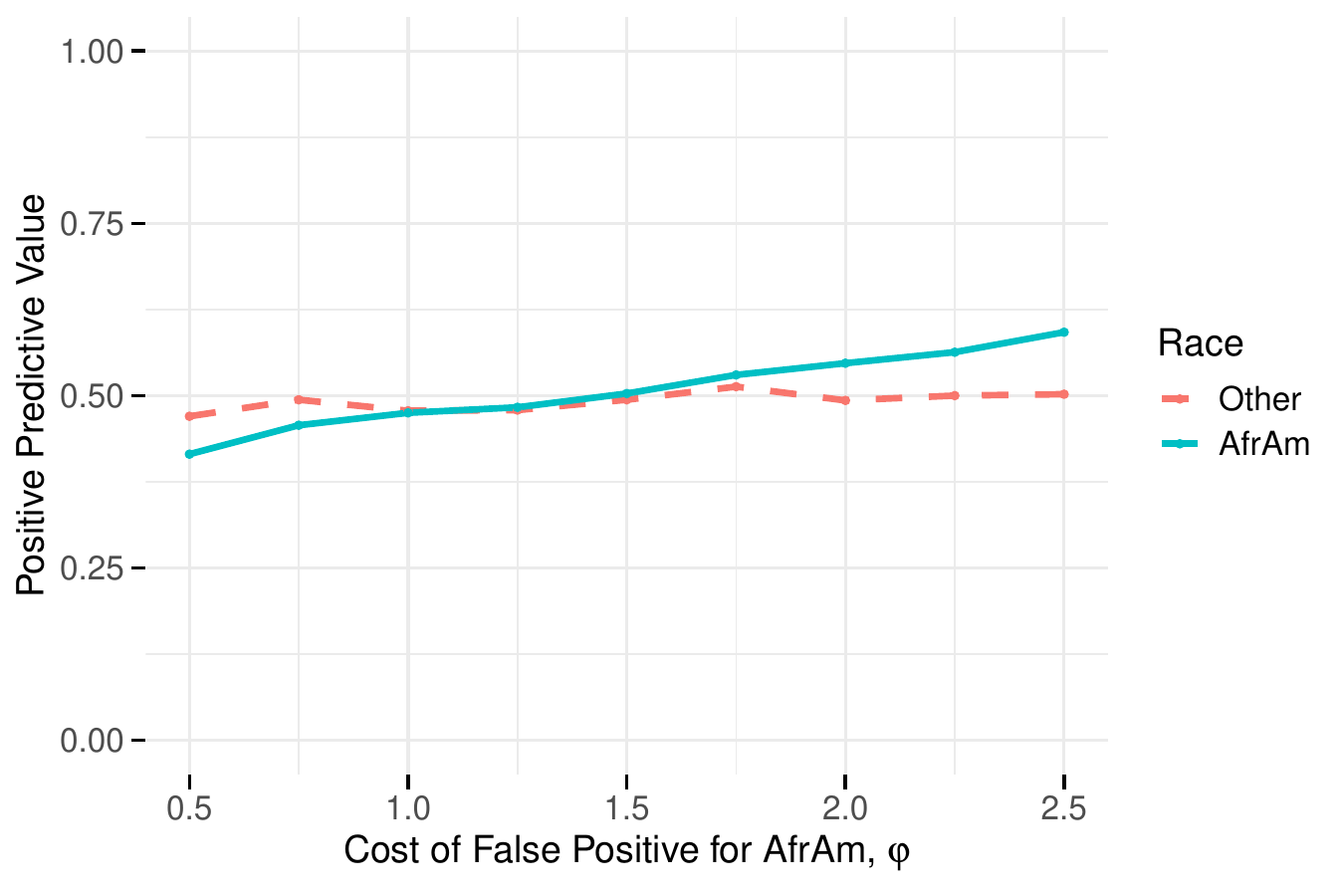}
		\caption{Boosting: PPV} 
	\end{subfigure}
	\begin{subfigure}{0.45\textwidth} 
		\includegraphics[width=\textwidth]{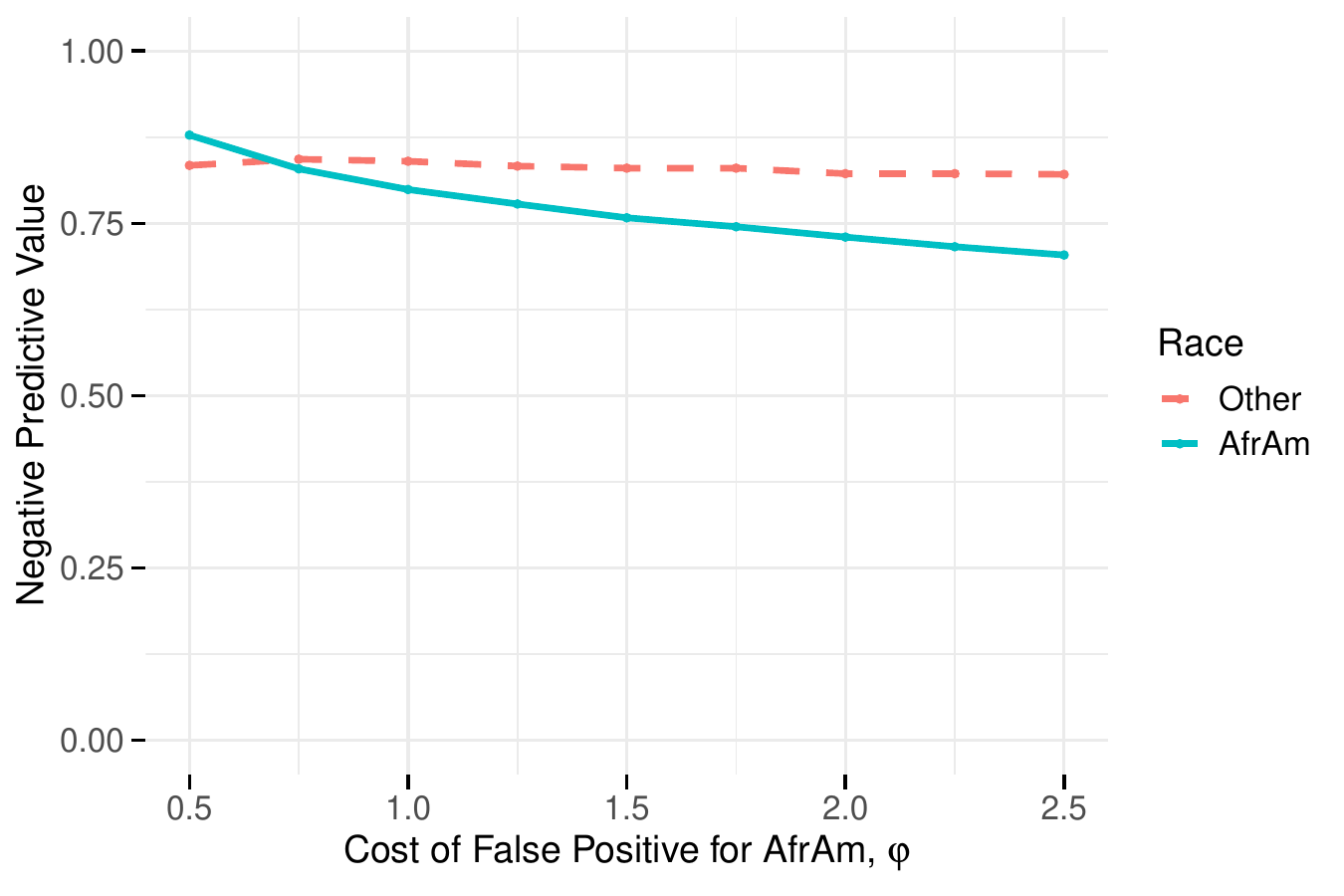}
		\caption{Boosting: NPV} 
	\end{subfigure}
	\caption{Fairness: Balanced Positive Predictive Values (PPV). The figure shows that increasing the cost of false positive mistakes for African Americans allows us to balance PPV across the two groups. We use the cost of false negatives $\psi=2.25$ and vary the cost of false positives for African American defendants $\varphi_1\in[0.5,2.5]$.} 
	\label{fig:fairness_ppv}
\end{figure}

Next, we look at balancing the positive predictive values (PPV) across the two groups by varying $\varphi_1$. The results appear in Figure~\ref{fig:fairness_ppv}. We find that the PPV for African Americans and others can be equalized if the decision-maker imposes higher costs for false positive mistakes for African American defendants. The PPVs for two groups are equalized when $\varphi_1\approx 1.25$ for both methods.\footnote{This fairness measure roughly corresponds to what the COMPAS algorithm was originally designed to achieve; see \cite{dieterich2016compas} for Northpointe's reply to the ProPublica article.} Therefore, varying $\varphi_1$ allows us to mitigate the disparate impact problem, achieving balanced PPV for the two groups. We also note that this type of fairness comes with a cost of increased negative predictive values (NPV) for Others. In other words, to balance the PPVs, we need to increase the likelihood that an African American defendant predicted to reoffend actually does so, which leads to a larger likelihood that Other defendants predicted not to reoffend actually do so.\footnote{See \cite{liang2021algorithm} and \cite{liu2024inference} for recent work on quantifying the fairness-accuracy frontier.}

\smallskip

In conclusion, algorithmic fairness can be achieved by placing a higher value on the outcomes of protected groups within our loss-based framework. This approach is transparent and can be directly incorporated into the digital decision process. However, the exact value of the loss function parameters that achieves balancing depends on the classifier and the fairness notion used; hence, it should be carefully calibrated in practice. It is impossible to achieve all fairness notions simultaneously, cf. \cite{chouldechova2017fair}.

\section{Conclusions}
This paper provides a new perspective on data-driven binary decisions with losses/utilities/welfare driven by economic factors, and contributes more broadly to the growing literature at the intersection of econometrics and machine learning. We show that the economic costs and benefits lead to a very simple loss-based reweighting of logistic regression or state-of-the art ML techniques. Our approach constitutes a significant advantage relative to others previously considered in the literature and leads to theoretically justified binary decisions for high-dimensional datasets frequently encountered in  practice. We adopt a distribution-free approach and show that the loss-based reweighted logistic regression may lead to valid binary decisions even when the choice probabilities are not logistic. We also show that the carefully crafted asymmetric deep learning architectures are optimal from the minimax point of view.

Our work opens several directions for future research. First, it calls for a more disciplined application of ML classification methods to economic decisions based on the costs/benefit considerations. For example, the false positive and false negative mistakes in pretrial detention decisions can be associated with different monetary costs for the society while the correct decisions may also have additional social benefits. In our empirical application we opted for a stylized cost differentiation scheme to make the empirical exercise transparent, but in practice one can use costs that are covariate-driven and based on a comprehensive cost-benefit analysis for the U.S.\ pretrial detention decision, such as in \cite{baughman2017costs}, to build a preference-based approach for this particular application.  Second, various economic applications often involve multi-class decisions and to that end some of the results of \cite{farrell2018deep} could potentially be extended to the asymmetric case provided that a suitable convexification result is established for generic covariate-driven losses.

\newpage
\setcounter{page}{1}
\setcounter{section}{1}
\setcounter{equation}{0}
\setcounter{table}{0}
\setcounter{figure}{0}
\renewcommand\thepage{Appendix - \arabic{page}}
\renewcommand{\theequation}{A.\arabic{equation}}
\renewcommand\thetable{A.\arabic{table}}
\renewcommand\thefigure{A.\arabic{figure}}
\renewcommand\thesection{A.\arabic{section}}
\renewcommand\thetheorem{A.\arabic{theorem}}

\begin{center}
	{\LARGE\textbf{APPENDIX}}	
\end{center}
\bigskip

In this Appendix we provide the proofs of all the main theorems, lemmas, and propositions. We will use $a\lesssim b$ if $a\leq Cb$ for a constant $C<\infty$ that does not depend on a particular distribution in the class of distributions restricted by our assumptions.

\begin{proof}[Proof of Proposition~\ref{prop:risk_representation}]
	Note that for every $f,y\in\{-1,1\}$ and $x\in\mathcal{X}$
	\begin{equation*}
		\begin{aligned}
			\ell(f,y,x) & = \ell_{1,1}(x)\frac{(1+f)(1+y)}{4} + \ell_{1,-1}(x)\frac{(1+f)(1-y)}{4} \\
			& \qquad + \ell_{-1,1}(x)\frac{(1-f)(1+y)}{4} + \ell_{-1,-1}(x)\frac{(1-f)(1-y)}{4} \\
			& = -0.25(a(x) + yb(x))f + d_0(y,x)
		\end{aligned}
	\end{equation*}
	with $d_0(y,x) \triangleq 0.25(\ell_{1,1}(x) + \ell_{-1,1}(x))(1+y) + 0.25(\ell_{1,-1}(x) + \ell_{-1,-1}(x))(1-y)$. Next, for every $(y,x)\in\{-1,1\}\times\mathcal{X}$ and every $f\in\{-1,1\}$
	\begin{equation*}
		\begin{aligned}
			(a(x) + yb(x))f & = (ya(x) + b(x))(1-2\one_{-yf\geq 0}) \\
			& = -2(ya(x) + b(x))\one_{-yf\geq 0} + ya(x) + b(x).
		\end{aligned}
	\end{equation*}
	Therefore, for every measurable $f:\mathcal{X}\to\{-1,1\}$
	\begin{equation}\label{eq:risk_app}
		\mathcal{R}(f) = 0.5\E[(Ya(X)+b(X))\one_{-Yf(X)\geq 0}] + \E[d(Y,X)].
	\end{equation}
	with $d(y,x)\triangleq d_0(y,x) - 0.25(ya(x) + b(x))$. The result follows because the second term does not depend on $f$.
\end{proof}

\begin{proof}[Proof of Proposition~\ref{prop:bayes}]
	By the law of iterated expectations, $f^*(x)$ minimizes
	\begin{equation*}
		\E\left[\omega(Y,X)\one_{-Yf\geq 0}|X=x\right] = \eta(x)\omega(1,x)\one_{f\leq 0} + (1-\eta(x))\omega(-1,x)\one_{f\geq 0}
	\end{equation*}	
	over $f\in\{-1,1\}$. The solution to this problem is
	\begin{equation*}
		f^*(x) = \begin{cases}
			1 &  \text{if}\quad \eta(x)\omega(1,x) \geq (1-\eta(x))\omega(-1,x), \\
			-1 & \text{if}\quad \eta(x)\omega(1,x) < (1-\eta(x))\omega(-1,x).
		\end{cases}
	\end{equation*}
	Under Assumption~\ref{as:losses} (i), $\omega(Y,X) > 0$ a.s., whence we obtain the first statement	with
	\begin{equation*}
		c(x) = \frac{\omega(-1,x)}{\omega(1,x) + \omega(-1,x)}.
	\end{equation*}
	
	For the second statement, note that $f_\phi^*(x)$ solves
	$\inf_{f\in\R}\E[\omega(Y,X)\phi(-Yf)|X=x].$
	By the law of iterated expectations, for every $f\in\R$, the objective function is
	\begin{equation*}
		\E[\omega(Y,X)\phi(-Yf)|X=x] = \eta(x)\omega(1,x)\phi(-f) +  (1-\eta(x))\omega(-1,x)\phi(f).
	\end{equation*}
	Since $\phi$ is differentiable, the optimum $f^*_\phi$ solves
	$-\eta(x)\omega(1,x)\phi'(-f^*_\phi(x))$ + $(1-\eta(x))\omega(-1,x)\phi'(f^*_\phi(x))$ = 0.	Under Assumption~\ref{as:losses} (ii), $\eta\not\in\{0,1\}$, and therefore:
	\begin{equation*}
		\frac{\eta(x)\omega(1,x)}{(1-\eta(x))\omega(-1,x)} = \frac{\phi'(f^*_\phi(x))}{\phi'(-f^*_\phi(x))}.
	\end{equation*}
	Since $\phi$ is a convex function, its derivative $\phi'$ is non-decreasing. Therefore, the optimum satisfies
	\begin{equation*}
		\begin{aligned}
			f_\phi^*(x) \geq 0 & \iff \frac{\eta(x)\omega(1,x)}{(1-\eta(x))\omega(-1,x)} \geq 1 \\
			& \iff \eta(x) \geq \frac{\omega(-1,x)}{\omega(1,x) + \omega(-1,x)} = c(x).
		\end{aligned}
	\end{equation*}
\end{proof}

\begin{proof}[Proof of Theorem~\ref{thm:convexified_risk_bound_margin}]
	Since $\gamma\in(0,1]$, the function $g:x\mapsto x^{1/\gamma}$ is convex on $\R_+$. By \cite{bartlett2006convexity}, Lemma 1, $y^{1/\gamma}\leq x^{1/\gamma}y/x,\forall x\geq y\geq 0,x>0$. Setting $x=|\eta(X)-c(X)|$ and $y=\epsilon$ for some $\epsilon>0$, we obtain
	\begin{equation}\label{eq:inequality_margin}
		|\eta(X)-c(X)|\one_{|\eta(X)-c(X)|\geq\epsilon} \leq \epsilon^{1-1/\gamma}|\eta(X)-c(X)|^{1/\gamma}.
	\end{equation}
	Under Assumption~\ref{as:losses} (i), by Online Appendix Lemma~\ref{lemma:risk_eta_c} for every $f:\mathcal{X}\to\R$ and $\epsilon>0$,
	\begin{equation*}
		\begin{aligned}
			& \mathcal{R}(\sign(f)) - \mathcal{R}^* = \\
			& = \E_{\sign(f)f^*<0}\left[b(X)|\eta(X)-c(X)|\right] \\
			& = \E_{\sign(f)f^*<0}\left[b(X)|\eta(X)-c(X)|\one_{|\eta(X)-c(X)|<\epsilon}\right] \\
			& \qquad +\E_{\sign(f)f^*<0}\left[b(X)|\eta(X)-c(X)|\one_{|\eta(X)-c(X)|\geq\epsilon}\right] \\
			& \leq 4M\epsilon  P_X(\{x:\; \sign(f(x))f^*(x)< 0 \}) + \epsilon^\frac{\gamma - 1}{\gamma}\E_{\sign(f)f^*<0}\left[b(X)|\eta(X)-c(X)|^{1/\gamma}\right] \\
			& \leq 4M c_b^{-\frac{\alpha}{1+\alpha}}(2C_m)^\frac{1}{1+\alpha}\epsilon\left[\mathcal{R}(\sign(f)) - \mathcal{R}^* \right]^\frac{\alpha}{1+\alpha} + \epsilon^\frac{\gamma - 1}{\gamma}C^{1/\gamma}\left[\mathcal{R}_\phi(f) - \mathcal{R}_\phi^*\right],
		\end{aligned}
	\end{equation*}
	where the first inequality follows from $|b(X)|\leq 4M$ a.s.\ under Assumptions~\ref{as:losses} (iii) and inequality~(\ref{eq:inequality_margin}); and the second by Lemma~\ref{lemma:margin} under Assumptions~\ref{as:losses} (i) and \ref{as:tsybakov}, and
	\begin{equation*}
		\begin{aligned}
			& \E_{\sign(f)f^*<0}\left[b(X)|\eta(X)-c(X)|^{1/\gamma}\right] \\
			& \leq C^{1/\gamma}\E_{\sign(f)f^*<0}\left[b(X)\left(\eta(X) + c(X) - 2\eta(X)c(X) - \inf_{y\in\R}Q_{c(X)}(\eta(X),y)\right)\right]  \\
			& \leq C^{1/\gamma}\left[\mathcal{R}_\phi(f) - \mathcal{R}_\phi^*\right],
		\end{aligned}
	\end{equation*}
	which in turn follows by Assumption~\ref{as:phi} (iii) and the Online Appendix equation (\ref{eq:convex_inequality}). Therefore,
	\begin{equation*}
		\mathcal{R}(\sign(f)) - \mathcal{R}^* \leq 4M c_b^{-\frac{\alpha}{1+\alpha}}(2C_m)^\frac{1}{1+\alpha}\epsilon\left[\mathcal{R}(\sign(f)) - \mathcal{R}^* \right]^\frac{\alpha}{1+\alpha} + \epsilon^\frac{\gamma - 1}{\gamma}C^{1/\gamma}\left[\mathcal{R}_\phi(f) - \mathcal{R}_\phi^*\right].
	\end{equation*} 
	Setting $\epsilon = 0.5(4M)^{-1}c_b^\frac{\alpha}{1+\alpha}(2C_m)^{-\frac{1}{1+\alpha}}\left[\mathcal{R}(\sign(f)) - \mathcal{R}^*\right]^{\frac{1}{1+\alpha}}$ and rearranging, we obtain
	\begin{equation*}
		\mathcal{R}(\sign(f)) - \mathcal{R}^* \leq C_\phi\left[\mathcal{R}_\phi(f) - \mathcal{R}_\phi^*\right]^\frac{\gamma(1+\alpha)}{\gamma\alpha + 1}
	\end{equation*}
	with $C_\phi = (2^\gamma C)^\frac{1+\alpha}{\gamma\alpha+1}\left[c_b^\alpha/(2^{3\alpha+4}C_mM^{\alpha+1})\right]^\frac{\gamma-1}{\gamma\alpha+1}$.
\end{proof}

\begin{proof}[Proof of Theorem \ref{thm:oracle inequality_dnn}]
	Under Assumption~\ref{as:deep_learning}, by \cite{lu2020deep}, Theorem 1.1, there exists a deep neural network $\eta_n\in\Theta^{\rm DNN}_n$, with width $W_n \leq 17\beta^{d+1}3^dd(J_n+2)\log_2(8J_n)$ and depth $L_n \leq 18\beta^2(K_n+2)\log_2 (4K_n) + 2d$ such that $\|\eta_n - \eta\|_\infty \leq 85(\beta+1)^d8^\beta R\left(\frac{\log^6 n}{n}\right)^\frac{\beta}{\beta(2+\alpha)+d}\triangleq  \varepsilon_n,$ where we use that under Assumption~\ref{as:deep_learning} (ii) $J_nK_n\leq (n/\log^6n)^\frac{d}{2\beta(2+\alpha)+2d}$. Put $f_n(x) \triangleq \sigma\left(\frac{\eta_n(x)-c(x)}{\varepsilon_n} + 1 \right)
	-\sigma\left(\frac{\eta_n(x)-c(x)}{\varepsilon_n} - 1\right) - 1.$ Then
	\begin{equation*}
		f_n(x) = \begin{cases}
			1 & \text{if  } \eta_n(x)-c(x)> \varepsilon_n, \\
			\frac{\eta_n(x)-c(x)}{\varepsilon_n}  & \text{if  } |{\eta}_n(x) - c(x)|\leq \varepsilon_n, \\
			-1 & \text{if  } \eta_n(x)-c(x)<- \varepsilon_n.
		\end{cases}
	\end{equation*}
	Then on the event $\{x\in\mathcal{X}:|\eta(x) - c(x)|>2\varepsilon_n\}$, we have $f_n(x)=f_\phi^*(x)$. To see this note that $f_\phi^*(x)=\sign(\eta(x)-c(x))$ and that if $\eta(x)> c(x)$, we have $\eta_n(x)-c(x) = (\eta(x)-c(x))-({\eta}(x)-\eta_n(x))\geq \varepsilon_n$ while if $\eta(x)<c(x)$, we have $\eta_n(x)-c(x)<-\varepsilon_n$. Therefore, by Lemma~\ref{lemma:equation_hinge}
	\begin{equation*}
		\begin{aligned}
			\inf_{f\in\mathscr{F}_n^{\rm DNN}}{\mathcal{R}}_{\phi}(f) - \mathcal{R}_\phi^* & \leq {\mathcal{R}}_{\phi}(f_n) - \mathcal{R}_\phi^* = \int_\mathcal{X}b|\eta - c||f_n - f^*_\phi|\dx P_X \\
			& \leq 4M\int_{|\eta-c|\leq 2\varepsilon_n}|\eta - c||f_n - f^*_\phi|\dx P_X  \leq16M \varepsilon_n P_X(|\eta - c|\leq 2\varepsilon_n) \\
			&  \leq 2^{4+\alpha}MC_m\varepsilon_n^{1+\alpha},
		\end{aligned}
	\end{equation*}
	where the third line follows since $|b(X)|\leq 4M$ under Assumption~\ref{as:losses} (iii); the fourth since $|f_n-f_\phi^*|\leq 2$ in the region of integration; and the last under Assumption~\ref{as:tsybakov}.
	
	Let $p_n$ be the total number of free parameters in $\Theta_n^{\rm DNN}$ and let $U_n$ be the total number of nodes in $\Theta_n^{\rm DNN}$. Note that $U_n\leq L_nW_n$ and that $p_n = \sum_{l=1}^{L-1}(w_{l+1}\times w_l + w_l) + w_1\times w_0 \leq L_nW_n(W_n + 1) + W_n^2 \leq 3L_nW_n^2$. Then by \cite{bartlett2019nearly}, Theorem 7, there exist universal constant $C_0>0$ such that $V \leq C_0 p_nL_n\log(U_n) \leq 3C_0(L_nW_n)^2\log(W_nL_n)$. Therefore, by Online Appendix Lemmas~\ref{lemma:lrc_deep_learning} and \ref{lemma:lrc_fp}
	\begin{equation*}
		\begin{aligned}
			\psi_{n,\kappa}^{\sharp}(\epsilon) & \leq C(3C_0)^\frac{1+\alpha}{2+\alpha}\left(\frac{(W_nL_n)^2\log (W_nL_n)}{n}\log\left(\frac{n}{3C_0(W_nL_n)^2\log (W_nL_n)}\right)\right)^\frac{1+\alpha}{2+\alpha}.
		\end{aligned}
	\end{equation*}
	Since under Assumption~\ref{as:deep_learning} (ii), $W_nL_n\leq abC_LC_W(n/\log^6 n)^\frac{d}{2\beta(2+\alpha)+2d}\log^2n$, by Theorem~\ref{thm:oracle inequality} for every $t>0$ with probability at least $1-c_qe^{-t}$
	\begin{align*}\small
		\mathcal{R}(\sign(\hat f_n)) - \mathcal{R}^* & \leq K\left(\frac{\log^6 n}{n}\right)^\frac{\beta(1+\alpha)}{\beta(2+\alpha)+d} + \left(\frac{t}{n}\right)^\frac{1+\alpha}{2+\alpha} + \frac{t}{n} \\
		&\qquad + 2^{4+\alpha}MC_m\left(85(\beta+1)^d8^\beta R\left(\frac{\log^6 n}{n}\right)^\frac{\beta}{\beta(2+\alpha)+d}\right)^{1+\alpha}
	\end{align*}
	where $K>0$ depends only on $\alpha,\beta,d,C_0,C_1,C>0$. Integrating the tail bound, we obtain the second claim.
\end{proof}

\begin{lemma}\label{lemma:risk_eta_c}
	Suppose that Assumption~\ref{as:losses} (i) is satisfied. Then for every measurable $f:\mathcal{X}\to\R$, the excess risk is
	\begin{equation*}
		\mathcal{R}(\mathrm{sign}(f)) - \mathcal{R}^* = \E_{\sign(f)f^*<0}\left[b(X)\left|\eta(X)-c(X)\right|\right],
	\end{equation*}
	where for an event $A$ and a random variable $\xi$, we put $\E_A\xi = \E\one_A\xi$.
\end{lemma}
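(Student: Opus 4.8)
The plan is to reduce the claim to a pointwise computation conditional on $X=x$, using the risk representation already established. Since $f^*$ attains $\mathcal{R}^*$ (Propositions~\ref{prop:risk_representation} and \ref{prop:bayes}), the identity $\mathcal{R}(g)=\tfrac12\E[\omega(Y,X)\one_{-Yg(X)\geq 0}]+\E[d(Y,X)]$ (see the proof of Proposition~\ref{prop:risk_representation}) makes the additive term $\E[d(Y,X)]$ cancel, so
\[
\mathcal{R}(\sign(f))-\mathcal{R}^* = \tfrac12\,\E\!\left[\omega(Y,X)\left(\one_{-Y\sign(f(X))\geq 0} - \one_{-Yf^*(X)\geq 0}\right)\right].
\]
First I would condition on $X$ and apply the law of iterated expectations, exactly as in the proof of Proposition~\ref{prop:bayes}: for any measurable $g:\mathcal{X}\to\{-1,1\}$,
\[
\E\!\left[\omega(Y,X)\one_{-Yg(X)\geq 0}\mid X=x\right] = \eta(x)\omega(1,x)\one_{g(x)=-1} + (1-\eta(x))\omega(-1,x)\one_{g(x)=1} =: h(g(x);x).
\]

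Next, under Assumption~\ref{as:losses}(i) we have $b(x)=\ell_{-1,1}(x)-\ell_{1,1}(x)+\ell_{1,-1}(x)-\ell_{-1,-1}(x)\geq 2c_b>0$ and both $\omega(1,x),\omega(-1,x)>0$ a.s., so Proposition~\ref{prop:bayes} identifies $f^*(x)=\sign(\eta(x)-c(x))$, with $c(x)=\omega(-1,x)/(\omega(1,x)+\omega(-1,x))$, as the \emph{pointwise} minimizer of $g\mapsto h(g;x)$ over $g\in\{-1,1\}$. Hence $h(f^*(x);x)\leq h(\sign(f(x));x)$ for every $x$; the integrand vanishes on $\{x:\sign(f(x))=f^*(x)\}=\{x:\sign(f(x))f^*(x)\geq 0\}$; and on the complement $\{x:\sign(f(x))f^*(x)<0\}$ the two decisions take the opposite values $\pm1$, so the difference $h(\sign(f(x));x)-h(f^*(x);x)$ equals $\bigl|h(1;x)-h(-1;x)\bigr|=\bigl|\eta(x)\omega(1,x)-(1-\eta(x))\omega(-1,x)\bigr|$.

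It remains only to simplify this absolute value. Writing $\omega(1,x)=a(x)+b(x)$ and $\omega(-1,x)=b(x)-a(x)$, so that $\omega(1,x)+\omega(-1,x)=2b(x)$ and $c(x)=(b(x)-a(x))/(2b(x))$, a one-line expansion gives
\[
\eta(x)\omega(1,x)-(1-\eta(x))\omega(-1,x) = 2b(x)\bigl(\eta(x)-c(x)\bigr),
\]
and since $b(x)>0$ the absolute value equals $2b(x)\,|\eta(x)-c(x)|$. Substituting back, the factor $\tfrac12$ cancels the $2$ and we obtain $\mathcal{R}(\sign(f))-\mathcal{R}^* = \E_{\sign(f)f^*<0}\bigl[b(X)\,|\eta(X)-c(X)|\bigr]$, as claimed.

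The argument is essentially bookkeeping and I do not expect a genuine obstacle; the only step requiring care is the sign analysis in the third paragraph — verifying in both orientations that $h(\sign(f(x));x)-h(f^*(x);x)$ matches the absolute value — which is immediate once one records that $f^*(x)$ is by construction the pointwise minimizer, so the difference is automatically nonnegative. This lemma is precisely the closed-form-for-excess-risk step on which the margin arguments (Lemma~\ref{lemma:margin} and Theorem~\ref{thm:convexified_risk_bound_margin}) rest.
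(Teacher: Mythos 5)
Your proof is correct and follows essentially the same route as the paper's: both start from the representation $\mathcal{R}(g)=\tfrac12\E[\omega(Y,X)\one_{-Yg(X)\geq 0}]+\E[d(Y,X)]$, cancel the $\E[d(Y,X)]$ term, condition on $X$ via iterated expectations, invoke Proposition~\ref{prop:bayes} to identify $f^*$ as the pointwise minimizer, and reduce to the identity $\eta(x)\omega(1,x)-(1-\eta(x))\omega(-1,x)=2b(x)(\eta(x)-c(x))$ with $b(x)>0$ on the disagreement event. The only difference is presentational (you argue pointwise with $h(\cdot;x)$, the paper manipulates the difference of indicators in expectation), so no further comment is needed.
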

\begin{proof}
	By the law of iterated expectations
	\begin{align*}
		\E\left[(Ya(X)+b(X))\one_{-Y\mathrm{sign}(f)\geq 0}\right] &= \E\left[\eta(X)(a(X)+b(X))\one_{\mathrm{sign}(f)\leq 0}\right]\\
		&\qquad+\E[(1-\eta(X))(b(X)-a(X))\one_{\mathrm{sign}(f)\geq 0}]\\
		&=\E[( a(X)-b(X) + 2\eta(X)b(X))\one_{\mathrm{sign}(f)\leq 0}]\\
		&\qquad + \E[(1-\eta(X))(b(X)-a(X))].
	\end{align*}
	Combining this observation with equation (\ref{eq:risk_app}) yields
	\begin{equation*}\label{eq:risk_eta_c}
		\begin{aligned}
			2(\mathcal{R}(\mathrm{sign}(f)) - \mathcal{R}^*) & = \E\left[(Ya(X)+b(X))(\one_{-Y\mathrm{sign}(f)\geq 0} - \one_{-Yf^*(X)\geq 0})\right] \\
			& = \E\left[( a(X)-b(X) + 2\eta(X)b(X))(\one_{\mathrm{sign}(f)\leq 0} - \one_{f^*(X)\leq 0})\right] \\
			& = \E\left[2b(X)(\eta(X)-c(X))(\one_{\mathrm{sign}(f)\leq 0} - \one_{f^*(X)\leq 0})\right] \\
			& =  \E_{\mathrm{sign}(f)f^*<0}\left[2b(X)\left|\eta(X)-c(X)\right|\right],
		\end{aligned}
	\end{equation*}
	where the last line follows since under Assumption~\ref{as:losses} (i), by Proposition~\ref{prop:bayes}
	\begin{equation*}
		f^*(x)\geq 0 \iff \eta(x) \geq \frac{b(x)-a(x)}{2b(x)} = c(x).
	\end{equation*}
\end{proof}

\begin{lemma}\label{lemma:convexified_risk_bound}
	Suppose that Assumptions~\ref{as:losses} and \ref{as:phi} are satisfied. Then for every measurable function $f:\mathcal{X}\to\R$
	\begin{equation*}
		\mathcal{R}(\sign(f)) - \mathcal{R}^* \leq 2^{\frac{2}{\gamma}-2}CM^{\frac{1}{\gamma}-1} \left[\mathcal{R}_\phi(f) - \mathcal{R}_\phi^*\right]^\gamma.
	\end{equation*}
\end{lemma}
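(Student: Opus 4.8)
The plan is to combine the exact expression for the excess risk from Lemma~\ref{lemma:risk_eta_c} with the pointwise comparison inequality of Assumption~\ref{as:phi}~(iii), and then to pull the exponent $\gamma$ outside the expectation via Jensen's inequality. First I would invoke Lemma~\ref{lemma:risk_eta_c}: under Assumption~\ref{as:losses}~(i), $\mathcal{R}(\sign(f)) - \mathcal{R}^* = \E_{\sign(f)f^*<0}[b(X)|\eta(X)-c(X)|]$. Write $A=\{x:\ \sign(f(x))f^*(x)<0\}$ for the disagreement event, and note that $b\geq 2c_b>0$ and $|b|\leq 4M$ by Assumptions~\ref{as:losses}~(i) and (iii), while $\eta(X),c(X)\in(0,1)$ by Assumption~\ref{as:losses}~(ii) and the definition of $c$.

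Next I would apply Assumption~\ref{as:phi}~(iii) pointwise with $x=\eta(X)$ and $c=c(X)$. Since $\phi(0)=1$, we have $Q_{c}(\eta,0)=\eta(1-c)+(1-\eta)c=\eta+c-2\eta c$, so the assumption reads $|\eta(X)-c(X)|\leq C\bigl(Q_{c(X)}(\eta(X),0)-\inf_{y\in\R}Q_{c(X)}(\eta(X),y)\bigr)^{\gamma}$ almost surely. That is, $|\eta-c|$ is controlled by a power of the conditional excess convexified risk of the constant decision $0$.

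I would then upgrade the constant decision to $f$ itself on $A$. Since $\phi$ is convex, $y\mapsto Q_{c}(\eta,y)$ is convex, and by Proposition~\ref{prop:bayes} its minimizer $f^*_\phi(x)$ satisfies $\sign(f^*_\phi(x))=f^*(x)$; hence on $A$ the points $f(x)$ and $f^*_\phi(x)$ lie on opposite sides of $0$, so $0$ lies on the segment joining them and convexity gives $Q_{c(X)}(\eta(X),0)\leq Q_{c(X)}(\eta(X),f(X))$ (this is Online Appendix equation~(\ref{eq:convex_inequality})). Writing $h(x)\triangleq Q_{c(x)}(\eta(x),f(x))-\inf_{y}Q_{c(x)}(\eta(x),y)\geq 0$, we therefore get $|\eta(X)-c(X)|\leq C\,h(X)^{\gamma}$ on $A$. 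On the other hand, by iterated expectations together with $\omega(\pm1,x)=2b(x)\{1-c(x),\,c(x)\}$ and $\E[\omega(Y,X)\phi(-Yf(X))\mid X=x]=2b(x)Q_{c(x)}(\eta(x),f(x))$, one obtains the representation $\mathcal{R}_\phi(f)-\mathcal{R}_\phi^*=\E[b(X)h(X)]$. Combining,
\[
\mathcal{R}(\sign(f)) - \mathcal{R}^* = \E_{A}\bigl[b(X)|\eta(X)-c(X)|\bigr] \leq C\,\E_{A}\bigl[b(X)h(X)^{\gamma}\bigr],
\]
and using $\one_A\leq 1$, $b\leq 4M$ to bound $b\,h^{\gamma}\leq (4M)^{1/\gamma-1}(bh)^{\gamma}$, followed by concavity of $t\mapsto t^{\gamma}$ and Jensen's inequality, $\E[b\,h^{\gamma}]\leq (4M)^{1/\gamma-1}\bigl(\E[bh]\bigr)^{\gamma}=(4M)^{1/\gamma-1}\bigl(\mathcal{R}_\phi(f)-\mathcal{R}_\phi^*\bigr)^{\gamma}$, which is the claimed bound with constant $2^{2/\gamma-2}CM^{1/\gamma-1}=C(4M)^{1/\gamma-1}$.

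I expect the main obstacle to be the ``straddling'' step, i.e.\ Online Appendix equation~(\ref{eq:convex_inequality}): one has to handle the $\ge 0$ convention in $\sign$ and the degenerate cases $f(x)=0$ or $\eta(x)=c(x)$ carefully, and one must keep in mind that the inequality $Q_{c}(\eta,0)\leq Q_{c}(\eta,f)$ genuinely fails off $A$, so restricting to the disagreement event throughout is essential. The only other delicate point is bookkeeping the explicit constant through the Jensen/Hölder step while staying compatible with all three target convexifications.
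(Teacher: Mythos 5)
Your proposal is correct and follows essentially the same route as the paper's proof: the excess-risk identity of Lemma~\ref{lemma:risk_eta_c}, Assumption~\ref{as:phi}~(iii) applied pointwise (using $Q_{c}(\eta,0)=\eta+c-2\eta c$ since $\phi(0)=1$), the pointwise comparison of equation~(\ref{eq:convex_inequality}), and the Jensen/boundedness step with $b\le 4M$ producing the constant $C(4M)^{1/\gamma-1}=2^{2/\gamma-2}CM^{1/\gamma-1}$. The only (minor) divergence is the straddling step: the paper obtains $\eta+c-2\eta c\le Q_{c}(\eta,f)$ on the disagreement set by Jensen applied to $\phi$ together with monotonicity and $f^*=\sign(\eta-c)$, whereas you use convexity of $y\mapsto Q_{c}(\eta,y)$ plus the sign of its minimizer via Proposition~\ref{prop:bayes}, whose second claim assumes $\phi$ differentiable; for the hinge convexification you should therefore cite Lemma~\ref{lemma:hinge} instead (or argue as the paper does, which needs no information about $f^*_\phi$).
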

\begin{proof}
	Under Assumption~\ref{as:losses} (i), by Lemma~\ref{lemma:risk_eta_c}
	\begin{equation*}
		\begin{aligned}
			& \mathcal{R}(\sign(f)) - \mathcal{R}^*= \\
			& = \E_{\mathrm{sign}(f)f^*<0}b(X)\left|\eta(X)-c(X)\right| \\
			& \leq C\E_{\mathrm{sign}(f)f^*<0}\left[b(X)\left(\eta(X)+c(X)-2\eta(X)c(X) -\inf_{y\in\R}Q_{c}(\eta,y)\right)^\gamma\right] \\
			& \leq C\left(\E_{\mathrm{sign}(f)f^*<0}\left[b(X)^{1/\gamma}\left( \eta(X)+c(X)-2\eta(X)c(X) - \inf_{y\in\R}Q_{c}(\eta,y)\right)\right]\right)^\gamma\\
			& \leq C(4M)^\frac{1-\gamma}{\gamma} \left(\E_{\mathrm{sign}(f)f^*<0}\left[b(X)\left(\eta(X)+c(X)-2\eta(X)c(X) - \inf_{y\in\R}Q_{c}(\eta,y)\right)\right]\right)^\gamma,
		\end{aligned}
	\end{equation*}
	where the second line follows under Assumption~\ref{as:phi} (iii) since $\eta,c\in(0,1)$ under Assumptions~\ref{as:losses} (i)-(ii); the third by Jensen's inequality since $\gamma\in(0,1]$ under Assumption~\ref{as:phi} (iii); and the last since $|b(X)|\leq 4M$ a.s. under Assumption~\ref{as:losses} (ii). Next, from equation~(\ref{eq:excess_convex}) in the paper, we have
	\begin{equation*}
		\mathcal{R}_\phi(f) - \mathcal{R}_\phi^* = 0.5\E\left[(Ya(X) + b(X))\left(\phi(-Yf(X)) - \phi(-Yf^*_\phi(X))\right)\right].
	\end{equation*}
	Therefore, if we show that
	\begin{equation}\label{eq:convex_inequality}
		\begin{aligned}
			& b(x)\one_{\sign(f(x))f^*(x)< 0}\left(\eta(x)+c(x)-2\eta(x)c(x) - \inf_{y\in\R}Q_{c(x)}(\eta(x),y)\right) \\
			& \qquad\leq  0.5\E[(Ya(X)+b(X))(\phi(-Yf(X)) - \phi(-Yf^*_\phi(X)))|X=x] \\
		\end{aligned}
	\end{equation}
	the result will follow from integrating over $x$. To that end, if $\sign(f(x))f^*(x)<0$, then the inequality in equation (\ref{eq:convex_inequality}) follows trivially since by definition $f^*_\phi$ minimizes $f\mapsto \E[(Ya(X)+b(X))\phi(-Yf(X))]$. Suppose now that $\sign(f(x))f^*(x)\geq0$. Then by the law of iterated expectations and the definition of $Q$ in the Assumption~\ref{as:phi} (iii)
	\begin{equation*}
		\E[(Ya(X)+b(X))\phi(-Yf^*_\phi(X))|X=x] = 2b(x)\inf_{y\in\R}Q_{c(x)}(\eta(x),y).
	\end{equation*}
	Therefore, the inequality in equation (\ref{eq:convex_inequality}) follows if we can show that
	$$2b(x)(\eta(x)+c(x)-2\eta(x)c(x))\leq \E[(Ya(X)+b(X))\phi(-Yf(X))|X=x].$$ 
	But this follows from
	\begin{equation*}
		\begin{aligned}
			&\E[(Ya(X)+b(X))\phi(-Yf(X))|X=x] \\
			&=\eta(x)(a(x)+b(x))\phi(-f(x))+(1-\eta(x))(b(x)-a(x))\phi(f(x)) \\
			&=2b(x)[\eta(x)(1-c(x))\phi(-f(x))+(1-\eta(x))c(x)\phi(f(x))] \\
			& \geq 2b(x)(\eta(x)+c(x)-2\eta(x)c(x))\phi\left(\frac{f(x)(c(x)-\eta(x))}{\eta(x)+c(x)-2\eta(x)c(x)} \right) \\ 
			& \geq 2b(x)(\eta(x)+c(x)-2\eta(x)c(x))\phi(0)\\
			& = 2b(x)(\eta(x)+c(x)-2\eta(x)c(x)),
		\end{aligned}
	\end{equation*}
	where the first inequality follows by the convexity of $\phi$ under Assumption~\ref{as:phi} (i) and since $\eta+c-2\eta c>0$ under Assumption~\ref{as:losses} (i)-(ii); the second inequality since $\phi$ is nondecreasing under Assumption~\ref{as:phi} (i) and $0\leq \sign(f(x))f^*(x) = \sign(f(x))\mathrm{sign}(c(x)-\eta(x))$ by Proposition~\ref{prop:bayes}; and the last line since $\phi(0)=1$ under Assumption~\ref{as:phi} (i).
\end{proof}

\begin{lemma}\label{lemma:logistic}
	For the logistic convexifying function $\phi(z)=\log_2(1 + e^z)$,
	\begin{equation*}
		f_\phi^*(x) = \log\left(\frac{\eta(x)(1-c(x))}{(1-\eta(x))c(x)}\right).
	\end{equation*}
	Assumption~\ref{as:phi} is satisfied with $\gamma=1/2$ and $C=\sqrt{2\log 2}$. 
\end{lemma}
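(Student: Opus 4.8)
The plan is to prove the two assertions separately: the closed form for $f^*_\phi$, and then the verification of Assumption~\ref{as:phi}, of which only part (iii) requires real work.

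\textbf{Parts (i)--(ii) and the formula for $f^*_\phi$.} Writing $\phi(z)=\tfrac{1}{\log 2}\log(1+e^z)$ one reads off $\phi\geq 0$, $\phi(0)=\log_2 2=1$, $\phi'(z)=\tfrac{1}{\log 2}\,\tfrac{e^z}{1+e^z}\in(0,\tfrac{1}{\log 2})$ (so $\phi$ is non-decreasing and Lipschitz with $L=1/\log 2$), and $\phi''(z)=\tfrac{1}{\log 2}\,\tfrac{e^z}{(1+e^z)^2}>0$ (so $\phi$ is strictly convex). For $f^*_\phi$ I would reuse the computation in the proof of Proposition~\ref{prop:bayes}: conditionally on $X=x$ the convexified objective is $y\mapsto \eta(x)\omega(1,x)\phi(-y)+(1-\eta(x))\omega(-1,x)\phi(y)$, which under Assumption~\ref{as:losses} is strictly convex and coercive in $y$ (weights positive by (i), $\eta\in(0,1)$ by (ii)), so the unique minimizer solves $\eta(x)\omega(1,x)\phi'(-y)=(1-\eta(x))\omega(-1,x)\phi'(y)$. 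Since $\phi'(z)=\tfrac{1}{\log 2}(1+e^{-z})^{-1}$ gives $\phi'(y)/\phi'(-y)=(1+e^{y})/(1+e^{-y})=e^{y}$, the condition reads $e^{y}=\tfrac{\eta(x)\omega(1,x)}{(1-\eta(x))\omega(-1,x)}$; substituting $\omega(1,x)/\omega(-1,x)=(1-c(x))/c(x)$ (the defining relation of $c(x)$ in Proposition~\ref{prop:bayes}) yields $f^*_\phi(x)=\log\tfrac{\eta(x)(1-c(x))}{(1-\eta(x))c(x)}$.

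\textbf{Part (iii), the main step.} The observation is that $y\mapsto Q_c(x,y)$ has exactly the same form as the conditional convexified objective, with $(\eta,\omega(1,x),\omega(-1,x))$ replaced by $(x,1-c,c)$, so the same argument identifies its minimizer and value. With $p:=x(1-c)$, $q:=(1-x)c$, $s:=p+q=x+c-2xc\in(0,1)$, the minimizing $y^*$ satisfies $(1+e^{-y^*})^{-1}=p/s$, and substituting back (using $\log_2(1+e^{-y^*})=\log_2 s-\log_2 p$ and $\log_2(1+e^{y^*})=\log_2 s-\log_2 q$) gives
\[
  \inf_{y\in\R}Q_c(x,y)=s\log_2 s-p\log_2 p-q\log_2 q .
\]
Writing $t:=p/s\in(0,1)$ and $h(t):=-t\log_2 t-(1-t)\log_2(1-t)$ for the base-$2$ binary entropy, a short rearrangement (expand $p\log_2 p+q\log_2 q$ via $p=ts$, $q=(1-t)s$) produces the clean identity
\[
  x+c-2xc-\inf_{y\in\R}Q_c(x,y)=s\bigl(1-h(t)\bigr).
\]
It then remains to combine this with the elementary inequality $1-h(t)\geq \tfrac{2}{\log 2}\bigl(t-\tfrac12\bigr)^2$ on $(0,1)$ --- which is Pinsker's inequality for the two-point laws $(t,1-t)$ and $(\tfrac12,\tfrac12)$ after the nats-to-bits conversion, and which I would verify directly since $g(t):=1-h(t)-\tfrac{2}{\log 2}(t-\tfrac12)^2$ has $g(\tfrac12)=g'(\tfrac12)=0$ and $g''(t)=\tfrac{1}{\log 2}\bigl(\tfrac{1}{t(1-t)}-4\bigr)\geq 0$ --- together with the direct computation $t-\tfrac12=\tfrac{x-c}{2s}$. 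Indeed,
\[
  (x-c)^2=4s^2\bigl(t-\tfrac12\bigr)^2\leq 2(\log 2)\,s^2\bigl(1-h(t)\bigr)\leq 2(\log 2)\,s\bigl(1-h(t)\bigr)=2(\log 2)\bigl[x+c-2xc-\inf_{y}Q_c(x,y)\bigr],
\]
using $s\leq 1$ in the middle step; taking square roots gives Assumption~\ref{as:phi} (iii) with $\gamma=\tfrac12$ and $C=\sqrt{2\log 2}$.

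\textbf{Main obstacle.} The delicate part is entirely the bookkeeping in (iii): (a) redoing the conditional minimization cleanly enough to land on the entropy expression for $\inf_y Q_c$, (b) spotting the collapse to $s(1-h(t))$, and (c) reducing the target bound to the scalar entropy/Pinsker inequality while tracking the $\log 2$ factor (coming from the base-$2$ logarithm) that ends up inside $C$. Everything else --- the properties of $\phi$ and the form of $f^*_\phi$ --- is routine differentiation.
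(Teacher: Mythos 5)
Your proposal is correct and follows essentially the same route as the paper: both reduce $\inf_{y}Q_c(x,y)$ to the scaled binary entropy $s\,h(t)$ with $s=x+c-2xc$, $t=x(1-c)/s$, bound the resulting deficit by a quadratic via a Pinsker-type inequality (the paper proves it as a Taylor/KL bound for general $y$ and then sets $y=c$, which is exactly your $1-h(t)\geq \tfrac{2}{\log 2}(t-\tfrac12)^2$), and finish with $t-\tfrac12=\tfrac{x-c}{2s}$ and $s\leq 1$ to get $\gamma=1/2$, $C=\sqrt{2\log 2}$. Your additional explicit verification of Assumption~\ref{as:phi} (i)--(ii) and the first-order-condition derivation of $f^*_\phi$ match what the paper handles implicitly via the minimizer $y^*$ of $Q_c$.
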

\begin{proof}
	Note that the minimum of $y\mapsto Q_{c}(x,y)$ is achieved at $y^*$ = $\log\left(\frac{x(1-c)}{(1-x)c}\right).$ Put $\eta_y\triangleq\frac{y(1-c)}{y(1-c)+(1-y)c}$ and $\eta \triangleq \eta_x$, and note that
	\begin{align*}
		\inf_{y\in\R} Q_{c}(x,y) & = x(1-c)\log_2(1+e^{-y^*}) + (1-x)c\log_2(1+e^{y^*})\\
		& =-x(1-c)\log_2\frac{x(1-c)}{x(1-c)+(1-x)c}-(1-x)c\log_2 \frac{c(1-x)}{x(1-c)+(1-x)c}\\
		& =(x(1-c)+(1-x)c)\left[-\eta\log_2\eta-(1-\eta)\log_2(1-\eta)\right].
	\end{align*}
	For every $y\in[0,1]$, put
	{\footnotesize\begin{equation*}
			\begin{aligned}
				L(y) & \triangleq (x(1-c)+(1-x)c)\Big[\eta\log_2\left({\eta}\right) + (1-\eta)\log_2\left({1-\eta}\right)-\eta\log_2\left({\eta_y}\right) - (1-\eta)\log_2\left({1-\eta_y}\right)\Big] \\
				& = (x(1-c)+(1-x)c)\frac1{\log 2}\left[\eta\log\left(\frac{\eta}{\eta_y}\right) + (1-\eta)\log\left(\frac{1-\eta}{1-\eta_y}\right)\right]
			\end{aligned}
	\end{equation*}}
	and note that the equality in the first line implies that
	$$
	L(c)=(x+c-2xc) - \inf_{y\in\R} Q_{c}(x,y).
	$$
	By Taylor's theorem there exists $\eta'$ between $\eta$ and $\eta_y$ such that
	\begin{align*}
		\eta\log\left(\frac{\eta}{\eta_y}\right) + (1-\eta)\log\left(\frac{1-\eta}{1-\eta_y}\right) &= \frac{1}{2\eta'(1-\eta')}(\eta-\eta_y)^2\\
		&\geq 2(\eta-\eta_y)^2,
	\end{align*}
	since $\eta'\in[0,1]$. This shows that for every $y\in[0,1]$
	\begin{align*}
		L(y)&\geq \frac{2}{\log 2}(x(1-c)+(1-x)c)(\eta-\eta_y)^2.
	\end{align*}
	In particular, for $y=c$,
	\begin{align*}
		L(c)&\geq \frac{2}{\log 2}(x(1-c)+(1-x)c)(\eta-\eta_c)^2\\
		&= \frac{1}{2\log 2}(x(1-c)+(1-x)c)\left(\frac{x-c}{x(1-c) + (1-x)c}\right)^2 \\
		&\geq \frac{1}{2\log 2}(x-c)^2.
	\end{align*}
	Therefore, Assumption is verified with $\gamma=1/2$ and $C=\sqrt{2\log 2}$.
\end{proof}

\begin{lemma}\label{lemma:hinge}
	For the hinge convexifying function $\phi(z)=(1+z)_+$,
	\begin{equation*}
		f_\phi^*(x) = \begin{cases}
			1 & \text{if }\; \eta(x)\geq c(x), \\
			-1 & \text{if }\; \eta(x)<c(x).
		\end{cases}
	\end{equation*}
	Assumption~\ref{as:phi} is satisfied with $\gamma=1$ and $C=1$.
\end{lemma}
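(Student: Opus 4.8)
The plan is to reduce everything to the one-dimensional auxiliary function $Q_c(x,y)=x(1-c)\phi(-y)+(1-x)c\phi(y)$ appearing in Assumption~\ref{as:phi}(iii). As in the proof of Proposition~\ref{prop:bayes} and Lemma~\ref{lemma:convexified_risk_bound}, one has $\omega(1,x)=2b(x)(1-c(x))$ and $\omega(-1,x)=2b(x)c(x)$, so the conditional convexified objective at $X=x$ equals $2b(x)\,Q_{c(x)}(\eta(x),f(x))$; since $b(x)\ge 2c_b>0$ under Assumption~\ref{as:losses}(i), the minimizer $f^*_\phi(x)$ is exactly a minimizer of $y\mapsto Q_{c(x)}(\eta(x),y)$. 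Hence both the formula for $f^*_\phi$ and the value $\inf_{y\in\R}Q_c(x,y)$ that is needed in Assumption~\ref{as:phi}(iii) follow from a direct analysis of this piecewise-linear function, for $x,c\in(0,1)$.

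First I would compute $Q_c(x,\cdot)$ explicitly for $\phi(z)=(1+z)_+$. On $(-\infty,-1]$ it equals $x(1-c)(1-y)$, nonincreasing in $y$; on $[1,\infty)$ it equals $(1-x)c(1+y)$, nondecreasing in $y$; and on $[-1,1]$ it equals $x+c-2xc+(c-x)y$, affine with slope $c-x$. Thus the minimum over $\R$ is attained inside $[-1,1]$, at $y=1$ when $x\ge c$ and at $y=-1$ when $x<c$ (at $x=c$ the function is constant on $[-1,1]$, so $y=1$ is an admissible choice). Specializing $x=\eta(x)$, $c=c(x)$ this gives $f^*_\phi(x)=\sign(\eta(x)-c(x))$ in the stated form; evaluating the two endpoint values gives $\inf_{y\in\R}Q_c(x,y)=2\min\{x(1-c),(1-x)c\}$, i.e.\ $2(1-x)c$ if $x\ge c$ and $2x(1-c)$ if $x<c$.

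Next I would check Assumption~\ref{as:phi}(i)--(ii): $\phi$ is a maximum of two affine maps, hence convex; it is nondecreasing; $\phi(0)=1$; and $|(1+z)_+-(1+z')_+|\le|z-z'|$, so $\phi$ is $1$-Lipschitz. For part (iii) with $\gamma=1$, $C=1$ I would substitute the formula above into $x+c-2xc-\inf_{y}Q_c(x,y)$: when $x\ge c$ this equals $x+c-2xc-2(1-x)c=x-c$, and when $x<c$ it equals $x+c-2xc-2x(1-c)=c-x$; in both cases it equals $|x-c|$ exactly. Hence $|x-c|\le 1\cdot\bigl(x+c-2xc-\inf_{y}Q_c(x,y)\bigr)^1$ for all $x,c\in(0,1)$, which is precisely Assumption~\ref{as:phi}(iii) with $\gamma=1$ and $C=1$, with equality.

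I do not expect a substantive obstacle. The only points requiring care are that $\phi$ is not differentiable, so the first-order argument used in the smooth case of Proposition~\ref{prop:bayes} is unavailable and the minimizer of $Q_c(x,\cdot)$ must be pinned down by the explicit piecewise-linear computation; this minimizer is non-unique exactly on $\{x=c\}$, where every value in $[-1,1]$ minimizes and the stated selection $f^*_\phi(x)=1$ remains consistent with $\sign(f^*_\phi)=f^*$. One should also first note that $Q_c(x,\cdot)$ is monotone in $|y|$ outside $[-1,1]$, so that restricting the infimum to $[-1,1]$ does not change it.
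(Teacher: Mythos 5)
Your proposal is correct and follows essentially the same route as the paper: identify the pointwise minimizer of $y\mapsto Q_c(x,y)$ for the hinge function (at $y=1$ when $x\geq c$, $y=-1$ when $x<c$), compute $\inf_{y}Q_c(x,y)=\min\{2x(1-c),2c(1-x)\}$, and verify that $x+c-2xc-\inf_{y}Q_c(x,y)=|x-c|$, giving $\gamma=1$, $C=1$. The extra details you supply (the reduction of the conditional convexified risk to $2b(x)Q_{c(x)}(\eta(x),f(x))$, the explicit check of Assumption~\ref{as:phi}(i)--(ii), and the tie-breaking at $\eta(x)=c(x)$) are all accurate and merely make explicit what the paper's proof uses implicitly.
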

\begin{proof}
	Note that the minimum of $y\mapsto Q_c(x,y)$ is achieved at
	\begin{equation*}
		\begin{cases}
			1 & \text{if }\; x>c, \\
			-1 & \text{if }\; x<c.
		\end{cases}
	\end{equation*}
	Then
	\begin{equation*}
		\begin{aligned}
			\inf_{y\in\R}Q_c(x,y)	& = \inf_{y\in\R} x(1-c)(1-y)_+ + c(1-x)(1+y)_+ \\
			& = \min\{2x(1-c),2c(1-x)\}.
		\end{aligned}
	\end{equation*}
	If $x\leq c$, then
	\begin{equation*}
		\begin{aligned}
			(x+c-2xc) - \inf_{y\in\R}Q_c(x,y) & = (x+c-2xc) - 2x(1-c) \\
			& = (c-x).
		\end{aligned}
	\end{equation*}
	If $x>c$, then
	\begin{equation*}
		\begin{aligned}
			(x+c-2xc) - \inf_{y\in\R}Q_c(x,y) & = (x+c-2xc) - 2c(1-x) \\
			& = (x-c).
		\end{aligned}
	\end{equation*}
	Therefore,
	\begin{equation*}
		(x+c-2xc) - \inf_{y\in\R}Q_c(x,y) = |x-c|
	\end{equation*}
	and Assumption~\ref{as:phi} is satisfied with $\gamma=1$ and $C=1$.
\end{proof}

\begin{lemma}\label{lemma:exponential}
	For the exponential convexifying function $\phi(z)=e^z$,
	\begin{equation*}
		f_\phi^*(x) = \frac{1}{2}\log\left(\frac{\eta(x)(1-c(x))}{(1-\eta(x))c(x)}\right).
	\end{equation*}
	Assumption~\ref{as:phi} is satisfied with $\gamma=1/2$ and $C=2$.
\end{lemma}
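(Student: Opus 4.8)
The plan is to follow the template of the proof of Lemma~\ref{lemma:logistic}, but the exponential case is in fact cleaner: no Taylor expansion is needed, only a perfect-square identity. First I would identify $f_\phi^*$. As in the proof of Lemma~\ref{lemma:convexified_risk_bound}, the conditional convexified risk at $x$ equals $2b(x)Q_{c(x)}(\eta(x),f)$ with $Q_c(x,y)=x(1-c)e^{-y}+(1-x)c\,e^{y}$, and $b(x)>0$ under Assumption~\ref{as:losses} (i). Hence $f_\phi^*(x)$ is the minimizer of the strictly convex map $y\mapsto Q_{c(x)}(\eta(x),y)$, and its first-order condition $-\eta(1-c)e^{-y}+(1-\eta)c\,e^{y}=0$ gives $e^{2y}=\frac{\eta(1-c)}{(1-\eta)c}$, i.e. the claimed $f_\phi^*(x)=\tfrac12\log\!\big(\tfrac{\eta(x)(1-c(x))}{(1-\eta(x))c(x)}\big)$, which is well defined since $\eta(x),c(x)\in(0,1)$ under Assumption~\ref{as:losses} (i)--(ii).

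Next I would evaluate the infimum appearing in Assumption~\ref{as:phi} (iii). For generic $x,c\in(0,1)$, at the optimal $y^*=\tfrac12\log\frac{x(1-c)}{(1-x)c}$ both terms of $Q_c(x,y^*)$ equal $\sqrt{x(1-c)(1-x)c}$, so $\inf_{y\in\R}Q_c(x,y)=2\sqrt{xc(1-x)(1-c)}$. Since $x+c-2xc=x(1-c)+(1-x)c$, this produces the key identity
\begin{equation*}
	x+c-2xc-\inf_{y\in\R}Q_c(x,y)=x(1-c)+(1-x)c-2\sqrt{x(1-c)(1-x)c}=\left(\sqrt{x(1-c)}-\sqrt{(1-x)c}\right)^2 .
\end{equation*}

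Then I would relate this to $|x-c|$ by the conjugate trick: $\big(\sqrt{x(1-c)}-\sqrt{(1-x)c}\big)\big(\sqrt{x(1-c)}+\sqrt{(1-x)c}\big)=x(1-c)-(1-x)c=x-c$, so
\begin{equation*}
	x+c-2xc-\inf_{y\in\R}Q_c(x,y)=\frac{(x-c)^2}{\big(\sqrt{x(1-c)}+\sqrt{(1-x)c}\big)^2}.
\end{equation*}
It remains only to bound the denominator: $\big(\sqrt{x(1-c)}+\sqrt{(1-x)c}\big)^2\le 2\big(x(1-c)+(1-x)c\big)=2(x+c-2xc)<2$, the last inequality because $1-(x+c-2xc)=(1-x)(1-c)+xc>0$. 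Hence $|x-c|<\sqrt2\,\big(x+c-2xc-\inf_{y}Q_c(x,y)\big)^{1/2}$, which verifies Assumption~\ref{as:phi} (iii) with $\gamma=1/2$ and any $C\ge\sqrt2$, in particular $C=2$. The remaining clauses of Assumption~\ref{as:phi} are immediate: $\phi(z)=e^z$ is nonnegative, convex, strictly increasing, with $\phi(0)=1$; and since the soft decisions are restricted to $[-1,1]$ the relevant argument of $\phi$ lies in $[-1,1]$, on which $\phi$ is Lipschitz with constant $e$, giving (ii).

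The calculations are all elementary; the only things to spot are the perfect-square structure in the second step (which is exactly what forces the exponent $\gamma=1/2$) and the telescoping-conjugate identity in the third. The one point deserving care is clause (ii): $e^z$ is not globally Lipschitz on $\R$, so one must invoke the earlier restriction of the admissible soft decisions (and hence of the argument of $\phi$) to $[-1,1]$ — this is the only genuine subtlety.
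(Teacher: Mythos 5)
Your proposal is correct and follows essentially the same route as the paper's proof: the same closed-form minimizer $y^*=\tfrac12\log\frac{x(1-c)}{(1-x)c}$, the same evaluation $\inf_y Q_c(x,y)=2\sqrt{xc(1-x)(1-c)}$, the same perfect-square and conjugate identities leading to $\frac{(x-c)^2}{(\sqrt{x(1-c)}+\sqrt{(1-x)c})^2}$. The only differences are cosmetic improvements: you bound the denominator by $2(x+c-2xc)<2$ rather than the paper's cruder bound of $4$ (so you actually get $C=\sqrt2\le 2$), and you explicitly verify clauses (i)--(ii) of Assumption~\ref{as:phi}, correctly noting that $e^z$ is only Lipschitz after restricting the soft decisions to $[-1,1]$ --- a point the paper's proof leaves implicit.
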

\begin{proof}
	Note that the minimum of $y\mapsto Q_c(x,y)$ is achieved at
	\begin{equation*}
		y^* = \frac{1}{2}\log\left(\frac{x(1-c)}{(1-x)c}\right).
	\end{equation*}	
	Then
	\begin{equation*}
		\begin{aligned}
			\inf_{y\in\R}Q_c(x,y)	& = \inf_{y\in\R} x(1-c)e^{-y} + c(1-x)e^y \\
			& = 2\sqrt{xc(1-x)(1-c)}.
		\end{aligned}
	\end{equation*}
	Then
	\begin{equation*}
		\begin{aligned}
			(x+c-2xc) - \inf_{y\in\R}Q_c(x,y) & = (x+c-2xc)-2\sqrt{xc(1-x)(1-c)} \\
			& =\left(\sqrt{x(1-c)}-\sqrt{c(1-x)}\right)^2 \\
			& =\frac{(x-c)^2}{(\sqrt{x(1-c)}+\sqrt{c(1-x)})^2}\geq \frac{(x-c)^2}{4}.
		\end{aligned}
	\end{equation*}	
	where last line is due to the fact $x,c\in(0,1)$. Therefore, Assumption~\ref{as:phi} is satisfied with $\gamma=1/2$ and $C=2$.
\end{proof}

\begin{lemma}\label{lemma:margin}
	Suppose that Assumptions \ref{as:losses} (i) and \ref{as:tsybakov} are satisfied. Then for every measurable $f:\mathcal{X}\to\R$
	\begin{equation*}
		\mathcal{R}(\mathrm{sign}(f)) - \mathcal{R}^* \geq c_b(2C_m)^{-1/\alpha} P_X^{\frac{1+\alpha}{\alpha}}\left(\{x:\;\sign(f(x))f^*(x)< 0  \}\right).
	\end{equation*}
\end{lemma}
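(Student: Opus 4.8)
The plan is to combine the exact excess-risk identity from Lemma~\ref{lemma:risk_eta_c} with a one-threshold Tsybakov peeling argument. First I would note that Assumption~\ref{as:losses}(i) gives $b(x) = (\ell_{-1,1}(x)-\ell_{1,1}(x)) + (\ell_{1,-1}(x)-\ell_{-1,-1}(x)) \ge 2c_b > 0$ a.s. Writing $A \triangleq \{x:\ \sign(f(x))f^*(x)<0\}$ for the set where $\sign(f)$ disagrees with the optimal decision, Lemma~\ref{lemma:risk_eta_c} reads $\mathcal{R}(\sign(f)) - \mathcal{R}^* = \E_{A}[b(X)|\eta(X)-c(X)|]$; the inequality is trivial when $P_X(A)=0$, so I would assume $P_X(A)>0$.

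For an arbitrary $t>0$ I would keep only the portion of $A$ on which $|\eta-c|$ exceeds $t$:
\[
\mathcal{R}(\sign(f)) - \mathcal{R}^* \ge \E\left[\one_{A}\,\one_{|\eta(X)-c(X)|>t}\,b(X)|\eta(X)-c(X)|\right] \ge 2c_b\,t\,P_X\!\left(A\cap\{|\eta-c|>t\}\right).
\]
Then I would lower-bound the remaining probability by removing the small-margin region and invoking Assumption~\ref{as:tsybakov}: $P_X(A\cap\{|\eta-c|>t\}) \ge P_X(A) - P_X(\{|\eta-c|\le t\}) \ge P_X(A) - C_m t^\alpha$. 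Hence $\mathcal{R}(\sign(f)) - \mathcal{R}^* \ge 2c_b\,t\,(P_X(A) - C_m t^\alpha)$ for every $t>0$.

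Finally I would choose $t$ to make the constant explicit. Taking $t = (P_X(A)/(2C_m))^{1/\alpha}$ makes $C_m t^\alpha = P_X(A)/2$, so $P_X(A)-C_m t^\alpha = P_X(A)/2$, and substituting gives
\[
\mathcal{R}(\sign(f)) - \mathcal{R}^* \ge 2c_b\left(\frac{P_X(A)}{2C_m}\right)^{1/\alpha}\frac{P_X(A)}{2} = c_b\,(2C_m)^{-1/\alpha}\,P_X(A)^{\frac{1+\alpha}{\alpha}},
\]
which is the assertion. There is no real obstacle: the argument is routine, and the only points needing attention are (i) using the sharper bound $b\ge 2c_b$ rather than $b\ge c_b$ --- the extra factor of two is precisely what converts the naive constant into the stated one --- and (ii) settling for the slightly suboptimal $t$ above (the true maximizer is $t=(P_X(A)/((\alpha+1)C_m))^{1/\alpha}$) so that the final constant has the clean form $c_b(2C_m)^{-1/\alpha}$ rather than an $\alpha$-dependent expression. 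The statement is of course only meaningful for $\alpha>0$, which is the regime in which the lemma is used.
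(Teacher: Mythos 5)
Your argument is correct and coincides with the paper's own proof: both start from the identity in Lemma~\ref{lemma:risk_eta_c}, use $b(X)\geq 2c_b$ from Assumption~\ref{as:losses}~(i), truncate to the region $\{|\eta-c|>t\}$, apply $P_X(A\cap B)\geq P_X(A)-P_X(B^c)$ together with Assumption~\ref{as:tsybakov}, and pick the same threshold $t=(P_X(A)/(2C_m))^{1/\alpha}$ (the paper's $u$ solving $P_X(A)=2C_mu^\alpha$). Your side remarks about the factor of two in $b\geq 2c_b$ and the deliberately suboptimal choice of $t$ are consistent with what the paper does.
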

\begin{proof}
	By  Lemma~\ref{lemma:risk_eta_c}
	{\footnotesize
		\begin{equation*}
			\begin{aligned}
				\mathcal{R}(\mathrm{sign}(f)) - \mathcal{R}^* & = \E_{\sign(f)f^*<0}\left[b(X)|\eta(X)-c(X)|\right] \\
				& \geq 2c_b \int_\mathcal{X}\one_{\sign(f(x))f^*(x)<0}|\eta(x)-c(x)|\dx P_X(x) \\
				& \geq 2c_b uP_X\left(\{x:\; \sign(f(x))f^*(x)< 0\}\cap\{x:\; |\eta(x)-c(x)|>u \}\right) \\
				& \geq 2c_b uP_X\left(\{x:\; \sign(f(x))f^*(x)< 0\}\right) - 2c_b uP_X\left(\{x:\;|\eta(x)-c(x)|\leq u \}\right) \\
				& \geq 2c_b uP_X\left(\{x:\; \sign(f(x))f^*(x)< 0\}\right) - 2c_b C_mu^{1+\alpha},
			\end{aligned}
	\end{equation*}}
	where the first inequality follows under Assumption~\ref{as:losses} (i); the second by Markov's inequality for every $u>0$; the third by $\Pr(A\cap B)\geq \Pr(A) - \Pr(B^c)$; and the fourth under Assumption~\ref{as:tsybakov}. The result follows from substituting $u$ solving
	\begin{equation*}
		P_X\left(\{x:\; \sign(f(x))f^*(x)< 0\}\right) = 2C_mu^\alpha
	\end{equation*}
	in the last equation.
\end{proof}

\begin{lemma}\label{lemma:curvature_exp_log}
	Suppose that Assumptions~\ref{as:losses} (i)-(ii) are satisfied and that there exists a constant $F<\infty$ such that $|f|\leq F$ for all $f\in\mathscr{F}_n$. Then the exponential convexifying function satisfies Assumption~\ref{as:loss_modulus} with $\kappa = 1$ and $c_\phi=c_b\epsilon$, while for the logistic function we have $\kappa=1$ and $c_\phi = 2c_b\epsilon\phi''(F\vee \log((1-\epsilon)c_b/2\epsilon M))$.
\end{lemma}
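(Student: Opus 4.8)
The natural starting point is the conditional form of the convexified excess risk already extracted in the proof of Proposition~\ref{prop:bayes}: by the law of iterated expectations,
\begin{equation*}
	\mathcal{R}_\phi(f) - \mathcal{R}_\phi^* = \tfrac12\,\E\left[H_X(f(X)) - H_X(f_\phi^*(X))\right],\qquad H_x(t) \triangleq \eta(x)\omega(1,x)\phi(-t) + (1-\eta(x))\omega(-1,x)\phi(t).
\end{equation*}
For both the logistic and the exponential $\phi$ we have $\phi\in C^2$ with $\phi''>0$, so each $H_x$ is strictly convex and coercive, $f_\phi^*(x)$ is its unique finite minimizer, and the first-order condition $H_x'(f_\phi^*(x))=0$ holds. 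I would also record the elementary consequences of Assumption~\ref{as:losses} to be used below: since $\omega(1,x)=2(\ell_{-1,1}(x)-\ell_{1,1}(x))$ and $\omega(-1,x)=2(\ell_{1,-1}(x)-\ell_{-1,-1}(x))$, part~(i) gives $\omega(\pm1,x)\ge 2c_b$ a.s., part~(ii) gives $\epsilon\le\eta(x)\le 1-\epsilon$ (in particular $\epsilon\le\tfrac12$), and boundedness of $\ell$ (Assumption~\ref{as:losses}(iii)) gives $\omega(\pm1,x)\le 4M$.

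Next I would Taylor-expand $H_x$ around $f_\phi^*(x)$: using $H_x'(f_\phi^*(x))=0$, for every $x$ and every $t$ there is $\xi_{x,t}$ between $t$ and $f_\phi^*(x)$ with
\begin{equation*}
	H_x(t) - H_x(f_\phi^*(x)) = \tfrac12\,H_x''(\xi_{x,t})\,(t-f_\phi^*(x))^2,\qquad H_x''(\xi)=\eta(x)\omega(1,x)\phi''(-\xi)+(1-\eta(x))\omega(-1,x)\phi''(\xi).
\end{equation*}
Taking $t=f(X)$ gives $\mathcal{R}_\phi(f)-\mathcal{R}_\phi^*=\tfrac14\,\E\left[H_X''(\xi_X)\,(f(X)-f_\phi^*(X))^2\right]$, so the whole problem reduces to a uniform-in-$x$ lower bound on $H_x''$ along the segment joining $f(x)$ and $f_\phi^*(x)$; Assumption~\ref{as:loss_modulus} with $\kappa=1$ then follows at once, and only the value of $c_\phi$ remains to be pinned down.

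For the exponential function $\phi''=\exp$, and the AM--GM inequality yields $H_x''(\xi)=\eta(x)\omega(1,x)e^{-\xi}+(1-\eta(x))\omega(-1,x)e^{\xi}\ge 2\sqrt{\eta(x)(1-\eta(x))\,\omega(1,x)\omega(-1,x)}\ge 2\sqrt{\epsilon(1-\epsilon)}\cdot 2c_b\ge 4\epsilon c_b$, where the $e^{\pm\xi}$ factors cancel, so no control of $\xi$ is needed, and one obtains $c_\phi=\epsilon c_b$. For the logistic function $\phi''(z)=\frac{1}{\log 2}\frac{e^z}{(1+e^z)^2}$ is even and strictly decreasing in $|z|$, so $\phi''(-\xi_{x,t}),\phi''(\xi_{x,t})\ge\phi''\!\left(\max\{|f(x)|,|f_\phi^*(x)|\}\right)$, and here one genuinely has to localize, i.e. bound $|f_\phi^*|$ uniformly: I would use $|f(x)|\le F$ by hypothesis, and from the closed form in Lemma~\ref{lemma:logistic} together with $\tfrac{1-c(x)}{c(x)}=\tfrac{\omega(1,x)}{\omega(-1,x)}$ write $f_\phi^*(x)=\log\tfrac{\eta(x)}{1-\eta(x)}+\log\tfrac{\omega(1,x)}{\omega(-1,x)}$, bounding each term via $\eta(x)\in[\epsilon,1-\epsilon]$ and $\omega(\pm1,x)\in[2c_b,4M]$ to get $\sup_x|f_\phi^*(x)|\le\log\tfrac{1-\epsilon}{\epsilon}+\log\tfrac{2M}{c_b}$. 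Bounding $\phi''$ from below at $F\vee(\text{this bound})$ and keeping, say, only the $\eta(x)\omega(1,x)\ge 2\epsilon c_b$ part of $H_x''$, one arrives at $c_\phi$ of the stated form $2c_b\epsilon\,\phi''\!\left(F\vee\log\!\left((1-\epsilon)c_b/2\epsilon M\right)\right)$, with the precise numerical constant and the precise argument of $\phi''$ depending on exactly which terms one retains in the lower bounds.

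The main obstacle is the logistic case, and specifically the fact that $f_\phi^*$ is the \emph{unconstrained} minimizer and need not lie in $\mathscr{F}_n$: bounding $\phi''(\xi_{x,t})$ away from $0$ forces a uniform bound on $f_\phi^*$, which is exactly where one needs the full strength of Assumption~\ref{as:losses} — the two-sided loss bound to keep the threshold $c(x)$ away from $0$ and $1$, together with $\eta$ away from $0$ and $1$. The exponential case sidesteps all of this because the AM--GM step makes the troublesome $e^{\pm\xi}$ factors cancel, which is why its constant is clean; everything else (the Taylor step and the bookkeeping of constants) is routine.
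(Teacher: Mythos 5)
Your proposal is correct and follows essentially the same route as the paper's proof: the conditional (iterated-expectation) form of the excess convexified risk, a second-order Taylor expansion around the unconstrained minimizer $f_\phi^*$, a curvature lower bound that in the exponential case needs no localization (your AM--GM step is equivalent to the paper's evaluation of $\phi=\phi''$ at the minimizer, giving $c_\phi=c_b\epsilon$), and in the logistic case uses evenness and monotonicity of $\phi''$ together with a uniform bound on $|f_\phi^*|$ obtained from the closed form in Lemma~\ref{lemma:logistic} and Assumption~\ref{as:losses}. Your caveat about the exact argument of $\phi''$ is well placed: the bound you derive, $\sup_x|f_\phi^*(x)|\leq \log\bigl(2M(1-\epsilon)/(\epsilon c_b)\bigr)$, is what the argument actually yields, and the paper's displayed two-sided bound on $f_\phi^*$ (and hence the precise argument appearing in its stated $c_\phi$) appears to contain a typo, which does not affect the substance of the lemma.
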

\begin{proof}	
	First note that for every $f\in\mathscr{F}_n$ by the law of iterated expectations
	{\footnotesize
		\begin{equation*}
			\begin{aligned}
				\mathcal{R}_\phi(f) & = 0.5\E\left[(Ya(X) + b(X))\phi(-Yf(X))\right] + \E[d(Y,X)] \\
				& = 0.5\E\left[\eta(X)(a(X)+b(X))\phi(-f(X)) + (1-\eta(X) )(b(X)-a(X))\phi(f(X))\right] + \E[d(Y,X)] \\
				& = \E\left[b(X)\eta(X)(1-c(X))\phi(-f(X)) + b(X)(1-\eta(X))c(X)\phi(f(X))\right] + \E[d(Y,X)].
			\end{aligned}
	\end{equation*}}
	
	Then, since $f_\phi^*(x)$ minimizes $f\mapsto \eta(x)(1-c(x))\phi(-f) + (1-\eta(x) )c(x)\phi(f)$, by Taylor's theorem there exists $\tau\in[0,1]$ such that for $f_\tau \triangleq \tau f+(1-\tau)f_\phi^*$, we have
	{\footnotesize	
		\begin{equation*}
			\begin{aligned}
				\mathcal{R}_\phi(f) - \mathcal{R}_\phi^* & = \frac{1}{2}\E\left[b(X)\left[\eta(X)(1-c(X))\phi''(-f_\tau(X)) + (1-\eta(X))c(X)\phi''(f_\tau(X))\right]|f(X) - f^*_\phi(X)|^2\right]
			\end{aligned}
	\end{equation*}}
	Then for the exponential convexifying function since $\phi''(z)=e^{z}=\phi(z)$ and $f_\phi^*$ minimizes $f\mapsto \eta(1-c)\phi(-f) + (1-\eta)c\phi(f)$
	{\footnotesize 
		\begin{equation*}
			\begin{aligned}
				\mathcal{R}_\phi(f) - \mathcal{R}_\phi^* & =  \frac{1}{2}\E\left[b(X)\left[\eta(X)(1-c(X))\phi(-f_\tau(X)) + (1-\eta(X))c(X)\phi(f_\tau(X))\right]|f(X) - f^*_\phi(X)|^2\right] \\
				& \geq  \frac{1}{2}\E\left[b(X)\left[\eta(X)(1-c(X))e^{-f_\phi^*(X)} + (1-\eta(X))c(X)e^{f_\phi^*(X)}\right]|f(X) - f^*_\phi(X)|^2\right] \\
				& = \E\left[b(X)\sqrt{\eta(X)(1-c(X))c(X)(1-\eta(X))}|f(X) - f^*_\phi(X)|^2\right] \\
				& = \frac{1}{2}\E\left[\sqrt{\eta(X)(a(X)+b(X))(b(X)-a(X))(1-\eta(X))}|f(X) - f^*_\phi(X)|^2\right] \\
				&\geq c_b\epsilon\|f - f^*_\phi\|^2,
			\end{aligned}
	\end{equation*}}
	where the third line follows since $f_\phi^*=\frac1 2\log\left(\frac{\eta(1-c)}{c(1-\eta)}\right)$, see Lemma~\ref{lemma:exponential}; and the last since $a+b\geq 2c_b$ and $b-a\geq 2c_b$ under Assumption~\ref{as:losses} (i), and $\eta\geq \epsilon$ and  $1-\eta \geq \epsilon$ under Assumption~\ref{as:losses} (ii).
	
	Similarly, for the logistic convexifying function,
	\begin{equation*}
		\begin{aligned}
			\mathcal{R}_\phi(f) - \mathcal{R}_\phi^* & \geq 2c_b\epsilon \E\left[\phi''(f_\tau(X))|f(X) - f^*_\phi(X)|^2\right]\\
			& \geq 2c_b\epsilon \E\left[\phi''(F \vee f_\phi^*\vee -f^*_\phi)|f(X) - f^*_\phi(X)|^2 \right] \\
			& \geq 2c_b\epsilon\phi''\left(F\vee \log\left(\frac{(1-\epsilon)c_b}{2\epsilon M}\right)\right) \|f - f^*_\phi\|^2 \\
		\end{aligned}
	\end{equation*}
	where the first inequality uses $\phi''(z)=\frac{e^z}{(1+e^z)^2\log 2}$, so that $\phi''(-z)=\phi''(z)$; the second since $-F\wedge f^*_\phi \leq f_\tau\leq F\vee f_\phi^*$ and $\phi''$ is decreasing to zero; and the last since by Lemma~\ref{lemma:logistic}
	\begin{equation*}
		\log\left(\frac{2M\epsilon }{c_b(1-\epsilon)}\right) \leq f_\phi^* = \log\left(\frac{\eta(a + b)}{(1-\eta)(b - a)}\right) \leq \log\left(\frac{(1-\epsilon)c_b}{\epsilon 2M}\right),
	\end{equation*}
	which follows under Assumption~\ref{as:losses}.
\end{proof}
\begin{lemma}\label{lemma:equation_hinge}
	Suppose that  $|f|\leq 1$. Then the hinge convexifying function satisfies 
	\begin{equation*}
		\begin{aligned}
			\mathcal{R}_\phi(f) - \mathcal{R}_\phi^* & 
			& = \int_\mathcal{X}b\left|f - f_\phi^*\right||\eta-c|\dx P_X.
		\end{aligned}
	\end{equation*}	
\end{lemma}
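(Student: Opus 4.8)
The plan is to exploit the fact that, on the range $|y|\le 1$, the hinge function collapses to an affine function, which turns the pointwise convexified excess risk into an exact affine expression rather than merely an upper bound.

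First I would recall the representation of the convexified excess risk from the proof of Lemma~\ref{lemma:convexified_risk_bound}: applying the law of iterated expectations to the identity in~\eqref{eq:excess_convex} and using the bookkeeping relations $a(x)+b(x)=2b(x)(1-c(x))$ and $b(x)-a(x)=2b(x)c(x)$ (which follow from the definitions of $a$, $b$, and $c$), one obtains, for every measurable $f$,
\begin{equation*}
	\mathcal{R}_\phi(f) - \mathcal{R}_\phi^* = \int_\mathcal{X} b(x)\Bigl[Q_{c(x)}(\eta(x),f(x)) - \inf_{y\in\R}Q_{c(x)}(\eta(x),y)\Bigr]\dx P_X(x),
\end{equation*}
with $Q_c(x,y)$ as in Assumption~\ref{as:phi}~(iii). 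By Lemma~\ref{lemma:hinge}, the infimum of $y\mapsto Q_{c(x)}(\eta(x),y)$ is attained at $y=f_\phi^*(x)=\sign(\eta(x)-c(x))\in\{-1,1\}$, so the infimum may be replaced by $Q_{c(x)}(\eta(x),f_\phi^*(x))$.

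The key step is then to evaluate $Q_c$ on $[-1,1]$. For $|y|\le 1$ we have $\phi(-y)=(1-y)_+=1-y$ and $\phi(y)=(1+y)_+=1+y$, so
\begin{equation*}
	Q_c(\eta,y)=\eta(1-c)(1-y)+(1-\eta)c(1+y)=\eta(1-c)+(1-\eta)c+(c-\eta)y,
\end{equation*}
which is affine in $y$. The hypothesis $|f(x)|\le 1$ together with $|f_\phi^*(x)|=1$ from Lemma~\ref{lemma:hinge} means this formula applies at both $y=f(x)$ and $y=f_\phi^*(x)$; subtracting gives $Q_{c(x)}(\eta(x),f(x))-Q_{c(x)}(\eta(x),f_\phi^*(x))=(c(x)-\eta(x))\bigl(f(x)-f_\phi^*(x)\bigr)$.

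Finally I would observe that this difference is nonnegative, being a value of $Q_{c(x)}$ minus its minimum, so it equals $|\eta(x)-c(x)|\,|f(x)-f_\phi^*(x)|$; equivalently, one checks the sign directly by splitting into $\eta>c$ (then $f_\phi^*=1\ge f$) and $\eta<c$ (then $f_\phi^*=-1\le f$). Substituting back and carrying the factor $b(x)$ through the integral yields the claimed identity. I do not anticipate any genuine obstacle here; the only point requiring care is the affine reduction of the hinge loss on $[-1,1]$ — which is precisely why soft decisions are restricted to that interval — combined with the observation that the hinge minimizer $f_\phi^*$ is itself in $[-1,1]$, so the same linear expression holds at the minimizer and the usual bound becomes an equality.
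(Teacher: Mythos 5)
Your proof is correct and takes essentially the same route as the paper's: both arguments rest on the affine collapse of the hinge on $[-1,1]$, the fact that $f_\phi^*(x)=\sign(\eta(x)-c(x))\in\{-1,1\}$ from Lemma~\ref{lemma:hinge}, and the algebraic relation $a-b+2\eta b=2b(\eta-c)$ (which is exactly your $Q_c$ computation, since $Q_{c}(\eta,f)-Q_{c}(\eta,f_\phi^*)=(c-\eta)(f-f_\phi^*)$), the only difference being that the paper linearizes $\phi(-Yf)-\phi(-Yf_\phi^*)=Y(f_\phi^*-f)$ under the joint expectation and then conditions on $X$, whereas you condition on $X$ first via the $Q_c$ representation. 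There is no gap.
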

\begin{proof}
	Since, $|f|\leq 1$ and $f_\phi^*(x)=\sign(\eta(x)-c(x))$, see Lemma~\ref{lemma:hinge}, we have 
	\begin{equation*}
		\begin{aligned}
			\mathcal{R}_\phi(f) - \mathcal{R}_\phi^* & = 0.5\E[(a(X)Y +b(X))Y(f_\phi^*(X) - f(X))] \\
			& = 0.5\E[(a(X)-b(X)+2\eta(X)b(X))(f^*_\phi(X) - f(X))] \\
			& = \E\left[b(X)(\eta(X)-c(X))(f^*_\phi(X) - f(X))\right] \\
			& = \int_\mathcal{X}b\left|f - f_\phi^*\right||\eta-c|\dx P_X.
		\end{aligned}
	\end{equation*}	
\end{proof}
\begin{lemma}\label{lemma:curvature_hinge}
	Suppose that Assumptions~\ref{as:losses} and \ref{as:tsybakov} are satisfied and that $|f|\leq 1$ for all $f\in\mathscr{F}_n$. Then the hinge convexifying function satisfies Assumption~\ref{as:loss_modulus} with $\kappa = 1+1/\alpha$ and $c_\phi=2^{-3/\alpha-1}c_bC_m^{-1/\alpha}$.
\end{lemma}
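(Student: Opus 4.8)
The plan is to build directly on Lemma~\ref{lemma:equation_hinge}, which (since $|f|\le 1$) already gives the exact identity
\[
\mathcal{R}_\phi(f) - \mathcal{R}_\phi^* = \int_\mathcal{X} b\,|f - f_\phi^*|\,|\eta - c|\,\dx P_X,
\]
so the whole task reduces to bounding $\int_\mathcal{X} |f-f_\phi^*|\,|\eta-c|\,\dx P_X$ from below by a power of $\|f - f_\phi^*\|$. First I would record two elementary facts: (a) under Assumption~\ref{as:losses} (i) we have $b(x) = (\ell_{-1,1}-\ell_{1,1})(x) + (\ell_{1,-1}-\ell_{-1,-1})(x)\ge 2c_b$ a.s.; and (b) writing $g \triangleq |f - f_\phi^*|$, since $|f|\le 1$ and $f_\phi^*(x)=\sign(\eta(x)-c(x))\in\{-1,1\}$ by Lemma~\ref{lemma:hinge}, we have $0\le g\le 2$ a.s.

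Next I would derive, for every $u>0$, the inequality
\[
\|f - f_\phi^*\|^2 = \int_{|\eta-c|\le u} g^2\,\dx P_X + \int_{|\eta-c|>u} g^2\,\dx P_X \le 4C_m u^\alpha + \frac{2}{u}\int_\mathcal{X} g\,|\eta-c|\,\dx P_X.
\]
For the first term I would use $g^2\le 4$ together with $P_X(|\eta-c|\le u)\le C_m u^\alpha$ from Assumption~\ref{as:tsybakov}; for the second term I would use that on $\{|\eta-c|>u\}$ one has $g^2\le 2g\le \tfrac{2}{u}\,g\,|\eta-c|$, and then drop the restriction on the domain of integration since the integrand is nonnegative.

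Finally I would pick the specific value $u = \bigl(\|f-f_\phi^*\|^2/(8C_m)\bigr)^{1/\alpha}$ (assuming $\|f-f_\phi^*\|>0$, the case $=0$ being trivial), which makes $4C_m u^\alpha = \tfrac12\|f-f_\phi^*\|^2$; rearranging the displayed inequality then yields
\[
\int_\mathcal{X} g\,|\eta-c|\,\dx P_X \ge \frac{u}{4}\,\|f-f_\phi^*\|^2 = 2^{-2-3/\alpha}\,C_m^{-1/\alpha}\,\|f-f_\phi^*\|^{2+2/\alpha},
\]
and combining this with the identity above and $b\ge 2c_b$ gives $\mathcal{R}_\phi(f)-\mathcal{R}_\phi^*\ge 2^{-1-3/\alpha}c_b C_m^{-1/\alpha}\|f-f_\phi^*\|^{2(1+1/\alpha)}$, i.e.\ Assumption~\ref{as:loss_modulus} with $\kappa=1+1/\alpha$ and $c_\phi = 2^{-3/\alpha-1}c_bC_m^{-1/\alpha}$. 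There is no serious obstacle here; the only delicate point is splitting at the right level $u$ — the margin region $\{|\eta-c|\le u\}$ is exactly where the quadratic term $g^2$ can be large relative to the linear term $g\,|\eta-c|$, and Assumption~\ref{as:tsybakov} is precisely what controls its mass — and then tuning $u$ so that the bookkeeping produces the stated constant.
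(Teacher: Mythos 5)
Your proof is correct and follows essentially the same route as the paper's: both rest on the identity of Lemma~\ref{lemma:equation_hinge}, the lower bound $b\geq 2c_b$ from Assumption~\ref{as:losses}~(i), and a split of the integration domain at the margin level $u$ with Assumption~\ref{as:tsybakov} controlling the small-margin mass. The only difference is bookkeeping: you tune $u$ as a function of $\|f-f_\phi^*\|^2$ and absorb half of it into the left-hand side, whereas the paper balances the two terms of the upper bound in terms of $\mathcal{R}_\phi(f)-\mathcal{R}_\phi^*$ and then inverts; both yield exactly the stated $\kappa=1+1/\alpha$ and $c_\phi=2^{-3/\alpha-1}c_bC_m^{-1/\alpha}$.
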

\begin{proof}
	For every $u>0$,
	\begin{equation*}
		\begin{aligned}
			\|f - f^*_\phi\|^2	& \leq 2\int_\mathcal{X}|f-f_\phi^*|\dx P_X \\
			& = 2\int_{|\eta-c|>u}\left|f - f_\phi^*\right|\dx P_X + 2\int_{|\eta-c|\leq u}\left|f - f_\phi^*\right|\dx P_X \\
			& \leq \frac{1}{c_bu}\int_\mathcal{X}b\left|f - f_\phi^*\right||\eta-c|\dx P_X +  4P_X(|\eta - c|\leq u) \\
			& \leq \frac{1}{c_bu}[\mathcal{R}_\phi(f) - \mathcal{R}_\phi^*] +4C_m u^\alpha,
		\end{aligned}
	\end{equation*}
	where the third line follows since $b\geq 2c_b$ under Assumption~\ref{as:losses}; and the last line by Lemma~\ref{lemma:equation_hinge} and Assumption~\ref{as:tsybakov}. To balance the two terms above, we shall take $u = \left([\mathcal{R}_\phi(f) - \mathcal{R}_\phi^*]/(4c_bC_m)\right)^{1/(1+\alpha)}$, in which case 
	\begin{equation*}
		\|f - f_\phi^*\|^2 \leq \frac{2}{c_b}(4c_bC_m)^{1/(1+\alpha)}[\mathcal{R}_\phi(f) - \mathcal{R}_\phi^*]^{\alpha/(1+\alpha)}.
	\end{equation*}
	yields the result with $c_\phi=2^{-3/\alpha-1}c_bC_m^{-1/\alpha}$ and $\kappa=1+\frac 1\alpha$.
\end{proof}

\begin{proof}[Proof of Theorem~\ref{thm:oracle inequality}]	
	By Theorem~\ref{thm:convexified_risk_bound_margin}
	\begin{equation}\label{eq:convexified_bound}
		\begin{aligned}
			\mathcal{R}(\sign(\hat f_n)) - \mathcal{R}^* & \leq C_\phi \left[\mathcal{R}_\phi(\hat f_n) - \mathcal{R}_\phi^*\right]^\frac{\gamma(\alpha+1)}{\gamma\alpha+1}  \triangleq C_\phi\left[\mathcal{R}_\phi(\hat f_n) - \mathcal{R}_\phi(f_n^*) + \triangle_n \right]^\frac{\gamma(\alpha+1)}{\gamma\alpha+1}
		\end{aligned}
	\end{equation}
	with $\triangle_n \triangleq \mathcal{R}_\phi(f_n^*) - \mathcal{R}_\phi^*$. We will bound the stochastic term by \cite{koltchinskii2011oracle}, Theorem 4.3. To that end, put $\mathscr{F}(\delta) = \left\{\ell\circ f:\; \mathcal{R}_\phi(f) - \mathcal{R}_\phi(f_n^*)\leq \delta, \;f\in\mathscr{F}_n\right\}$ for some $\delta>0$ and $(\ell\circ f)(y,x) = \omega(y,x)\phi(-yf(x))$. Then for every $f\in\mathscr{F}_n$ such that $\ell\circ f\in\mathscr{F}(\delta)$
	\begin{equation}\label{eq:f-f_n^*}
		\begin{aligned}
			\|f - f_n^*\| & \leq \|f - f_\phi^*\| + \|f_n^* - f_\phi^*\| \leq c_\phi^{-\frac{1}{2\kappa}}\left[\mathcal{R}_\phi(f) - \mathcal{R}_\phi^*\right]^{\frac{1}{2\kappa}} + c_\phi^{-\frac{1}{2\kappa}}\triangle_n^{\frac{1}{2\kappa}} \\
			& \leq { 2^{1-\frac{1}{2\kappa}}}c_\phi^{-\frac{1}{2\kappa}}\left[\mathcal{R}_\phi(f) - \mathcal{R}_\phi^* + \triangle_n\right]^{\frac{1}{2\kappa}}  \leq 2^{1-\frac{1}{2\kappa}}[c_\phi^{-1}(\delta + 2\triangle_n)]^{\frac{1}{2\kappa}},
		\end{aligned}
	\end{equation}
	where the second inequality follows under Assumption~\ref{as:loss_modulus} and the third by Jensen's inequality since $x\mapsto x^{1/2\kappa}$ is concave on $\R_+$ for $\kappa\geq 1$. Therefore,
	\begin{equation}\label{eq:class_inclusion}
		\mathscr{F}(\delta) \subset \left\{\ell\circ f:\; \|f - {f_n^*}\| \leq 2[c_\phi^{-1}(\delta/2 + \triangle_n)]^\frac{1}{2\kappa},\; f\in\mathscr{F}_n \right\}.
	\end{equation}
	Under Assumptions~\ref{as:losses} (iii) and \ref{as:phi} (ii) for all $f_1,f_2\in\mathscr{F}_n$ and all $(y,x)\in\{-1,1\}\times\mathcal{X}$, we have $|(\ell\circ f_1)(y,x) - (\ell\circ f_2)(y,x)| \leq 4LM|f_1(x) - f_2(x)|$. In conjunction with inequalities in equations (\ref{eq:f-f_n^*}) and (\ref{eq:class_inclusion}) this shows that the $L_2$-diameter of $\mathscr{F}(\delta)$ satisfies $D(\delta) \triangleq \sup_{g_1,g_2\in\mathscr{F}(\delta)}\|g_1-g_2\| \leq 8LM[c_\phi^{-1}(\delta/2 + \triangle_n)]^{\frac{1}{2\kappa}}$, therefore,
	\begin{equation*}
		(D^2)^{\flat}(\sigma) \triangleq \sup_{\delta\geq\sigma}\frac{D^2(\delta)}{\delta} \leq (8LM)^2c_\phi^{-1/\kappa}\sup_{\delta \geq \sigma }\delta^{\frac{1}{\kappa}-1}[0.5+\triangle_n/\delta]^{\frac{1}{\kappa}} \leq (8LM)^2c_\phi^{-1/\kappa} \sigma^{\frac{1}{\kappa}-1}[0.5+\tau]^{\frac{1}{\kappa}},
	\end{equation*}
	where $\tau \triangleq \triangle_n/\sigma$. Likewise, it follows from the equation (\ref{eq:class_inclusion}) that
	\begin{equation*}
		\begin{aligned}
			\phi_n(\delta) & \triangleq \E\left[\sup_{g_1,g_2\in\mathscr{F}(\delta)}|(P_n - P)(g_1-g_2)|\right]  \leq 2\E\left[\sup_{g\in\mathscr{F}(\delta)}|(P_n - P)(g-\ell\circ f_n^*)|\right] \\
			& \leq 2\E\left[\sup_{f\in\mathscr{F}_n:\|f -{f_n^*}\|\leq 2[c_\phi^{-1}(\delta/2 + \triangle_n)]^\frac{1}{2\kappa}}|(P_n - P)(\ell\circ f-\ell\circ f_n^*)|\right] \\
			& \leq  4\E\left[\sup_{f\in\mathscr{F}_n:\|f - {f_n^*}\| \leq 2[c_\phi^{-1}(\delta/2 + \triangle_n)]^\frac{1}{2\kappa}}|R_n(\ell\circ f-\ell\circ f_n^*)|\right] \\
			& \leq 8\E\left[\sup_{f\in\mathscr{F}_n:\|f - {f_n^*}\| \leq 2[c_\phi^{-1}(\delta/2 + \triangle_n)]^\frac{1}{2\kappa}}|R_n(f-f_n^*)|\right] \\
			& = 8\psi_n\left({4[c_\phi^{-1}}(\delta/2 + \triangle_n)]^\frac{1}{\kappa};\mathscr{F}_n\right),
		\end{aligned}
	\end{equation*}
	where we use the symmetrization and contraction inequalities, see \cite{koltchinskii2011oracle}, Theorems 2.1 and 2.3 since under Assumption~\ref{as:phi} (i), $\phi(0)=1$. This gives
	\begin{align*}
		\phi_n^\flat(\sigma) &=\sup_{\delta\geq\sigma}\frac{\phi_n(\delta)}{\delta}  \leq \sup_{\delta\geq\sigma}\frac{8\psi_n\left(4\delta^\frac{1}{\kappa}\left[c_\phi^{-1}(0.5+\tau)\right]^\frac{1}{\kappa};\mathscr{F}_n\right)}{\delta} \\
		&\leq  32\sigma^{\frac{1}{\kappa}-1}\left[c_\phi^{-1}(0.5+\tau)\right]^\frac{1}{\kappa}\sup_{\delta\geq\sigma}\frac{\psi_n\left(4\delta^\frac{1}{\kappa}\left[c_\phi^{-1}(0.5+\tau)\right]^\frac{1}{\kappa};\mathscr{F}_n\right)}{4\delta^\frac{1}{\kappa}\left[c_\phi^{-1}(0.5+\tau)\right]^\frac{1}{\kappa}}\\
		& =  32\sigma^{\frac{1}{\kappa}-1}\left[c_\phi^{-1}(0.5+\tau)\right]^\frac{1}{\kappa}\psi_n^\flat\left(4\sigma^\frac{1}{\kappa}\left[c_\phi^{-1}(0.5+\tau)\right]^\frac{1}{\kappa}\right).
	\end{align*}
	Next, by \cite{koltchinskii2011oracle}, Theorem 4.3, there exists $q>1$ such that for every $t>0$ 
	\begin{equation*}
		\Pr\left(\mathcal{R}_\phi(\hat f_n) - \mathcal{R}_\phi(f_n^*) \leq \inf\left\{\sigma:\; V_n^t(\sigma) \leq 1 \right\}\right) \geq 1 - c_qe^{-t}
	\end{equation*}
	with $c_q = \frac{q}{q-1}\vee e$ and
	{\footnotesize
		\begin{equation*}
			\begin{aligned}
				V_n^t(\sigma) & \triangleq 2q\left[\phi_n^\flat(\sigma) + \sqrt{\frac{(D^2)^\flat(\sigma)t}{n\sigma}} + \frac{t}{n\sigma}\right] \\
				& \leq 64q\sigma^{\frac{1-\kappa}{\kappa}}\left[c_\phi^{-1}(0.5+\tau)\right]^\frac{1}{\kappa}\psi_n^\flat\left(4\sigma^\frac{1}{\kappa}\left[c_\phi^{-1}(0.5+\tau)\right]^\frac{1}{\kappa}\right) + 16LMqc_\phi^\frac{-1}{2\kappa} \sqrt{\frac{[0.5+\tau]^\frac{1}{\kappa}t}{n\sigma^{2-\frac{1}{\kappa}}}} + \frac{2qt}{n\sigma},
			\end{aligned}
	\end{equation*}}
	which follows from our computations above. Note that if $\sigma\geq \triangle_n$, then $\tau = \triangle_n/\sigma\leq 1$, and
	\begin{equation*}
		V_n^t(\sigma) \leq 16q\sigma^{\frac{1-\kappa}{\kappa}}\left(\frac {4^{\kappa}3 }{2c_\phi}\right)^\frac{1}{\kappa}\psi_n^\flat\left(\left[\frac{4^\kappa 3\sigma}{2c_\phi}\right]^{1/\kappa}\right) + 16qLM\left(\frac 3{2c_\phi}\right)^\frac{1}{2\kappa} \sqrt{\frac{t}{n\sigma^{2-\frac{1}{\kappa}}}} + \frac{2qt}{n\sigma}.
	\end{equation*}
	Since all functions in this upper bound are decreasing in $\sigma$, we have $V_n^t(\sigma)\leq 1$ as soon as
	\begin{equation*}
		\sigma\geq\frac{2c_\phi}{4^\kappa 3} \psi_{n,\kappa}^{\sharp}\left( \frac{c_\phi}{72q4^\kappa}\right)\vee \left(\frac{3(48qLM)^{2\kappa}t^\kappa}{2c_\phi n^\kappa}\right)^\frac{1}{2\kappa-1}\vee \frac{6qt}{n}.
	\end{equation*}
	Therefore, since $\inf\{\sigma \leq \triangle_n:\; V_n^t(\sigma)\leq 1\}\leq \triangle_n$, we obtain with $\epsilon= c_\phi/(72q4^\kappa)$
	\begin{equation*}
		\inf\{\sigma:\; V_n^t(\sigma)\leq 1 \}\leq \frac{2c_\phi}{4^\kappa3}\psi_{n,\kappa}^{\sharp}(\epsilon)\vee \left(\frac{3(48qLM)^{2\kappa}t^\kappa}{2c_\phi n^\kappa}\right)^\frac{1}{2\kappa-1}\vee \frac{6qt}{n} + \triangle_n.
	\end{equation*}
	This shows that in conjunction with the inequality in equation~(\ref{eq:convexified_bound}), for every $t>0$ with probability at least $1-c_qe^{-t}$
	\begin{equation*}
		\mathcal{R}(\sign(\hat f_n)) - \mathcal{R}^* \leq C_\phi\left[\frac{2c_\phi}{4^\kappa3}\psi_{n,\kappa}^{\sharp}(\epsilon)\vee \left(\frac{3(48qLM)^{2\kappa}t^\kappa}{2c_\phi n^\kappa}\right)^\frac{1}{2\kappa-1}\vee \frac{6qt}{n} + 2\triangle_n \right]^\frac{\gamma(\alpha+1)}{\gamma\alpha+1}.
	\end{equation*}
\end{proof}

\begin{proof}[Proof of Theorem~\ref{thm:parametric_predictions}]
	By \cite{koltchinskii2011oracle}, Proposition 3.2, for the linear class $\mathscr{F}_n=\{f_\theta(x)=\sum_{j=1}^p\theta_j\varphi_j(x): \theta\in\R^p\}$, the local Rademacher complexity is bounded as $\psi_n(\delta;\mathscr{F}_n) \leq \sqrt{\delta p/n}$. This gives $\psi_n^\flat(\sigma) \leq \sqrt{p/(\sigma n)}$, and so $\psi^{\sharp}_{n,\kappa}(\epsilon) \leq \left(p/(n\epsilon^2)\right)^\frac{\kappa}{2\kappa - 1}$. Therefore, by Theorem~\ref{thm:oracle inequality} for every $t>0$,
	\begin{equation*}
		\Pr\left(\left[\mathcal{R}(\sign(\hat f_n)) - \mathcal{R}^*\right]^\frac{\gamma\alpha+1}{\gamma(\alpha+1)} > K\left[\left(\frac{p}{n}\right)^\frac{\kappa}{2\kappa - 1} + tn^{-\frac{\kappa}{2\kappa - 1}} + \triangle_n \right]\right)\leq c_qe^{-t}.
	\end{equation*}	
	where we use $t^{\kappa/(2\kappa - 1)} \leq p\vee t$ since $\kappa\geq 1$, $\epsilon,c_q,C_\phi>0,q>1\triangle_n$ as defined in the proof of Theorem~\ref{thm:oracle inequality}, and put $K\triangleq C_\phi^\frac{\gamma\alpha+1}{\gamma(\alpha+1)}\left(\frac{2c_\phi}{4^\kappa3}\epsilon^{-\frac{2\kappa}{2\kappa-1}}\vee\left[\frac{3(48qLM)^{2\kappa}}{2c_\phi}\right]^{1/(2\kappa-1)}\vee 6q\vee 2\right)$. Integrating the tail bound
	\begin{equation*}
		\begin{aligned}
			\E\left[\left(\mathcal{R}(\sign(\hat f_n)) - \mathcal{R}^*\right)^\frac{\gamma\alpha+1}{\gamma(\alpha+1)}\right] & = \int_0^\infty\Pr\left(\left[\mathcal{R}(\sign(\hat f_n)) - \mathcal{R}^*\right]^\frac{\gamma\alpha+1}{\gamma(\alpha+1)} > u\right)\dx u \\
			& \leq Kn^{-\frac{\kappa}{2\kappa - 1}}\left\{n^\frac{\kappa}{2\kappa - 1}\left[\left(\frac{p}{n}\right)^\frac{\kappa}{2\kappa - 1} + \triangle_n\right] + \int_0^\infty c_qe^{-t}\dx t\right\} \\
			& \leq K\left[\left(\frac{p}{n}\right)^\frac{\kappa}{2\kappa - 1} + \triangle_n\right] + Kc_qn^{-\frac{\kappa}{2\kappa - 1}},
		\end{aligned}
	\end{equation*}
	where the second line follows by the change of variables $u = K\left[\left(\frac{p}{n}\right)^\frac{\kappa}{2\kappa - 1} + tn^{-\frac{\kappa}{2\kappa - 1}} + \triangle_n \right]$ and bounding the probability by $1$ for $t<0$. Since $\gamma\in(0,1]$, by Jensen's inequality, this gives
	\begin{equation*}
		\begin{aligned}
			\E\left[\mathcal{R}(\sign(\hat f_n)) - \mathcal{R}^*\right] & \leq  \left\{K\left[\left(\frac{p}{n}\right)^\frac{\kappa}{2\kappa - 1} + \triangle_n\right] + Kc_qn^{-\frac{\kappa}{2\kappa - 1}} \right\}^\frac{\gamma(\alpha+1)}{\gamma\alpha+1}.
		\end{aligned}
	\end{equation*}
\end{proof}

\bibliographystyle{abbrvnat}
\bibliography{references}

\clearpage

\appendix

\bigskip

\begin{center}
	{\large \bf ONLINE APPENDIX}
\end{center}

\bigskip

\setcounter{page}{1}
\setcounter{section}{0}
\setcounter{equation}{0}
\setcounter{table}{0}
\setcounter{figure}{0}
\renewcommand{\theequation}{OA.\arabic{equation}}
\renewcommand\thetable{OA.\arabic{table}}
\renewcommand\thefigure{OA.\arabic{figure}}
\renewcommand\thesection{OA.\arabic{section}}
\renewcommand\thepage{Online Appendix - \arabic{page}}
\renewcommand\thetheorem{OA.\arabic{theorem}}


\section{Boosting \label{sec:boosting}}
In this section, we discuss supporting theory for the weighted boosting procedure. Boosting amounts to combining several ``weak'' binary decisions into more sophisticated and powerful decision rules; see \cite{friedman2001elements}, Ch. 10. The weak binary decision is typically constructed with shallow decision trees. An interesting feature of boosting is that the ``weak'' decisions may be only slightly better than a random guess, while their combination may achieve outstanding out-of-sample performance. For the asymmetric loss function, boosting amounts to solving the following empirical risk minimization problem:
\begin{equation*}
	\inf_{f\in\mathscr{F}^{\rm B}}\frac{1}{n}\sum_{i=1}^n\omega(Y_i,X_i)\phi(-Y_if(X_i))
\end{equation*}
with $\mathscr{F}^{\rm B} = \left\{\sum_{j=1}^p\theta_j\varphi_j(x):\;|\theta|_1\leq \lambda,\; \varphi_j\in\mathscr{G},\; p\in\N\right\}$, where $\mathscr{G}$ is a base class of weak predictions. Exponential convexifying function $\phi(z)=\exp(z)$ is a popular choice, but the logistic function is similar; see \cite{friedman2001elements}, Ch. 10. The problem is then solved using a functional version of the gradient descent algorithm, often with additional regularization and tuning. Let $\hat f_n$ be a solution of the empirical risk minimization problem described above, then:
\begin{theorem}\label{thm:boosting}
	Suppose that $\mathscr{G}$ is a measurable class of functions from $\mathcal{X}$ to $[-1,1]$ with $\mathrm{VC}$-dimension $V\geq 1$ and that $\phi$ is a logistic or exponential function. Then under assumptions of Theorem~\ref{thm:oracle inequality}
	\begin{equation*}
		\E\left[\mathcal{R}(\sign(\hat f)) - \mathcal{R}^*\right]  \lesssim \left[\left(\frac{C_V}{n}\right)^{\frac{2+V}{2(1+V)}} + \inf_{f\in\mathscr{F}^{\rm B}}{\mathcal{R}}_{\phi}(f) - \mathcal{R}_\phi^*\right]^\frac{\alpha + 1}{\alpha+2}
	\end{equation*}
	for some constant $C_V>0$.
\end{theorem}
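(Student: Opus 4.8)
The plan is to specialize Theorem~\ref{thm:oracle inequality} to the boosting class $\mathscr{F}^{\rm B}$ and then supply the one new ingredient it requires: a bound on the fixed point $\psi^{\sharp}_{n,\kappa}(\epsilon)$ of the local Rademacher complexity of $\mathscr{F}^{\rm B}$. First I would record that when $\phi$ is logistic or exponential, Lemmas~\ref{lemma:logistic}, \ref{lemma:exponential} and \ref{lemma:curvature_exp_log} give $\gamma=1/2$ in Assumption~\ref{as:phi}~(iii) and $\kappa=1$ in Assumption~\ref{as:loss_modulus}, so that $\frac{\gamma(\alpha+1)}{\gamma\alpha+1}=\frac{\alpha+1}{\alpha+2}$ and $(t/n)^{\kappa/(2\kappa-1)}=t/n$. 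Since $\mathscr{F}_n=\mathscr{F}^{\rm B}$ does not depend on $n$, Theorem~\ref{thm:oracle inequality} then reads: for every $t>0$, with probability at least $1-ce^{-t}$,
\begin{equation*}
	\mathcal{R}(\sign(\hat f))-\mathcal{R}^* \lesssim \left[\psi^{\sharp}_{n,1}(\epsilon)+\tfrac{t}{n}+\inf_{f\in\mathscr{F}^{\rm B}}\mathcal{R}_\phi(f)-\mathcal{R}_\phi^*\right]^{\frac{\alpha+1}{\alpha+2}}.
\end{equation*}
Hence everything reduces to estimating $\psi^{\sharp}_{n,1}(\epsilon)$ and converting this tail bound into an expectation bound.

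Next I would bound the local Rademacher complexity of $\mathscr{F}^{\rm B}$. Note that $\mathscr{F}^{\rm B}$ is the ($\lambda$-scaled, symmetric) convex hull of the uniformly bounded $\mathrm{VC}$-class $\mathscr{G}$ of $\mathrm{VC}$-dimension $V$, so by the classical entropy bound for convex hulls of $\mathrm{VC}$-type classes one has, uniformly over probability measures $Q$,
\begin{equation*}
	\log N\!\left(u,\mathscr{F}^{\rm B},L_2(Q)\right)\lesssim C_V\, u^{-\frac{2V}{V+2}},\qquad u\in(0,1].
\end{equation*}
Write $\rho\triangleq\frac{V}{V+2}\in(0,1)$. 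Since the exponent $2\rho=\frac{2V}{V+2}$ is strictly below $2$ for every $V$, the localized Dudley entropy integral converges at the origin and gives $\psi_n(\delta;\mathscr{F}^{\rm B})\lesssim \sqrt{C_V/n}\,\delta^{(1-\rho)/2}=\sqrt{C_V/n}\,\delta^{1/(V+2)}$, whence $\psi_n^{\flat}(\sigma)\lesssim \sqrt{C_V/n}\,\sigma^{-(V+1)/(V+2)}$ and, because $\kappa=1$, the $\sharp$-transform is $\psi^{\sharp}_{n,1}(\epsilon)=\inf\{\sigma>0:\psi_n^{\flat}(\sigma)\le\epsilon\}\lesssim \left(C_V/(n\epsilon^2)\right)^{\frac{V+2}{2(V+1)}}$, using $\frac{1}{1+\rho}=\frac{V+2}{2(V+1)}$; for the fixed constant $\epsilon$ produced by Theorem~\ref{thm:oracle inequality} this is $\lesssim \left(C_V/n\right)^{\frac{V+2}{2(V+1)}}$. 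This step can be packaged through Supplementary Material Lemmas~\ref{lemma:lrc_deep_learning} and \ref{lemma:lrc_fp} on local Rademacher complexities and their fixed points, fed with the convex-hull entropy bound above.

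Finally I would conclude exactly as in the proof of Theorem~\ref{thm:parametric_predictions}: substitute the bound on $\psi^{\sharp}_{n,1}(\epsilon)$ into the displayed oracle inequality, raise both sides to the power $\frac{\alpha+2}{\alpha+1}$, integrate the exponential tail over $t\in(0,\infty)$ (the $\tfrac{t}{n}$ term contributes $O(1/n)$, dominated by $(C_V/n)^{\frac{V+2}{2(V+1)}}$ since $\frac{V+2}{2(V+1)}\le\frac34<1$ for $V\ge1$), and then apply Jensen's inequality to the concave map $x\mapsto x^{\frac{\alpha+1}{\alpha+2}}$ to pull the power outside the expectation, obtaining
\begin{equation*}
	\E\!\left[\mathcal{R}(\sign(\hat f))-\mathcal{R}^*\right]\lesssim \left[\left(\tfrac{C_V}{n}\right)^{\frac{V+2}{2(V+1)}}+\inf_{f\in\mathscr{F}^{\rm B}}\mathcal{R}_\phi(f)-\mathcal{R}_\phi^*\right]^{\frac{\alpha+1}{\alpha+2}}.
\end{equation*}

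The main obstacle is the second step: getting the sharp convex-hull entropy exponent $\frac{2V}{V+2}$ uniformly in $Q$ and then carrying out the $\sharp$-transform carefully so that the exponent of $n$ is exactly $\frac{V+2}{2(V+1)}$; the inequality $\frac{2V}{V+2}<2$, valid for all $V$, is precisely what makes the complexity localize and is what keeps us out of the non-localizing ("large entropy") regime. A minor wrinkle is that elements of $\mathscr{F}^{\rm B}$ are bounded by $\lambda$ rather than by $1$; one either restricts to $\lambda\le1$ or truncates soft decisions to $[-1,1]$ (harmless, as the risk depends only on the sign), which changes only the constants absorbed into $C_V$.
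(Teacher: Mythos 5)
Your proposal is correct and takes essentially the same route as the paper: identify $\gamma=1/2$ and $\kappa=1$ via Lemmas~\ref{lemma:logistic}, \ref{lemma:exponential}, and \ref{lemma:curvature_exp_log}, bound the fixed point by $\psi_{n,1}^{\sharp}(\epsilon)\lesssim \left(C_V/(n\epsilon^2)\right)^{\frac{2+V}{2(1+V)}}$ for the convex hull of the VC base class, plug this into Theorem~\ref{thm:oracle inequality}, and integrate the tail bound. The only difference is that the paper simply cites \cite{koltchinskii2011oracle}, Example 5, p.~87, for that fixed-point bound, which is exactly the convex-hull entropy/Dudley computation you sketch (note, though, that Supplementary Lemmas~\ref{lemma:lrc_deep_learning} and \ref{lemma:lrc_fp} would not literally ``package'' this step, since they are tailored to VC-type logarithmic entropy rather than the polynomial convex-hull entropy $u^{-2V/(V+2)}$, so your direct derivation, or the citation, is what is needed).
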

\begin{proof}[Proof of Theorem~\ref{thm:boosting}]	
	By Lemmas~\ref{lemma:logistic}, \ref{lemma:exponential}, and \ref{lemma:curvature_exp_log}, $\gamma = 1/2$ and $\kappa = 1$. By \cite{koltchinskii2011oracle}, Example 5 on p. 87, $\psi_{n,1}^\sharp(\epsilon) \leq \left(C_V/(n\epsilon^2)\right)^{\frac{2+V}{2(1+V)}}$ for some constant $C_V>0$ depending on $V$. Therefore, by Theorem~\ref{thm:oracle inequality}, for every $t>0$ with probability at least $1-c_qe^{-t}$
	{\footnotesize
		\begin{equation*}
			\mathcal{R}(\sign(\hat f)) - \mathcal{R}^* \leq C_\phi\left[\frac{c_\phi}{6}\left(\frac{C_V}{n\epsilon^2}\right)^{\frac{2+V}{2(1+V)}} +  \left(\frac{3(48qLM)^2}{2c_\phi} \vee {6q}\right)\frac{t}{n} +  2\inf_{f\in\mathscr{F}^{\rm B}}\left\{{\mathcal{R}}_{\phi}(f) - \mathcal{R}_\phi^*\right\}\right]^\frac{\alpha + 1}{\alpha+2},
	\end{equation*}}
	where the constants $\epsilon,c_q,C_\phi>0$ and $q>1$ are as in the proof of Theorem~\ref{thm:oracle inequality}. The result follows from integrating the tail bound.
\end{proof}

\noindent In the special case of the symmetric binary classification and correctly specified boosting class, Theorem~\ref{thm:boosting} recovers the bound discussed in Section 5.4 of \cite{boucheron2005theory}. Note that the statistical accuracy of a binary decision is driven by the complexity of the base class $\mathscr{G}$, which should have a small VC dimension to reduce its impact on the first term and, at the same time, generate a sufficiently rich class $\mathscr{F}^B$ to make the approximation error as small as possible.

\begin{example}
	Let $(R_k)_{k=1}^K$ be a tree-structured partition of $\mathcal{X}$ with cuts parallel to coordinate axes. Consider the class of decision trees with $K$ terminal nodes
	\begin{equation*}
		\mathscr{G} = \left\{x\mapsto 2\sum_{k=1}^K\one_{R_k}(x) - 1 \right\}.
	\end{equation*}
	The VC-dimension of $\mathscr{G}$ is $V\leq d\log(2d)$, see \cite{devroye1996probabilistic} and the approximation error tends to zero as $\lambda\to\infty$; see \cite{breiman2000some}.
\end{example}

\section{LASSO\label{appsec:lasso}}
In this section, we consider the uniform excess risk bounds for convexified empirical risk minimization with a LASSO penalty. For a vector $\theta=(\theta_1\dots,\theta_p)^\top\in\R^p$, let $|\theta|_q = \left(\sum_{j=1}^p|\theta_j|^q\right)^{1/q}$ denote its $\ell_q$ norm when $q\geq 1$ and let $|\theta|_0 = \sum_{j=1}^p\one_{\theta_j\ne 0}$.

Recall that the weighted convexified empirical risk is
\begin{equation*}
	\widehat{\mathcal{R}}_{\phi}(f) = \frac{1}{n}\sum_{i=1}^n\omega(Y_i,X_i)\phi(-Y_if(X_i)).
\end{equation*}
For a finite dictionary $\left\{\varphi_1, \ldots, \varphi_p\right\}$ with  $\varphi_j:\mathcal{X}\to\R$, consider the function
\begin{equation*}
	f_{\theta}(x)=\sum_{j=1}^{p} \theta_{j} \varphi_{j}(x), \quad \theta\in\Theta\subset \R^p
\end{equation*}
and the corresponding binary decision rule $\sign(f_\theta(x))$. Note that this setting covers linear decision rules, $f_\theta(x) = \sum_{j=1}^{p}\theta_{j}x_j$, as a special case. Consider the binary decision rule $\sign(f_{\hat\theta})$ with $\hat\theta$ solving the weighted empirical risk minimization problem with a LASSO penalty
\begin{equation*}
	\inf_{\theta}\widehat{\mathcal{R}}_{\phi}(f_\theta)+\lambda_n|\theta|_1,
\end{equation*}
where $\lambda_n\downarrow0$ is a sequence of regularization parameters as described in the following assumption.

\begin{assumption}\label{as:tuning}
	Suppose that for some $c>1$ and $\delta\in(0,1)$, the tuning parameter satisfies
	\begin{equation*}
		\lambda_n \geq 8cLF^*\sqrt{\frac{2\log(2p)}{n}} + 4cLMF^*\sqrt{\frac{2\log(1/\delta)}{n}},
	\end{equation*}
	where $F^*$ is a constant such that $\max_{1\leq j\leq p}|\varphi_j(X)|\leq F^*$ and the constants $L,M$ are as in Assumptions~\ref{as:losses} and \ref{as:phi}.
\end{assumption}

Put $\varphi(X)=(\varphi_1(X),\dots,\varphi_p(X))^\top$. The following assumption states the identification condition, known as the restricted eigenvalue condition; see \cite{Bickel2009simultaneous}.
\begin{assumption}\label{as:grammatrix}
	For an integer $s\leq p$
	\begin{equation*}
		\Phi^2(S) \triangleq \min_{\substack{\Delta\ne 0 \\ |\Delta_{S^c}|_1\leq c_0|\Delta_{S}|_1}} \frac{\Delta^{\top}\E\left[ \varphi(X)\varphi(X)^\top\right]  \Delta}{|\Delta_{S}|_2^{2}}>0,\qquad \forall S\subset\{1,\dots,p\}\text{ with } |S|\leq s,
	\end{equation*}
	where for $\Delta\in\R^p$ and $S\subset\{1,\dots,p\}$, we use $\Delta_S\in\R^p$ to denote the vector with the same coordinates as $\Delta$ on $S$ and zeros on $S^c$, and $c_0 = (2c+1)/(c-1)$ with $c>1$ as in Assumption~\ref{as:tuning}.
\end{assumption}
This condition is not very restrictive and is satisfied whenever the matrix $\E\left[\varphi(X)\varphi(X)^\top\right]$ has its smallest eigenvalue bounded away from zero; see also \cite{Belloni2015some}.
Let $\theta^*$ be a solution to
\begin{equation*}\label{eq:oracle}
	\inf_{\theta:|S_\theta|\leq s}\left\{ 6(2\kappa-1)\left(\frac{2\sqrt{|S_\theta|}\lambda_n}{\Phi(S_\theta)c_\phi^{1/2\kappa}} \right)^\frac{2\kappa}{2\kappa - 1} +  3[\mathcal R_\phi(f_\theta)-\mathcal R_\phi^*]\right\},
\end{equation*}
where $S_\theta=\{1\leq j\leq p:\theta_j\ne 0 \}$ is the support of $\theta\in\R^p$. 

The following result describes the excess risk bounds for binary decisions obtained with asymmetric LASSO; see also \cite{Chetverikov2021analytic}, \cite{Belloni2018high}, \cite{geer2008}, and \cite{wegkamp2007lasso} for related results.

\begin{theorem}\label{thm:lasso}
	Suppose that Assumptions \ref{as:losses}, \ref{as:phi}, \ref{as:tsybakov}, \ref{as:loss_modulus}, \ref{as:tuning}, and \ref{as:grammatrix} are satisfied. Then with probability at least $1-\delta$
	\begin{equation*}
		\mathcal{R}(\sign(f_{\hat\theta})) - \mathcal{R}^* \lesssim \left[\left(s\lambda_n^2 \right)^\frac{\kappa}{2\kappa - 1} + \mathcal{R}_\phi(f_{\theta^*})-\mathcal{R}_\phi^*
		\right]^\frac{\gamma(\alpha+1)}{\gamma\alpha+1}.
	\end{equation*}
\end{theorem}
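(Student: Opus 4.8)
The plan is to split the argument into (i) an oracle inequality for the \emph{convexified} excess risk of the penalized estimator and (ii) an invocation of Theorem~\ref{thm:convexified_risk_bound_margin}. Concretely, it suffices to show that on an event of probability at least $1-\delta$
\begin{equation*}
	\mathcal{R}_\phi(f_{\hat\theta}) - \mathcal{R}_\phi^* \;\lesssim\; \left(s\lambda_n^2\right)^{\frac{\kappa}{2\kappa-1}} + \left[\mathcal{R}_\phi(f_{\theta^*}) - \mathcal{R}_\phi^*\right],
\end{equation*}
because Theorem~\ref{thm:convexified_risk_bound_margin} then raises both sides to the power $\gamma(\alpha+1)/(\gamma\alpha+1)$, and since $\gamma\le 1$ under Assumption~\ref{as:phi}(iii) this exponent is at most one, so it distributes over the sum and delivers the stated bound.

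\textbf{Basic inequality and the empirical-process remainder.} First I would write down the optimality of $\hat\theta$: $\widehat{\mathcal{R}}_\phi(f_{\hat\theta}) + \lambda_n|\hat\theta|_1 \le \widehat{\mathcal{R}}_\phi(f_{\theta^*}) + \lambda_n|\theta^*|_1$, and rearrange it into a population-risk statement with the remainder $(\widehat{\mathcal{R}}_\phi - \mathcal{R}_\phi)(f_{\hat\theta}-f_{\theta^*})$. Controlling this remainder is the technical core. Using that the weights $\omega(Y,X)$ are bounded under Assumption~\ref{as:losses}(iii), that $\phi$ is $L$-Lipschitz by Assumption~\ref{as:phi}(ii), and that $f_\theta-f_{\theta^*}=\sum_j(\theta_j-\theta^*_j)\varphi_j$ is linear in $\theta$ with $\max_j|\varphi_j|\le F^*$, a symmetrization plus Lipschitz-contraction argument (as in \cite{koltchinskii2011oracle}, Theorems~2.1 and~2.3), followed by a concentration inequality for the maximum of $p$ bounded empirical averages, yields — with probability at least $1-\delta$ —
\begin{equation*}
	\left|(\widehat{\mathcal{R}}_\phi - \mathcal{R}_\phi)(f_{\hat\theta}-f_{\theta^*})\right| \;\le\; \frac{\lambda_n}{2}\,|\hat\theta-\theta^*|_1 ,
\end{equation*}
which is precisely what the calibration of $\lambda_n$ in Assumption~\ref{as:tuning} is designed to guarantee: the $\sqrt{2\log(2p)/n}$ term absorbs the union over the (signed) dictionary, the $\sqrt{2\log(1/\delta)/n}$ term the high-probability refinement.

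\textbf{Cone, restricted eigenvalue, curvature, and a Young step.} On that event the basic inequality reads $\mathcal{R}_\phi(f_{\hat\theta})-\mathcal{R}_\phi(f_{\theta^*}) + \lambda_n|\hat\theta|_1 \le \lambda_n|\theta^*|_1 + \tfrac{\lambda_n}{2}|\hat\theta-\theta^*|_1$. Splitting $|\hat\theta-\theta^*|_1$ over $S=S_{\theta^*}$ and $S^c$ and using $|\theta^*|_1=|\theta^*_S|_1$ produces the cone condition $|(\hat\theta-\theta^*)_{S^c}|_1 \le c_0|(\hat\theta-\theta^*)_S|_1$ (in the complementary case $\mathcal{R}_\phi(f_{\hat\theta})-\mathcal{R}_\phi^* \le 3[\mathcal{R}_\phi(f_{\theta^*})-\mathcal{R}_\phi^*]$ and the target bound is immediate). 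In the cone, Assumption~\ref{as:grammatrix} gives $|(\hat\theta-\theta^*)_S|_1 \le \sqrt{s}\,\Phi(S)^{-1}\|f_{\hat\theta}-f_{\theta^*}\|$, and Assumption~\ref{as:loss_modulus} together with the triangle inequality bounds $\|f_{\hat\theta}-f_{\theta^*}\| \le c_\phi^{-1/(2\kappa)}\big([\mathcal{R}_\phi(f_{\hat\theta})-\mathcal{R}_\phi^*]^{1/(2\kappa)} + [\mathcal{R}_\phi(f_{\theta^*})-\mathcal{R}_\phi^*]^{1/(2\kappa)}\big)$. Substituting back and writing $\bar\delta = \mathcal{R}_\phi(f_{\hat\theta})-\mathcal{R}_\phi^*$, $\delta^* = \mathcal{R}_\phi(f_{\theta^*})-\mathcal{R}_\phi^*$ yields $\bar\delta \le A\,\bar\delta^{1/(2\kappa)} + (\text{terms in }\delta^*)$ with $A \asymp \sqrt{s}\,\lambda_n\,\Phi(S)^{-1}c_\phi^{-1/(2\kappa)}$; a Young inequality with conjugate exponents $(2\kappa,\,2\kappa/(2\kappa-1))$ absorbs $\tfrac12\bar\delta$ on the right and leaves $\bar\delta \lesssim A^{2\kappa/(2\kappa-1)} + \delta^* \asymp (s\lambda_n^2)^{\kappa/(2\kappa-1)} + \delta^*$ — exactly the two terms appearing in the definition of the oracle $\theta^*$, with the constant $6(2\kappa-1)$ there being what this Young step produces. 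Applying Theorem~\ref{thm:convexified_risk_bound_margin} then completes the proof.

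\textbf{Main obstacle.} The delicate point is the uniform-in-$\theta$ control of the empirical-process remainder: coping with the unbounded $\ell_1$-radius, the nonlinearity injected by $\phi$ (handled by contraction but with bookkeeping), and matching the constants to Assumption~\ref{as:tuning}. Secondarily, the curvature exponent $\kappa$ forces the classical ($\kappa=1$) Young-inequality arithmetic to be redone, which is where the non-standard rate $(s\lambda_n^2)^{\kappa/(2\kappa-1)}$ comes from; everything else is routine LASSO oracle-inequality algebra (cf.\ \cite{geer2008}).
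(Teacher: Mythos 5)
Your proposal is correct and follows essentially the same route as the paper's proof: basic inequality from the penalized optimality condition, control of the empirical-process remainder via symmetrization, contraction, a maximal inequality over the dictionary and concentration (the role of Assumption~\ref{as:tuning}, packaged in the paper as Lemma~\ref{lemma:concentration} for the $|\theta-\theta^*|_1$-normalized class), the case split giving either the factor-$3$ bound or the cone condition, the restricted eigenvalue plus Assumption~\ref{as:loss_modulus} step, the convex-conjugate (Young) inequality with exponents $(2\kappa,2\kappa/(2\kappa-1))$, and finally Theorem~\ref{thm:convexified_risk_bound_margin}. The only cosmetic differences are that the paper works with a $c>1$ calibration ($\|P_n-P\|_{\mathscr{G}_n}\le\lambda_n/c$ rather than $\lambda_n/2$) and handles the $\ell_1$-radius issue you flag by first deriving $|\hat\theta|_1\le 4M/\lambda_n$ from comparison with $\theta=0$.
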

\begin{proof}
	Since $\hat\theta$ is a minimizer,  for every $\theta\in\Theta$, we have
	\begin{equation}\label{eq:inequality_optimum}
		\widehat{\mathcal{R}}_\phi(f_{\hat{\theta}}) + \lambda_n|\hat\theta|_1 \leq \widehat{\mathcal{R}}_\phi(f_{{\theta}}) + \lambda_n|\theta|_1.
	\end{equation}
	In particular, for $\theta=0$ we have
	\begin{equation*}
		\widehat{\mathcal{R}}_\phi(f_{\hat{\theta}}) + \lambda_n|\hat\theta|_1 \leq \frac{1}{n}\sum_{i=1}^n\omega(Y_i,X_i)\phi(0)\leq 4M.
	\end{equation*}
	where the last inequality follows under Assumptions~\ref{as:losses} (iii) and \ref{as:phi} (i). Therefore, $|\hat\theta|_1\leq 4M/\lambda_n$, and we will restrict the parameter space in the definition of $\hat\theta$ and $\theta^*$ to $\Theta_n=\{\theta:\; |\theta|_1\leq 4M/\lambda_n\}$, so that $|\hat\theta - \theta^*|_1 \leq 8M/\lambda_n\triangleq K_n$.
	
	For $g_\theta(x,y) \triangleq \omega(y,x)\phi(-yf_\theta(x))$, put $P_ng_\theta = \frac{1}{n}\sum_{i=1}^ng_\theta(X_i,Y_i)$ and $Pg_\theta = \E g_\theta(X,Y)$. Consider the following class
	\begin{equation*}
		\mathscr{G}_n = \left\{\frac{g_\theta - g_{\theta^*}}{|\theta-\theta^*|_1}:\;\theta\in\R^p,\; |\theta-\theta^*|_1\leq K_n \right\}
	\end{equation*}
	and let $\|P_n-P\|_{\mathscr{G}_n}=\sup_{g\in\mathscr{G}_n}|(P_n-P)g|$ be the supremum of the empirical process indexed by $\mathscr{G}_n$.
	With this notation, note that $\widehat{\mathcal{R}}_\phi(f_\theta) = P_ng_\theta$ and $\mathcal{R}_\phi(f_\theta)=Pg_\theta$.

	By Lemma~\ref{lemma:concentration} and Assumption~\ref{as:tuning} with probability at least $1-\delta$, we have $\|P_n - P\|_{\mathscr{G}_n} \leq \lambda_n/c$. Therefore,
	\begin{equation*}
		\begin{aligned}
			\mathcal{R}_\phi(f_{\hat\theta}) + \lambda_n|\hat\theta|_1 & = (P_n-P)(g_{\theta^*} - g_{\hat\theta}) + \mathcal{R}_\phi(f_{\theta^*}) + P_n(g_{\hat\theta}-g_{\theta^*}) + \lambda_n|\hat\theta|_1 \\
			& \leq (P_n-P)(g_{\theta^*} - g_{\hat\theta}) + \mathcal{R}_\phi(f_{\theta^*}) + \lambda_n|\theta^*|_1	 \\
			& \leq \|P_n - P\|_{\mathscr{G}_n}|\hat\theta - \theta^*|_1 + \mathcal{R}_\phi(f_{\theta^*}) + \lambda_n|\theta^*|_1 \\
			& \leq \frac{\lambda_n}{c}|\hat\theta - \theta^*|_1 + \mathcal{R}_\phi(f_{\theta^*}) + \lambda_n|\theta^*|_1,
		\end{aligned}
	\end{equation*}
	where the second line follows from equation (\ref{eq:inequality_optimum}) with $\theta=\theta^*$.
	
	Let $S_*=S_{\theta^*}$ be the support of $\theta^*$. Put $\Delta \triangleq \hat\theta - \theta^*$ and note that $|\Delta|_1 = |\Delta_{S_*}|_1 + |\Delta_{S_*^c}|_1,\forall\Delta\in\R^p$. Note also that $|\Delta_{S_*^c}|_1 = |\hat\theta_{S_*^c}|_1$ and that $|\theta^*|_1 = |\theta^*_{S_*}|_1$. Using these properties, we obtain
	\begin{equation*}
		\mathcal{R}_\phi(f_{\hat\theta}) +\lambda_n|\Delta_{S_*^c}|_1 \leq \frac{\lambda_n}{c}\left\{|\Delta_{S_*}|_1 + |\Delta_{S_*^c}|_1\right\} + \mathcal{R}_\phi(f_{\theta^*})  + \lambda_n\left\{|\theta^*_{S_*}|_1 - |\hat\theta_{S_*}|_1 \right\}.
	\end{equation*}
	By the triangle inequality $|\theta^*_{S_*}|_1 - |\hat\theta_{S_*}|_1 \leq |\Delta_{S_*}|_1$, and so
	\begin{equation}\label{eq:inequality}
		c[\mathcal{R}_\phi(f_{\hat\theta})-\mathcal{R}_\phi^*] + (c-1)\lambda_n|\Delta_{S_*^c}|_{1} \leq c[\mathcal{R}_\phi(f_{\theta^*})-\mathcal{R}_\phi^*] + (c+1)\lambda_n|\Delta_{S_*}|_1.
	\end{equation}
	If $\lambda_n|\Delta_{S_*}|_1 < \mathcal{R}_\phi(f_{\theta^*})-\mathcal{R}_\phi^*$, then equation (\ref{eq:inequality}) implies
	\begin{equation}\label{eq:inequality2}		\mathcal{R}_\phi(f_{\hat\theta})-\mathcal{R}_\phi^* \leq \frac{2c+1}{c}[\mathcal{R}_\phi(f_{\theta^*})-\mathcal{R}_\phi^*].
	\end{equation}	
	Suppose now that $\lambda_n|\Delta_{S_*}|_1\geq  [\mathcal{R}_\phi(f_{\theta^*})-\mathcal{R}_\phi^*]$. Then equation (\ref{eq:inequality}) implies
	$(c-1)\lambda_n|\Delta_{S_*^c}|_1$ $\leq$  $(2c+1) \lambda_n|\Delta_{S_*}|_{1}.$
	This shows that $|\Delta_{S_*^c}|_1 \leq  c_0|\Delta_{S_*}|_1$ with $c_0 = (2c+1)/(c-1)$. Therefore, by the Cauchy--Schwarz inequality under Assumption~\ref{as:grammatrix}
	\begin{equation*}
		|\Delta_{S_*}|^2_1 \leq s|\Delta_{S_*}|_2^2 \leq \frac{s}{\Phi_*^2}\Delta^\top\E\left[ \varphi(X)\varphi(X)^\top\right]\Delta = \frac{s}{\Phi_*^2}\left\|\sum_{j=1}^{p} \Delta_{j}\varphi_{j}\right\|^2 = \frac{s}{\Phi_*^2}\|f_{\hat\theta} - f_{\theta^*}\|^{2},
	\end{equation*}
	where we put $\Phi_*=\Phi(S_*)$. Then by adding $(c-1)\lambda_n|\Delta_{S_*}|_1$ both sides, we obtain from equation (\ref{eq:inequality}):
	$c[\mathcal{R}_\phi(f_{\hat\theta})-\mathcal{R}_\phi^*]$ + $(c-1)\lambda_n|\Delta|_{1}$ $\leq$ $3c\lambda_n|\Delta_{S_*}|_1$ $\leq$ $3c\frac{\sqrt{s}\lambda_n}{\Phi_*}\|f_{\hat\theta} - f_{\theta^*}\|.$
	
	By the triangle inequality, Jensen's inequality, and Assumption~\ref{as:loss_modulus}
	\begin{equation*}
		\begin{aligned}
			\|f_{\hat\theta} - f_{\theta^*}\| & \leq \|f_{\hat\theta} - f_\phi^*\| + \|f_{\theta^*} - f_\phi^*\| \\
			& \leq \left\{c_\phi^{-1}\left[\mathcal{R}_\phi(f_{\hat\theta}) - \mathcal{R}_\phi^* \right] \right\}^{1/2\kappa} + \left\{c_\phi^{-1}\left[\mathcal{R}_\phi(f_{\theta^*}) - \mathcal{R}_\phi^* \right] \right\}^{1/2\kappa} \\
			& \leq \frac{2^{1-1/2\kappa}}{c_\phi^{1/2\kappa}} \left[\mathcal{R}_\phi(f_{\hat\theta})-\mathcal{R}_\phi^* + \mathcal{R}_\phi(f_{\theta^*})-\mathcal{R}_\phi^* \right]^{1/2\kappa}.
		\end{aligned}
	\end{equation*}
	Therefore, since $\kappa\geq 1$
	\begin{equation*}
		\begin{aligned}
			c[\mathcal{R}_\phi(f_{\hat\theta})-\mathcal{R}_\phi^*] + (c-1)\lambda_n|\Delta|_{1} & \leq 3c\frac{2\sqrt{s}\lambda_n}{\Phi_*c_\phi^{1/2\kappa}} 2^{-1/2\kappa}\left[\mathcal{R}_\phi(f_{\hat\theta})-\mathcal{R}_\phi^* + \mathcal{R}_\phi(f_{\theta^*})-\mathcal{R}_\phi^* \right]^{1/2\kappa} \\
			& \leq 1.5c(2\kappa-1)\left(\frac{2\sqrt{s}\lambda_n}{\Phi_*c_\phi^{1/2\kappa}}\right)^\frac{2\kappa}{2\kappa - 1} \\
			& \qquad + \frac{3c}{4}\left[\mathcal{R}_\phi(f_{\hat\theta})-\mathcal{R}_\phi^* + \mathcal{R}_\phi(f_{\theta^*})-\mathcal{R}_\phi^* \right],
		\end{aligned}
	\end{equation*}
	where we use the convex conjugate inequality $uv\leq \frac{u^{2\kappa}}{2\kappa}+\frac{(2\kappa-1)}{2\kappa}v^{\frac{2\kappa}{2\kappa-1}}$. Rearranging this inequality
	\begin{equation}
		\label{eq:inequality3}
		\mathcal{R}_\phi(f_{\hat\theta})-\mathcal{R}_\phi^* + \frac{4(c-1)}{c}\lambda_n|\Delta|_{1} \leq 6(2\kappa-1)\left(\frac{2\sqrt{s}\lambda_n}{\Phi_* c_\phi^{1/2\kappa}} \right)^\frac{2\kappa}{2\kappa - 1} + 3[\mathcal{R}_\phi(f_{\theta^*})-\mathcal{R}_\phi^*]
	\end{equation}	
	Combining this equation with the inequality in  equation (\ref{eq:inequality2}), since $(2c+1)/c\leq 3$ we always have
	\begin{equation*}
		\mathcal{R}_\phi(f_{\hat\theta})-\mathcal{R}_\phi^*  \leq 6(2\kappa-1)\left(\frac{2\sqrt{s}\lambda_n}{\Phi_*c_\phi^{1/2\kappa}} \right)^\frac{2\kappa}{2\kappa - 1} +  3[\mathcal{R}_\phi(f_{\theta^*})-\mathcal{R}_\phi^*].
	\end{equation*}
	
	Lastly, under Assumptions~\ref{as:losses}, \ref{as:phi}, and \ref{as:tsybakov}, by Theorem~\ref{thm:convexified_risk_bound_margin}
	\begin{equation*}
		\mathcal{R}(\sign(f_{\hat\theta})) - \mathcal{R}^* \leq C_\phi\left[\mathcal{R}_\phi(f_{\hat\theta}) - \mathcal{R}_\phi^*\right]^\frac{\gamma(\alpha + 1)}{\gamma\alpha + 1}.
	\end{equation*}
	This gives
	\begin{equation*}
		\mathcal{R}(\sign(f_{\hat\theta})) - \mathcal{R}^* \leq C_\phi\left[ 6(2\kappa-1)\left(\frac{2\sqrt{s}\lambda_n}{\Phi_*c_\phi^{1/2\kappa}} \right)^\frac{2\kappa}{2\kappa - 1} +  3[\mathcal{R}_\phi(f_{\theta^*})-\mathcal{R}_\phi^*]
		\right]^\frac{\gamma(\alpha+1)}{\gamma\alpha+1}.
	\end{equation*}
\end{proof}

\begin{lemma}\label{lemma:concentration}
	Suppose that Assumptions \ref{as:losses}, \ref{as:phi}, and \ref{as:tuning} are satisfied. Then with probability at least $1-\delta$
	$\|P_n-P\|_{\mathscr{G}_n}$ $\leq$  $\frac{\lambda_n}{c},$
	where $\|P_n-P\|_{\mathscr{G}_n}$ is as in the proof of Theorem~\ref{thm:lasso}.
\end{lemma}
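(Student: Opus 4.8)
The plan is to bound $Z:=\|P_n-P\|_{\mathscr{G}_n}$ by the sum of an ``expectation'' term of order $\sqrt{\log(2p)/n}$ and a ``fluctuation'' term of order $\sqrt{\log(1/\delta)/n}$, matching the two pieces of the lower bound on $\lambda_n$ in Assumption~\ref{as:tuning}, so that $\lambda_n/c$ dominates both (up to absolute constants). The first ingredient is a uniform sup-norm envelope for $\mathscr{G}_n$: writing $\Delta=\theta-\theta^*$ and using that $\omega(y,x)=ya(x)+b(x)$ obeys $0<\omega(y,x)\le 4M$ by Assumption~\ref{as:losses}~(iii), that $\phi$ is $L$-Lipschitz by Assumption~\ref{as:phi}~(ii), and that $\max_j|\varphi_j(X)|\le F^*$, one gets pointwise $|g_\theta(y,x)-g_{\theta^*}(y,x)|=\omega(y,x)\,|\phi(-yf_\theta(x))-\phi(-yf_{\theta^*}(x))|\le 4MLF^*|\Delta|_1$, hence $\|g\|_\infty\le 4MLF^*$ for every $g\in\mathscr{G}_n$ (the restriction $|\Delta|_1\le K_n$ plays no role here).

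For $\E Z$ I would symmetrize and contract. By \cite{koltchinskii2011oracle}, Theorem~2.1, $\E Z\le 2\,\E\,\E_\varepsilon\sup_{g\in\mathscr{G}_n}\big|\tfrac1n\sum_i\varepsilon_i g(X_i,Y_i)\big|$. Fixing the data, for each $i$ the function $\psi_i(v):=\omega(Y_i,X_i)\big[\phi(-Y_if_{\theta^*}(X_i)+v)-\phi(-Y_if_{\theta^*}(X_i))\big]$ is $4ML$-Lipschitz with $\psi_i(0)=0$ and satisfies $g_\theta(X_i,Y_i)-g_{\theta^*}(X_i,Y_i)=\psi_i(-Y_i\Delta^\top\varphi(X_i))$. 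The only genuinely non-routine point is the normalization by $|\Delta|_1$, which I would handle by a dyadic peeling argument over the scale $|\Delta|_1$: on each shell $\{2^{-k-1}K_n<|\Delta|_1\le 2^{-k}K_n\}$, $k\ge0$, the factor $1/|\Delta|_1$ is constant up to a factor of $2$, so the contraction inequality (\cite{koltchinskii2011oracle}, Theorem~2.3) applies to the fixed Lipschitz functions $\psi_i/(4ML)$, and — after using $|\Delta|_1\le 2^{-k}K_n$ and absorbing the signs $Y_i$ into $\varepsilon_i$ — the resulting bound, of order $ML\,\E_\varepsilon\sup_{|u|_1\le1}\big|u^\top\big(\tfrac1n\sum_i\varepsilon_i\varphi(X_i)\big)\big|$, does not depend on $k$, so it survives the union over shells. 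Since $\sup_{|u|_1\le1}|u^\top v|=\max_j|v_j|$ and $|\varphi_j(X_i)|\le F^*$, the finite-class maximal inequality gives $\E_\varepsilon\max_{1\le j\le p}\big|\tfrac1n\sum_i\varepsilon_i\varphi_j(X_i)\big|\le F^*\sqrt{2\log(2p)/n}$, whence $\E Z\lesssim MLF^*\sqrt{2\log(2p)/n}$.

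Finally I would pass from the mean to a high-probability bound: $Z$ is a function of the i.i.d.\ pairs $(X_i,Y_i)$ whose value changes by at most $2\sup_{g\in\mathscr{G}_n}\|g\|_\infty/n\le 8MLF^*/n$ when a single pair is replaced, so McDiarmid's bounded-differences inequality gives $Z\le\E Z+4MLF^*\sqrt{2\log(1/\delta)/n}$ with probability at least $1-\delta$ (Talagrand's inequality in Bousquet's form works equally well). Combining this with the bound on $\E Z$ and the choice of $\lambda_n$ in Assumption~\ref{as:tuning} yields $Z\le\lambda_n/c$ on an event of probability at least $1-\delta$, which is the claim. The main obstacle is thus the $1/|\theta-\theta^*|_1$ normalization defining $\mathscr{G}_n$, which precludes a one-shot application of the contraction inequality; the scale-invariance of the per-shell bound (every dyadic scale of $|\theta-\theta^*|_1$ contributes the same amount) is the standard device around it in $\ell_1$-penalized M-estimation, see \cite{geer2008,buhlmann2011statistics}.
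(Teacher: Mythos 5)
Your overall architecture coincides with the paper's: the uniform envelope $4MLF^*$ for the ratio class $\mathscr{G}_n$, a bounded-differences concentration step to reduce the problem to bounding $\E\|P_n-P\|_{\mathscr{G}_n}$ (the paper invokes \cite{koltchinskii2011oracle}, Theorem 2.5, which plays exactly the role of your McDiarmid step and yields the same deviation term $4LMF^*\sqrt{2\log(1/\delta)/n}$), and then symmetrization, a contraction argument, H\"older's inequality $|\Delta^\top v|\le|\Delta|_1\max_j|v_j|$, and the finite-class maximal inequality of \cite{buhlmann2011statistics}, Lemma 14.14, for the expectation. The one place where you depart from the paper is the treatment of the normalization $1/|\theta-\theta^*|_1$: the paper applies the contraction inequality (\cite{buhlmann2011statistics}, Theorem 14.4; \cite{koltchinskii2011oracle}, Theorem 2.3) directly to the ratio class, using only the pointwise bound $|g_\theta-g_{\theta^*}|\le 4LM|f_\theta-f_{\theta^*}|$ and invoking H\"older only at the final, linear stage where the normalization is harmless, whereas you insert a dyadic peeling over the scale of $|\theta-\theta^*|_1$.

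The peeling step as you state it has a genuine gap. On each shell you obtain a bound on the \emph{expected} shell-restricted supremum that is independent of the shell index $k$, and you conclude that this bound ``survives the union over shells.'' But the quantity you need is $\E\max_k A_k$, where $A_k$ is the shell-restricted normalized supremum, and this is not controlled by $\max_k\E A_k$: there are countably many nonempty shells (since $|\theta-\theta^*|_1$ can be arbitrarily small and the normalized envelope is scale-invariant, the per-shell expected suprema do not decay in $k$), so summing the per-shell expectations diverges and no union bound is available at the level of expectations. A correct peeling has to be run on tail probabilities --- per-shell concentration at scale $r_k$ around the per-shell mean, with confidence levels summable over $k$ --- and then assembled into a bound for the normalized supremum; alternatively one follows the paper and applies the contraction once to the whole ratio class. (Your instinct that the ratio class is the delicate point is reasonable, since the would-be contractions $v\mapsto s^{-1}\psi_i(sv)$ depend on $\theta$ through $s=|\theta-\theta^*|_1$, but the remedy you propose does not close the argument as written.) A secondary point: the lemma asserts the exact threshold $\lambda_n/c$, and Assumption~\ref{as:tuning} budgets $8cLF^*\sqrt{2\log(2p)/n}$ for the expectation piece and $4cLMF^*\sqrt{2\log(1/\delta)/n}$ for the deviation piece; your deviation term matches, but your expectation bound carries an extra factor of $M$ (plus whatever constant the peeling costs), so ``$\lambda_n/c$ dominates up to absolute constants'' does not by itself deliver the stated conclusion --- the constants must be reconciled with those fixed in the assumption.
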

\begin{proof}
	By \cite{koltchinskii2011oracle}, Theorem 2.5, for every $\delta\in(0,1)$
	\begin{equation*}
		\Pr\left(\|P_n - P\|_{\mathscr{G}_n} < \E\|P_n - P\|_{\mathscr{G}_n} + 4LMF^*\sqrt{\frac{2\log(1/\delta)}{n}}\right) \geq 1-\delta,
	\end{equation*}
	where we use the fact that
	\begin{equation*}
		\begin{aligned}
			\sup_{\theta:|\theta-\theta^*|_1\leq K_n}\frac{|g_\theta - g_{\theta^*}|}{|\theta-\theta^*|_1} & = \sup_{\theta:|\theta-\theta^*|_1\leq K_n}\frac{|\omega(Y,X)|\left| \phi(-Yf_\theta(X)) - \phi(-Yf_{\theta^*}(X))\right|}{|\theta-\theta^*|_1} \\
			&\leq 4LM\frac{\left\vert {f_{\theta^*}(X)-f_{\theta}(X)} \right\vert}{{|\theta-\theta^*|_1}} \\
			&\leq 4LMF^*,
		\end{aligned}
	\end{equation*}
	which follows under Assumptions~\ref{as:losses} (iii), \ref{as:phi} (i)-(ii), and \ref{as:tuning}. Let $(\varepsilon_i)_{i=1}^n$ be i.i.d.\ Rademacher random variables. Then  by the symmetrization, see \cite{koltchinskii2011oracle}, Theorems 2.1,
	\begin{equation*}
		\begin{aligned}
			\E\|P_n - P\|_{\mathscr{G}_n} & \leq 2\mathbb{E} \sup _{\theta:|\theta-\theta^*|_1 \leq K_n}\frac{\left|\frac{1}{n} \sum_{i=1}^{n} \varepsilon_{i}\omega(Y_i,X_i)\left[\phi(-Y_if_\theta(X_i)) - \phi(-Y_if_{\theta^*}(X_i))\right]\right|}{|\theta-\theta^*|_1} \\
			& \leq 8L \mathbb{E} \sup _{\theta:|\theta-\theta^*|_1\leq K_n}\frac{\left|\frac{1}{n} \sum_{i=1}^{n} \varepsilon_{i}\left(f_\theta(X_i)-f_{\theta^*}(X_i)\right)\right|}{|\theta-\theta^*|_1} \\
			& \leq 8L\E\left[\max_{1\leq j\leq p}\left|\frac{1}{n} \sum_{i=1}^{n} \varepsilon_{i} \varphi(X_i)\right|\right],
		\end{aligned}
	\end{equation*}
	where the second line follows by the contraction inequality, see \cite{buhlmann2011statistics}, Theorem 14.4 and \cite{koltchinskii2011oracle}, Theorem 2.3, since under Assumptions~\ref{as:losses} (iii) and \ref{as:phi} (ii), $\left|\omega(Y_i,X_i)\left[\phi(-Y_if_\theta(X_i)) - \phi(-Y_if_{\theta^*}(X_i))\right]\right|\leq 4LM|f_\theta(X_i)-f_{\theta^*}(X_i)|$; and the third by H\"{o}lder's inequality.
	
	By \cite{buhlmann2011statistics}, Lemma 14.14
	\begin{equation*}
		\begin{aligned}
			\E\left[\max_{1\leq j\leq p}\left|\frac{1}{n} \sum_{i=1}^{n} \varepsilon_{i} \varphi_j(X_i)\right|\right] \leq F^*\sqrt{\frac{2\log(2p)}{n}},
		\end{aligned}
	\end{equation*}
	which under Assumption~\ref{as:tuning} shows that with probability at least $1-\delta$,  
	$\|P_n - P\|_{\mathscr{G}_n}$ $<$ $8LF^*\sqrt{\frac{2\log(2p)}{n}}$ + $4LMF^*\sqrt{\frac{2\log(1/\delta)}{n}} \leq \frac{\lambda_n}{c}.$
\end{proof}
\section{Shallow learning \label{appsec:shallow}}

Shallow learning amounts to fitting a neural network with one or two hidden layers. Neural networks are widely used in econometrics at least since \cite{gallant1988there}.\footnote{Conceptually, neural networks can be traced to early mathematical models of the brain; see \cite{mcculloch1943logical} and \cite{rosenblatt1958perceptron}. Among the early econometric applications, we may cite nonlinear time series modeling and forecasting; see \cite{granger1995modelling}, \cite{lee1993testing}, \cite{gallant1992learning}, \cite{white2001statistical}, and \cite{chen2001semiparametric}; and asset pricing, see \cite{hutchinson1994nonparametric} and \cite{chen2009land}, among others.}  We focus on a very simple neural network class with two hidden layers. Following recent trends in big data applications, we refer to this approach as shallow learning, which, in contrast to deep learning, does not allow for the number of layers to scale with the sample size. Consider a single layer neural network class
\begin{equation*}
	\Theta_n^{\rm S} = \left\{x\mapsto \sum_{j=1}^{p_n}b_j\sigma_0(a_j^\top x + a_{0,j}) + b_0,\quad a\in\R^{d+1},\;|b|_1\leq \gamma_n\right\},
\end{equation*}
where $a=(a_0,a_1,\dots, a_d)$ and $b=(b_0,b_1,\dots,b_{p_n})$ are parameters to be estimated,  $\sigma_0$ is some smooth function, and $|.|_1$ is the $\ell_1$ norm. Our shallow learning class is defined as a hybrid network
\begin{equation*}
	\mathscr{F}_n^{S} = \left\{x\mapsto \sigma(\theta(x) + c(x)d + 1) - \sigma(\theta(x) + c(x)d - 1) - 1:\;\theta\in\Theta_n^{\rm S},\;|d|\leq n \right\},
\end{equation*}
where $c(x)$ pertains to asymmetry (see equation (\ref{eq:c-function})) and $\sigma(z)=(z)_+$ is the Rectified Linear Unit (ReLU) activation function. The shallow learning class can be visualized on a directed graph, see Figure~\ref{fig:network_shallow}. All covariates are fed first into the hidden layer 1 corresponding to $\Theta_n^{\rm S}$. This network class consists of $p_n$ neurons with a smooth activation function $\sigma_0$. The output produced by the hidden layer 1 is fed subsequently into the two neurons with the ReLU activation function $\sigma$. Note that this last layer has a single free parameter $b$. The output $f\in[-1,1]$, is obtained by summing the two neurons from the ReLU layer. The final binary decision is $\sign(f)$.

\begin{figure}[htp]
	\centering
	 \resizebox{0.6\textwidth}{!}{
		\begin{tikzpicture}[x=1.5cm, y=1.5cm]
			\tikzset{%
				neuron/.style={
					circle,
					minimum size=0.9cm},
			}
			\foreach \m in {1,2,3}
			\node [neuron, fill=teal!50] (input-\m) at (0,2.5-\m) {$X_\m$};
			
			\foreach \m in {1,...,4}
			\node [neuron, fill=blue!30] (hidden1-\m) at (2,3-\m) {$\sigma_0$};
			
			\foreach \m in {1,2}
			\node [neuron, fill=violet!40] (relu-\m) at (4,2-\m) {$\sigma$};
			
			\node [neuron, fill=orange!50] (c) at (1.8,-2) {$c$};
			
			\foreach \m in {1}
			\node [neuron, fill=red!60] (output-\m) at (6,1.5-\m) {$\hat f_n$};
			
			\foreach \i in {1,2,3}
			\draw [gray, ->] (input-\i) -- (c);
			
			\foreach \i in {1,2,3}
			\foreach \j in {1,...,4}
			\draw [gray, ->] (input-\i) -- (hidden1-\j);
			
			\foreach \i in {1,...,4}
			\foreach \j in {1,2}
			\draw [gray, ->] (hidden1-\i) -- (relu-\j);
			
			\foreach \i in {1,2}
			\foreach \j in {1}
			\draw [gray, ->] (relu-\i) -- (output-\j);
			
			\foreach \i in {1,2}
			\draw [gray, ->] (c) -- (relu-\i);
			
			\foreach \l [count=\x from 0] in {Input, Layer 1, 2 ReLU, Output}
			\node [align=center, above] at (\x*2,2.5) {\l };
	\end{tikzpicture}}
	
	\caption{Directed graph of the shallow learning architecture with $d=3$ covariates, a single hidden layer with $4$ neurons, and $2$ outer ReLU neurons. The orange neuron takes covariates $X\in\R^d$ as input and produces $c(X)\in\R$, which is fed directly into $2$ ReLU neurons.}
	\label{fig:network_shallow}
\end{figure}

The soft shallow learning prediction $\hat f_n:\mathcal{X}\to[-1,1]$ is a solution to the convexified empirical risk minimization problem with hinge convexifying function\footnote{Our focus on the hinge convexification function is motivated by the objective of achieving the minimax optimal convergence rates. The construction with 2 ReLU neurons in conjunction with the hinge function allows approximating the risk $\mathcal{R}^*_\phi$ sufficiently fast. It is not obvious whether logistic convexification can achieve the minimax optimal convergence rate, and we leave more detailed investigation of this for future research.}
\begin{equation*}
	\inf_{f \in \mathscr{F}_n^{\rm S}} \frac{1}{n}\sum_{i=1}^n\omega(Y_i,X_i)(1-Y_if(X_i))_+.
\end{equation*}
To describe the accuracy of the shallow learning binary decision $\sign(\hat f_n)$, consider the Sobolev ball of smoothness $\beta\in\mathbf{N}$ and radius $M\in(0,\infty)$
\begin{equation*}
	W^{\beta,\infty}_M(\mathcal{X}) = \left\{f:\mathcal{X}\to\R:\;\max_{|k|\leq \beta}\mathrm{esssup}_{x\in\mathcal{X}}|D^k f(x)|\leq M\right\},
\end{equation*} 
where we use the multi-index notation $k=(k_1,\dots,k_d)\in\N^d$, $|k|=k_1+\dots+k_d$, and $D^kf = \frac{\partial^{|k|}}{\partial x_1^{k_1}\dots\partial x_d^{k_d}}f$.

The following assumption imposes some mild regularity conditions on the activation function $\sigma_0$ and the Sobolev smoothness of the conditional probability $\eta$.
\begin{assumption}\label{as:shallow_learning}
	(i) $\sigma_0:\R\to[-b,b]$ is non-decreasing and Lipschitz continuous, infinitely differentiable on an open interval containing a point $x_0$ with $D^k\sigma_0(x_0)\ne 0$ for all $k\in\mathbf{Z}_+$; (ii) $\eta\in W^{\beta,\infty}_M(\mathcal{X})$ for some $\beta\in\N$ and $0<M<\infty$, where $\mathcal{X}\subset \R^d$ is a Cartesian product of compact intervals; (iii) $p_n$ and $\gamma_n$ are of polynomial order.
\end{assumption}

Assumption~\ref{as:shallow_learning} (i) rules out polynomial activation functions and allows for the sigmoid function $\sigma_0(x) = (1+e^{-x})^{-1}$. It also rules out the ReLU activation function, which is a more natural choice for deep learning problems.  

\begin{theorem}\label{thm:oracle inequality_S}
	Suppose that $(Y_i,X_i)_{i=1}^n$ is an i.i.d. sample following a distributions satisfying Assumptions~\ref{as:losses}, \ref{as:phi}, \ref{as:tsybakov}, and \ref{as:shallow_learning}, and denoted $\mathcal{P}(\alpha,\beta)$. Then there exist $c,C>0$ such that for every $t>0$ with probability at least $1-ce^{-t}$
	\begin{equation*}
		\mathcal{R}(\sign(\hat f_n)) - \mathcal{R}^* \leq C\left[\left(\frac{p_n\log^2n}{n}\right)^\frac{1+\alpha}{2+\alpha} + p_n^{-(1+\alpha)\beta/d} + \left(\frac{t}{n}\right)^\frac{1+\alpha}{2+\alpha} + \frac{t}{n}\right]
	\end{equation*}
	uniformly over $\mathcal{P}(\alpha,\beta)$. In particular,
	\begin{equation*}
		\sup_{P\in\mathcal{P}(\alpha,\beta)}\E_P\left[\mathcal{R}(\sign(\hat f_n)) - \mathcal{R}^*\right] \lesssim \left(\frac{\log^2 n}{n}\right)^\frac{(1+\alpha)\beta}{(2+\alpha)\beta + d}
	\end{equation*}
	provided that $p_n\sim (n/\log^2 n)^\frac{d}{(2+\alpha)\beta + d}$.
\end{theorem}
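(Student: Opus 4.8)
The plan is to follow the same template used in the proof of Theorem~\ref{thm:oracle inequality_dnn}, namely to combine a deterministic approximation bound for the hinge convexified risk with the abstract oracle inequality of Theorem~\ref{thm:oracle inequality} specialized to the shallow network class $\mathscr{F}_n^{\rm S}$. First I would construct an explicit element $f_n\in\mathscr{F}_n^{\rm S}$ that nearly matches the convexified optimum: by a standard shallow-network approximation theorem (e.g.\ the Mhaskar-type results valid under Assumption~\ref{as:shallow_learning}~(i)), there is $\theta_n\in\Theta_n^{\rm S}$ with $\|\theta_n-\eta\|_\infty\lesssim p_n^{-\beta/d}\triangleq\varepsilon_n$. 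Setting $f_n(x)=\sigma\bigl((\theta_n(x)-c(x))/\varepsilon_n+1\bigr)-\sigma\bigl((\theta_n(x)-c(x))/\varepsilon_n-1\bigr)-1$, the same piecewise argument as in the DNN proof shows $f_n(x)=f^*_\phi(x)=\sign(\eta(x)-c(x))$ off the set $\{|\eta-c|\le 2\varepsilon_n\}$. Then Lemma~\ref{lemma:equation_hinge}, the bound $|b|\le 4M$ from Assumption~\ref{as:losses}~(iii), and the margin condition (Assumption~\ref{as:tsybakov}) give
\begin{equation*}
\inf_{f\in\mathscr{F}_n^{\rm S}}\mathcal{R}_\phi(f)-\mathcal{R}_\phi^*\le \mathcal{R}_\phi(f_n)-\mathcal{R}_\phi^*\lesssim \varepsilon_n P_X(|\eta-c|\le 2\varepsilon_n)\lesssim \varepsilon_n^{1+\alpha}\sim p_n^{-(1+\alpha)\beta/d}.
\end{equation*}

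Next I would control the stochastic term via the local Rademacher complexity. Because Assumption~\ref{as:shallow_learning}~(iii) keeps $p_n,\gamma_n$ polynomial, the class $\mathscr{F}_n^{\rm S}$ has pseudo-dimension (or an analogous complexity measure) of order $p_n$ up to logarithmic factors; by the analogue of the Supplementary Material lemmas on local Rademacher complexities for finite-pseudo-dimension classes, $\psi_n(\delta;\mathscr{F}_n^{\rm S})\lesssim \sqrt{(p_n\log n/n)\,\delta}$, hence $\psi_n^\flat(\sigma)\lesssim\sqrt{p_n\log n/(n\sigma)}$ and, with $\kappa=1$ for the hinge function under the margin condition (Lemma~\ref{lemma:curvature_hinge} gives $\kappa=1+1/\alpha$, but the effective exponent feeding Theorem~\ref{thm:oracle inequality} works out to the stated rate), $\psi^\sharp_{n,\kappa}(\epsilon)\lesssim (p_n\log^2 n/n)^{(1+\alpha)/(2+\alpha)}$. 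Plugging this fixed point, the approximation bound above, and $\gamma=1$ (so $\gamma(\alpha+1)/(\gamma\alpha+1)=1$) into Theorem~\ref{thm:oracle inequality} yields the high-probability bound
\begin{equation*}
\mathcal{R}(\sign(\hat f_n))-\mathcal{R}^*\lesssim \Bigl(\frac{p_n\log^2 n}{n}\Bigr)^{\frac{1+\alpha}{2+\alpha}}+p_n^{-(1+\alpha)\beta/d}+\Bigl(\frac{t}{n}\Bigr)^{\frac{1+\alpha}{2+\alpha}}+\frac{t}{n},
\end{equation*}
valid with probability at least $1-ce^{-t}$ uniformly over $\mathcal{P}(\alpha,\beta)$. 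Integrating the tail bound as in the proof of Theorem~\ref{thm:parametric_predictions} removes the $t$ terms in expectation, and optimizing the bias-variance trade-off by choosing $p_n\sim(n/\log^2 n)^{d/((2+\alpha)\beta+d)}$ equates the first two terms and delivers the claimed rate $(\log^2 n/n)^{(1+\alpha)\beta/((2+\alpha)\beta+d)}$.

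The main obstacle I expect is making the two approximation-theoretic inputs precise and compatible: first, pinning down the shallow-network $L^\infty$ approximation rate $p_n^{-\beta/d}$ for Sobolev-smooth $\eta$ under exactly the hypotheses of Assumption~\ref{as:shallow_learning}~(i) (the non-polynomial, infinitely differentiable activation with a point of nonvanishing derivatives is what makes this work, via localized Taylor expansions of $\sigma_0$), and second, ensuring that the auxiliary division by $\varepsilon_n$ and the two outer ReLU units can be absorbed into the class $\mathscr{F}_n^{\rm S}$ without blowing up the $\ell_1$-budget $\gamma_n$ beyond polynomial order — this is where the flexibility "$\gamma_n$ of polynomial order'' in Assumption~\ref{as:shallow_learning}~(iii) is used. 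A secondary technical point is verifying the local Rademacher complexity estimate for this hybrid class (smooth layer composed with ReLU layer plus the known function $c(x)$); this should follow from standard pseudo-dimension bounds for such parametrized classes combined with the contraction inequality, exactly as in the deep learning case, but the bookkeeping of logarithmic factors needs care to land the $\log^2 n$ (rather than $\log^6 n$) exponent.
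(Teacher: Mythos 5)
Your proposal follows essentially the same route as the paper's proof: Mhaskar-type $L^\infty$ approximation of $\eta$ at rate $p_n^{-\beta/d}$, the two-ReLU construction of $f_n$ that coincides with $f^*_\phi=\sign(\eta-c)$ off $\{|\eta-c|\le 2\varepsilon_n\}$, the hinge identity (Lemma~\ref{lemma:equation_hinge}) plus the margin condition to get the approximation error $\lesssim\varepsilon_n^{1+\alpha}$, then Theorem~\ref{thm:oracle inequality} with $\gamma=1$, $\kappa=1+1/\alpha$, tail integration, and optimization of $p_n$ — all as in the paper. Two cautions. First, your momentary ``$\kappa=1$'' is a slip, but harmless as you self-correct: with $\kappa=1+1/\alpha$ the fixed-point exponent $\kappa/(2\kappa-1)$ is exactly $(1+\alpha)/(2+\alpha)$, which is what you use. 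Second, and more substantively, your complexity step should not be phrased as a pseudo-dimension bound of order $p_n$: for a one-hidden-layer network with a general smooth (non-polynomial, non-piecewise-linear) activation $\sigma_0$, pseudo-dimension bounds linear in the parameter count are not available (known general bounds for sigmoidal nets are much worse), so that route taken literally could fail or degrade the rate. The paper instead controls the local Rademacher complexity of $\mathscr{F}_n^{\rm S}$ via Dudley's entropy integral with uniform covering numbers for $\ell_1$-bounded outer weights and a Lipschitz activation (its Lemma~\ref{lemma:lrc_shallow_learning}, using the Anthony--Bartlett covering bound), which is exactly where the polynomial growth of $\gamma_n$ and $p_n$ in Assumption~\ref{as:shallow_learning}~(iii) enters and produces entropy of order $p_n\log n$; you gesture at this mechanism in your obstacles paragraph, and making that the primary argument (rather than pseudo-dimension) closes the gap and lands the $\log^2 n$ factor after passing through the $\sharp$-transform (Lemma~\ref{lemma:lrc_fp}).
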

\noindent It is worth noting that the convergence rate of the excess risk of a shallow learning prediction can be anywhere between the slow nonparametric rate $O(n^{-\beta/(2\beta+d)})$ and the fast rate $O(n^{-1})$ depending on the margin exponent $\alpha$. In particular, we can partially offset the curse of dimensionality if $\alpha$ is sufficiently large, e.g., for $\alpha=d/\beta$, the rate is $O(n^{-1/2})$. This is another manifestation of the fact that predicting a binary outcome is easier than predicting real-valued variables. Note also that the smoothness of the decision boundary itself, $\{x\in[0,1]^d:\; \eta(x)-c(x)=0\}$, does not directly play a role, and only the smoothness of $\eta$ is important. This is probably not surprising in light of the fact that $c$ is known to the decision-maker.

In the special case when the loss function is symmetric, under the mild assumption that the density of covariates is uniformly bounded, it follows from \cite{audibert2007fast}, Theorem 4.1 that there exists $C>0$ such that for every $n\geq 1$
\begin{equation}\label{eq:lower_bound}
	\inf_{\hat f_n}\sup_{P\in\mathcal{P}(\alpha,\beta)}\E_P\left[\mathcal{R}(\hat f_n) - \mathcal{R}^*\right]\geq Cn^{-\frac{(1+\alpha)\beta}{(2+\alpha)\beta + d}},
\end{equation}
where the infimum is taken over all binary decisions $\hat f_n:\mathcal{X}\to\{-1,1\}$ computed from an i.i.d.\ sample $(Y_i,X_i)_{i=1}^n$. Therefore, apart from a $\log n$ factor, our result shows that binary decisions produced by shallow learning are optimal from the minimax point of view.

\begin{proof}[Proof of Theorem~\ref{thm:oracle inequality_S}]
	By Theorem~\ref{thm:oracle inequality}, for every $t>0$ with probability at least $1-c_qe^{-t}$
	\begin{equation*}
		\mathcal{R}(\sign(\hat f_n)) - \mathcal{R}^* \lesssim \psi_{n,1+1/\alpha}^{\sharp}(\epsilon) + \left(\frac{t}{n}\right)^\frac{1+\alpha}{2+\alpha} + \frac{t}{n} + \inf_{f\in\mathscr{F}_n^{\rm S}}{\mathcal{R}}_{\phi}(f) - \mathcal{R}_\phi^*,
	\end{equation*}
	where we use the fact that $\gamma=1$ and $\kappa = 1+1/\alpha$ by Lemmas~\ref{lemma:hinge} and \ref{lemma:curvature_hinge}. By Lemmas~\ref{lemma:lrc_shallow_learning} and \ref{lemma:lrc_fp}
	\begin{equation*}
		\psi_{n,1+1/\alpha}^{\sharp}(\epsilon) \leq C\left(\frac{p_n\log p_n}{n}\log\left(\frac{n}{p_n\log p_n}\right)\right)^\frac{1+\alpha}{2+\alpha}.
	\end{equation*}
	Next, under Assumption~\ref{as:shallow_learning} (ii), by \cite{mhaskar1996neural}, Theorem 2.1, there exists $\eta_n\in\Theta_n^{\rm S}$ such that
	\begin{equation*}
		\|\eta_n - \eta\|_\infty \leq Cp_n^{-\beta/d}\triangleq\varepsilon_n.
	\end{equation*}
	Define
	\begin{equation*}
		f_n(x) = \left(\frac{\eta_n(x) - c(x)}{\varepsilon_n} + 1\right)_+ - \left(\frac{\eta_n(x) - c(x)}{\varepsilon_n} - 1\right)_+ - 1
	\end{equation*}
	and note that $f_n\in\mathscr{F}_n^{\rm S}$. Note also that
	\begin{equation*}
		f_n(x) = \begin{cases}
			1, & \text{if } \eta_n(x)-c(x)> \varepsilon_n \\
			\frac{\eta_n(x) - c(x)}{\varepsilon_n}, & \text{if }  |\eta_n(x)-c(x)|\leq \varepsilon_n \\
			-1, & \text{if }  \eta_n(x)-c(x)< -\varepsilon_n
		\end{cases}
	\end{equation*}
	and that on the event $\{x\in[0,1]^d:\; |\eta(x)-c(x)|>2\varepsilon_n \}$ we have $f_n=f_\phi^*$. To see this, recall that by Lemma~\ref{lemma:hinge}, $f_\phi^*=\sign(\eta-c)$. Then if $\eta-c>0$, we have $\eta_n-c=(\eta-c)-(\eta-\eta_n)>\varepsilon_n$ while if $\eta-c<0$, we have $\eta_n-c = (\eta-c) + (\eta_n - \eta)<-\varepsilon_n$. Therefore, by Lemma~\ref{lemma:equation_hinge}
	\begin{equation*}
		\begin{aligned}
			\inf_{f\in\mathscr{F}_n^{\rm S}}{\mathcal{R}}_{\phi}(f) - \mathcal{R}_\phi^* & = \inf_{f\in\mathscr{F}_n^{\rm S}}\int_{[0,1]^d}b|f-f_\phi^*||\eta - c|\dx P_X \\
			& \leq \int_{[0,1]^d}b|f_n-f_\phi^*||\eta - c|\dx P_X \\
			& = \int_{|\eta-c|\leq2\varepsilon_n}b|f_n-f_\phi^*||\eta - c|\dx P_X \\
			& \leq 8M\varepsilon_n\int_{|\eta-c|\leq2\varepsilon_n}b|f_n-f_\phi^*|\dx P_X \\ 
			& \leq 16M\varepsilon_nP_X(|\eta-c|\leq 2\varepsilon_n)\\
			& \leq 2^{4+\alpha} MC_m\varepsilon_n^{1+\alpha},
		\end{aligned}
	\end{equation*}
	where the last two lines follow under Assumptions~\ref{as:losses} (iii) and \ref{as:tsybakov}. Therefore, for every $t>0$ with probability at least $1-c_qe^{-t}$
	\begin{equation*}
		\begin{aligned}
			\mathcal{R}(\sign(\hat f_n)) - \mathcal{R}^* & \lesssim \left(\frac{p_n\log^2 n}{n}\right)^\frac{1+\alpha}{2+\alpha} + p_n^{-(1+\alpha)\beta/d} + 2^{4+\alpha} MC_mC^{1+\alpha}\left(\frac{t}{n}\right)^\frac{1+\alpha}{2+\alpha} + \frac{t}{n}.
		\end{aligned}
	\end{equation*}
	The second statement follows from integrating this tail bound in the same way as in the proof of Theorem~\ref{thm:parametric_predictions}. The uniformity follows from the fact that all constants do no depend on the specific distribution of $(X,Y)$ in $\mathcal{P}(\alpha,\beta)$.
\end{proof}

\section{Fixed points of local Rademacher complexities}\label{sec:local_rademacher}
In this section, we obtain useful bounds on the fixed points of local Rademacher complexities for shallow and deep neural network classes. 

Next, we consider the shallow learning class
\begin{equation*}
	\mathscr{F}_n^{\rm S} = \left\{\sigma(\theta + cd + 1) - \sigma(\theta + cd - 1) - 1:\; \theta\in\Theta_n^{\rm S},\; |d|\leq n\right\},
\end{equation*}
where $\sigma(z)=\max\{z,0\}$,
\begin{equation*}
	\Theta_n^{\rm S} = \left\{x\mapsto \sum_{j=1}^{p_n}b_j\sigma_0(a_j^\top x + a_{0,j}) + b_0,\quad a_j\in\R^{d},\quad\sum_{j=0}^{p_n}|b_j|\leq \gamma_n\right\},
\end{equation*}

The following result bounds the local Rademacher complexity of the shallow learning class $\mathscr{F}_n^{\rm S}$ in terms of the number of neurons $p_n$ in the inner neural network class $\Theta_n^{\rm S}$.
\begin{lemma}\label{lemma:lrc_shallow_learning}
	Suppose that (i) $\sigma_0:\R\to[-b,b]$ is non-decreasing with  Lipschitz constant $L_0<\infty$; (ii) $p_n= C_1n^{c_1}$ and $\gamma_n = C_2 n^{c_2}$ for some constants $c_j,C_j>0,j=1,2$; (iii) $\|c\|_\infty\leq \bar c$ and $\gamma_n>2/L_0$. Then for $a,A,K>0$,
	\begin{equation*}
		\psi_n(\delta;\mathscr{F}_n^{\rm S}) \leq K\left[\sqrt{\frac{p_n\delta(1\vee \log(An^a/\delta^{1/2}))}{n}}\bigvee \frac{p_n(1\vee \log(An^a/\delta^{1/2}))}{n}\right].
	\end{equation*}
\end{lemma}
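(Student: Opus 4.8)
The plan is to reduce the statement to a uniform covering-number (entropy) bound for $\mathscr{F}_n^{\rm S}$ and then invoke the standard control of local Rademacher complexities by metric entropy. The first step is to isolate the outer map. Writing $g(u)=\sigma(u+1)-\sigma(u-1)-1$ with $\sigma(\cdot)=\max\{\cdot,0\}$, a one-line computation shows that $g$ is the truncation $g(u)=\max\{-1,\min\{1,u\}\}$, hence $1$-Lipschitz with range $[-1,1]$. Consequently every $f\in\mathscr{F}_n^{\rm S}$ satisfies $\|f\|_\infty\le 1$, and for $f=g(\theta+cd)$, $f'=g(\theta'+cd')$ and any probability measure $Q$ on $\mathcal{X}$ one has $\|f-f'\|_{L_2(Q)}\le\|\theta-\theta'\|_{L_2(Q)}+\bar c\,|d-d'|$. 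So it suffices to bound the uniform $L_2$-covering numbers of the ``pre-clip'' class $\{x\mapsto\theta(x)+c(x)d:\theta\in\Theta_n^{\rm S},\ |d|\le n\}$.

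Next I would bound those covering numbers parametrically. Since $\sigma_0$ is non-decreasing, the base class $\mathscr{G}=\{x\mapsto\sigma_0(a^\top x+a_0):a\in\R^d,a_0\in\R\}$ is VC-subgraph with index of order $d$ (its subgraphs are pullbacks of half-spaces), so $\sup_Q\log N(\epsilon,\mathscr{G},L_2(Q))\lesssim d\log(Cb/\epsilon)$. An element of $\Theta_n^{\rm S}$ is $\theta=b_0+\sum_{j=1}^{p_n}b_jg_j$ with $g_j\in\mathscr{G}$ and $|b|_1\le\gamma_n$; covering the $\ell_1$-ball $\{|b|_1\le\gamma_n\}\subset\R^{p_n+1}$ to accuracy $\epsilon/(2b)$ (log-cardinality $\lesssim p_n\log(\gamma_n bp_n/\epsilon)$), covering the $p_n$-tuple $(g_1,\dots,g_{p_n})$ coordinatewise from a fixed $L_2(Q)$-net of $\mathscr{G}$ at scale $\epsilon/(2\gamma_n)$ (log-cardinality $\lesssim p_n d\log(Cb\gamma_n/\epsilon)$), and adjoining a one-dimensional net for $d\in[-n,n]$ at scale $\epsilon/(2\bar c)$, one obtains
\[
\sup_Q\log N\!\left(\epsilon,\mathscr{F}_n^{\rm S},L_2(Q)\right)\lesssim p_n\log(\gamma_n p_n n/\epsilon).
\]
Because $p_n$ and $\gamma_n$ are of polynomial order in $n$ by hypothesis~(ii), the argument of the logarithm is at most $An^a/\epsilon$ for suitable $A,a>0$; the constraints $\|c\|_\infty\le\bar c$ and $\gamma_n>2/L_0$ only enter these routine estimates.

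Finally I would feed this into the classical estimate of a localized Rademacher process. The localized class $\{f-f_n^*:f\in\mathscr{F}_n^{\rm S},\ \|f-f_n^*\|\le\sqrt\delta\}$ has envelope $\le 2$, variance at most $\delta$, and (up to constants) the same uniform entropy as $\mathscr{F}_n^{\rm S}$. The standard bound on the local Rademacher complexity by the Dudley entropy integral plus the Bernstein-type discretization remainder (see \cite{koltchinskii2011oracle}) then gives
\[
\psi_n(\delta;\mathscr{F}_n^{\rm S})\lesssim\frac{1}{\sqrt n}\int_0^{\sqrt\delta}\sqrt{p_n\log(An^a/\epsilon)}\,\dx\epsilon+\frac{p_n\log(An^a/\sqrt\delta)}{n}\lesssim\sqrt{\frac{p_n\delta\log(An^a/\delta^{1/2})}{n}}+\frac{p_n\log(An^a/\delta^{1/2})}{n},
\]
and replacing each $\log(An^a/\delta^{1/2})$ by $1\vee\log(An^a/\delta^{1/2})$ and the sum by a maximum (at the cost of a factor $2$ absorbed in $K$) yields the claim.

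\textbf{Main obstacle.} The delicate point is to obtain the \emph{parametric} entropy $p_n\log(An^a/\epsilon)$ rather than the convex-hull entropy $(\gamma_n/\epsilon)^{2V/(V+2)}$ one would get from treating $\Theta_n^{\rm S}$ as a scaled convex hull of $\mathscr{G}$: with $p_n$ and $\gamma_n$ polynomial in $n$ the latter blows up the Dudley integral and ruins the rate. Covering the output weights and the tuple of base functions separately, and keeping the localization scale $\delta^{1/2}$ (not a fixed constant) inside the logarithm, is what makes the bound sharp enough to be usable in Theorem~\ref{thm:oracle inequality_S}.
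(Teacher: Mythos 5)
Your proposal is correct and arrives at the lemma's bound, but it departs from the paper's proof in two substantive ways, so a comparison is worth recording. For the entropy estimate, the paper bounds the uniform $\ell_\infty$ covering numbers of $\Theta_n^{\rm S}$ via the Anthony--Bartlett covering bound for two-layer networks (their Theorem 14.5); that is precisely where the Lipschitz constant $L_0$ and the hypothesis $\gamma_n>2/L_0$ enter, through the simplification $(L_0\gamma_n)^2/(L_0\gamma_n-1)\leq 2L_0\gamma_n$. You instead use that $\mathscr{G}=\{x\mapsto\sigma_0(a^\top x+a_0)\}$ is VC-subgraph (a monotone transform of affine functions) and cover the $\ell_1$-ball of outer weights and the tuple of base functions separately, which produces the same parametric entropy $p_n\log(An^a/\varepsilon)$ without ever invoking $L_0$ --- a mild strengthening, since only monotonicity and boundedness of $\sigma_0$ are needed. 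For the localization step, the paper runs Dudley's integral up to the random empirical radius $\sigma_n$, controls $\E\sigma_n^2\leq 16\psi_n(\delta;\mathscr{F}_n^{\rm S})+\delta$ by symmetrization and contraction, and then solves the resulting self-bounding inequality for $\psi_n$; you delegate exactly this step to the standard localized bound for VC-type classes (entropy integral up to $\sqrt{\delta}$ plus the order-$n^{-1}$ discretization remainder from Koltchinskii), which is legitimate because your uniform entropy bound places the localized class, with envelope $2$ and variance proxy $\delta$, squarely in that regime. Both routes share the same skeleton --- the outer map $u\mapsto\sigma(u+1)-\sigma(u-1)-1$ is the $1$-Lipschitz clipping, reduce to covering the pre-clip class $\{\theta+cd\}$, get parametric entropy, then localized chaining --- and both yield the stated bound with the sum absorbed into the maximum; the paper's version is self-contained at the cost of the extra hypothesis on $L_0\gamma_n$, while yours is shorter and assumes less about $\sigma_0$ but leans on the cited localization theorem as a black box.
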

\begin{proof}
	Put
	\begin{equation*}
		\mathscr{F}_n(\delta) \triangleq \left\{f-f_n^*:\; f\in\mathscr{F}_n^{\rm S},\;\|f - f^*_n\|\leq \sqrt{\delta}\right\}
	\end{equation*}
	and $\sigma^2_n\triangleq \sup_{g\in\mathscr{F}_n(\delta)}P_ng^2$. By Dudley's entropy bound, see \cite{koltchinskii2011oracle}, Theorem 3.11, for some numerical constant $C_0$
	\begin{equation}\label{eq:dudley}
		\begin{aligned}
			\psi_n(\delta;\mathscr{F}_n^{\rm S}) & = \E\left[\sup_{g\in\mathscr{F}_n(\delta)}|R_ng|\right]\\
			& \leq \frac{C_0}{\sqrt{n}}\E\int_0^{2\sigma_n}\sqrt{\log N(\mathscr{F}_n(\delta),L_2(P_n),\varepsilon)}\dx\varepsilon\\
		\end{aligned}
	\end{equation}
	and by the symmetrization and contraction inequalities, see \cite{koltchinskii2011oracle}, Theorems 2.1 and 2.3
	\begin{equation*}
		\begin{aligned}	
			\E\sigma^2_n & \leq \E\left[\sup_{g\in\mathscr{F}_n(\delta)} \left|(P_n - P)g^2\right|\right] + \sup_{g\in\mathscr{F}_n(\delta)}Pg^2\\
			& \leq 2\E\left[\sup_{g\in\mathscr{F}_n(\delta)}|R_ng^2|\right] + \delta \\
			& \leq  16\psi_{n}(\delta;\mathscr{F}_n^{\rm S}) + \delta \\
			& \triangleq B,
		\end{aligned}
	\end{equation*}
	where the third line follows since $\|f\|_\infty\leq 1,\forall f\in\mathscr{F}_n^{\rm S}$. Next, for a sequence $x=(x_1,\dots,x_n)\in\mathcal{X}^n$ and a class of functions from $\mathcal{X}$ to $\R$, denoted $\mathscr{F}$, we consider the class $\mathscr{F}|_x \triangleq \left\{(f(x_1),f(x_2),\dots,f(x_n)):\; f\in\mathscr{F} \right\}\subset\R^n$. For $\varepsilon>0$, let $N(\mathscr{F}|_x,\ell_\infty,\varepsilon)$ be the $\varepsilon$-covering number of $\mathscr{F}|_x$ with respect to the $\ell_\infty$ distance. The uniform covering number is defined as
	\begin{equation*}
		N_\infty(\mathscr{F}_n(\delta),n,\varepsilon) \triangleq \max\{N(\mathscr{F}_n(\delta)|_x,\ell_\infty,\varepsilon):\; x\in\mathcal{X}^n \}.
	\end{equation*}
	By \cite{anthony2009neural}, Lemma 10.5, we can bound the $L_2(P_n)$ covering numbers as follows 
	\begin{equation*}
		\begin{aligned}
			N(\mathscr{F}_n(\delta),L_2(P_n),\varepsilon) & = N(\mathscr{F}_n(\delta)|_{X_1,\dots,X_n},\ell_2,\varepsilon) \\
			& \leq \max\{N(\mathscr{F}_n^{\rm S}|_x,\ell_2,\varepsilon):\; x\in\mathcal{X}^n \} \\		
			&\leq N_\infty(\mathscr{F}_n^{\rm S},n,\varepsilon) \\
		\end{aligned}
	\end{equation*}
	Note that for $f_1,f_2\in\mathscr{F}_n^{\rm S}$, we have $f_j(x) = \sigma(\theta_j(x) + c(x)d_j + 1) - \sigma(\theta_j(x)+c(x)d_j - 1) - 1$ for some $\theta_j\in\Theta_n^{\rm S}$ and $|d_j|\leq n$ with $j=1,2$. Since $x\mapsto \sigma(x+1)-\sigma(x-1)-1$ is Lipschitz continuous with Lipschitz constant $1$
	\begin{equation*}
		\max_{1\leq i\leq n}|f_1(X_i) - f_2(X_i)| \leq \max_{1\leq i\leq n}|\theta_1(X_i) - \theta_2(X_i)| + \|c\|_\infty|d_1 - d_2|
	\end{equation*}
	Since the $\varepsilon$-covering number of $[-n,n]$ is $n/\varepsilon$,
	\begin{equation*}
		\begin{aligned}		
			N_\infty(\mathscr{F}_n^{\rm S},n,\varepsilon) & \leq \frac{2\bar c n}{\varepsilon}N_\infty(\Theta_n^{\rm S},n,\varepsilon/2) \\
			& \leq \frac{2\bar c n}{\varepsilon}\left(\frac{8ebn[(d+1)p_n+1](L_0\gamma_n)^2}{\varepsilon(L_0\gamma_n-1)}\right)^{(d+1)p_n+1} \\
			&\leq \frac{2\bar c n}{\varepsilon}\left(\frac{16eb(d+2)L_0np_n\gamma_n}{\varepsilon}\right)^{(d+1)p_n+1},
		\end{aligned}
	\end{equation*}
	where the second inequality follows by \cite{anthony2009neural}, Theorem 14.5 for all $\varepsilon\leq 4b\wedge2\bar c/e$; the third since $(L_0\gamma_n)^2/(L_0\gamma_n-1)\leq 2L_0\gamma_n$ under (iii). In conjunction with the inequality (\ref{eq:dudley}), this shows that
	\begin{equation*}
		\begin{aligned}
			\psi_n(\delta;\mathscr{F}_n^{\rm S}) & \leq  \frac{C_0}{\sqrt{n}}\E\int_0^{2\sigma_n}\sqrt{\log(2\bar c n/\varepsilon) + [(d+1)p_n + 1]\log(16ebL_0(d+2)np_n\gamma_n/\varepsilon)}\dx\varepsilon \\		
			& \leq C_0\sqrt{\frac{2(d+1)p_n}{n}}\E\int_0^{2\sigma_n}\sqrt{\log((2\bar c)\vee (16ebL_0(d+2)p_n\gamma_n)n/\varepsilon)}\dx \varepsilon.
		\end{aligned} 
	\end{equation*}
	Note that for $A>0$
	\begin{equation*}
		\begin{aligned}
			\E\int_0^{2\sigma_n}\sqrt{\log(A/\varepsilon)}\dx \varepsilon & \leq \int_0^{2\sqrt{\E\sigma^2_n}}\sqrt{\log(A/\varepsilon)}\dx \varepsilon \\
			& \leq A\int_0^{2\sqrt{B}/A}\sqrt{\log(1/u)}\dx u \\
			& \leq 4\sqrt{B}\left\{1\vee \sqrt{\log(A/2\sqrt{\delta}) }  \right\} , 
		\end{aligned}
	\end{equation*}
	where the first line follows by Jensen's inequality since $t\mapsto \int_0^th(u)\dx u$ is concave whenever $h$ is non-decreasing; and the last since $\delta\leq B$.
	
	Therefore,
	\begin{equation*}
		\psi_n(\delta;\mathscr{F}_n^{\rm S}) \leq C\sqrt{\frac{p_n}{n}\left\{1\vee \log(An^a/\delta^{1/2})\right\}}\left(4\sqrt{\psi_{n}(\delta;\mathscr{F}_n^{\rm S})} + \sqrt{\delta}\right),
	\end{equation*}
	with $C=4C_0\sqrt{2(d+1)}$, $A=\bar c\vee (8ebL_0(d+2)C_1C_2)$, and $a=c_1+c_2+1$. Solving this inequality for $\psi_n(\delta;\mathscr{F}_n^{\rm S})$ gives
	\begin{equation*}
		\psi_n(\delta;\mathscr{F}_n^{\rm S}) \leq 8C(4C\vee 1)\left[\sqrt{\frac{p_n\delta\{1\vee \log(An^a/\delta^{1/2})\}}{n}}\bigvee  \frac{p_n\{1\vee \log(An^a/\delta^{1/2})\}}{n}\right].
	\end{equation*}	
\end{proof}

Next, we consider the deep learning class
\begin{equation*}
	\begin{aligned}
		\mathscr{F}_n^{\rm DNN} & = \left\{\sigma(\theta+cd+1) - \sigma(\theta+cd-1)-1:\; |d|\leq n,\; \theta\in\Theta_n^{\rm DNN}\right\},
	\end{aligned}
\end{equation*}
where
\begin{equation*}
	\Theta_n^{\rm DNN} = \left\{f(x) = A_L\sigma_{\mathbf{b}_L}\circ A_{L-1}\sigma_{\mathbf{b}_{L-1}}\circ \dots \circ A_1\sigma_{\mathbf{b}_1}\circ A_0x,\; \|f\|_\infty\leq F  \right\}
\end{equation*}
is a deep neural network class such that $\{A_l,\mathbf{b}_l:\;1\leq l\leq L\}$ and $A_0$ are unrestricted free parameters. 

The following result bounds the local Rademacher complexity of the deep learning class $\mathscr{F}_n^{\rm DNN}$ in terms of the pseudo-dimension of the class $\Theta_n^{\rm DNN}$. To define the pseudo-dimension, let $\Theta$ be arbitrary class of functions from $\mathcal{X}$ to $\R$. The pseudo-dimension of $\Theta$ is the largest integer $m$ for which there exists $(x_1,\dots,x_m,y_1,\dots,y_m)\in\mathcal{X}^m\times\R^m$ such that for every $(b_1,\dots,b_m)\in\{0,1\}^m$, there exists $\theta\in\Theta$ such that
\begin{equation*}
	\theta(x_i)>y_i\iff b_i = 1,\qquad \forall 1\leq i\leq m.
\end{equation*}

\begin{lemma}\label{lemma:lrc_deep_learning}
	Suppose that $\|c\|_\infty\leq\bar c$ and that $\Theta_n^{\rm DNN}$ has pseudo-dimension $V\leq n$. Then for $A,K>0$
	\begin{equation*}
		\psi_n(\delta;\mathscr{F}_n^{\rm DNN}) \leq K\left[\sqrt{\frac{V\delta(1\vee \log(An/\delta^{1/2}) )}{n}}\bigvee \frac{V(1\vee \log(An/\delta^{1/2}))}{n}\right].
	\end{equation*}
\end{lemma}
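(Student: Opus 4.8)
The plan is to follow the route of the proof of Lemma~\ref{lemma:lrc_shallow_learning}, replacing the explicit neural-network covering-number estimate of \cite{anthony2009neural}, Theorem~14.5, by the classical bound that controls uniform covering numbers of a real-valued class in terms of its pseudo-dimension. First I would set $\mathscr{F}_n(\delta) \triangleq \{f - f_n^*:\; f\in\mathscr{F}_n^{\rm DNN},\; \|f-f_n^*\|\leq\sqrt\delta\}$ and $\sigma_n^2\triangleq\sup_{g\in\mathscr{F}_n(\delta)}P_ng^2$, and invoke Dudley's entropy bound (\cite{koltchinskii2011oracle}, Theorem~3.11) to get $\psi_n(\delta;\mathscr{F}_n^{\rm DNN}) = \E\sup_{g\in\mathscr{F}_n(\delta)}|R_ng| \leq \frac{C_0}{\sqrt n}\E\int_0^{2\sigma_n}\sqrt{\log N(\mathscr{F}_n(\delta),L_2(P_n),\varepsilon)}\,\dx\varepsilon$. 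Since $z\mapsto\sigma(z+1)-\sigma(z-1)-1$ maps $\R$ into $[-1,1]$, every $f\in\mathscr{F}_n^{\rm DNN}$ satisfies $\|f\|_\infty\leq1$, so symmetrization and contraction (\cite{koltchinskii2011oracle}, Theorems~2.1 and 2.3) give $\E\sigma_n^2 \leq 16\psi_n(\delta;\mathscr{F}_n^{\rm DNN}) + \delta \triangleq B$, exactly as in the shallow case.

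Next I would bound the covering numbers. As $z\mapsto\sigma(z+1)-\sigma(z-1)-1$ is $1$-Lipschitz, writing $f_j(x)=\sigma(\theta_j(x)+c(x)d_j+1)-\sigma(\theta_j(x)+c(x)d_j-1)-1$ for $j=1,2$ with $\theta_j\in\Theta_n^{\rm DNN}$ and $|d_j|\leq n$, one gets $\max_{1\leq i\leq n}|f_1(X_i)-f_2(X_i)| \leq \max_{1\leq i\leq n}|\theta_1(X_i)-\theta_2(X_i)| + \|c\|_\infty|d_1-d_2|$. Combining an $\varepsilon$-net of $[-n,n]$ (of cardinality $n/\varepsilon$) with a net of $\Theta_n^{\rm DNN}$, and passing from $L_2(P_n)$ covering numbers to uniform covering numbers (\cite{anthony2009neural}, Lemma~10.5), yields $N(\mathscr{F}_n(\delta),L_2(P_n),\varepsilon)\leq \frac{2\bar c n}{\varepsilon}N_\infty(\Theta_n^{\rm DNN},n,\varepsilon/2)$. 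The crucial input is then the pseudo-dimension bound: since $\Theta_n^{\rm DNN}$ consists of $[-F,F]$-valued functions with pseudo-dimension $V\leq n$, one has $N_\infty(\Theta_n^{\rm DNN},n,\varepsilon/2)\leq \left(\frac{4enF}{V\varepsilon}\right)^V$ (see \cite{anthony2009neural}, Chapter~12, or Haussler's sphere-packing argument). Taking logarithms, $\log N(\mathscr{F}_n(\delta),L_2(P_n),\varepsilon) \leq \log(2\bar c n/\varepsilon) + V\log(4enF/(V\varepsilon)) \leq 2V\log(An/\varepsilon)$ with $A\triangleq (2\bar c)\vee(4eF/V)$.

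Finally I would substitute this into Dudley's bound and repeat the elementary integral estimate from the shallow proof: by Jensen's inequality $\E\int_0^{2\sigma_n}\sqrt{\log(An/\varepsilon)}\,\dx\varepsilon\leq\int_0^{2\sqrt B}\sqrt{\log(An/\varepsilon)}\,\dx\varepsilon$, and the substitution $u=\varepsilon/(An)$ together with $B\geq\delta$ bounds this by a constant times $\sqrt B\,\{1\vee\sqrt{\log(An/\delta^{1/2})}\}$. Using $\sqrt B\leq 4\sqrt{\psi_n(\delta;\mathscr{F}_n^{\rm DNN})}+\sqrt\delta$, this produces a self-referential inequality of the form $\psi_n(\delta;\mathscr{F}_n^{\rm DNN})\leq C\sqrt{\tfrac{V}{n}\{1\vee\log(An/\delta^{1/2})\}}\,\bigl(\sqrt{\psi_n(\delta;\mathscr{F}_n^{\rm DNN})}+\sqrt\delta\bigr)$, and solving it for $\psi_n(\delta;\mathscr{F}_n^{\rm DNN})$ (completing the square in $\sqrt{\psi_n}$) delivers the asserted bound, with $K$ an absolute constant and $A$ depending only on $\bar c$, $F$, $d$. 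The main obstacle — really a matter of citing the right classical result and checking the regime ($\varepsilon$ small enough, $n\geq V$) in which it applies — is precisely the step controlling $N_\infty(\Theta_n^{\rm DNN},n,\varepsilon)$ through the pseudo-dimension alone, i.e.\ the real-valued analogue of the Vapnik--Chervonenkis/Sauer lemma; everything else is a direct transcription of the shallow-learning argument with the exponent $(d+1)p_n+1$ replaced by $V$.
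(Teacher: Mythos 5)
Your proposal matches the paper's own proof essentially step for step: the paper likewise reuses the Dudley-plus-symmetrization/contraction and Lipschitz-reduction machinery from the shallow-learning lemma and replaces only the covering-number input by the pseudo-dimension bound of \cite{anthony2009neural}, Theorem 12.2 (the real-valued Sauer/Haussler-type estimate you cite), before solving the same self-referential inequality. The argument is correct and takes the same route, differing only in inessential constants inside the logarithm.
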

\begin{proof}
	Most of the proof follows from the same steps as in Lemma \ref{lemma:lrc_shallow_learning}. The bound on  uniform covering number is now
	\begin{equation*}
		\begin{aligned}		
			N_\infty(\mathscr{F}_n^{\rm DNN},n,\varepsilon) & \leq \frac{2\bar c n}{\varepsilon}	N_\infty(\Theta_n^{\rm DNN},n,\varepsilon/2) \\
			& \leq \frac{2\bar c n}{\varepsilon}\left(\frac{8en}{\varepsilon V}\right)^{V},
		\end{aligned}
	\end{equation*}
	where we bound the uniform covering numbers relying on \cite{anthony2009neural}, Theorem 12.2.  In conjunction with the inequality (\ref{eq:dudley}), this shows that
	\begin{equation*}
		\begin{aligned}
			\psi_n(\delta;\mathscr{F}_n^{\rm DNN}) & \leq \frac{C_0}{\sqrt{n}}\E\int_0^{2\sigma_n}\sqrt{\log(2\bar c n/\varepsilon) + V\log(16eFn/\varepsilon V)}\dx\varepsilon \\		
			& \leq C_0\sqrt{\frac{2V}{n}}\E\int_0^{2\sigma_n}\sqrt{\log(2(\bar c \vee (8eF))n/\varepsilon)}\dx \varepsilon \\
			& \leq C\sqrt{\frac{V}{n}\left\{1\vee \log(An^a/\delta^{1/2})\right\}}\left(4\sqrt{\psi_{n}(\delta;\mathscr{F}_n^{\rm DNN})} + \sqrt{\delta}\right),
		\end{aligned}
	\end{equation*}
	\end{proof}
	
	\begin{lemma}\label{lemma:lrc_fp}
		Suppose that
		\begin{equation*}
			\psi_n(\delta;\mathscr{F}_n) \leq K\left[\sqrt{\frac{V_n\delta(1\vee \log(An^a/\delta^{1/2}))}{n}}\bigvee \frac{V_n(1\vee \log(An^a/\delta^{1/2}))}{n}\right]
		\end{equation*}
		for some $V_n = o(n)$, $n\geq 1$, and $a,\delta,A>0$. Then there exists $C>0$ such that
		\begin{equation*}
			\psi_{n,\kappa}^{\sharp}(\epsilon) \leq C\left(\frac{V_n\log(n/V_n)}{n}\right)^\frac{\kappa}{2\kappa - 1}.
		\end{equation*}
	\end{lemma}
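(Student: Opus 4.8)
The plan is to work straight from the definitions of the $\flat$- and $\sharp$-transforms applied to $\psi_n(\cdot;\mathscr{F}_n)$. Write $L(\delta)=1\vee\log(An^a/\delta^{1/2})$, so that the hypothesis reads $\psi_n(\delta;\mathscr{F}_n)\leq K\bigl[\sqrt{V_n\delta L(\delta)/n}\vee V_nL(\delta)/n\bigr]$. The first step is to observe that $\delta\mapsto\psi_n(\delta;\mathscr{F}_n)/\delta$ is non-increasing on $\R_+$: indeed $\psi_n(\delta;\mathscr{F}_n)/\delta\leq K\bigl[\sqrt{V_nL(\delta)/(n\delta)}\vee V_nL(\delta)/(n\delta)\bigr]$, and each of the two functions inside the maximum is a product of the non-increasing function $L$ with a decreasing power of $1/\delta$, hence non-increasing; a maximum of two non-increasing functions is non-increasing. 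Consequently the supremum defining $\psi_n^\flat(\sigma)=\sup_{\delta\geq\sigma}\psi_n(\delta;\mathscr{F}_n)/\delta$ is attained at $\delta=\sigma$, giving $\psi_n^\flat(\sigma)\leq K\bigl[\sqrt{V_nL(\sigma)/(n\sigma)}\vee V_nL(\sigma)/(n\sigma)\bigr]$.

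Next I would reparametrize the $\sharp$-transform. Setting $\rho=\sigma^{1/\kappa}$ we have $\sigma^{1/\kappa-1}=\rho^{1-\kappa}$ and $\sigma=\rho^\kappa$, and since $\rho\mapsto\rho^\kappa$ is an increasing bijection, $\psi_{n,\kappa}^{\sharp}(\epsilon)=\bigl(\inf\{\rho>0:\ \rho^{1-\kappa}\psi_n^\flat(\rho)\leq\epsilon\}\bigr)^\kappa$. Inserting the bound on $\psi_n^\flat$, the inequality $\rho^{1-\kappa}\psi_n^\flat(\rho)\leq\epsilon$ splits into $K\rho^{1/2-\kappa}\sqrt{V_nL(\rho)/n}\leq\epsilon$ and $K\rho^{-\kappa}V_nL(\rho)/n\leq\epsilon$, which solved for $\rho$ become the two lower thresholds $\rho\geq(K/\epsilon)^{2/(2\kappa-1)}\bigl(V_nL(\rho)/n\bigr)^{1/(2\kappa-1)}$ and $\rho\geq(K/\epsilon)^{1/\kappa}\bigl(V_nL(\rho)/n\bigr)^{1/\kappa}$. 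For $\kappa\geq1$ one has $1/(2\kappa-1)\leq1/\kappa$, so since $V_nL(\rho)/n\to0$ the first threshold is the larger for $n$ large. It therefore suffices to produce a single $\rho_\ast$ of order $\bigl(V_n\log(n/V_n)/n\bigr)^{1/(2\kappa-1)}$ (times a constant depending on $\epsilon,K,\kappa$) at which both inequalities hold; then $\psi_{n,\kappa}^{\sharp}(\epsilon)\leq\rho_\ast^{\kappa}\lesssim\bigl(V_n\log(n/V_n)/n\bigr)^{\kappa/(2\kappa-1)}$, which is the assertion, with $C$ absorbing all constants.

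The only delicate point — the main obstacle — is the logarithmic self-consistency, i.e.\ controlling $L(\rho_\ast)=1\vee\log(An^a\rho_\ast^{-1/2})$ when $\rho_\ast$ itself depends on $L(\rho_\ast)$. I would break the circularity by fixing the ansatz $\rho_\ast\asymp\bigl(V_n\log(n/V_n)/n\bigr)^{1/(2\kappa-1)}$ and then checking a posteriori that $L(\rho_\ast)\leq c_0\log(n/V_n)$ for a constant $c_0$ depending only on $a,A,\kappa$. This rests on two facts: $\rho_\ast$ is bounded below by a fixed negative power of $n$ (since $V_n\geq1$ and $\log(n/V_n)\geq1$ eventually), so $\rho_\ast^{-1/2}$ is at most a fixed power of $n/V_n$; and $V_n=o(n)$ together with the polynomial scaling of $V_n$ present in all the applications lets one absorb the stray $n^a$ into a power of $n/V_n$ as well. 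Feeding this bound for $L(\rho_\ast)$ back into the two threshold inequalities and verifying them at $\rho_\ast$ with the explicit constant is then a routine computation that I would not grind through here; the same estimate also confirms that the first threshold dominates, completing the proof.
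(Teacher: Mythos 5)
Your proposal is correct and follows essentially the same route as the paper: bound $\psi_n^\flat(\sigma)$ by monotonicity of the upper bound so the supremum is taken at $\delta=\sigma$, then bound the fixed point through the resulting threshold inequalities; the only real difference is that the paper solves those equations exactly via the Lambert $W$-function (using $W_0(z)\le \log z$) while you verify an ansatz $\rho_*\asymp\bigl(V_n\log(n/V_n)/n\bigr)^{1/(2\kappa-1)}$ a posteriori. Your explicit appeal to polynomial scaling of $V_n$ to absorb the stray $n^a$ into $\log(n/V_n)$ is the same gloss the paper makes implicitly in its final ``whence $V_n/n=o(1)$'' step (the bound genuinely needs $\log n\lesssim\log(n/V_n)$, i.e.\ $V_n\le n^{1-\varepsilon}$, which holds in all the paper's applications), so this is not a gap relative to the paper's own argument.
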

	\begin{proof}
		Under maintained assumption
		\begin{equation*}
			\psi_{n}^\flat(\sigma) = \sup_{\delta\geq \sigma}\frac{\psi_n(\delta;\mathscr{F}_n)}{\delta} \leq K\left[\sqrt{\frac{V_n (1\vee \log(An^a/\sigma^{1/2}))}{\sigma n}}\bigvee \frac{V_n(1\vee \log(An^a/\sigma^{1/2}))}{\sigma n}\right]
		\end{equation*}
		and whence
		{\footnotesize
			\begin{equation*}
				\begin{aligned}
					\psi_{n,\kappa}^{\sharp}(\epsilon) & = \inf\left\{\sigma>0: \sigma^{1/\kappa - 1}\psi_n^\flat(\sigma^{1/\kappa})\leq \epsilon \right\} \\
					& \leq \inf\left\{\sigma>0: \sqrt{\frac{V_n}{\sigma^{2-1/\kappa} n}}\bigvee \sqrt{\frac{V_n\log(An^a/\sigma^{1/2\kappa})}{\sigma^{2-1/\kappa} n}}\bigvee \frac{V_n}{\sigma n}\bigvee \frac{V_n\log(An^a/\sigma^{1/2\kappa})}{\sigma n}\leq \epsilon/K \right\}.
				\end{aligned}
		\end{equation*}}
		Since the four functions inside the infimum are decreasing in $\sigma$, we have $\psi_{n,\kappa}^{\sharp}(\epsilon) \leq \sigma_1\vee\sigma_2\vee\sigma_3\vee\sigma_4$ with $\sigma_1$ and $\sigma_2$ solving
		\begin{equation*}
			\frac{V_n\log(An^a\sigma_1^{-1/2\kappa})}{n\sigma_1^{2-1/\kappa}} = \epsilon^2/K^2\qquad\text{and}\qquad  \frac{V_n\log(An^a\sigma_2^{-1/2\kappa})}{n\sigma_2} = \epsilon/K.
		\end{equation*}
		and $\sigma_3$ and $\sigma_4$ solving
		\begin{equation*}
			\sqrt{\frac{V_n}{\sigma_3^{2-1/\kappa} n}} = \epsilon/K\qquad \text{and}\qquad \frac{V_n}{\sigma_4 n} = \epsilon/K.
		\end{equation*}	
		To bound $\sigma_1$ and $\sigma_2$, note that
		\begin{equation*}
			\frac{v\log(c/x)}{x^a} = b \iff x = \left(\frac{v}{ab}W_0\left(\frac{abc^a}{v}\right)\right)^{1/a},
		\end{equation*}	
		where $W_0$ is the Lambert $W$-function. Since $W_0(z) \leq \log z,\forall z\geq e$ and $W_0(z)\leq 1$ for all $z\in(0,e]$, this yields
		\begin{equation*}
			x \leq \left(\frac{v}{ab}\right)^{1/a}\bigvee \left(\frac{v}{ab}\log\left(\frac{abc^a}{v}\right)\right)^{1/a}
		\end{equation*}
		Therefore,
		\begin{equation*}
			\begin{aligned}
				\psi^\sharp_{n,\kappa}(\epsilon) & \leq \left(\frac{K^2 V_n}{\epsilon^2n}\right)^\frac{\kappa}{2\kappa - 1} \bigvee \left(\frac{K^2 V_n}{2\epsilon^2n}\log\left(\frac{2(2\kappa-1)\epsilon^2n\left(An^a\right)^{2(2\kappa-1)}}{K^2 V_n}\right)\right)^\frac{\kappa}{2\kappa - 1}\bigvee \\
				&\qquad \bigvee\frac{KV_n}{n\epsilon}\bigvee \frac{KV_n\log(2\kappa\epsilon n(An^a)^{2\kappa}/KV_n)}{2\kappa n\epsilon}.
			\end{aligned}
		\end{equation*}
		and whence since $V_n/n = o(1)$ and $\kappa\geq 1$
		\begin{equation*}
			\psi_{n,\kappa}^{\sharp}(\epsilon) \lesssim \left(\frac{V_n\log(n/V_n)}{n}\right)^\frac{\kappa}{2\kappa - 1}.
		\end{equation*}
	\end{proof}

	\section{Additional Simulation Results} \label{secapp:simulations}
	In  this section, we report additional summary statistics for the simulation comparisons between our weighted logit, the standard symmetric logit model, and the plug-in approach. Tables~\ref{tab:MCresults_summary} corresponds to summary statistics for Table~\ref{tab:MCresults} while Table~\ref{tab:MCresults_plug_in_summary} corresponds to Table~\ref{tab:MCresults_plug_in}. 

	\begin{table}[H]
		\caption{\label{tab:MCresults_summary} Monte Carlo Simulation Results: symmetric vs. asymmetric logit}
		\tablexplain
			This table provides more detailed summary statistics for the ratio $\ell_{logit}/\ell_{w-logit},$ where $\ell_{logit}$ is the social planner loss from using a symmetric logistic model while $\ell_{w-logit}$ is the loss from our weighted logistic model. When the ratio is above 1.00 then the weighted logistic approach is better. See Table~\ref{tab:MCresults} for further details on the simulation design.
		\begin{ctabular}{lccccccccc}
			&	Baseline & $\rho$ 	&  $\tau$  	& $\varphi_0$ & $\varphi_1$	& $\varphi_1$ & $\psi_0$	& $\psi_1$ 	& $\psi_1$\\
			&	case & 0.5	&  1 	& 2 & 2 &3  & 4 & 2	& 3 \\
			&	 & 	& 	&  &  &  &  & &  \\
			& \multicolumn{9}{c}{Sample size $n$ = 1,000} \\
			&	 & 	& 	&  &  &  &  & &  \\
			\multicolumn{10}{l}{Summary statistics for $\ell_{logit}/\ell_{w-logit}$} \\
			&	 & 	& 	&  &  &  &  & &  \\
			Min & 0.00 & 0.00 & 0.77 & 0.00 & 0.00 & 0.00 & 0.00 & 0.00 & 0.00 \\
			25\% & 0.85 & 0.82 & 1.03 & 0.90 & 0.90 & 0.95 & 0.82 & 0.90 & 0.96 \\
			Median & 1.06 & 1.00 & 1.09 & 1.06 & 1.13 & 1.20 & 1.08 & 1.14 & 1.22 \\
			75\% & 1.34 & 1.24 & 1.16 & 1.30 & 1.45 & 1.58 & 1.43 & 1.45 & 1.58 \\
			Max & 10.65 & 8.00 & 1.58 & 11.00 & 14.73 & 13.73 & 13.00 & 9.09 & 18.00 \\
			&	 & 	& 	&  &  &  &  & &  \\
			& \multicolumn{9}{c}{Sample size $n$ = 5,000} \\
			&	 & 	& 	&  &  &  &  & &  \\
			\multicolumn{10}{l}{Summary statistics for $\ell_{logit}/\ell_{w-logit}$} \\
			&	 & 	& 	&  &  &  &  & &  \\
			Min & 0.42 & 0.37 & 0.92 & 0.53 & 0.41 & 0.49 & 0.33 & 0.47 & 0.45 \\
			25\% & 0.91 & 0.89 & 1.03 & 0.93 & 0.94 & 0.97 & 0.91 & 0.94 & 0.97 \\
			Median & 1.03 & 1.00 & 1.06 & 1.02 & 1.07 & 1.11 & 1.08 & 1.07 & 1.11 \\
			75\% & 1.17 & 1.14 & 1.10 & 1.13 & 1.22 & 1.28 & 1.26 & 1.22 & 1.29 \\
			Max & 2.47 & 2.36 & 1.25 & 2.03 & 2.81 & 2.72 & 2.78 & 2.53 & 2.59 \\
			&	 & 	& 	&  &  &  &  & &  \\
			& \multicolumn{9}{c}{Sample size $n$ = 10,000} \\
			&	 & 	& 	&  &  &  &  & &  \\
			\multicolumn{10}{l}{Summary statistics for $\ell_{logit}/\ell_{w-logit}$} \\
			&	 & 	& 	&  &  &  &  & &  \\
			Min & 0.64 & 0.58 & 0.95 & 0.70 & 0.65 & 0.64 & 0.59 & 0.67 & 0.65 \\
			25\% & 0.95 & 0.92 & 1.04 & 0.95 & 0.98 & 1.00 & 0.98 & 0.97 & 1.00 \\
			Median & 1.04 & 1.01 & 1.06 & 1.02 & 1.07 & 1.10 & 1.10 & 1.06 & 1.10 \\
			75\% & 1.14 & 1.11 & 1.08 & 1.09 & 1.17 & 1.22 & 1.23 & 1.17 & 1.21 \\
			Max & 1.95 & 1.81 & 1.19 & 1.56 & 2.19 & 2.38 & 2.15 & 1.98 & 1.94 \\
		\end{ctabular}
	\end{table}

	\begin{table}
		\caption{\label{tab:MCresults_plug_in_summary} Monte Carlo Simulations: plug-in vs. asymmetric logit}
		\tablexplain
			This table provides more detailed summary statistics for the ratio $\ell_{plugin}/\ell_{w-logit},$ where $\ell_{plugin}$ is the social planner loss from using the asymmetric plug-in approach for the logistic model while $\ell_{w-logit}$ is the loss from our weighted logistic model. When the ratio is above 1.00 then the weighted logistic approach is better. See Table~\ref{tab:MCresults_plug_in} for further details on the simulation design.
		\begin{ctabular}{lccccccccc}
			&	Baseline & $\rho$ 	&  $\tau$  	& $\varphi_0$ & $\varphi_1$	& $\varphi_1$ & $\psi_0$	& $\psi_1$ 	& $\psi_1$\\
			&	case & 0.5	&  1 	& 2 & 2 &3  & 4 & 2	& 3 \\
			&	 & 	& 	&  &  &  &  & &  \\
			& \multicolumn{9}{c}{Sample size $n$ = 1,000} \\
			&	 & 	& 	&  &  &  &  & &  \\
			\multicolumn{10}{l}{Summary statistics for $\ell_{plugin}/\ell_{w-logit}$} \\
			&	 & 	& 	&  &  &  &  & &  \\
			Min & 0.10 & 0.00 & 0.54 & 0.00 & 0.20 & 0.21 & 0.27 & 0.18 & 0.23 \\
			25\% & 1.02 & 0.88 & 0.76 & 0.82 & 1.07 & 1.13 & 1.28 & 1.08 & 1.12 \\
			Median & 1.31 & 1.16 & 0.81 & 1.10 & 1.39 & 1.45 & 1.63 & 1.40 & 1.44 \\
			75\% & 1.68 & 1.53 & 0.86 & 1.43 & 1.80 & 1.87 & 2.09 & 1.80 & 1.86 \\
			Max & 11.55 & 10.90 & 1.01 & 15.00 & 11.90 & 12.90 & 18.15 & 17.85 & 21.50 \\
			&	 & 	& 	&  &  &  &  & &  \\
			& \multicolumn{9}{c}{Sample size $n$ = 5,000} \\
			&	 & 	& 	&  &  &  &  & &  \\
			\multicolumn{10}{l}{Summary statistics for $\ell_{plugin}/\ell_{w-logit}$} \\
			& & 	& 	&  &  &  &  & &  \\
			Min & 0.45 & 0.36 & 0.71 & 0.30 & 0.51 & 0.57 & 0.62 & 0.51 & 0.54 \\
			25\% & 1.01 & 0.88 & 0.82 & 0.83 & 1.06 & 1.11 & 1.24 & 1.06 & 1.11 \\
			Median & 1.15 & 1.03 & 0.84 & 0.97 & 1.21 & 1.27 & 1.42 & 1.22 & 1.27 \\
			75\% & 1.33 & 1.19 & 0.86 & 1.12 & 1.39 & 1.45 & 1.63 & 1.39 & 1.45 \\
			Max & 2.76 & 3.10 & 0.94 & 2.70 & 2.59 & 3.18 & 3.68 & 2.74 & 3.47 \\
			&	 & 	& 	&  &  &  &  & &  \\
			& \multicolumn{9}{c}{Sample size $n$ = 10,000} \\
			&	 & 	& 	&  &  &  &  & &  \\
			\multicolumn{10}{l}{Summary statistics for $\ell_{plugin}/\ell_{w-logit}$} \\
			&	 & 	& 	&  &  &  &  & &  \\
			Min & 0.53 & 0.40 & 0.77 & 0.41 & 0.57 & 0.59 & 0.60 & 0.55 & 0.58 \\
			25\% & 0.91 & 0.81 & 0.83 & 0.77 & 0.96 & 1.00 & 1.13 & 0.96 & 1.00 \\
			Median & 1.01 & 0.91 & 0.85 & 0.86 & 1.06 & 1.11 & 1.24 & 1.07 & 1.11 \\
			75\% & 1.12 & 1.02 & 0.86 & 0.97 & 1.18 & 1.23 & 1.38 & 1.18 & 1.23 \\
			Max & 1.78 & 1.83 & 0.92 & 1.70 & 1.75 & 1.92 & 2.29 & 2.09 & 1.99 \\
		\end{ctabular}
	\end{table}

	\clearpage

	\medskip

\section{Empirical Illustration: Further Details \label{appsec:empirics}}
	In this section, we provide details on the empirical application including data cleaning, summary statistics, and additional results. We start with a subsection on data cleaning.

\subsection{Data cleaning \label{appsubsec:cleaning}}

While working with the COMPAS data, we discovered several issues with how the dataset was constructed by ProPublica journalists. These issues include missing charge descriptions and several inconsistencies. Therefore, we reconstructed a more complete dataset starting from the raw ProPublica data files. Reconstructing a clean dataset involved several challenges, including merging COMPAS data with arrest, charge, and individual information.

\medskip

Ideally, for each defendant we should have only one entry in the dataset, with that record containing information on current charges and past crimes to predict recidivism. However, the raw ProPublica files have multiple records for the same arrest in the COMPAS dataset, differing only in charges. A typical example is as follows: we see a suspect who has a single case with one charge for a traffic violation and another for obstructing justice, which in turn could have different charge degrees. However, in the raw data, we have two separate entries for this suspect, one for each charge. This requires careful merging of all current and past charges related to the same arrest into a single record.

\medskip

Considering these limitations, we constructed a new dataset by removing redundancies and merging all charges related to the same arrest. Specifically, we started with the raw COMPAS data, consisting of separate tables for charges, arrests, jail and prison history, personal information, and COMPAS evaluations. We used the latter as an anchor to reconstruct a complete record for each defendant containing the following information: (a) past crimes and future recidivism, and (b) all charges from the same arrest consolidated into a single variable. Following ProPublica's analysis, we considered only COMPAS assessment records for the risk of recidivism that were made within 30 days of an arrest.

Additionally, we carefully reconstructed the recidivism status as the occurrence of another misdemeanor or felony charge within two years of the initial COMPAS screening, and we removed several individuals who were recorded multiple times. These data-cleaning steps resulted in a reduction of the sample size from about 11,757 to 8,227, which is slightly larger than the dataset ProPublica used for its racial-bias analysis, but with detailed records on current and past charges, covering the exact Florida Statute chapters and charge degrees for each criminal charge and ordinance violation.

\medskip

Next, we applied other data processing similar to those originally done by COMPAS. For example, we do not consider municipal or county ordinances or traffic offenses to be recidivism, but we do use the past county ordinances and traffic offenses to predict recidivism. Therefore, in addition to using the original covariates in the COMPAS data, we created new variables based on crime statute degree information and more detailed personal information. The key variables of the cleaned data are as follows:	
	\begin{itemize}
		\item is\_recid: We consider all future crimes (felony and misdemeanor) charged within two years after the first COMPAS score.
		\item Age: A raw number which we standardize for machine learning algorithms. We also create a quadratic age term to capture nonlinearities.
		\item Marital status: Dummies for single, married, divorced, separated, significant other, and unknown as the baseline.
		\item Gender: Male as the baseline and female as a separate dummy.
		\item An African American indicator, using Others as the baseline.
		\item Crime history:
		\begin{itemize}
			\item past juvenile felony, misdemeanor, and other offense charge counts.
			\item dummies with interactions between the current and past charge descriptions and charge degrees (felony or misdemeanor). For example, Theft.Robery\_current\_M dummy equal one means that the current case includes the misdemeanor charge for theft/robbery-related crime and it could be any degree of a misdemeanor. Similarly, Assault.Battery\_past\_F means the suspect has a past felony charge related to assault/battery. We only have these variables for common crimes like theft, robbery, burglary, drug abuse, assault, and traffic-related crimes.
			\item dummies with interactions between the detailed charge description and more detailed degree both for current and past charges. For example, X316\_f2\_current equals one means the current charge (while assessed for future recidivism) includes a 2nd-degree felony corresponding to Florida Statute, Chapter 316 --- ''State Uniform Traffic Control". Similarly, X316\_m1\_past means that the suspect has a past 1st degree misdemeanor charge. 
		\end{itemize}
	\end{itemize}
The reconstructed dataset has a total of 495 variables, and we report the summary statistics for this dataset with 8,227 individual entries in Table~\ref{tab:sumstats_new}.
	
\begin{table}[H]
	\caption{\label{tab:sumstats_new} Summary Statistics}
		\tablexplain
		{\footnotesize
			Summary statistics of recidivism dataset from Broward County, Florida, aggregated to the individual level.}
		{\footnotesize
	\begin{ctabular}{lccccccc}
	\toprule
	\midrule
	is\_recid & No & Yes &  &  &  &  &  \\
	& 5648 & 2579 &  &  &  &  &  \\
	 &  &  &  &  &   & & \\
	Gender & Male & Female &  &  &  &  &  \\
	& 6508 & 1719 &  &  &  &  &  \\
	&  &  &  &  &   & & \\
	Curr. Charge & Assault/ & Theft/ & Burglary & Drug & Driver & Motor & \\
	& Battery & Robbery & & Abuse & License & Vehicles & \\
	& 3824 & 2583 & 1290 & 3677 & 1593 & 1476  \\
	 &  &  &  &  &   & & \\
	 & Fraud & Obst. Justice	 & Other & Forgery &  &  &  \\
	 & 312 & 882 & 2820 & 275 &  &  &  \\
	 &  &  &  &  &   & & \\
	Degree & Felony 1st & Felony 2nd & Felony 3rd+ & Misd. 1st & Misd. 2nd & Ord./Traffic \\
	& 380 & 1951 & 7196 & 6837 & 2377 & 157  \\
	&  &  &  &  &   & & \\
	& Other & & & & & \\
	& 2633 & & & & & \\
	&  &  &  &  &   & & \\
	Race & African & Asian & Caucasian & Hispanic & Native & Other &  \\
	& American &  &  &  &  American &  &  \\
	& 4109 & 45 & 2797 & 720 & 21 & 535 &  \\
	 &  &  &  &  &   & & \\
	Marital Status & Married & Single & Divorced & Separated & Widowed & Unknown &  \\
	& 991 & 6380 & 328 & 194 & 36 & 34 &  \\
	 &  &  &  &  &   & & \\
	 & Min & 25th Pct. & Median & Mean & 75th Pct. & Max &  \\
	Prior Charge Count & 1 & 3 & 7 & 11 & 15 & 192 &  \\
	Age & 16 & 23 & 29 & 32 & 40 & 94 &  \\
	\bottomrule
	\end{ctabular}}
\end{table}
Finally, for replication purposes, we have made the cleaned data available at \url{https://github.com/ababii/cost_sensitive_classification}.

\subsection{Additional Empirical Results}
The empirical model estimates use an 80/20 split for training and testing. We report the statistically significant post-LASSO logit estimates in Table~\ref{tab:post_lasso}. The charges that increase the odds of recidivism include several current and past charges (e.g. obstructing justice, theft/robbery, assault/battery) and past charges (driving without a license, assault). The variables that decrease the odds of recidivism include age, female gender, traffic-related misdemeanors and environmental control charges, among others. It is worth noting that the reported standard errors do not account for the model selection uncertainty and should be understood as potentially optimistic.
	
\begin{table}[H] \centering 
	\caption{Post-LASSO logit estimates} 
	\label{tab:post_lasso}
	{\footnotesize
	\begin{tabular}{@{\extracolsep{5pt}}lcclc} 
	  \\[-1.8ex]\hline 
	  \hline \\[-4ex] 
	  & \multicolumn{1}{c}{\textit{is\_recid}} &  & &  \\ 
		Intercept & $-$1.343$^{***}$ & & age & $-$0.5558$^{***}$ \\
		& (0.0677) & & & (0.0261) \\
		age2 & 0.1468$^{***}$ & & sex\_female & $-$0.3808$^{***}$ \\
		& (0.0387) & & & (0.0710) \\
		mariage\_Married & $-$0.1230 & & mariage\_Significant.Other & 0.3930$^{***}$ \\
		& (0.0943) & & & (0.1421) \\
		mariage\_Unknown & 0.1099 & & mariage\_Widowed & 0.3777 \\
		& (0.4274) & & & (0.4662) \\
		af\_ind & 0.1498$^{***}$ & & Theft.Robery\_current\_M & 0.2374$^{*}$ \\
		& (0.0559) & & & (0.1224) \\
		Motor.vehicles\_current\_M & $-$0.2025$^{***}$ & & Obstructing.justice\_current\_M & 0.2222$^{***}$ \\
		& (0.0534) & & & (0.0752) \\
		Assault.Battery\_past\_M & 0.0911$^{*}$ & & Theft.Robery\_past\_M & 0.1459$^{**}$ \\
		& (0.0487) & & & (0.0673) \\
		Driver.license\_past\_F & $-$0.1061$^{*}$ & & X322\_m2\_current & 0.1926$^{***}$ \\
		& (0.0570) & & & (0.0609) \\
		X509\_m2\_current & 2.3442$^{*}$ & & X741\_m1\_current & 0.5915$^{***}$ \\
		& (1.2621) & & & (0.1979) \\
		X796\_m2\_current & 0.6431$^{**}$ & & X806\_f3\_current & 0.3228$^{*}$ \\
		& (0.3217) & & & (0.1869) \\
		X893\_f1\_current & $-$0.3845$^{*}$ & & X893\_f3\_current & 0.1201$^{**}$ \\
		& (0.2005) & & & (0.0480) \\
		X12569\_co3\_past & 2.3416$^{*}$ & & X316\_tcx\_past & 0.4437$^{**}$ \\
		& (1.2175) & & & (0.1988) \\
		X322\_0\_past & 0.1141$^{***}$ & & X324\_0\_past & 1.9242$^{**}$ \\
		& (0.0356) & & & (0.9326) \\
		X403\_0\_past & $-$0.7651$^{*}$ & & X539\_f3\_past & 0.9116$^{***}$ \\
		& (0.4231) & & & (0.2961) \\
		X787\_f1\_past & 1.2310$^{**}$ & & X790\_f3\_past & 0.2899$^{*}$ \\
		& (0.5793) & & & (0.1489) \\
		X856\_m2\_past & 0.1906$^{**}$ & & X877\_m2\_past & 0.2897$^{**}$ \\
		& (0.0951) & & & (0.1367) \\
		X893\_m1\_past & 0.1374$^{***}$ & & X918\_f3\_past & $-$0.2169$^{*}$ \\
		& (0.0279) & & & (0.1223) \\
		olhd1419\_mo3\_past & 1.4193$^{*}$ & & ofld16113\_mo3\_past & $-$1.9824$^{**}$ \\
		& (0.0690) & & & (0.9803) \\
		ofld16157\_mo3\_past & 0.3444$^{*}$ & & ofld26183\_mo3\_past & $-$2.2661$^{*}$ \\
		& (0.1840) & & & (1.2169) \\
		ofld858b\_mo3\_past & 1.3354$^{*}$ & & ohwd13327c1\_mo3\_past & 2.0361$^{*}$ \\
		& (0.7427) & & & (1.0888) \\
	  \hline 
	  \hline \\[-1.8ex] 
	  \textit{Note:} & \multicolumn{4}{r}{$^{*}$p$<$0.1;\; $^{**}$p$<$0.05;\; $^{***}$p$<$0.01} \\ 
	\end{tabular}} 
  \end{table}

  Next, we look at how the group-specific false positive and false negative rates change when we vary the cost of false negative mistakes. Figure~\ref{fig:fp_fn_application_race} shows that there is a substantial discrepancy between how African American and other defendants are treated for most of the values of $\psi$. Interestingly, balancing the false positive/negative rates is not enough to achieve algorithmic fairness.
  \begin{figure}[h]
	  \centering
	  \begin{subfigure}{0.49\textwidth} 
		  \includegraphics[width=\textwidth]{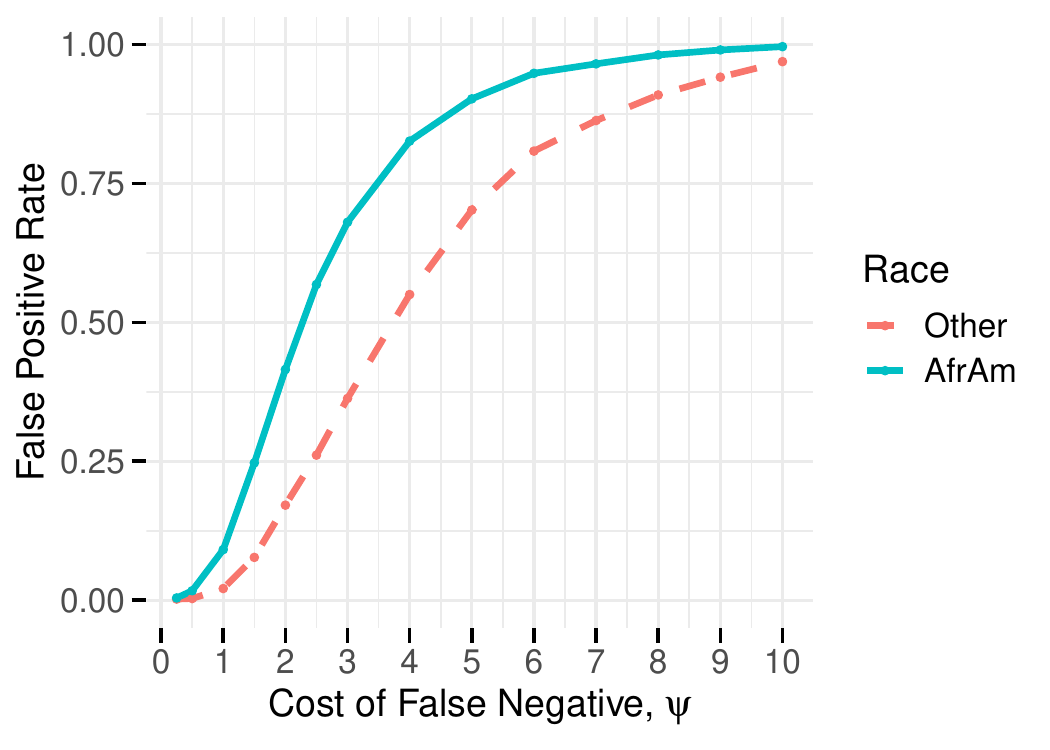}
		  \caption{Logit: FP rates by group} 
	  \end{subfigure}
	  \begin{subfigure}{0.49\textwidth} 
		  \includegraphics[width=\textwidth]{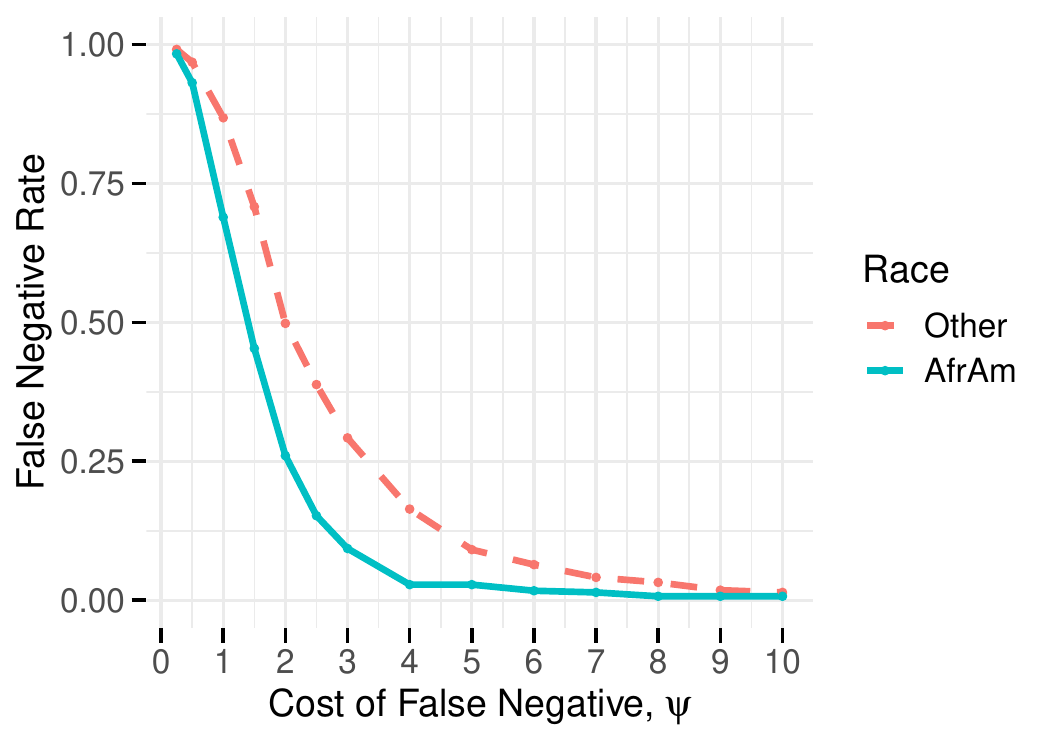}
		  \caption{Logit: FN rates by group} 
	  \end{subfigure}
	  \begin{subfigure}{0.49\textwidth} 
		  \includegraphics[width=\textwidth]{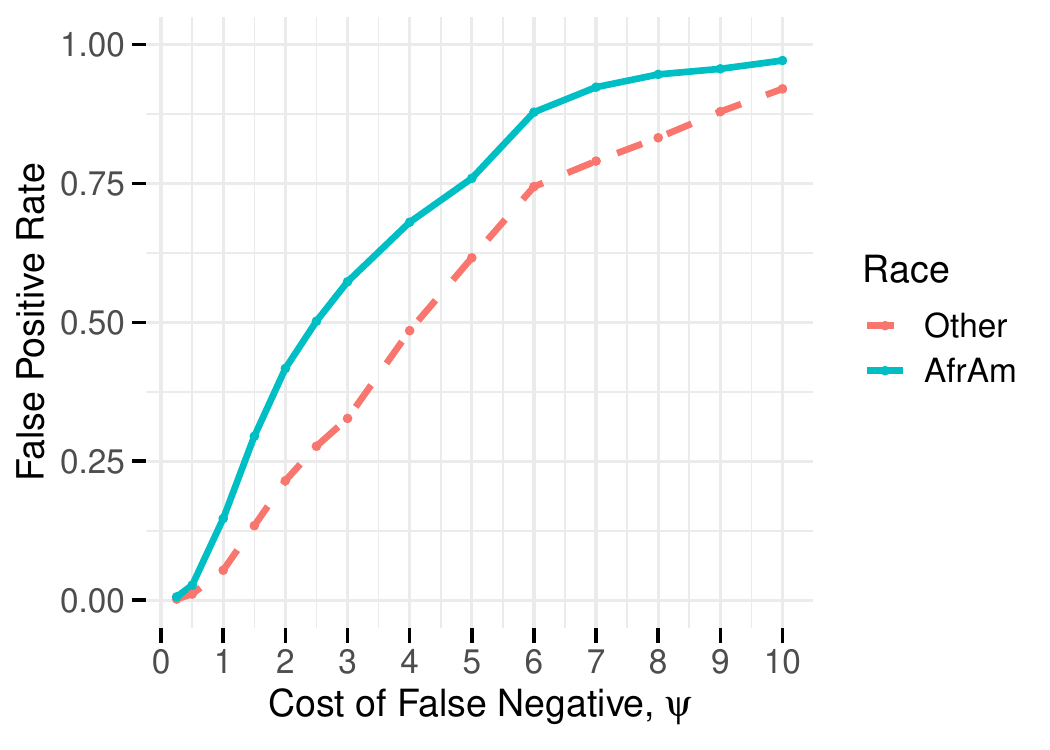}
		  \caption{Boosting: FP rates by group} 
	  \end{subfigure}
	  \begin{subfigure}{0.49\textwidth} 
		  \includegraphics[width=\textwidth]{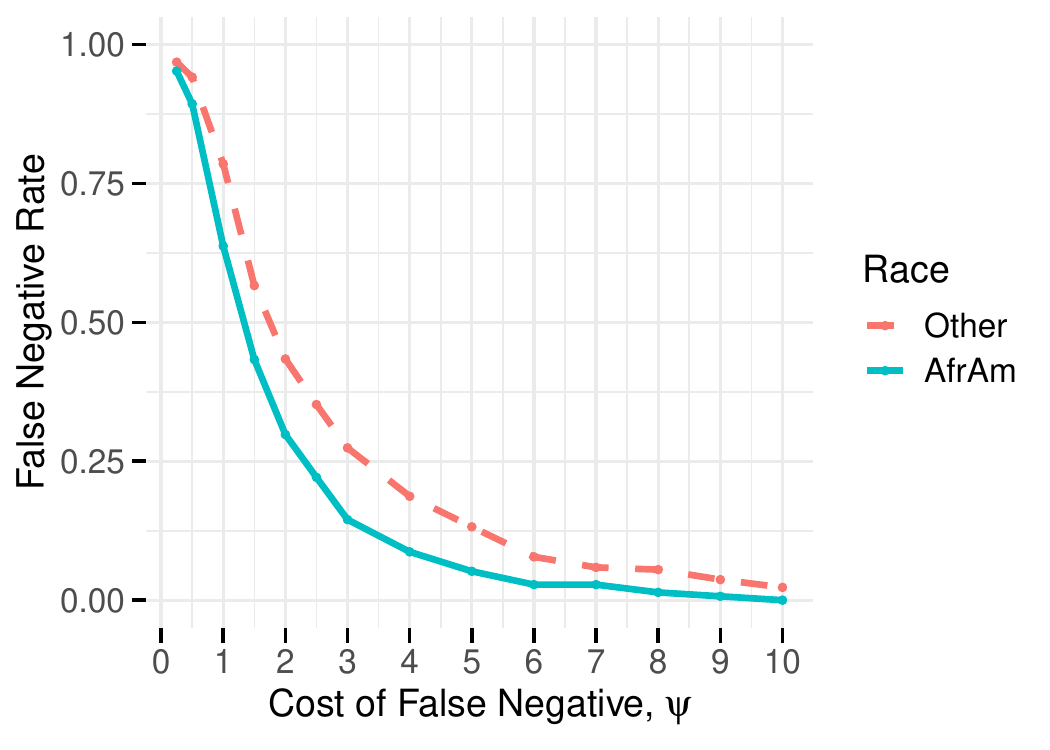}
		  \caption{Boosting: FN rates by group} 
	  \end{subfigure}
	  \caption{Asymmetric Binary Choice: penalizing false positive mistakes. The figure shows that increasing the cost of false negative mistakes in the loss function is not enough to balance the group-specific FP and FN rates.} 
	  \label{fig:fp_fn_application_race}
  \end{figure}
  
  To see the magnitude of this discrepancy, we report the exact values in Table~\ref{tab:balanced}. We find that for a balanced classifier, African American defendants have twice the false positive rate of  the rest of the population, which is similar to the finding reported by the ProPublica article. Interestingly, the discrepancy is slightly smaller with boosting.
  \begin{table}[!htbp] \centering 
	  \caption{Asymmetric Classifiers with Balanced FP/FN Rates. The cost of false negative mistakes is calibrated to $\psi=2.25$.} 
	  \label{tab:balanced} 
	  \begin{tabular}{@{\extracolsep{5pt}} lccc} 
		  \\[-1.8ex]\hline 
		  \hline \\[-1.8ex] 
		  & FP & FN & AUC \\ 
		  & \multicolumn{3}{c}{Logit} \\
		  African American & $0.496$ & $0.201$ & $0.722$ \\ 
		  Others & $0.215$ & $0.425$ & $0.735$ \\ 
		  All & $0.344$ & $0.297$ & $0.735$ \\ 
		& & & \\
		& \multicolumn{3}{c}{Boosting} \\
		African American & $0.469$ & $0.239$ & $0.700$ \\ 
		Other & $0.230$ & $0.411$ & $0.740$ \\ 
		All & $0.339$ & $0.313$ & $0.728$ \\ 
	  \end{tabular} 
  \end{table}

\end{document}